\documentclass[journal]{IEEEtran}
\usepackage[cmex10]{amsmath}
\interdisplaylinepenalty=2500
\usepackage{eqparbox}
\usepackage{mathtools,amssymb,bm}
\usepackage{amstext}
\usepackage{amssymb}
\usepackage{graphicx}
\usepackage{amsfonts}
\usepackage{dsfont}
\usepackage{array}
\usepackage{algorithmicx}
\usepackage[ruled]{algorithm}
\usepackage{algpseudocode}
\usepackage{algpascal}
\usepackage{algc}
\usepackage{cases}
\usepackage{pgfplots}
\usepackage{accents}
\usepackage{caption}
\usepackage{amsthm}
\DeclareCaptionLabelSeparator{periodspace}{.\quad}
\captionsetup{font=footnotesize,labelsep=periodspace,singlelinecheck=false}
\captionsetup[sub]{font=footnotesize,singlelinecheck=true}
\usepackage{float}
\usepackage{cite}
\usepackage [autostyle, english = american]{csquotes}
\usepackage{hyperref}
%\algdisablelines
\theoremstyle{remark}

\newcommand\ASTART{\bigskip\noindent\begin{minipage}[b]{0.5\linewidth}}

\newcommand\AENDSKIP{\end{minipage}\bigskip}
\newcommand\AEND{\end{minipage}}
\ifCLASSOPTIONcaptionsoff
  \usepackage[nomarkers]{endfloat}
 \let\MYoriglatexcaption\caption
 \renewcommand{\caption}[2][\relax]{\MYoriglatexcaption[#2]{#2}}
\fi
%\hyphenation{op-tical net-works semi-conduc-tor}
%\theoremstyle{plain}
\newtheorem{thm}{Theorem}[section]
\newtheorem{lem}[thm]{Lemma}
\newtheorem{prop}[thm]{Proposition}

\newtheorem{defn}{Definition}[section]

\newcommand*{\rom}[1]{\expandafter\@slowromancap\romannumeral #1@}

\begin{document}
%
%\onecolumn
% paper title
% can use linebreaks \\ within to get better formatting as desired
\title{How to exploit prior information in low-complexity models}
\author{Sajad~Daei, Farzan~Haddadi}%

% make the title area
\maketitle

\begin{abstract}
Compressed Sensing refers to extracting a low-dimensional structured signal of interest from its incomplete random linear observations. A line of recent work has studied that, with the extra prior information about the signal, one can recover the signal with much fewer observations. For this purpose, the general approach is to solve weighted convex function minimization problem. In such settings, the convex function is chosen to promote the low-dimensional structure and the optimal weights are so chosen to reduce the number of measurements required for the optimization problem. In this paper, we consider a generalized non-uniform model in which the structured signal falls into some partitions, with entries of each partition having a definite probability to be an element of the structure support. Given these probabilities and regarding the recent developments in conic integral geometry, we provide a method to choose the unique optimal weights for any general low-dimensional signal model. This class of low-dimensional signal model includes many popular examples such as $\ell_1$ analysis (entry-wise sparsity in an arbitrary redundant dictionary), $\ell_{1,2}$ norm (block sparsity) and total variation semi-norm (for piece-wise constant signals). We show through precise analysis and simulations that the weighted convex optimization problem significantly improves the regular convex optimization problem as we choose the unique optimal weights.
\end{abstract}

% Note that keywords are not normally used for peerreview papers.
\begin{IEEEkeywords}
compressed sensing, prior information, weighted convex function minimization, conic integral geometry.
\end{IEEEkeywords}

% For peer review papers, you can put extra information on the cover
% page as needed:
% \ifCLASSOPTIONpeerreview
% \begin{center} \bfseries EDICS Category: 3-BBND \end{center}
% \fi
%
% For peerreview papers, this IEEEtran command inserts a page break and
% creates the second title. It will be ignored for other modes.
\IEEEpeerreviewmaketitle

\section{Introduction}
 \IEEEPARstart{O}{ver} the past decade, it has become an established fact that based on prior information about the sparsity feature of the signal\footnote{Signals with this feature have much less nonzero elements than the ambient dimension.} $\bm{x}\in \mathbb{R}^n$, one can recover it from a few random linear measurements known as compressed sensing (CS)\cite{donoho2006most}. For this purpose, it is necessary to introduce a function that promotes sparsity. This leads to the following optimization problem (known as $\ell_0$ problem or $\mathsf{P}_0$):\\
  \begin{align}\label{eq.l0}
   \mathsf{P}_0:~~~~~~&\min_{\bm{z}\in\mathbb{R}^n}~\|\bm{z}\|_{0}\nonumber\\
   &\mathrm{s.t.} ~\bm{Az} =\bm{b},
\end{align}
where in (\ref{eq.l0}), $\bm{A}\in \mathbb{R}^{m\times n}$ and $\bm{b}\in \mathbb{R}^m$ are the measurement matrix and measurement vector, respectively\footnote{We typically consider the case that $s$ is much smaller than $n$. }. The reason behind this optimization problem is that among all possible solutions of the equation $\bm{Az} =\bm{b}$, the sparsest one is unique \cite{donoho2006most}.\par
The optimization problem $\mathsf{P}_0$ is non-convex, computationally intractable, and in fact  NP-hard. Although there has not ever been any algorithm that can solve $\mathsf{P}_0$ in polynomial time, but if it existed, it would require $\mathcal{O}(s)$ measurements in the case that the original signal $\bm{x}\in \mathbb{R}^n$ is $s$-sparse\footnote{$\|\bm{x}\|_0\le s$}.\par
Candes and Tao in \cite{tao2005decoding} and Donoho in \cite{donoho2006most} proved that under the so-called Restricted Isometry Property (RIP) on the measurement matrix $\bm{A}\in \mathbb{R}^{m\times N}$ , instead of $\mathsf{P}_0$, one can alternatively solve the following optimization problem known as $\ell_1$ minimization problem or $\mathsf{P}_1$ and recover the original $s$-sparse signal from $\mathcal{O}(s\log(\frac{n}{s}))$ random linear measurements  $\bm{y}\in\mathbb{R}^m$. By $\mathsf{P}_1$, we pay for a tractable solution to $\bm{Az}=\bm{b}$  with suboptimal number of measurements.
\begin{align}\label{eq.l1}
   \mathsf{P}_1:~~~~~~&\min_{\bm{z}\in\mathbb{R}^n}~\|\bm{z}\|_{1}=\sum_{i=1}^{n}|z(i)|\nonumber\\
   &\mathrm{s.t.} ~\bm{Az} =\bm{b}.
\end{align}
Ever since, most of the literature on compressed sensing has focused on the case where the only side information is that the signal of interest is sparse. In many applications however, there may exist some other extra information about the underlying signal in addition to sparsity. It's plausible that we can do even better than the case that the only information is sparsity. The common way to exploit both sparsity and extra information without entering the non-convex optimization world is weighted convex minimization defined as follows:
\begin{align}\label{eq.wl1}
   \mathsf{P}_{1,\bm{w}}:~~~~~~&\min_{\bm{z}\in\mathbb{R}^n}~\|\bm{z}\|_{1,\bm{w}}=\sum_{i=1}^{n}w_i|z(i)|\nonumber\\
   &\mathrm{s.t.} ~\bm{Az} =\bm{b},
\end{align}
where in (\ref{eq.wl1}), $\bm{w}\in \mathbb{R}_{+}^n$ refers to the positive weights applied to the components of the structured signal. Note that, a strictly structured sparse signal is characterized by two parameters:(1) the set of non-zero locations (support) and (2) the signal values. If one has the signal support, the undetermined equation $\bm{Az}=\bm{b}$ reduces to the solvable overdetermined problem. However, if one has an estimated support which is only fractionally wrong it's better to associate larger weights to zero entries and smaller weights to the non-zeros. This intuition was first analysed by Candes et. al. in \cite{candes2008enhancing}.
A major question that has been the subject of recent researches (\cite{friedlander2012recovering,khajehnejad2011analyzing,stojnic2009various,mansour2015recovery,vaswani2010modified}) is how to suitably choose the weights in $\mathsf{P}_{1,\bm{w}}$. In other words, given prior information, how we can compute optimal weights.
We mean optimal\footnote{Although this meaning is not universal in the CS literature.} weights in the sense that the required number of measurements for $\mathsf{P}_{1,\bm{w}}$ to succeed with overwhelming probability, is minimized.
\subsection{Related Works}
Von Borries et al. in \cite{von2007compressed}, showed that the number of measurements needed to recover a sparse signal in discrete fourier transform (DFT) by weighted-$\ell_1$ analysis problem can be decreased when there is some information on the support of the sparse signal in the dictionary. They showed that the amount of this reduction is exactly the prior knowledge of frequencies in the support.\par Vaswani et al. in \cite{vaswani2010modified}, studied modified CS where a part of the signal support is known. They used weighted $\ell_1$ minimization to exploit this prior information and assigned zero weights to the known part. They obtained sufficient conditions for exact recovery and developed the so-called RegModCS algorithm that exploits prior knowledge of the signal. Jacques in \cite{jacques2010short}, generalized the results of Vaswani et al. to the case of corrupted and noisy measurements.\par
Khajehnejad et al. in \cite{khajehnejad2011analyzing}, investigated the non-uniform sparse model where the entries of the underlying sparse signal fall into a fixed number of partitions. Each partition has a different probability of being non-zero. Given these probabilities and using a Grassmann angle approach, they computed optimal weights in the case that the underlying signal divided into two subclasses. They used weighted $\ell_1$ minimization to recover the sparse signal. Their approach does not contain explicit strategies for more than two partitions.\par
Based on \cite{stojnic2009various}, and \cite{chandrasekaran2012convex}, Krishnaswamy et al. in \cite{krishnaswamy2012simpler}, proposed a simpler and heuristic method to estimate the optimal weights. Unlike \cite{khajehnejad2011analyzing}, their work can be extended to more than two sparsity partitions.\par
Oymak et al. in a different analysis in \cite{oymak2012recovery}, considered the sparse signal with two partitions (non-uniform sparse model). To exploit prior knowledge, they used weighted $\ell_1$ minimization with each partition being assigned a weight. Using the "escape through a mesh" lemma \cite{gordon1988milman}, they derived an upper bound for the minimum number of Gaussian measurements required for $\mathsf{P}_{1,\bm{w}}$ to succeed. In addition, they computed optimal weights by minimizing this upper bound with respect to the weights. Although they minimize the upper bound of the minimum number of measurements but they showed that this upper bound is tight asymptotically (as the ambient dimension goes to infinity) by relating the bound to the regularized normalized Minimum Mean Squared Error (MMSE) of a certain basis pursuit denoising problem (BPDN). Further, in \cite{bayati2012lasso}, it has been shown that asymptotic normalized MMSE of BPDN equals to the asymptotic phase transition of Approximate Message Passing (AMP) algorithms.\par
Recently, Flinth in \cite{flinth2015optimal}, generalized the results of Oymak et al. to the non-uniform setting with arbitrary number of  partitions. He associated a weight with each partition and used the recent results of Amelunxen et al. in \cite{amelunxen2013living}. Amelunxen et al. had proved that the number of Gaussian measurements one needs to solve a convex program, is lower bounded by statistical dimension of a certain convex cone $\delta(C)$. They also calculated  upper and lower bounds for $\delta(C)$. Flinth in \cite{flinth2015optimal} calculated the optimal weights by simultaneously minimizing the lower and upper bounds of $\delta(C)$  with respect to the weights.
\subsection{Applications}
There are many important applications where in addition to inherent structures (entrywise sparsity in a dictionary, block sparsity and gradient sparsity), there exist some prior information about the signal of interest. In practice, the extraction of this prior information is realistic. As an example, one can investigate the statistic of used training data. Examples of applications with such information are listed below:
\begin{itemize}
\item \textbf{Entrywise sparsity}: In many digital signal processing applications, the desired discrete time signal is shown to be sparse in a specific dictionary. In \cite{vaswani2010modified}, the recovery of a sparse signal is considered in which the support estimate in previous support (related to previous time instant) can improve the recovery performance by reducing the number of measurements. Natural images are often sparse in the discrete wavelet transform (DWT) as the dictionary. It is possible for wavelet image coders to incorporate some prior knowledge about the locations of large wavelet coefficients \cite{baraniuk2010model}. In fact, we have non-uniformity in the DWT and the largest coefficients are highly concentrated around zero. Also in many other applications, non-uniformity is present along sparsity in the frequency domain. One can exploit this feature and reduce number of measurements required for recovery using prior knowledge about the locations of the most dense parts in the frequency dictionary \cite{von2007compressed}.
\item \textbf{Block sparsity}: In block sparsity, the non-zero locations of the sparse signal appears in fixed blocks. There exist some applications with non-uniformity in the set of signal blocks. In DNA microarray \cite{stojnic2008reconstruction}, we have prior information that some blocks are most probable to include the non-zero elements. In computational neuroscience problems \cite{computationalnero}, the behavior of neurons exhibit non-uniform clustered responses. In \cite{eldar2009robust}, it is shown that non-uniform sampling problems over union of subspaces can be considered as block sparsity with non-uniform prior information on the blocks. In such cases, the general approach is to solve weighted $\ell_{1,2}$ optimization problem ( known as $\mathsf{P}_{1,2,\bm{w}}$) and assign larger weights to the blocks that are less likely to have non-zero entries.
\item \textbf{Gradient sparsity}: Here, the signal of interest is sparse in the gradient dictionary. Conventional total variation (TV) semi norm assigns the same cost to smooth and non-smooth regions. However, natural images usually have non-uniform smoothness. Prior information in TV has long been investigated where we have some uncertainty about the smoothness of the 1- or 2-dimensional signals. In \cite{nonlocaltv}, non-uniform weight penalization is considered. In \cite{duran2016collaborative}, weighted total variation optimization problem (known as $\mathsf{P}_{\mathrm{TV},\bm{w}}$) is used to consider non-uniform smoothness of the signal.
\end{itemize}
\subsection{Contributions}
Unlike the prior works that only consider prior information on the signal entries in the identity dictionary, in Section \ref{section.model}, we define a generalized framework that consider prior information on blocks, signal variations, or on the coefficients of the signal in a redundant dictionary, and extend the results of \cite{flinth2015optimal}, and \cite{oymak2012recovery}, to general signal models. Structures we cover in this paper, include sparse vectors in a specific possibly redundant dictionary, block sparse vectors, and piecewise constant signals. $\|\bm{\Omega}.\|_0$ (number of non-zero entries in the dictionary $\bm{\Omega}$), $\|.\|_{0,2}$ (number of blocks with non-zero $\ell_2$ norm), and $\|.\|_{\mathrm{NV}}$ (number of variations) promote the mentioned structures, respectively. The convex relaxed form of the above functions is $\|\bm{\Omega}.\|_1$, $\|.\|_{1,2}$ and $\|.\|_{\mathrm{TV}}$, respectively. Convex functions that exploit both the inherent structure and prior knowledge are $\|\bm{\Omega}.\|_{1,\bm{w}}$, $\|.\|_{1,2,\bm{w}}$ and $\|.\|_{\mathrm{TV},\bm{w}}$, respectively. Related convex optimization problems are called $\mathsf{P}_{1,\bm{\Omega},\bm{w}}$, $\mathsf{P}_{1,2,\bm{w}}$, and $\mathsf{P}_{\mathrm{TV},\bm{w}}$, respectively, which are introduced in Section \ref{section.model}. The prior information in this paper, is the knowledge of partitions and probabilities. This setup exists in applications that multiple estimates of the support are available with different levels of confidence. In other words, these support estimates have different levels of accuracy. In this paper, we divide the structured signal\footnote{The structured signal may be block sparse, sparse in a dictionary, or smooth} into some partitions and assign a weight to each partition. We compute optimal weights following the same strategy as in \cite{flinth2015optimal} and \cite{oymak2012recovery} regarding the given probability of partitions intersected by the support. However, our definition of support and partitions differ from those works due to their different structures.
 More precisely, with each inherent structure, we define partitions and support as follows:
 \begin{enumerate}
   \item \textbf{Entrywise sparsity}: The support is defined as the non-zero locations in the possibly redundant sparsity dictionary. In this case, we have the prior information that some fixed sets (partitions) in the dictionary, are intersected by the support with known probabilities. Throughout the paper, we refer to this kind of support as entrywise support denoted by $\mathcal{S}$.
   \item \textbf{Block sparsity}: The support is defined as the blocks with non-zero $\ell_2$ norm. In this case, we have some sets of blocks with known probability that each set contributes to the support. Later on, we refer to this kind of support as block support $\mathcal{B}$.
   \item \textbf{Gradient sparsity}: In this case, the support is defined as locations of signal variations. We refer to this kind of support as gradient support $\mathcal{S}_g$. The gradient support is divided into two sets: consecutive and individual support which refer to consecutive and individual variations, respectively. Some predefined sets in the gradient domain are intersected by the subsequent and individual supports with known probability.
 \end{enumerate}
Our main contributions are as follows:
\begin{enumerate}
\item Given the probability of partitions to be in the support, unique optimal weights will be computed in the three models mentioned above\footnote{Optimality of weights in block and gradient sparsity is in a different sense from entrywise sparsity in redundant dictionary as is made precise later.}.
  \item With optimal weights in an arbitrary redundant dictionary $\bm{\Omega}$ in $\mathsf{P}_{1,\bm{\Omega},\bm{w}}$, the required number of measurements to succeed, exactly equals the total number of measurements one needs to recover each partitioned sparse vector in the dictionary separately by solving $\mathsf{P}_{1}$.
  \item With optimal weights on the blocks, the number of measurements required for successful recovery of a block sparse vector using $\mathsf{P}_{1,2,\bm{w}}$ with high probability, exactly equals the total number of measurements needed for recovery of each partition of blocks separately by solving $\mathsf{P}_{1,2}$.
  \item With optimal weights on the signal variations, the required number of measurements for $\mathsf{P}_{\mathrm{TV},\bm{w}}$ to succeed with high probability equals the whole number of measurements one needs to recover each partitioned smooth vector separately by solving $\mathsf{P}_{\mathrm{TV}}$.
   \item Explicit formulas are found for the number of measurements one needs to recover block sparse, smooth, and sparse vectors in a redundant dictionary.
\end{enumerate}
This work includes the results of \cite{flinth2015optimal} and \cite{oymak2012recovery} as a special case of block sparsity with unit size blocks and entrywise sparsity with identity dictionary.
% needed in second column of first page if using \IEEEpubid
%\IEEEpubidadjcol
\subsection{Outline of the paper}
The paper is organised as follows: Signal model and methodology are given in Section \ref{section.model}. In Section \ref{section.conicgeometry}, basic concepts of conic integral geometry are reviewed. In Sections \ref{section.l1analysis}, \ref{section.l12block}, and \ref{section.TV}, entry-wise sparsity, block and gradient sparsity, are investigated, respectively. Numerical simulations that support the theory are presented in ُSection \ref{section.simulation}. Finally, the paper is concluded in Section \ref{section.conclusion}.
\subsection{Notation}
Throughout the paper, scalars are denoted by lowercase letters, vectors by lowercase boldface letters, and matrices by uppercase boldface letters. The $i$th element of the vector $\bm{x}$ is given either by ${x}(i)$ or $x_i$. $(\cdot)^\dagger$ denotes pseudo inverse. We reserve calligraphic uppercase letters for sets (e.g. $\mathcal{S}$). The cardinality of a set $\mathcal{S}$ is denoted by $|\mathcal{S}|$. $[n]$ refers to $\{1,..., n\}$. Furthermore, we write ${\mathcal{\bar{S}}}$ for the complement $[n]\setminus\mathcal{
S}$ of a set $\mathcal{S}$ in $[n]$. For a matrix $\bm{X}\in\mathbb{R}^{m\times n}$ and a subset $\mathcal{S}\subseteq [n]$, the notation $\bm{X}_\mathcal{S}$ is used to indicate the column submatrix of $\bm{X}$ consisting of the columns indexed by $\mathcal{S}$. Similarly, for $\bm{x}\in\mathbb{R}^n$, $\bm{x}_\mathcal{S}$ is either the subvector in $\mathbb{R}^{|\mathcal{S}|}$ consisting of the entries
indexed by $\mathcal{S}$, that is, $(\bm{x}_S)_i = x_{j_i}~:~\mathcal{S}=\{j_i\}_{i=1}^{|\mathcal{S}|}$, or the vector in $\mathbb{R}^n$ which coincides with $\bm{x}$ on the entries in $\mathcal{S}$ and is zero on the entries outside $\mathcal{S}$. In this paper, $\bm{1}_{\mathcal{E}}$ denotes the indicator of the set $\mathcal{E}$. Nullspace and range of linear operators are denoted by $\mathrm{null}(\cdot)$, and $\mathrm{range}(\cdot)$, respectively. Hadamard product and Hadamard inverse on vectors are denoted by $\odot$ and $(\cdot)^{-1\odot}$ symbols, respectively. Given a vector $\bm{x}\in\mathbb{R}^n$ and a set $\mathcal{C}\subseteq \mathbb{R}^n$, denote the set obtained by scaling elements of $\mathcal{C}$ by elements of $\bm{x}$ by $\bm{x}\odot \mathcal{C}$. For a matrix $\bm{A}$, the operator norm is defined as $\|\bm{A}\|_{p\rightarrow q}=\underset{\|\bm{x}\|_p\le1}{\sup}\|\bm{Ax}\|_q$. For $\bm{x},\bm{y}\in\mathbb{R}^n$, $\bm{x}\le \bm{y}$ denotes component-wise inequality while $\bm{x}<\bm{y}$ denotes component-wise inequality with strict inequality in at least one component. $\mathds{B}_{\epsilon}^n$ refers to the $\epsilon$-ball $\mathds{B}_{\epsilon}^n=\{\bm{x}\in \mathbb{R}^n:~\|\bm{x}\|_2\le\epsilon\}$.
\section{Signal Model and Methodology}\label{section.model}
\subsection{Structured Signal Model}\label{subsec.signalmodel}
First, three models are defined as follow:
\begin{defn}{\textbf{(Non-uniform Sparse Signal in Dictionary $\bm{\Omega} \in \mathbb{R}^{p\times n}$):}} Consider a vector $\bm{x}\in\mathbb{R}^n$ that is sparse in $\bm{\Omega}$ with support $\mathcal{S}$ and relative sparsity $\sigma=\frac{\|\bm{\Omega x}\|_0}{p}$. Let $\{\mathcal{P}_i\}_{i=1}^L$ be a partition of $[p]$. Each set $\mathcal{P}_i$ is associated with accuracy $\alpha_i=\frac{|\mathcal{P}_i\cap \mathcal{S}|}{|\mathcal{P}_i|}$ and relative size $\rho_i=\frac{|\mathcal{P}_i|}{p}$. A vector $\bm{x}\in \mathbb{R}^{n}$ is called a non-uniform sparse vector in the dictionary $\bm{\Omega} \in \mathbb{R}^{p\times n}$, if its sparsity pattern is defused along the partition $\{\mathcal{P}_i\}_{i=1}^L$. Number of non-zero entries of each set $\mathcal{P}_i$ is $\alpha_i \rho_i p$ such that the total sparsity in $\bm{\Omega}$ is $s=\sigma p=\sum_{i=1}^{L}\alpha_i \rho_i p$.
\end{defn}
\begin{defn}{\textbf{(Non-uniform Block Sparse Signal in $\mathbb{R}^n$):}} Consider a block sparse vector in $\bm{x}\in\mathbb{R}^n$ with the block support $\mathcal{B}$ that is divided into $q$ fixed blocks $\{\mathcal{V}_b\}_{b=1}^q \subset [n]$ of integer length $k=n/q$. Let $\{\mathcal{P}_i\}_{i=1}^L$ be a partition of the blocks $[q]$. To each set of blocks, accuracy $\alpha_i=\frac{|\mathcal{P}_i\cap \mathcal{B}|}{|\mathcal{P}_i|}$, and relative size $\rho_i=\frac{|\mathcal{P}_i|}{q}$ are assigned. A vector $\bm{x}\in \mathbb{R}^{n}$ is called a non-uniform block sparse vector with relative block sparsity $\sigma=\frac{\|\bm{x}\|_{0,2}}{q}$ if its sparsity pattern is defused along the blocks $[q]$. Number of non-zero blocks of each set $\mathcal{P}_i$ is $\alpha_i \rho_i q$ while the total block sparsity is $s=\sigma q=\sum_{i=1}^{L}\alpha_i \rho_i q$.
\end{defn}
\begin{defn}{\textbf{(Non-uniform Smooth Signal in $\mathbb{R}^n$):}} Consider a smooth vector $\bm{x}\in\mathbb{R}^n$ with gradient support $\mathcal{S}_g$ and relative smoothness $\sigma=\frac{\|\bm{x}\|_{NV}}{n-1}$. Let $\{\mathcal{P}_i\}_{i=1}^L$ be a partition of the signal variations $[n-1]$. Each set is associated with accuracy $\frac{|\mathcal{P}_i\cap\mathcal{S}_g|}{n-1}$ and relative size $\rho_i=\frac{|\mathcal{P}_i|}{n-1}$. A vector $\bm{x}\in \mathbb{R}^{n}$ is called a non-uniform smooth vector if its variation pattern is defused along $[n-1]$. Number of non-zero entries of each set $\mathcal{P}_i$ is $|\mathcal{P}_i\cap\mathcal{S}_g|$ such that the total gradient sparsity is $s=\sigma(n-1)=\sum_{i=1}^{L}|\mathcal{P}_i\cap\mathcal{S}_g|$.
\end{defn}
%a weight $\omega_i \in \mathbb{R}_{+}$,
In this paper, three inherent structures are considered:
\begin{enumerate}
\item \textbf{Entrywise sparsity}: Each of $L$ partitions is assigned a fixed weight. The related optimization problem  becomes:\\
    \begin{align}\label{eq.wl1analysis}
   \mathsf{P}_{1,\bm{\Omega},\bm{w}}:~~~~~~&\min_{z\in\mathbb{R}^n}~\|\bm{\Omega z}\|_{1,\bm{w}}=\sum_{i=1}^{p}w_i|({\Omega z})_i|\nonumber\\
   &\mathrm{s.t.} ~\bm{Az} =\bm{b},
\end{align}
in which, $\bm{w}=\sum_{i=1}^{L}\omega_i \bm{1}_{\mathcal{P}_i}\in   \mathbb{R}_{+}^p$ and $(\bm{1}_{\mathcal{P}_i})_j = 1  :  \forall j \in \mathcal{P}_i$ and $0$ otherwise.
\item \textbf{Block sparsity}: Each of $L$ partitions is assigned a fixed weight. The related optimization problem will be:
    \begin{align}\label{eq.wl1,2}
   \mathsf{P}_{1,2,\bm{w}}:~~~~~~&\min_{z\in\mathbb{R}^n}~\|\bm{z}\|_{1,2,\bm{w}}=\sum_{b=1}^{q}w_b\|\bm{z}_{\mathcal{V}_b}\|_2\nonumber\\
   &\mathrm{s.t.} ~\bm{Az} =\bm{b},
\end{align}
where, $\bm{w}=\sum_{i=1}^{L}\omega_i 1_{\mathcal{P}_i}\in   \mathbb{R}_{+}^q$.
\item \textbf{Gradient sparsity}: Each of $L$ partitions is assigned a fixed weight. The related optimization problem is:
    \begin{align}\label{eq.wTV}
   \mathsf{P}_{\mathrm{TV},\bm{w}}:~~~~~~&\min_{\bm{z}\in\mathbb{R}^N}~\|\bm{z}\|_{\mathrm{TV},\bm{w}}=\sum_{i=1}^{n-1}w_i|z(i+1)-z(i)|\nonumber\\
   &\mathrm{s.t.} ~\bm{Az} =\bm{b},
\end{align}
and, $\bm{w}=\sum_{i=1}^{L}\omega_i 1_{\mathcal{P}_i}\in \mathbb{R}_{+}^{n-1}$
\end{enumerate}
\subsection{Measurement Matrix}
The common approach for sampling procedure is to draw a random measurement matrix $\bm{A}\in \mathbb{R}^{m\times n}$ consisting of i.i.d Gaussian entries. The authors in \cite{amelunxen2013living} show that the phase transition occurs when the measurement matrix $\bm{A}$ in $\mathsf{P}_f$ is standard normal with independent entries. However, the recent result of \cite{oymak2015universality} shows that in the asymptotic case one can use a larger class of random measurements that have symmetric distribution and bounded moments. In this paper, we choose the entries of $\bm{A}$ from i.i.d standard Gaussian distribution.
\subsection{Methodology}
For each structure defined in Section \ref{subsec.signalmodel} , the number of measurements is calculated as a function of the weights ($w=\sum_{i=1}^{L}\omega_i 1_{\mathcal{P}_i}$) and the optimal weights are found by minimizing the upper and lower bounds of statistical dimension. It's shown that these bounds are asymptotically tight. The optimal weights obtained in the problems $\mathsf{P}_{1,\bm{\Omega},\bm{w}}$, $\mathsf{P}_{1,2,\bm{w}}$, and $\mathsf{P}_{\mathrm{TV},\bm{w}}$ are unique up to an irrelevant positive scaling.
\section{Conic Geometry}\label{section.conicgeometry}
In this section, basic concepts of conic integral geometry to be used later are reviewed.
\subsection{Subdifferntial}
Subdifferntial of a proper\footnote{An everywhere defined function taking values in $(-\infty,\infty]$ with at least one finite value in $(-\infty,\infty)$.} convex function $f:\mathbb{R}^n\rightarrow \mathbb{R}\cup \{\pm\infty\}$ at $\bm{x}\in\mathbb{R}^n$ is given by:
\begin{align}\label{eq.subdiff}
\partial f(\bm{x}):=\{\bm{z}\in \mathbb{R}^n: f(\bm{y})\ge f(\bm{x})+\langle \bm{z},\bm{y}-\bm{x} \rangle~:~\forall \bm{y}\in \mathbb{R}^n\}.
\end{align}
\begin{prop}\label{prop.simplerform of subdiff}
Let $f: \mathbb{R}^n\rightarrow\mathbb{R}\cup \{\pm\infty\}$ be a proper convex function that is $1$-homogenous i.e. $f(\alpha \bm{z})=|\alpha|f(\bm{z})~:\forall \alpha \in \mathbb{R}$ and sub-additive i.e. $f(\bm{x}+\bm{y})\le f(\bm{x})+f(\bm{y})~~:\forall \bm{x},\bm{y} \in \mathbb{R}^n$, then we have a simpler form of subdifferntial given by:
\begin{align}\label{eq.subdiff2}
\partial f(\bm{x})=\{\bm{z}\in \mathbb{R}^n: \langle \bm{z},\bm{x} \rangle=f(\bm{x}) , f^*(\bm{z})=1\},
\end{align}
where, $f^*(\bm{z})=\underset{f(\bm{y})\le 1}{\sup}\langle \bm{z},\bm{y} \rangle$ is the dual function of $f(\bm{z})$.
\end{prop}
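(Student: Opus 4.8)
The plan is to establish the two inclusions separately. For the forward inclusion, suppose $\bm{z}\in\partial f(\bm{x})$ in the sense of \eqref{eq.subdiff}, so that $f(\bm{y})\ge f(\bm{x})+\langle\bm{z},\bm{y}-\bm{x}\rangle$ for all $\bm{y}$. First I would extract the identity $\langle\bm{z},\bm{x}\rangle=f(\bm{x})$: plugging in $\bm{y}=2\bm{x}$ and using $1$-homogeneity ($f(2\bm{x})=2f(\bm{x})$) gives $2f(\bm{x})\ge f(\bm{x})+\langle\bm{z},\bm{x}\rangle$, i.e. $f(\bm{x})\ge\langle\bm{z},\bm{x}\rangle$; plugging in $\bm{y}=\bm{0}$ and using $f(\bm{0})=0$ (which follows from homogeneity with $\alpha=0$) gives $0\ge f(\bm{x})-\langle\bm{z},\bm{x}\rangle$, i.e. $f(\bm{x})\le\langle\bm{z},\bm{x}\rangle$. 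Hence $\langle\bm{z},\bm{x}\rangle=f(\bm{x})$. Substituting this back, the subgradient inequality reduces to $f(\bm{y})\ge\langle\bm{z},\bm{y}\rangle$ for all $\bm{y}$. By definition of the dual function, $f^*(\bm{z})=\sup_{f(\bm{y})\le 1}\langle\bm{z},\bm{y}\rangle\le 1$, since every $\bm{y}$ with $f(\bm{y})\le 1$ satisfies $\langle\bm{z},\bm{y}\rangle\le f(\bm{y})\le 1$. For the reverse bound $f^*(\bm{z})\ge 1$, I would use that $\langle\bm{z},\bm{x}\rangle=f(\bm{x})$: if $f(\bm{x})>0$, the normalized vector $\bm{x}/f(\bm{x})$ has $f$-value $1$ by homogeneity and pairs with $\bm{z}$ to give $1$, so $f^*(\bm{z})\ge 1$; combined with the upper bound this yields $f^*(\bm{z})=1$. (The degenerate case $f(\bm{x})=0$ needs separate handling — see below.)

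For the reverse inclusion, suppose $\bm{z}$ satisfies $\langle\bm{z},\bm{x}\rangle=f(\bm{x})$ and $f^*(\bm{z})=1$. I want to recover the subgradient inequality. From $f^*(\bm{z})=1$ I get $\langle\bm{z},\bm{y}\rangle\le 1$ whenever $f(\bm{y})\le 1$; by $1$-homogeneity and a scaling argument this upgrades to $\langle\bm{z},\bm{y}\rangle\le f(\bm{y})$ for every $\bm{y}$ with $f(\bm{y})>0$ (replace $\bm{y}$ by $\bm{y}/f(\bm{y})$), and for $\bm{y}$ with $f(\bm{y})=0$ one uses that $t\bm{y}$ also has $f$-value $0\le 1$ for all $t>0$, forcing $\langle\bm{z},\bm{y}\rangle\le 0=f(\bm{y})$. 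Then for any $\bm{y}$,
\begin{align}
f(\bm{x})+\langle\bm{z},\bm{y}-\bm{x}\rangle
=\langle\bm{z},\bm{x}\rangle+\langle\bm{z},\bm{y}\rangle-\langle\bm{z},\bm{x}\rangle
=\langle\bm{z},\bm{y}\rangle\le f(\bm{y}),
\end{align}
which is exactly \eqref{eq.subdiff}. Hence $\bm{z}\in\partial f(\bm{x})$.

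The step I expect to be the main obstacle — really the only delicate point — is the handling of the degenerate case $f(\bm{x})=0$ (and more generally ensuring the scaling arguments that pass between "$f(\bm{y})\le 1$" and "all $\bm{y}$" are airtight when $f$ can vanish on nonzero vectors or take the value $+\infty$). When $f(\bm{x})=0$, the condition $\langle\bm{z},\bm{x}\rangle=f(\bm{x})=0$ is automatically consistent, but one must check the claim $f^*(\bm{z})=1$ is still the right characterization; this is where sub-additivity is genuinely used, since sub-additivity together with $f(\bm{0})=0$ guarantees $f(\bm{y})\ge 0$ everywhere (from $0=f(\bm{0})\le f(\bm{y})+f(-\bm{y})=2f(\bm{y})$ using homogeneity), so $f$ is a genuine gauge-type function and the dual norm $f^*$ is well-behaved. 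I would also remark that properness rules out $f\equiv+\infty$ and guarantees the suprema defining $f^*$ are over a nonempty set, so $f^*$ is well-defined. With these observations the two inclusions close and the proof is complete.
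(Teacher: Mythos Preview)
Your argument is correct and follows the same two-inclusion scaffold as the paper, with the reverse inclusion handled identically (both of you reduce to $\langle\bm{z},\bm{y}\rangle\le f(\bm{y})$ via scaling and then substitute). The forward inclusion is where you diverge slightly: you first pin down $\langle\bm{z},\bm{x}\rangle=f(\bm{x})$ using the two test points $\bm{y}=2\bm{x}$ and $\bm{y}=\bm{0}$ (homogeneity only), and then read off $\langle\bm{z},\bm{y}\rangle\le f(\bm{y})$ for every $\bm{y}$ directly from the subgradient inequality, so $f^*(\bm{z})\le 1$ falls out without sub-additivity. The paper instead substitutes $\bm{y}=\bm{v}+\bm{x}$ with $f(\bm{v})=1$ and invokes sub-additivity to bound $\sup_{f(\bm{v})=1}\langle\bm{z},\bm{v}\rangle\le\sup_{f(\bm{v})=1}\bigl(f(\bm{v}+\bm{x})-f(\bm{x})\bigr)\le 1$. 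Your route is marginally more economical, since sub-additivity is then only needed (as you correctly observe) to guarantee $f\ge 0$ so that the dual $f^*$ behaves like a gauge and the scaling arguments go through. Both the paper and you tacitly rely on $f(\bm{x})>0$ for the step $f^*(\bm{z})\ge 1$; your explicit flagging of the degenerate case is a useful addition that the paper omits.
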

\begin{proof}
By taking $\bm{y}=\bm{0}$ in (\ref{eq.subdiff}), we have:
$\langle \bm{z},\bm{x} \rangle \ge f(\bm{x})-f(\bm{0})$ which with $f(\bm{0})={0}$ in mind due to the homogeneity, $f^*(\bm{z})\ge 1$. Also, by taking $\bm{y}=\bm{v}+\bm{x}$ in (\ref{eq.subdiff}) and taking superemum from both sides under the condition $f(\bm{v})=1$ we have:
\begin{align}
f^*(\bm{z})=\sup_{f(\bm{v})=1}\langle \bm{z},\bm{v} \rangle \le \sup_{f(\bm{v})=1}(f(\bm{v}+\bm{x})-f(\bm{x}))\le 1,
\end{align}
where we used sub-additivity of $f$. Hence, we have $f^*(\bm{z})=1$ and subsequently $\langle \bm{z},\bm{x} \rangle = f(\bm{x})$.\par
On the other hand, if we have $\bm{z}$ such that $f^*(\bm{z})=1$ and $\langle \bm{z},\bm{x} \rangle = f(\bm{x})$, then for each $\bm{y}\in\mathbb{R}^n$:
\begin{align}
f(\bm{x})+\langle \bm{z},\bm{y}-\bm{x} \rangle = \langle \bm{z}, \bm{x}\rangle+\langle \bm{z},\bm{y}-\bm{x} \rangle \le f(\bm{y})\cdot
\end{align}
\end{proof}
\subsection{Descent Cones and Normal Cones}
The descent cone of a proper convex function $f:\mathbb{R}^n\rightarrow \mathbb{R}\cup \{\pm\infty\}$ at point $\bm{x}\in \mathbb{R}^n$ is the set of directions from $\bm{x}$ that do not increase $f$ at least for one step size:
\begin{align}\label{eq.descent cone}
\mathcal{D}(f,\bm{x})=\bigcup_{t\ge0}\{\bm{z}\in\mathbb{R}^n: f(\bm{x}+t\bm{z})\le f(\bm{x})\}\cdot
\end{align}
The descent cone of a convex function is a convex set. The normal cone of a convex function at $\bm{x}\in \mathbb{R}^n$ is defined as the polar of the descent cone defined by:
\begin{align}\label{eq.normalcone}
\mathcal{N}_{f}(\bm{x})=\mathcal{D}^{\circ}(f,\bm{x})=\{\bm{y}\in \mathbb{R}^n: \langle \bm{y},\bm{z}\rangle\le0~:~\forall \bm{z}\in \mathcal{D}(f,\bm{x})\}.
\end{align}
There is a famous duality \cite{rockafellar2015convex} between decent cone and subdifferntial of a convex function  given by:
\begin{align}\label{eq.D(f,x)}
\mathcal{D}^{\circ}(f,\bm{x})=\mathrm{cone}(\partial f(\bm{x})):=\bigcup_{t\ge0}t.\partial f(\bm{x}).
\end{align}
\subsection{Gaussian Width and Statistical Dimension}
\begin{defn}{Gaussian Width}\cite{chandrasekaran2012convex}: The Gaussian width of a set $\mathcal{C}\subset \mathbb{R}^n$ is defined as:
\begin{align}\label{eq.Gaussianwidth}
\omega(\mathcal{C}):=\mathbb{E}\sup_{\bm{v}\in \mathcal{C}} \langle \bm{g},\bm{v} \rangle,
\end{align}
where, $\bm{g}\in \mathbb{R}^n$ is a vector with independent and identically distributed (i.i.d) standard normal entries.
\end{defn}
\begin{defn}{Statistical Dimension}\cite{amelunxen2013living}:
Let $\mathcal{C}\subseteq\mathbb{R}^n$ be a convex closed cone. Statistical dimension of $\mathcal{C}$ is defined as:
\begin{align}\label{eq.statisticaldimension}
\delta(\mathcal{C}):=\mathds{E}\|\mathcal{P}_\mathcal{C}(g)\|_2^2=\mathds{E}\mathrm{dist}^2(g,\mathcal{C}^\circ),
\end{align}
where, $\mathcal{P}_\mathcal{C}(\bm{x})$ is the projection of $\bm{x}\in \mathbb{R}^n$ onto the set $\mathcal{C}$ defined as: $\mathcal{P}_\mathcal{C}(\bm{x})=\underset{\bm{z} \in \mathcal{C}}{\arg\min}\|\bm{z}-\bm{x}\|_2$.
\end{defn}
Statistical dimension generalizes the concept of dimension of subspaces to the class of convex cones.
\subsection{Linear Inverse Problems and Optimality Condition}
In \cite{amelunxen2013living}, it is proved that any random convex optimization problem undergoes a phase transition as number of measurements increased. The location of the transition is determined by the statistical dimension of descent cone of $f$ at $\bm{x}\in \mathbb{R}^n$ i.e. $\delta(\mathcal{D}(f,\bm{x}))$. Width of the transition from failure to success of $\mathsf{P}_f$ is $\mathcal{O}(\sqrt{n})$ measurements.
\begin{align}\label{eq.mainprob}
\mathsf{P}_f:~~~\min_{\bm{x}\in \mathbb{R}^n} ~f(\bm{x})~~~~
\mathrm{s.t.} ~~\bm{Ax}=\bm{b}.
\end{align}
First we express the optimality condition for $\mathsf{P}_f$ in the noise-free case.
\begin{prop}\cite[Proposition 2.1]{chandrasekaran2012convex} Optimality condition: Let $f$ be a proper convex function. The vector $\bm{x}\in \mathbb{R}^n$ is the unique optimal point of $\mathsf{P}_f$ if $\mathcal{D}(f,\bm{x})\cap \mathrm{null}(\bm{A})=\{\bm{0}\}$.
\end{prop}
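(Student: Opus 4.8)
The plan is to prove the statement directly: show that any feasible point other than $\bm{x}$ has strictly larger objective value. The mechanism is that every feasible competitor differs from $\bm{x}$ by a vector in $\mathrm{null}(\bm{A})$, and if that difference failed to increase $f$ it would be a nonzero element of $\mathcal{D}(f,\bm{x})\cap\mathrm{null}(\bm{A})$, contradicting the hypothesis.

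First I would record that $\bm{x}$ is feasible for $\mathsf{P}_f$ by construction, so $\bm{b}=\bm{Ax}$; consequently, for any other feasible point $\bm{x}'\in\mathbb{R}^n$ (i.e. $\bm{Ax}'=\bm{b}$) the difference $\bm{z}:=\bm{x}'-\bm{x}$ lies in $\mathrm{null}(\bm{A})$. Next, suppose toward a contradiction that some feasible $\bm{x}'\neq\bm{x}$ satisfies $f(\bm{x}')\le f(\bm{x})$. Choosing the step size $t=1$ in the definition (\ref{eq.descent cone}) of the descent cone, the inequality $f(\bm{x}+1\cdot\bm{z})=f(\bm{x}')\le f(\bm{x})$ certifies $\bm{z}\in\mathcal{D}(f,\bm{x})$. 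Hence $\bm{z}\in\mathcal{D}(f,\bm{x})\cap\mathrm{null}(\bm{A})$, and the hypothesis forces $\bm{z}=\bm{0}$, i.e. $\bm{x}'=\bm{x}$, a contradiction. Therefore $f(\bm{x}')>f(\bm{x})$ for every feasible $\bm{x}'\neq\bm{x}$, which is precisely the assertion that $\bm{x}$ is the unique optimal point of $\mathsf{P}_f$.

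I do not anticipate a genuine obstacle here: the argument is essentially set-theoretic once descent-cone membership is in hand. The only subtlety worth flagging is that a feasible competitor with no larger objective value already lands in $\mathcal{D}(f,\bm{x})$ at unit step size, so no closure or limiting argument about the (a priori non-closed) union defining $\mathcal{D}(f,\bm{x})$ is needed; likewise, convexity of $f$ is not actually invoked beyond ensuring $\mathsf{P}_f$ is a meaningful convex program, and the same reasoning would carry through for an arbitrary objective.
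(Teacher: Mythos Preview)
Your argument is correct. The paper itself does not supply a proof of this proposition; it is quoted verbatim from \cite[Proposition 2.1]{chandrasekaran2012convex} and used as a black box, so there is no in-paper proof to compare against. What you have written is precisely the standard justification one finds in that reference: a feasible competitor $\bm{x}'$ gives $\bm{z}=\bm{x}'-\bm{x}\in\mathrm{null}(\bm{A})$, and $f(\bm{x}')\le f(\bm{x})$ places $\bm{z}$ in $\mathcal{D}(f,\bm{x})$ at step size $t=1$, forcing $\bm{z}=\bm{0}$ under the hypothesis. Your closing remark is also on point: the implication holds for an arbitrary objective $f$, with convexity only entering so that $\mathsf{P}_f$ and the downstream machinery (subdifferentials, the duality (\ref{eq.D(f,x)})) behave as intended.
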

The next theorem determines number of measurements needed for successful recovery in $\mathsf{P}_f$ for any proper convex function $f$.
\begin{thm}\label{thm.Pfmeasurement}\cite[Theorem 2]{amelunxen2013living}:
Let $f:\mathbb{R}^n\rightarrow \mathbb{R}\cup \{\pm\infty\}$ be a proper convex function and $\bm{x}\in \mathbb{R}^n$ a fixed sparse vector. Suppose that $m$ independent Gaussian linear measurements are taken from $\bm{x}$ collected in a vector $\bm{y}=\bm{Ax} \in \mathbb{R}^m$. Then for a given tolerance $\eta \in [0,1]$ if
$m\ge \delta(\mathcal{D}(f,\bm{x}))+\sqrt{8\log(\frac{4}{\eta})n}$ we have: $\mathds{P}(\mathcal{D}(f,\bm{x})\cap \mathrm{null}(\bm{A})=\{\bm{0}\})\ge 1-\eta$. On the other hand, if $m\le \delta(\mathcal{D}(f,\bm{x}))-\sqrt{8\log(\frac{4}{\eta})n}$ then, $\mathds{P}(\mathcal{D}(f,\bm{x})\cap \mathrm{null}(\bm{A})=\{\bm{0}\})\le \eta$.
\end{thm}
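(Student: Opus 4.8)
The plan is to recognize Theorem \ref{thm.Pfmeasurement} as an instance of the approximate conic kinematic formula of \cite{amelunxen2013living}, applied to the pair consisting of the descent cone $C := \mathcal{D}(f,\bm{x})$ and the null space of $\bm{A}$. First I would dispose of the trivial regime: if $m \ge n$, then $\bm{A}$ has full column rank almost surely, so $\mathrm{null}(\bm{A}) = \{\bm{0}\}$ and the success statement holds with probability $1$, while the failure statement is vacuous because $\delta(C)\le n\le m$ makes $m\le\delta(C)-\sqrt{8\log(4/\eta)n}$ impossible. So assume $m<n$. Since the entries of $\bm{A}$ are i.i.d.\ standard Gaussian, $\mathrm{null}(\bm{A})$ is almost surely an $(n-m)$-dimensional subspace and its law is rotationally invariant on the Grassmannian $G(n-m,n)$; hence $\mathrm{null}(\bm{A})$ has the distribution of $\bm{Q}L_0$, where $L_0$ is any fixed $(n-m)$-dimensional subspace and $\bm{Q}$ is Haar-distributed on the orthogonal group. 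Consequently the event $\{\mathcal{D}(f,\bm{x})\cap\mathrm{null}(\bm{A})=\{\bm{0}\}\}$ has exactly the probability of $\{C\cap\bm{Q}L_0=\{\bm{0}\}\}$.

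Next I would record the statistical dimensions entering the kinematic bound. A subspace $L_0$ of dimension $n-m$, viewed as a convex cone, has $\delta(L_0)=n-m$, because $\mathcal{P}_{L_0}(\bm{g})$ is the orthogonal projection of a standard Gaussian onto an $(n-m)$-dimensional space, so $\mathds{E}\|\mathcal{P}_{L_0}(\bm{g})\|_2^2=n-m$. The descent cone $C=\mathcal{D}(f,\bm{x})$ is a nonempty closed convex cone which, for a genuine structured vector $\bm{x}$ and a structure-promoting $f$, is proper (in particular not a linear subspace), so the kinematic formula is applicable to the pair $(C,L_0)$.

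The core step is the approximate kinematic formula itself: for closed convex cones $C,D\subseteq\mathbb{R}^n$, at least one of which is not a subspace, and $\bm{Q}$ Haar-distributed on the orthogonal group,
\begin{align*}
\delta(C)+\delta(D)\le n-\lambda\sqrt{n} &\;\Longrightarrow\; \mathds{P}\{C\cap\bm{Q}D\ne\{\bm{0}\}\}\le 4e^{-\lambda^2/8},\\
\delta(C)+\delta(D)\ge n+\lambda\sqrt{n} &\;\Longrightarrow\; \mathds{P}\{C\cap\bm{Q}D\ne\{\bm{0}\}\}\ge 1-4e^{-\lambda^2/8},
\end{align*}
valid for every $\lambda\ge 0$. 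Taking $D=L_0$ with $\delta(D)=n-m$, the first hypothesis becomes $m\ge\delta(C)+\lambda\sqrt{n}$ and the second becomes $m\le\delta(C)-\lambda\sqrt{n}$. Choosing $\lambda=\sqrt{8\log(4/\eta)}$ gives $4e^{-\lambda^2/8}=\eta$ and $\lambda\sqrt{n}=\sqrt{8\log(4/\eta)\,n}$; substituting $C=\mathcal{D}(f,\bm{x})$ and complementing the event $\{C\cap\mathrm{null}(\bm{A})\ne\{\bm{0}\}\}$ yields precisely the two implications in the statement.

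The main obstacle is the approximate kinematic formula invoked in the third step, which is where the genuine work of \cite{amelunxen2013living} lies. Proving it requires (i) the exact conic Crofton/kinematic identity, which expresses $\mathds{P}\{C\cap\bm{Q}D\ne\{\bm{0}\}\}$ as a half-tail sum of the discrete convolution of the conic intrinsic-volume sequences $\{v_k(C)\}$ and $\{v_k(D)\}$, and (ii) a concentration inequality showing that the intrinsic-volume distribution of a cone concentrates, with sub-Gaussian tails of width $O(\sqrt{n})$, about its mean, which equals the statistical dimension $\delta(\cdot)$; part (ii) is obtained from a master Steiner formula together with a bounded-differences (or log-Laplace-transform) argument bounding the variance of the intrinsic-volume random variable by $O(n)$. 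If one is content to cite these two facts from \cite{amelunxen2013living}, as the remainder of the paper does, then Theorem \ref{thm.Pfmeasurement} follows from the elementary reduction in the first three paragraphs; reproving them from scratch is the only substantial difficulty.
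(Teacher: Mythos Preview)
Your proposal is correct, but note that the paper does not actually prove Theorem~\ref{thm.Pfmeasurement}: it is stated as a direct citation of \cite[Theorem~2]{amelunxen2013living} with no accompanying argument. Your sketch faithfully reproduces the proof from that reference---reduce to the event $\{C\cap \bm{Q}L_0=\{\bm{0}\}\}$ for a Haar-random $(n-m)$-subspace, use $\delta(L_0)=n-m$, invoke the approximate kinematic formula, and set $\lambda=\sqrt{8\log(4/\eta)}$---so there is nothing in the paper to compare against beyond noting that the authors treat the result as a black box while you have unpacked it.
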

Also in \cite{amelunxen2013living}, an error bound for the statistical dimension is given by the following theorem.
\begin{thm}\cite[Theorem 4.3]{amelunxen2013living} For any $\bm{x}\in \mathbb{R}^n\setminus\{\bm{0}\}$:
\begin{align}\label{eq.errorbound}
0\le\inf_{t\ge0}\mathds{E}\mathrm{dist}^2(\bm{g},t\partial f(\bm{x}))- \delta(\mathcal{D}(f,\bm{x}))\le \frac{2\sup_{s\in \partial f(\bm{x})}\|s\|_2}{f(\frac{\bm{x}}{\|\bm{x}\|_2})}.
\end{align}
\end{thm}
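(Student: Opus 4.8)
The assertion bundles two inequalities, and I would establish them in turn.

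\emph{The left inequality} $\delta(\mathcal{D}(f,\bm{x}))\le\inf_{t\ge0}\mathds{E}\,\mathrm{dist}^2(\bm{g},t\,\partial f(\bm{x}))$ is soft. By the conic duality (\ref{eq.D(f,x)}), $\mathcal{D}^{\circ}(f,\bm{x})=\mathcal{N}_f(\bm{x})=\mathrm{cone}(\partial f(\bm{x}))=\bigcup_{t\ge0}t\,\partial f(\bm{x})$, so $t\,\partial f(\bm{x})\subseteq\mathcal{D}^{\circ}(f,\bm{x})$ for every $t\ge0$. Since the Euclidean distance to a set cannot increase when the set is enlarged, $\mathrm{dist}(\bm{g},\mathcal{D}^{\circ}(f,\bm{x}))\le\mathrm{dist}(\bm{g},t\,\partial f(\bm{x}))$ holds pointwise in $\bm{g}$. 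Squaring, taking expectations, invoking the representation $\delta(\mathcal{D}(f,\bm{x}))=\mathds{E}\,\mathrm{dist}^2(\bm{g},\mathcal{D}^{\circ}(f,\bm{x}))$ from (\ref{eq.statisticaldimension}), and then taking the infimum over $t\ge0$ gives the claim.

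\emph{The right inequality} is the substantive part, and the plan is to compare, for one carefully chosen deterministic $t$, the scaled set $t\,\partial f(\bm{x})$ with the full cone $\mathcal{D}^{\circ}(f,\bm{x})$ along the metric projection. Fix a realization $\bm{g}$ and put $\bm{p}=\bm{p}(\bm{g}):=\mathcal{P}_{\mathcal{D}^{\circ}(f,\bm{x})}(\bm{g})$, so $\mathrm{dist}^2(\bm{g},\mathcal{D}^{\circ}(f,\bm{x}))=\|\bm{g}-\bm{p}\|_2^2$; using $\mathcal{D}^{\circ}(f,\bm{x})=\mathrm{cone}(\partial f(\bm{x}))$, write $\bm{p}=\tau\,\bm{s}$ with $\tau=\tau(\bm{g})\ge0$ and $\bm{s}=\bm{s}(\bm{g})\in\partial f(\bm{x})$. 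The characterization of the metric projection onto a closed convex cone gives $\langle\bm{g}-\bm{p},\bm{p}\rangle=0$, hence $\langle\bm{g}-\bm{p},\bm{s}\rangle=0$ when $\tau>0$; expanding $\|\bm{g}-t\bm{s}\|_2^2=\|(\bm{g}-\bm{p})+(\tau-t)\bm{s}\|_2^2$ the cross term drops, and for every $t\ge0$
\[
\mathrm{dist}^2(\bm{g},t\,\partial f(\bm{x}))\le\|\bm{g}-t\bm{s}\|_2^2=\mathrm{dist}^2(\bm{g},\mathcal{D}^{\circ}(f,\bm{x}))+(\tau-t)^2\|\bm{s}\|_2^2 .
\]
Bounding $\|\bm{s}\|_2\le S:=\sup_{\bm{v}\in\partial f(\bm{x})}\|\bm{v}\|_2$ and taking expectations yields
\[
\mathds{E}\,\mathrm{dist}^2(\bm{g},t\,\partial f(\bm{x}))\le\delta(\mathcal{D}(f,\bm{x}))+S^2\,\mathds{E}\big(\tau(\bm{g})-t\big)^2
\]
for every $t\ge0$; the boundary event $\tau(\bm{g})=0$ (i.e.\ $\bm{g}\in\mathcal{D}(f,\bm{x})$) is disposed of by a short separate estimate.

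\emph{The main obstacle} is the final quantitative step: choosing $t$ so that $S^2\,\mathds{E}(\tau(\bm{g})-t)^2$ does not exceed $2S/f(\bm{x}/\|\bm{x}\|_2)$. The structural input comes from Proposition \ref{prop.simplerform of subdiff}: every $\bm{s}\in\partial f(\bm{x})$ satisfies $\langle\bm{s},\bm{x}/\|\bm{x}\|_2\rangle=f(\bm{x}/\|\bm{x}\|_2)$, so Cauchy--Schwarz gives $\|\bm{s}\|_2\ge f(\bm{x}/\|\bm{x}\|_2)$ and, more usefully, $\tau(\bm{g})\,f(\bm{x}/\|\bm{x}\|_2)=\langle\bm{p}(\bm{g}),\bm{x}/\|\bm{x}\|_2\rangle$. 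The right-hand side is a $1$-Lipschitz function of $\bm{g}$ (a convex projection followed by pairing against a unit vector), so the radial coordinate $\tau(\bm{g})$ is $1/f(\bm{x}/\|\bm{x}\|_2)$-Lipschitz in $\bm{g}$ and therefore concentrates about its mean with sub-Gaussian scale $1/f(\bm{x}/\|\bm{x}\|_2)$, with no dependence on the ambient dimension. Taking $t$ to be (a suitable scalar multiple of) that mean and inserting the Gaussian concentration estimate back into the last display is how I would close the gap. The part I expect to be genuinely delicate is extracting precisely the constant $2$ and the first power $S/f(\bm{x}/\|\bm{x}\|_2)$: a plain second-moment (Poincaré) bound only yields $S^2/f(\bm{x}/\|\bm{x}\|_2)^2$, so the estimate has to make finer joint use of the uniform bound $\|\bm{s}(\bm{g})\|_2\le S$ and the sharp concentration of $\tau(\bm{g})$, and to keep careful track of the boundary contribution.
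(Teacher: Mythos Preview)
This theorem is not proved in the paper at all: it is cited from \cite[Theorem~4.3]{amelunxen2013living} and used as a black box throughout Sections~\ref{section.l1analysis}--\ref{section.TV}. There is therefore no in-paper argument to compare your proposal against.

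On the merits of your sketch: the left inequality is handled correctly and completely. For the right inequality, the pointwise decomposition
\[
\mathrm{dist}^2(\bm g,\,t\,\partial f(\bm x))\;\le\;\mathrm{dist}^2(\bm g,\mathcal D^\circ(f,\bm x))+(\tau(\bm g)-t)^2\|\bm s(\bm g)\|_2^2
\]
is valid (when $\tau>0$) and is a natural starting point. The gap you flag, however, is genuine and your outline does not close it. Once you replace $\|\bm s(\bm g)\|_2^2$ by $S^2$ and invoke Gaussian Poincar\'e for the $1/f(\bm x/\|\bm x\|_2)$-Lipschitz map $\tau$, what remains is exactly $S^2\min_t\mathds E(\tau-t)^2=S^2\,\mathrm{Var}(\tau)\le S^2/f(\bm x/\|\bm x\|_2)^2$. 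For $\ell_1$ in $\mathbb R^p$ with sparsity $s$ this reads $p/s$, whereas the target $2S/f$ reads $2\sqrt{p/s}$; the discrepancy is a full power of $S/f$, not a constant. No ``finer joint use'' of the uniform bound $\|\bm s\|\le S$ together with concentration of $\tau$ can repair this, because after the replacement $\|\bm s\|^2\to S^2$ the remaining factor is precisely $\mathrm{Var}(\tau)$, and the Lipschitz constant of $\tau$ really is $1/f(\bm x/\|\bm x\|_2)$. The proof in \cite{amelunxen2013living} does not decouple $\|\bm s(\bm g)\|$ from $\tau(\bm g)-t$ in this way; getting the linear rather than quadratic dependence on $S/f$ requires keeping the random quantity $\tau(\bm g)\|\bm s(\bm g)\|=\|\mathcal P_{\mathcal D^\circ}(\bm g)\|$ intact and exploiting its structure directly, which is the missing idea in your plan.
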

\section{Entrywise sparsity in a dictionary}\label{section.l1analysis}
In this section, the following questions are investigated about a non-uniform sparse model in $\bm{\Omega}\in \mathbb{R}^{p\times n}$:
\begin{enumerate}
\item How many measurements one needs to recover an $s$-sparse vector in $\bm{\Omega}\in \mathbb{R}^{p\times n}$ by solving $\mathsf{P}_{1,\bm{\Omega}}$ and $\mathsf{P}_{1,\bm{\Omega},\bm{w}}$?
\item What is the optimal choice of weights in $\mathsf{P}_{1,\bm{\Omega},\bm{w}}$ given extra prior information?
\end{enumerate}
\subsection{Number of Measurements for successful Recovery}
First we state the following theorem.
\begin{thm}\label{thm.Omega}
Let $\bm{x}\in \mathbb{R}^n$ be a fixed $s$-sparse vector in the dictionary $\bm{\Omega}$ (i.e. $\|\bm{\Omega x}\|_0\le s$). Suppose that $m$ independent Gaussian linear measurements are taken from $\bm{x}$ collected in a vector $\bm{y}=\bm{Ax} \in \mathbb{R}^m$. Then for a given tolerance $\eta \in[0,1]$, if
\begin{align}
m\ge {\kappa}^2(\bm{\Omega})\delta(\mathcal{D}(\|.\|_1,\bm{\Omega x}))+1+\sqrt{8\log\frac{4}{\eta}n} ,\nonumber
\end{align}
 then the program $\mathsf{P}_{1,\bm{\Omega}}$ succeeds with probability at least $1- \eta \exp(-\frac{\mu^2}{8n})$. In other words, we have:
  \begin{align}
  \mathds{P}(\mathcal{D}(\|\bm{\Omega}.\|_1,\bm{x})\cap \mathrm{null}(\bm{A})=\{\bm{0}\})\ge 1- \eta \exp\big(-\frac{\mu^2}{8n}\big),\nonumber
   \end{align}
   where $\kappa(\bm{\Omega})=\frac{\sigma_{max}(\bm{\Omega})}{\sigma_{min}(\bm{\Omega})}$ is the condition number of the matrix $\bm{\Omega}$ and $\mu=\kappa^2(\bm{\Omega})\delta(\mathcal{D}(\|.\|_1,\bm{\Omega x}))-\delta(\mathcal{D}(\|\bm{\Omega}.\|_1, \bm{x}))+1$.
\end{thm}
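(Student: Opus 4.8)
The plan is to reduce the statement to Theorem~\ref{thm.Pfmeasurement} applied to the seminorm $f=\|\bm{\Omega}.\|_1$ at the point $\bm{x}$, absorbing all of the geometric distortion created by the dictionary into the condition number $\kappa(\bm{\Omega})$. By \cite[Proposition 2.1]{chandrasekaran2012convex}, $\mathsf{P}_{1,\bm{\Omega}}$ recovers $\bm{x}$ exactly precisely on the event $\mathcal{D}(\|\bm{\Omega}.\|_1,\bm{x})\cap\mathrm{null}(\bm{A})=\{\bm{0}\}$, and Theorem~\ref{thm.Pfmeasurement} says this event has probability at least $1-\eta'$ whenever $m\ge\delta(\mathcal{D}(\|\bm{\Omega}.\|_1,\bm{x}))+\sqrt{8\log(4/\eta')n}$. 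So the whole proof comes down to two things: (i) the comparison $\delta(\mathcal{D}(\|\bm{\Omega}.\|_1,\bm{x}))\le\kappa^2(\bm{\Omega})\,\delta(\mathcal{D}(\|.\|_1,\bm{\Omega x}))+1$ — equivalently $\mu\ge0$ — and (ii) a short bookkeeping step that converts the slack in the hypothesis into the probability factor $\exp(-\mu^2/(8n))$.

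For step (i), which I expect to be the heart of the matter, I would first note straight from the definition of the descent cone that $\bm{z}\in\mathcal{D}(\|\bm{\Omega}.\|_1,\bm{x})$ iff $\bm{\Omega z}\in\mathcal{D}(\|.\|_1,\bm{\Omega x})$; since $\bm{\Omega}$ has full column rank (a standing assumption, without which $\kappa(\bm{\Omega})$ is undefined), the map $\bm{z}\mapsto\bm{\Omega z}$ is a linear bijection from $\mathcal{D}(\|\bm{\Omega}.\|_1,\bm{x})$ onto $\mathcal{D}(\|.\|_1,\bm{\Omega x})\cap\mathrm{range}(\bm{\Omega})$, inverted by $\bm{\Omega}^\dagger$ on the range. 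Transporting the Gaussian width through this bijection and using $\sigma_{\max}^{-1}(\bm{\Omega})\|\bm{u}\|_2\le\|\bm{\Omega}^\dagger\bm{u}\|_2\le\sigma_{\min}^{-1}(\bm{\Omega})\|\bm{u}\|_2$ for $\bm{u}\in\mathrm{range}(\bm{\Omega})$, one bounds $\omega(\mathcal{D}(\|\bm{\Omega}.\|_1,\bm{x})\cap\mathds{B}^n)$ by $\sigma_{\max}(\bm{\Omega})$ times the expected support function of $\mathcal{D}(\|.\|_1,\bm{\Omega x})\cap\mathds{B}^p$ evaluated at the degenerate Gaussian $(\bm{\Omega}^\dagger)^{T}\bm{g}$. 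Because the covariance of that vector is dominated by $\sigma_{\min}^{-2}(\bm{\Omega})\bm{I}_p$, completing it to a standard Gaussian with an independent summand and invoking Jensen (the support function is convex) replaces it by a standard Gaussian at the price of one further factor $\sigma_{\min}^{-1}(\bm{\Omega})$, yielding $\omega(\mathcal{D}(\|\bm{\Omega}.\|_1,\bm{x})\cap\mathds{B}^n)\le\kappa(\bm{\Omega})\,\omega(\mathcal{D}(\|.\|_1,\bm{\Omega x})\cap\mathds{B}^p)$.

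To pass between Gaussian width and statistical dimension I would use that, for any convex cone $\mathcal{C}$, the support function of $\mathcal{C}\cap\mathds{B}$ at a point equals the norm of the metric projection onto $\mathcal{C}$; hence by Jensen $\omega(\mathcal{C}\cap\mathds{B})^2\le\mathds{E}\|\mathcal{P}_{\mathcal{C}}(\bm{g})\|_2^2=\delta(\mathcal{C})$, which for $\mathcal{C}=\mathcal{D}(\|.\|_1,\bm{\Omega x})$ bounds its squared width by $\delta(\mathcal{D}(\|.\|_1,\bm{\Omega x}))$. In the reverse direction, $\bm{g}\mapsto\|\mathcal{P}_{\mathcal{D}}(\bm{g})\|_2$ is $1$-Lipschitz, so Gaussian concentration gives variance at most $1$ and therefore $\delta(\mathcal{D})=\mathds{E}\|\mathcal{P}_{\mathcal{D}}(\bm{g})\|_2^2\le\omega(\mathcal{D}\cap\mathds{B}^n)^2+1$. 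Chaining the last three inequalities establishes the key comparison, i.e.\ $\mu\ge0$. For step (ii) I then apply Theorem~\ref{thm.Pfmeasurement} to $f=\|\bm{\Omega}.\|_1$ with tolerance $\eta':=\eta\exp(-\mu^2/(8n))\in[0,1]$: since $m\ge\kappa^2(\bm{\Omega})\delta(\mathcal{D}(\|.\|_1,\bm{\Omega x}))+1+\sqrt{8\log(4/\eta)n}=\delta(\mathcal{D}(\|\bm{\Omega}.\|_1,\bm{x}))+\mu+\sqrt{8\log(4/\eta)n}$ and $\mu\ge0$, squaring shows $\mu+\sqrt{8\log(4/\eta)n}\ge\sqrt{8\log(4/\eta')n}$, so the hypothesis of Theorem~\ref{thm.Pfmeasurement} holds at tolerance $\eta'$ and the stated probability bound follows.

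The main obstacle is step (i): the descent-cone identity is immediate, but faithfully tracking the Gaussian width through the non-orthogonal map $\bm{\Omega}^\dagger$ — in particular justifying the covariance-domination/Gaussian-completion step that trades $(\bm{\Omega}^\dagger)^{T}\bm{g}$ for a standard Gaussian with only a $\sigma_{\min}^{-1}(\bm{\Omega})$ loss, and applying the two width-to-statistical-dimension conversions in exactly the directions that leave the additive constant equal to $1$ — is where the real work lies.
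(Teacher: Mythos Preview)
Your proposal is correct and follows the same two-step architecture as the paper: first establish the comparison $\delta(\mathcal{D}(\|\bm{\Omega}\cdot\|_1,\bm{x}))\le\kappa^2(\bm{\Omega})\,\delta(\mathcal{D}(\|\cdot\|_1,\bm{\Omega x}))+1$ (this is Proposition~\ref{prop.upperforstatis}, proved in Appendix~\ref{appendix.1}), then feed the resulting slack $\mu$ into the phase-transition bound (Appendix~\ref{appendix.2}). The only substantive difference is in the Gaussian-width comparison inside step~(i): the paper invokes the Sudakov--Fernique inequality to pass from the process $\langle(\bm{\Omega}^\dagger)^T\bm{g},\cdot\rangle$ to $\sigma_{\min}^{-1}(\bm{\Omega})\langle\bm{h},\cdot\rangle$, whereas you use the more elementary covariance-domination/Jensen trick (complete $(\bm{\Omega}^\dagger)^T\bm{g}$ to an isotropic Gaussian by an independent summand and apply convexity of the support function). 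Both arguments are valid here because the covariance ordering is exactly what Sudakov--Fernique needs, and your version has the virtue of avoiding a comparison theorem altogether; the paper's version, on the other hand, would extend to settings where only increment bounds (not full covariance domination) are available. For step~(ii) you apply Theorem~\ref{thm.Pfmeasurement} as a black box at tolerance $\eta'=\eta e^{-\mu^2/(8n)}$, while the paper re-enters the tail-functional estimate of \cite[Theorem~6.1]{amelunxen2013living}; these are equivalent and your bookkeeping inequality $(\mu+\sqrt{8n\log(4/\eta)})^2\ge\mu^2+8n\log(4/\eta)$ is exactly the computation the paper carries out.
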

In what follows in this section, we denote the normalized number of measurements required to solve $\mathsf{P}_{1,\bm{\Omega}}$ and $\bm{P}_{1,\bm{\Omega},\bm{w}}$ with probability $1- \eta \exp(-\frac{\mu^2}{8n})$ by $m_{p,s}$ and $m_{p,s,\bm{w}}$, respectively which are defined as:
\begin{align}\label{eq.mps}
m_{p,s}&:=\kappa^2(\bm{\Omega})\frac{\delta(\mathcal{D}(\|.\|_1,\bm{\Omega} \bm{x}))}{p}+\frac{1}{p}.
\end{align}
\begin{align}\label{eq.mpsw}
m_{p,s,\bm{w}}&:=\kappa^2(\bm{\Omega})\frac{\delta(\mathcal{D}(\|.\|_{1,\bm{w}},\bm{\Omega x}))}{p}+\frac{1}{p}.
\end{align}
Theorem \ref{thm.Omega} is proved in Appendix \ref{appendix.2}. Meanwhile in the proof, we derive an upper bound for $\delta(\mathcal{D}(\|\bm{\Omega}.\|_1,\bm{x}))$ given in the following.
\begin{prop}\label{prop.upperforstatis} Statistical dimension of the descent cone of $\|\bm{\Omega}.\|_1$ at $\bm{x}\in \mathbb{R}^n$ is upper bounded by statistical dimension of the descent cone of $\|\cdot\|_1$ at $\bm{\Omega x}$ up to a positive scaling depending on $\bm{\Omega}$. Precisely, we have:
\begin{align}\label{eq.upper}
\delta(\mathcal{D}(\|\bm{\Omega}.\|_1,\bm{x}))\le \kappa^2(\bm{\Omega})\delta(\mathcal{D}(\|\cdot\|_1,\bm{\Omega x}))+1.
\end{align}
\end{prop}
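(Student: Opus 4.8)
The plan is to translate the statistical‑dimension inequality \eqref{eq.upper} into one about conic Gaussian widths, push the descent cone of $\|\bm{\Omega}\cdot\|_1$ at $\bm{x}$ through the pseudoinverse of $\bm{\Omega}$, and let the condition number $\kappa(\bm{\Omega})$ absorb the resulting distortion. The starting point is the elementary identity $\mathcal{D}(\|\bm{\Omega}\cdot\|_1,\bm{x})=\{\bm{z}\in\mathbb{R}^n:\bm{\Omega z}\in\mathcal{D}(\|\cdot\|_1,\bm{\Omega x})\}$, which follows at once from $\bm{\Omega}(\bm{x}+t\bm{z})=\bm{\Omega x}+t\,\bm{\Omega z}$ and the definition of the descent cone in \eqref{eq.descent cone}. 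Hence, writing $\mathcal{C}:=\mathcal{D}(\|\cdot\|_1,\bm{\Omega x})\subseteq\mathbb{R}^p$, the cone on the left is the $\bm{\Omega}$‑preimage of $\mathcal{C}$; since $\kappa(\bm{\Omega})<\infty$ forces $\bm{\Omega}$ to have full column rank, we have $\bm{\Omega}^\dagger\bm{\Omega}=\bm{I}_n$ and $\bm{\Omega}\bm{\Omega}^\dagger$ is the orthogonal projection onto $\mathrm{range}(\bm{\Omega})$, so $\mathcal{D}(\|\bm{\Omega}\cdot\|_1,\bm{x})$ is the image under $\bm{\Omega}^\dagger$ of $\mathcal{C}\cap\mathrm{range}(\bm{\Omega})$. (The degenerate case $\bm{x}=\bm{0}$, where both descent cones collapse to $\{\bm{0}\}$, is trivial and handled separately.)

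Next I would estimate the conic Gaussian width. For $\bm{z}$ with $\|\bm{z}\|_2\le1$ and $\bm{\Omega z}\in\mathcal{C}$, set $\bm{v}=\bm{\Omega z}$; then $\bm{z}=\bm{\Omega}^\dagger\bm{v}$, $\bm{v}\in\mathcal{C}\cap\mathrm{range}(\bm{\Omega})$, and $\|\bm{v}\|_2=\|\bm{\Omega}\bm{\Omega}^\dagger\bm{v}\|_2\le\sigma_{\max}(\bm{\Omega})\|\bm{\Omega}^\dagger\bm{v}\|_2\le\sigma_{\max}(\bm{\Omega})$. Substituting $\langle\bm{g},\bm{z}\rangle=\langle(\bm{\Omega}^\dagger)^T\bm{g},\bm{v}\rangle$ and relaxing the feasible set to $\mathcal{C}\cap\mathds{B}_{\sigma_{\max}(\bm{\Omega})}^p$ yields $\omega\!\big(\mathcal{D}(\|\bm{\Omega}\cdot\|_1,\bm{x})\cap\mathds{B}_1^n\big)\le\sigma_{\max}(\bm{\Omega})\,\omega\!\big(\bm{\Omega}^\dagger(\mathcal{C}\cap\mathds{B}_1^p)\big)$. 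The crux is then the linear‑image bound $\omega\!\big(\bm{\Omega}^\dagger(\mathcal{C}\cap\mathds{B}_1^p)\big)\le\|\bm{\Omega}^\dagger\|_{2\to2}\,\omega(\mathcal{C}\cap\mathds{B}_1^p)=\tfrac{1}{\sigma_{\min}(\bm{\Omega})}\,\omega(\mathcal{C}\cap\mathds{B}_1^p)$, which is the one genuinely non‑bookkeeping step. I would establish it by a Gaussian comparison: the centered Gaussian processes $\bm{v}\mapsto\langle(\bm{\Omega}^\dagger)^T\bm{g},\bm{v}\rangle$ and $\bm{v}\mapsto\sigma_{\min}(\bm{\Omega})^{-1}\langle\bm{h},\bm{v}\rangle$ indexed by $\bm{v}\in\mathcal{C}\cap\mathds{B}_1^p$ satisfy $\mathbb{E}\big(\langle(\bm{\Omega}^\dagger)^T\bm{g},\bm{v}-\bm{v}'\rangle\big)^2=\|\bm{\Omega}^\dagger(\bm{v}-\bm{v}')\|_2^2\le\sigma_{\min}(\bm{\Omega})^{-2}\|\bm{v}-\bm{v}'\|_2^2$, so the Sudakov–Fernique inequality gives the comparison of suprema; alternatively one factors $\bm{\Omega}^\dagger$ through the orthogonal projection onto $\mathrm{range}(\bm{\Omega})$ and a diagonal contraction (via a thin SVD of $\bm{\Omega}$) and uses that Gaussian width is non‑increasing under each. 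Either way, combining with the previous display gives $\omega\!\big(\mathcal{D}(\|\bm{\Omega}\cdot\|_1,\bm{x})\cap\mathds{B}_1^n\big)\le\kappa(\bm{\Omega})\,\omega\!\big(\mathcal{D}(\|\cdot\|_1,\bm{\Omega x})\cap\mathds{B}_1^p\big)$.

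Finally I would square this inequality and invoke the standard sandwich between statistical dimension and squared conic Gaussian width of a closed convex cone $\mathcal{K}\subseteq\mathbb{R}^d$, namely $\omega^2(\mathcal{K}\cap\mathds{B}_1^d)\le\delta(\mathcal{K})\le\omega^2(\mathcal{K}\cap\mathds{B}_1^d)+1$ \cite{amelunxen2013living}. Applying the upper half to $\mathcal{K}=\mathcal{D}(\|\bm{\Omega}\cdot\|_1,\bm{x})$ and the lower half to $\mathcal{K}=\mathcal{D}(\|\cdot\|_1,\bm{\Omega x})$ gives $\delta(\mathcal{D}(\|\bm{\Omega}\cdot\|_1,\bm{x}))\le\omega^2(\mathcal{D}(\|\bm{\Omega}\cdot\|_1,\bm{x})\cap\mathds{B}_1^n)+1\le\kappa^2(\bm{\Omega})\,\omega^2(\mathcal{D}(\|\cdot\|_1,\bm{\Omega x})\cap\mathds{B}_1^p)+1\le\kappa^2(\bm{\Omega})\,\delta(\mathcal{D}(\|\cdot\|_1,\bm{\Omega x}))+1$, which is exactly \eqref{eq.upper}. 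I expect the additive $1$ to be precisely the slack in the cone sandwich and the factor $\kappa^2(\bm{\Omega})$ to be exactly what is needed so that the quantity $\mu$ in Theorem \ref{thm.Omega} is nonnegative; everything outside the width‑contraction step is routine manipulation of pseudoinverses and the cited sandwich.
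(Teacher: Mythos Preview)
Your proposal is correct and follows essentially the same route as the paper: relate the two descent cones through the pseudoinverse, use Sudakov--Fernique to obtain $\omega(\mathcal{D}(\|\bm{\Omega}\cdot\|_1,\bm{x})\cap\mathds{B}_1^n)\le\kappa(\bm{\Omega})\,\omega(\mathcal{D}(\|\cdot\|_1,\bm{\Omega x})\cap\mathds{B}_1^p)$, and then apply the sandwich $\omega^2\le\delta\le\omega^2+1$. The only cosmetic difference is that you establish the preimage identity $\mathcal{D}(\|\bm{\Omega}\cdot\|_1,\bm{x})=\{\bm{z}:\bm{\Omega z}\in\mathcal{D}(\|\cdot\|_1,\bm{\Omega x})\}$ directly from the definition \eqref{eq.descent cone}, whereas the paper derives the corresponding containment via polar cones and the subdifferential chain rule; your route is a bit cleaner but the two arguments are otherwise the same.
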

Proof of Proposition \ref{prop.upperforstatis} is given in Appendix \ref{appendix.1}. In the following, we obtain upper bounds for number of measurements required for $\mathsf{P}_{1,\bm{\Omega}}$ and $\mathsf{P}_{1,\bm{\Omega},\bm{w}}$ to succeed.
\begin{lem}\label{lemma.l1anaupper}
Let $\bm{x}\in \mathbb{R}^n$ be an $s$-sparse vector in $\bm{\Omega}\in \mathbb{R}^{p\times n}$. Then an upper bound for normalized number of measurements required for $\mathsf{P}_{1,\bm{\Omega}}$ to succeed (i.e. $m_{p,s}$) is given by:
\begin{align}\label{eq.mhatps}
\hat{m}_{p,s}:=\kappa^2(\bm{\Omega})\underset{{t\ge0}}{\inf}\Psi_t(\sigma)+\frac{1}{p},
\end{align}
with $\Psi_t(\sigma)$ defined as:
\begin{align}\label{eq.analyisupper}
\Psi_t(\sigma)=\sigma(1+t^2)+(1-\sigma)\phi(t),
\end{align}
where $\phi(z):=\sqrt{\frac{2}{\pi}}\int_{z}^{\infty}(u-z)^2\exp(-\frac{u^2}{2})du$.
\end{lem}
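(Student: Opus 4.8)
The plan is to combine Proposition~\ref{prop.upperforstatis} with a known closed-form bound for the statistical dimension of the descent cone of the ordinary $\ell_1$ norm. By Proposition~\ref{prop.upperforstatis} we already have $\delta(\mathcal{D}(\|\bm{\Omega}.\|_1,\bm{x}))\le \kappa^2(\bm{\Omega})\,\delta(\mathcal{D}(\|\cdot\|_1,\bm{\Omega x}))+1$, so after dividing by $p$ the bound $m_{p,s}\le \kappa^2(\bm{\Omega})\tfrac{\delta(\mathcal{D}(\|\cdot\|_1,\bm{\Omega x}))}{p}+\tfrac1p$ is immediate from \eqref{eq.mps}. It therefore suffices to show $\delta(\mathcal{D}(\|\cdot\|_1,\bm{\Omega x}))\le p\,\inf_{t\ge0}\Psi_t(\sigma)$, where $\sigma=\|\bm{\Omega x}\|_0/p$ is the relative sparsity.

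First I would invoke the error bound \eqref{eq.errorbound} together with the duality \eqref{eq.D(f,x)}, which gives $\delta(\mathcal{D}(\|\cdot\|_1,\bm{v}))\le \inf_{t\ge0}\mathds{E}\,\mathrm{dist}^2(\bm{g},t\,\partial\|\bm{v}\|_1)$ for any nonzero $\bm{v}\in\mathbb{R}^p$; here $\bm{v}=\bm{\Omega x}$. Next I would compute the expected squared distance explicitly using the standard description of the subdifferential of the $\ell_1$ norm: writing $S=\mathrm{supp}(\bm{v})$ with $|S|=\sigma p$, the set $\partial\|\bm{v}\|_1$ consists of vectors equal to $\mathrm{sign}(v_i)$ on $S$ and lying in $[-1,1]$ coordinatewise off $S$. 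For a fixed $t$, the projection of $\bm{g}$ onto $t\,\partial\|\bm{v}\|_1$ decouples over coordinates: on each coordinate $i\in S$ it contributes $(g_i - t\,\mathrm{sign}(v_i))^2$, whose expectation is $1+t^2$; on each coordinate $i\notin S$ it contributes $\mathrm{dist}^2(g_i,[-t,t]) = (|g_i|-t)_+^2$, whose expectation is exactly $\phi(t)=\sqrt{2/\pi}\int_t^\infty (u-t)^2 e^{-u^2/2}\,du$ after exploiting symmetry. Summing over the $\sigma p$ support coordinates and the $(1-\sigma)p$ off-support coordinates yields $\mathds{E}\,\mathrm{dist}^2(\bm{g},t\,\partial\|\bm{v}\|_1) = p\big(\sigma(1+t^2)+(1-\sigma)\phi(t)\big) = p\,\Psi_t(\sigma)$. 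Taking the infimum over $t\ge0$ and substituting back gives $\delta(\mathcal{D}(\|\cdot\|_1,\bm{\Omega x}))\le p\,\inf_{t\ge0}\Psi_t(\sigma)$, and dividing by $p$ produces $\hat m_{p,s}$ as claimed in \eqref{eq.mhatps}.

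The only genuinely delicate point is the moment computation on the off-support coordinates: one must verify $\mathds{E}\,(|g|-t)_+^2 = \phi(t)$ with the precise constant $\sqrt{2/\pi}$, which follows by writing $\mathds{E}(|g|-t)_+^2 = 2\int_t^\infty (u-t)^2 \tfrac{1}{\sqrt{2\pi}} e^{-u^2/2}\,du$ and noting $2/\sqrt{2\pi}=\sqrt{2/\pi}$. Everything else is bookkeeping: the coordinatewise decoupling of the projection is standard because $t\,\partial\|\bm{v}\|_1$ is a box-times-point product, and the inequality direction is guaranteed by \eqref{eq.errorbound}. I would also remark that the subdifferential is bounded and the infimum is attained, so the bound is finite and meaningful; the constant $1/p$ and the factor $\kappa^2(\bm{\Omega})$ are simply carried along from Proposition~\ref{prop.upperforstatis} and \eqref{eq.mps} without further work.
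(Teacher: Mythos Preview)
Your proposal is correct and follows essentially the same route as the paper: bound $\delta(\mathcal{D}(\|\cdot\|_1,\bm{\Omega x}))$ by $\inf_{t\ge0}\mathds{E}\,\mathrm{dist}^2(\bm{g},t\,\partial\|\bm{\Omega x}\|_1)$ via \eqref{eq.errorbound}, then compute the expectation coordinatewise using the explicit form of $\partial\|\cdot\|_1$. The paper obtains Lemma~\ref{lemma.l1anaupper} by specializing the weighted computation in Lemma~\ref{lemma.l1weightedanaupper} to $\bm{w}=\bm{1}$, which is exactly your calculation; note only that your appeal to Proposition~\ref{prop.upperforstatis} is unnecessary, since by definition \eqref{eq.mps} one already has $m_{p,s}=\kappa^2(\bm{\Omega})\,p^{-1}\delta(\mathcal{D}(\|\cdot\|_1,\bm{\Omega x}))+p^{-1}$ as an equality.
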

\begin{lem}\label{lemma.l1weightedanaupper}
Let $\bm{x}\in \mathbb{R}^n$ be a non-uniform $s$-sparse vector in $\bm{\Omega}\in \mathbb{R}^{p\times n}$ with parameters $\{\rho_i\}_{i=1}^L$ and $\{\alpha_i\}_{i=1}^L$. Then an upper bound for normalized number of measurements required for $\mathsf{P}_{1,\bm{\Omega},\bm{w}}$ with $\bm{w}=\sum_{i=1}^{L}\omega_i1_{\mathcal{P}_i}$ to succeed (i.e. $m_{p,s,\bm{w}}$) is given by:
\begin{align}\label{eq.mhat_psw}
\hat{m}_{p,s,\bm{w}}:=\kappa^2(\bm{\Omega})\underset{{t\ge0}}{\inf}\Psi_{t,\bm{w}}(\sigma,\bm{\rho},\bm{\alpha})+\frac{1}{p},
\end{align}
with $\Psi_{t,\bm{w}}(\sigma,\bm{\rho},\bm{\alpha})$ defined as:
\begin{align}\label{eq.weightedanalyisupper}
\Psi_{t,\bm{w}}(\sigma,\bm{\rho},\bm{\alpha})=
\sum_{i=1}^{L}\rho_i\big(\alpha_i(t^2\omega_i^2+1)+(1-\alpha_i)\phi(t\omega_i)\big).
\end{align}
\end{lem}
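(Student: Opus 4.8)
The plan is to bound the statistical dimension $\delta(\mathcal{D}(\|\cdot\|_{1,\bm w},\bm{\Omega x}))$ that enters the definition~\eqref{eq.mpsw} of $m_{p,s,\bm w}$ by means of the error bound of Amelunxen et al.\ stated in~\eqref{eq.errorbound}, namely $\delta(\mathcal{D}(f,\bm y))\le\inf_{t\ge 0}\mathds{E}\,\mathrm{dist}^2(\bm g,t\,\partial f(\bm y))$ for a proper convex $f$ and $\bm y\neq\bm 0$. Applying this with $f=\|\cdot\|_{1,\bm w}$ and $\bm y=\bm{\Omega x}\in\mathbb{R}^p$, dividing by $p$, and substituting into~\eqref{eq.mpsw}, one immediately gets $m_{p,s,\bm w}\le\kappa^2(\bm\Omega)\inf_{t\ge 0}\tfrac1p\,\mathds{E}\,\mathrm{dist}^2(\bm g,t\,\partial\|\cdot\|_{1,\bm w}(\bm{\Omega x}))+\tfrac1p$. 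So everything reduces to showing $\tfrac1p\,\mathds{E}\,\mathrm{dist}^2(\bm g,t\,\partial\|\cdot\|_{1,\bm w}(\bm{\Omega x}))=\Psi_{t,\bm w}(\sigma,\bm\rho,\bm\alpha)$ for every $t\ge 0$.

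First I would identify the subdifferential. Since $\|\cdot\|_{1,\bm w}$ is $1$-homogeneous and sub-additive, Proposition~\ref{prop.simplerform of subdiff} applies; its dual is $f^*(\bm z)=\max_{j}|z_j|/w_j$, which yields the familiar separable description $\partial\|\cdot\|_{1,\bm w}(\bm v)=\{\bm z\in\mathbb{R}^p:\ z_j=w_j\,\mathrm{sign}(v_j)\ \text{for}\ j\in\mathrm{supp}(\bm v),\ |z_j|\le w_j\ \text{for}\ j\notin\mathrm{supp}(\bm v)\}$. Crucially, $t\,\partial\|\cdot\|_{1,\bm w}(\bm v)$ is then a Cartesian product over the coordinates --- the singleton $\{t w_j\,\mathrm{sign}(v_j)\}$ on $\mathrm{supp}(\bm v)$ and the segment $[-tw_j,tw_j]$ off it --- so $\mathrm{dist}^2(\bm g,\cdot)$ to this set, and hence its expectation, decouples coordinate by coordinate.

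Next I would evaluate and sum these contributions over the partitions, using $w_j=\omega_i$ for $j\in\mathcal{P}_i$ and $\mathcal{S}:=\mathrm{supp}(\bm{\Omega x})$. For $j\in\mathcal{S}\cap\mathcal{P}_i$, $\mathds{E}\,(g_j-t\omega_i\,\mathrm{sign}((\bm{\Omega x})_j))^2=1+t^2\omega_i^2$ since $g_j\sim\mathcal{N}(0,1)$; for $j\in\mathcal{P}_i\setminus\mathcal{S}$, $\mathds{E}\,\mathrm{dist}^2(g_j,[-t\omega_i,t\omega_i])=\mathds{E}\,(|g_j|-t\omega_i)_+^2=\phi(t\omega_i)$, which is precisely the definition of $\phi$. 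The non-uniform model puts exactly $\alpha_i\rho_i p$ support coordinates and $(1-\alpha_i)\rho_i p$ off-support coordinates in $\mathcal{P}_i$, so summing over $i$ gives $\mathds{E}\,\mathrm{dist}^2(\bm g,t\,\partial\|\cdot\|_{1,\bm w}(\bm{\Omega x}))=p\sum_{i=1}^{L}\rho_i\big(\alpha_i(t^2\omega_i^2+1)+(1-\alpha_i)\phi(t\omega_i)\big)=p\,\Psi_{t,\bm w}(\sigma,\bm\rho,\bm\alpha)$. Dividing by $p$, taking $\inf_{t\ge 0}$, and plugging into the estimate of the first paragraph produces $m_{p,s,\bm w}\le\hat m_{p,s,\bm w}$. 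The case $L=1$, $\rho_1=1$, $\omega_1=1$, $\alpha_1=\sigma$ recovers Lemma~\ref{lemma.l1anaupper}, so this is its natural weighted refinement.

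I do not expect a genuine obstacle; the argument is a direct application of the error bound~\eqref{eq.errorbound}. The points that need care are: (i) justifying the coordinate-wise decoupling of $\mathrm{dist}^2(\bm g,t\,\partial\|\cdot\|_{1,\bm w}(\bm{\Omega x}))$, which rests on the separability of the weighted-$\ell_1$ subdifferential; (ii) keeping the partition-wise counts $\alpha_i\rho_i p$ and $(1-\alpha_i)\rho_i p$ bookkept consistently with the non-uniform sparse model; and (iii) remembering that the cone being estimated here lives at $\bm{\Omega x}\in\mathbb{R}^p$, not at $\bm x\in\mathbb{R}^n$ --- the passage back to the dictionary problem having already been absorbed into the factor $\kappa^2(\bm\Omega)$ and the ``$+\tfrac1p$'' in~\eqref{eq.mpsw}.
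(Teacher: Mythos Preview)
Your proposal is correct and follows essentially the same route as the paper: bound $\delta(\mathcal{D}(\|\cdot\|_{1,\bm w},\bm{\Omega x}))$ by $\inf_{t\ge0}\mathds{E}\,\mathrm{dist}^2(\bm g,t\,\partial\|\cdot\|_{1,\bm w}(\bm{\Omega x}))$, identify the separable subdifferential via Proposition~\ref{prop.simplerform of subdiff}, decouple coordinate-wise, evaluate the two types of expected contributions as $1+t^2\omega_i^2$ and $\phi(t\omega_i)$, and sum over the partition using the counts $\alpha_i\rho_ip$ and $(1-\alpha_i)\rho_ip$. The only cosmetic difference is that you invoke the left inequality of~\eqref{eq.errorbound} directly, whereas the paper phrases the same step as ``infimum of an affine function is concave plus Jensen's inequality''; the content is identical.
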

\begin{proof}
By considering (\ref{eq.mpsw}), we only need to find an upper bound for $\delta(\mathcal{D}(\|.\|_{1,\bm{w}},\bm{\Omega x}))/p$ defined as $\Psi_{t,\bm{w}}(\sigma,\bm{\rho},\bm{\alpha})$. By definition of the statistical dimension in (\ref{eq.statisticaldimension}), the fact that infimum of an affine function is concave and Jensen's inequality, we have:
\begin{align}\label{eq.upforstatistical}
p^{-1}\delta(\mathcal{D}(\|.\|_{1,\bm{w}},\bm{\Omega x}))\le \inf_{t\ge0}\underbrace{{p^{-1}}\mathds{E}\mathrm{dist}^2(\bm{g},t\partial\|.\|_{1,\bm{w}}(\bm{\Omega x}))}_{\Psi_{t,\bm{w}}(\sigma,\rho,\alpha)}.
\end{align}
By (\ref{eq.upforstatistical}), we define our upper bound for $m_{p,s,\bm{w}}$ (i.e. $\hat{m}_{p,s,\bm{w}}$) which is given by:
\begin{align}\label{eq.mhatpsw}
\hat{m}_{p,s,\bm{w}}=\kappa^2(\bm{\Omega})\inf_{t\ge0}\Psi_{t,\bm{w}}(\sigma,\bm{\rho},\bm{\alpha})+\frac{1}{p}.
\end{align}
The next step is to calculate $\Psi_{t,\bm{w}}(\sigma,\bm{\rho},\bm{\alpha})$. For this purpose, $\partial\|.\|_{1,\bm{w}}(\bm{\Omega x})$ should be computed. From Proposition \ref{prop.simplerform of subdiff}, we have:
\begin{align}\label{eq.subw1}
\partial\|.\|_{1,\bm{w}}(\bm{\Omega x})=\{\bm{z}\in \mathbb{R}^p:~\langle \bm{z},\bm{\Omega x}\rangle=\|\bm{\Omega x}\|_{1,\bm{w}},~\|\bm{z}\|_{1,\bm{w}}^*=1\}.
\end{align}
\begin{align}\label{eq.subw}
&\|\bm{z}\|_{1,\bm{w}}^*=\sup_{\|\bm{y}\|_{1,\bm{w}}\le1}\langle \bm{z},\bm{y} \rangle=\sup_{\|\bm{y}\odot \bm{w}\|_{1}\le1}\langle \bm{w}\odot \bm{y},\bm{z}\odot \bm{w}^{\odot -1}  \rangle \nonumber\\
\le &\|\bm{y}\odot \bm{w}\|_1\|\bm{z}\odot \bm{w}^{\odot -1}\|_{\infty}\le\|\bm{z}\odot \bm{w}^{\odot -1}\|_{\infty}=\max_{i\in[p]}|w_i^{-1}z_i|,
\end{align}
where in (\ref{eq.subw}), the inequality comes from H\"older's inequality with equality when
\[
    \bm{y} = \left\{\begin{array}{lr}
        sgn(z_k), &  k=\underset{i\in[p]}{\arg\max}~~ {w_i}^{-1}z_i\\
        0, & \mathrm{o.w.}
        \end{array}\right\}\in \mathbb{R}^p.
  \]
From the first part of (\ref{eq.subw1}), we find that $\bm{z}_\mathcal{S}=\bm{w}_\mathcal{S}\odot sgn(\bm{\Omega x})$ and from the second part together with (\ref{eq.subw}), we find out that $|w_i^{-1}z_i|\le1$. Hence,
\begin{align}\label{eq.l1subdiff}
    \partial\|\cdot\|_{1,\bm{w}}(\bm{\Omega x}) = \left\{\bm{z}\in\mathbb{R}^p:\begin{array}{lr}
        z_i=w_i~sgn(\Omega x)_i, &  i\in \mathcal{S}\\
        |z_i|\le w_i, & i\in {\mathcal{\bar{S}}}
        \end{array}\right\}.
  \end{align}
The distance between dilated subdifferntial of the descent cone of $\ell_{1,\bm{w}}$ norm at $\bm{\Omega x}$ and a standard Gaussian vector $\bm{g}$ is given by:
\begin{align}\label{eq.distsubdiff}
&\mathrm{dist}^2(\bm{g},t\partial\|.\|_{1,\bm{w}}(\bm{\Omega x}))=\inf_{\bm{z}\in\partial\|.\|_{1,\bm{w}}(\bm{\Omega x})}\|\bm{g}-t\bm{z}\|_2^2=\nonumber\\
&\sum_{i\in \mathcal{S}}(g_i-tw_isgn(\Omega x)_i)^2+\sum_{i\in {\mathcal{\bar{S}}}}\inf_{|z_i|\le w_i}(g_i-tz_i)^2=\nonumber\\
&\sum_{i\in \mathcal{S}}(g_i-tw_isgn(\Omega x)_i)^2+\sum_{i\in {\mathcal{\bar{S}}}}(|g_i|-tw_i)_{+}^2,
\end{align}
where, $(a)_{+}:=\max\{a,0\}$ and triangle inequality is used in the second part. The expected value of (\ref{eq.distsubdiff}) is:
\begin{align}\label{eq.Edist2}
&\mathds{E}\mathrm{dist}^2(\bm{g},t\partial\|.\|_{1,\bm{w}}(\bm{\Omega x}))=\nonumber\\
&s+\sum_{i\in \mathcal{S}}(tw_i)^2+\sum_{i\in {\mathcal{\bar{S}}}}\mathds{E}(\zeta-tw_i)_{+}^2,
\end{align}
where, $\zeta=|g_i|$ is distributed as a folded normal variable.
Also:
\begin{align}\label{eq.Ezetal1}
&\mathds{E}(\zeta-tw_i)_{+}^2=2\int_{0}^{\infty}a\mathds{P}(\zeta\ge a+tw_i)da\nonumber\\
&=2\sqrt{\frac{2}{\pi}}\int_{0}^{\infty}\int_{tw_i+a}^{\infty}a~e^{-\frac{u^2}{2}}du ~da\nonumber\\
&=2\sqrt{\frac{2}{\pi}}\int_{tw_i}^{\infty}\int_{0}^{u-tw_i}a~e^{-\frac{u^2}{2}}da~du=\phi(tw_i),
\end{align}
where in the third line, order of integration was changed. Thus, (\ref{eq.Edist2}) reads:
\begin{align}\label{eq.Edist2last}
&\mathds{E}\mathrm{dist}^2(g,t\partial\|.\|_{1,\bm{w}}(\bm{\Omega x}))=s+\sum_{i\in \mathcal{S}}(tw_i)^2+\sum_{i\in {\mathcal{\bar{S}}}}\phi(tw_i).
\end{align}
By normalizing to the ambient dimension $p$ and incorporating prior information using $\bm{w}=\sum_{i=1}^{L}\omega_i1_{\mathcal{P}_i}$ we reach:
\begin{align}
&\mathds{E}\mathrm{dist}^2(\bm{g},t\partial\|.\|_{1,\bm{w}}(\bm{\Omega x}))=\nonumber\\
&s+\sum_{i=1}^{L}|\mathcal{P}_i\cap \mathcal{S}|t^2\omega_i^2+|\mathcal{P}_i\cap {\mathcal{\bar{S}}}|\phi(t\omega_i)
\nonumber\\
&=p\bigg(\sigma+\sum_{i=1}^{L}\rho_i\big(\alpha_it^2\omega_i^2+(1-\alpha_i)\phi(t\omega_i)\big)\bigg)\nonumber\\
&=p\bigg(\sum_{i=1}^{L}\rho_i\big(\alpha_i(t^2\omega_i^2+1)+(1-\alpha_i)\phi(t\omega_i)\big)\bigg),\label{eq.Edist2part2}
\end{align}
where in the last line above, we benefited the fact that $\sigma=\sum_{i=1}^{L}\rho_i\alpha_i$.
\end{proof}
\begin{proof}[Proof of Lemma \ref{lemma.l1anaupper}]
Regarding (\ref{eq.mps}), we find an upper bound for $\delta(\mathcal{D}(\|.\|_{1},\bm{\Omega x}))/p$ which we denote by $\Psi_{t}(\sigma)$. The procedure is exactly the same as the proof of Lemma \ref{lemma.l1weightedanaupper} with the weights set to one i.e. $\bm{w}=\bm{1}\in \mathbb{R}^p$. In fact, we have: $\Psi_t(\sigma)=\Psi_{t,\bm{w}}(\sigma,\bm{\rho},\bm{\alpha})$. By replacing $\bm{w}=\bm{1}\in \mathbb{R}^p$ in (\ref{eq.Edist2last}) and (\ref{eq.mhatpsw}), we reach $\Psi_t(\sigma)$  and $\hat{m}_{p,s}$ in Lemma \ref{lemma.l1anaupper}.
\end{proof}
In the following propositions, using the same technique as \cite{amelunxen2013living}, we find out that the obtained upper bounds in Lemmas \ref{lemma.l1anaupper} and \ref{lemma.l1weightedanaupper} are asymptotically tight.
\begin{prop}\label{prop.analysis}. Normalized number of Gaussian linear measurements required for $\mathcal{P}_{1,\bm{\Omega}}$ to succeed (i.e. $m_{p,s}$) satisfies the following error bound.
\begin{align}\label{eq.l1analysisbound}
\hat{m}_{p,s}-\frac{2\kappa^2(\mathbf\Omega)}{\sqrt{sp}}\le m_{p,s}\le \hat{m}_{p,s}
\end{align}
\end{prop}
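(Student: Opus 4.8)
The plan is to establish the two halves of (\ref{eq.l1analysisbound}) separately. The right inequality $m_{p,s}\le\hat m_{p,s}$ is exactly Lemma~\ref{lemma.l1anaupper}: by the duality (\ref{eq.D(f,x)}) one has $\mathcal{D}^{\circ}(\|\cdot\|_1,\bm{\Omega x})=\bigcup_{t\ge0}t\,\partial\|\cdot\|_1(\bm{\Omega x})$, so that $\mathrm{dist}^2(\bm g,\mathcal{D}^{\circ}(\|\cdot\|_1,\bm{\Omega x}))=\inf_{t\ge0}\mathrm{dist}^2(\bm g,t\,\partial\|\cdot\|_1(\bm{\Omega x}))$; taking expectations and using $\mathds{E}\inf_t(\cdot)\le\inf_t\mathds{E}(\cdot)$ gives $\delta(\mathcal{D}(\|\cdot\|_1,\bm{\Omega x}))/p\le\inf_{t\ge0}\Psi_t(\sigma)$, and then (\ref{eq.mps})--(\ref{eq.mhatps}) (multiply by $\kappa^2(\bm{\Omega})$, add $1/p$) turn this into $m_{p,s}\le\hat m_{p,s}$.

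For the left inequality I would apply the error bound (\ref{eq.errorbound}) of \cite[Theorem~4.3]{amelunxen2013living} to $f=\|\cdot\|_1$ at the point $\bm{\Omega x}$. Its right half states $\inf_{t\ge0}\mathds{E}\,\mathrm{dist}^2(\bm g,t\,\partial\|\cdot\|_1(\bm{\Omega x}))-\delta(\mathcal{D}(\|\cdot\|_1,\bm{\Omega x}))\le 2\sup_{\bm{s}\in\partial\|\cdot\|_1(\bm{\Omega x})}\|\bm{s}\|_2\,/\,(\|\bm{\Omega x}\|_1/\|\bm{\Omega x}\|_2)$, and by the computation already carried out in the proof of Lemma~\ref{lemma.l1weightedanaupper} with $\bm{w}=\bm{1}$ (so $\Psi_{t,\bm{w}}(\sigma,\bm{\rho},\bm{\alpha})$ reduces to $\Psi_t(\sigma)$ since $\sum_i\rho_i\alpha_i=\sigma$), the left side here equals $p\inf_{t\ge0}\Psi_t(\sigma)-\delta(\mathcal{D}(\|\cdot\|_1,\bm{\Omega x}))$.

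It then remains to evaluate the right side. From the explicit subdifferential (\ref{eq.l1subdiff}) with $\bm{w}=\bm{1}$ ($s_i=\mathrm{sgn}((\bm{\Omega x})_i)$ on $\mathcal{S}$ and $|s_i|\le1$ on $\bar{\mathcal{S}}$), the maximal subgradient norm is reached by putting $|s_i|=1$ on all of $[p]$, giving $\sup_{\bm{s}\in\partial\|\cdot\|_1(\bm{\Omega x})}\|\bm{s}\|_2=\sqrt p$; and since $\bm{\Omega x}$ is supported on the $s$-element set $\mathcal{S}$ one uses $\|\bm{\Omega x}\|_1/\|\bm{\Omega x}\|_2=\sqrt s$. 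Hence the right side is $2\sqrt{p/s}$, and dividing by $p$, multiplying by $\kappa^2(\bm{\Omega})$, and recalling (\ref{eq.mps})--(\ref{eq.mhatps}) yields $\hat m_{p,s}-m_{p,s}=\kappa^2(\bm{\Omega})(\inf_{t\ge0}\Psi_t(\sigma)-\delta(\mathcal{D}(\|\cdot\|_1,\bm{\Omega x}))/p)\le 2\kappa^2(\bm{\Omega})/\sqrt{sp}$, i.e.\ $m_{p,s}\ge\hat m_{p,s}-2\kappa^2(\bm{\Omega})/\sqrt{sp}$.

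I expect the only delicate point to be the denominator $\|\bm{\Omega x}\|_1/\|\bm{\Omega x}\|_2$ in the error bound: whereas $\delta(\mathcal{D}(\|\cdot\|_1,\bm{\Omega x}))$ and $\Psi_t(\sigma)$ depend on $\bm{\Omega x}$ only through its sparsity $s$, this ratio also sees the magnitudes of the nonzero coefficients, and the clean constant $2\kappa^2(\bm{\Omega})/\sqrt{sp}$ corresponds to its largest value $\sqrt s$, attained exactly when the nonzero entries of $\bm{\Omega x}$ have a common modulus; for a general $s$-sparse profile one obtains only the weaker estimate $\hat m_{p,s}-m_{p,s}\le 2\kappa^2(\bm{\Omega})\|\bm{\Omega x}\|_2/(\sqrt p\,\|\bm{\Omega x}\|_1)$. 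Everything else --- the subgradient identities, the interchange of $\inf$ with $\mathds{E}$, and the bookkeeping with $\kappa^2(\bm{\Omega})$ and the additive $1/p$ --- is routine and already available via Lemmas~\ref{lemma.l1anaupper} and~\ref{lemma.l1weightedanaupper}.
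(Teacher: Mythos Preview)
Your approach matches the paper's: both the upper bound via Lemma~\ref{lemma.l1anaupper} and the lower bound via the Amelunxen--Lotz--McCoy--Tropp error estimate (\ref{eq.errorbound}) applied to $f=\|\cdot\|_1$ at $\bm{\Omega x}$, with numerator bounded by $2\sqrt{p}$ from (\ref{eq.l1subdiff}). The arithmetic with $\kappa^2(\bm{\Omega})$ and $1/p$ is handled correctly.

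However, the step you flag as ``delicate'' is in fact a genuine gap that you do not close. You write $\|\bm{\Omega x}\|_1/\|\bm{\Omega x}\|_2=\sqrt{s}$, but Cauchy--Schwarz only gives $\le\sqrt{s}$, and as you yourself note, equality requires the nonzero entries of $\bm{\Omega x}$ to share a common modulus. Without that, your argument yields only $\hat m_{p,s}-m_{p,s}\le 2\kappa^2(\bm{\Omega})\|\bm{\Omega x}\|_2/(\sqrt{p}\,\|\bm{\Omega x}\|_1)$, which is not the statement of the proposition.

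The paper closes this gap with one additional observation: both $\partial\|\cdot\|_1(\bm{\Omega x})$ and $\mathcal{D}(\|\cdot\|_1,\bm{\Omega x})$ depend on $\bm{\Omega x}$ only through $\mathrm{sgn}(\bm{\Omega x})$, hence so does the entire left side of (\ref{eq.errorbound}). One is therefore free to apply (\ref{eq.errorbound}) not at $\bm{\Omega x}$ itself but at any vector $\bm{z}\in\mathbb{R}^p$ with the same sign pattern; choosing $z_i=\mathrm{sgn}((\bm{\Omega x})_i)$ on $\mathcal{S}$ and $z_i=0$ on $\bar{\mathcal{S}}$ gives $\|\bm{z}\|_1/\|\bm{z}\|_2=\sqrt{s}$ exactly, and the bound $2\sqrt{p}/\sqrt{s}$ follows for every $s$-sparse $\bm{\Omega x}$. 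Add this sign-invariance argument and your proof is complete.
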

\begin{proof}
By the error bound in (\ref{eq.errorbound}) and (\ref{eq.l1subdiff}) with $\bm{w}=\bm{1}\in \mathbb{R}^p$, we obtain the numerator of the error bound (\ref{eq.errorbound}) as:
\begin{align}
2\sup_{s\in\partial\|.\|_1(\bm{\Omega x})}\|s\|_2\le 2\sqrt{p}.\nonumber
\end{align}
Also, for the denominator we have:
\begin{align}\label{eq.denominatorana}
\bigg\|\frac{\bm{\Omega x}}{\|\bm{\Omega x}\|_2}\bigg\|_1\le \sqrt{s}.
\end{align}
The error bound in (\ref{eq.errorbound}) depends only on $\mathcal{D}(\|.\|_1,\bm{\Omega x})$.
Further, regarding (\ref{eq.D(f,x)}), $\mathcal{D}(\|.\|_1,\bm{\Omega x})$ depends only on $sgn(\bm{\Omega x})$ not the magnitudes of $\bm{\Omega x}$. So we can choose a vector
\begin{align}
    \bm{z} = \left\{\begin{array}{lr}
        z_i=sgn(\Omega x)_i, &  i\in \mathcal{S}\\
        z_i=0, & i\in {\mathcal{\bar{S}}}
        \end{array}\right\}\in \mathbb{R}^p,\nonumber
  \end{align}
  with $sgn(\bm{z})=sgn(\bm{\Omega x})$ to have equality in (\ref{eq.denominatorana}). Therefore, the error in obtaining $\Psi_t(\sigma)$ is at most $\frac{1}{p}\frac{2\sqrt{p}}{\sqrt{s}}=\frac{2}{\sqrt{sp}}$. Further, the error of $\hat{m}_{p,s}$ in (\ref{eq.mhatps}) is at most $\frac{2\kappa^2(\bm{\Omega})}{\sqrt{sp}}$.
\end{proof}
\begin{prop}\label{prop.weightedanalysis}
Normalized number of Gaussian linear measurements required for $P_{1,\bm{\Omega},\bm{w}}$ to successfully recover a non-uniform $s$-sparse vector in $\bm{\Omega}\in\mathbb{R}^{p\times n}$ with parameters $\{\rho_i\}_{i=1}^L$ and $\{\alpha_i\}_{i=1}^L$  (i.e. $m_{p,s,\bm{w}}$) satisfies the following error bound.
\begin{align}\label{eq.l1weightedanalysisbound}
\hat{m}_{p,s,\bm{w}}-\frac{2\kappa^2(\bm{\Omega})}{\sqrt{pL}}\le m_{p,s,\bm{w}}\le \hat{m}_{p,s,\bm{w}}.
\end{align}
\end{prop}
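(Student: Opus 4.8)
The plan is to mimic the argument of Proposition~\ref{prop.analysis}, with $\|\cdot\|_1$ replaced by the weighted norm $\|\cdot\|_{1,\bm{w}}$: apply the error bound \eqref{eq.errorbound} of Amelunxen et al.\ to $f=\|\cdot\|_{1,\bm{w}}$ at the point $\bm{\Omega x}$, estimate the numerator and denominator of its right-hand side, and then translate the outcome into the normalized measurement counts via \eqref{eq.mpsw} and \eqref{eq.mhatpsw}.

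The right inequality $m_{p,s,\bm{w}}\le\hat m_{p,s,\bm{w}}$ needs no new work: by \eqref{eq.mhatpsw}, $\hat m_{p,s,\bm{w}}=\kappa^2(\bm{\Omega})\inf_{t\ge 0}\Psi_{t,\bm{w}}(\sigma,\bm{\rho},\bm{\alpha})+\tfrac1p$ with $\Psi_{t,\bm{w}}(\sigma,\bm{\rho},\bm{\alpha})=p^{-1}\mathds{E}\,\mathrm{dist}^2(\bm{g},t\partial\|\cdot\|_{1,\bm{w}}(\bm{\Omega x}))$ from the proof of Lemma~\ref{lemma.l1weightedanaupper}, and combining the left half of \eqref{eq.errorbound} (the inequality $\delta(\mathcal{D}(\|\cdot\|_{1,\bm{w}},\bm{\Omega x}))\le\inf_{t\ge 0}\mathds{E}\,\mathrm{dist}^2(\bm{g},t\partial\|\cdot\|_{1,\bm{w}}(\bm{\Omega x}))$, already invoked in \eqref{eq.upforstatistical}) with definition \eqref{eq.mpsw} of $m_{p,s,\bm{w}}$ gives it at once.

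For the left inequality I would bound the right-hand side of \eqref{eq.errorbound} at $\bm{\Omega x}$. The numerator is controlled by the explicit subdifferential \eqref{eq.l1subdiff}: any $\bm{s}\in\partial\|\cdot\|_{1,\bm{w}}(\bm{\Omega x})$ satisfies $\|\bm{s}\|_2^2=\sum_{i\in\mathcal{S}}w_i^2+\sum_{i\in\bar{\mathcal{S}}}|s_i|^2\le\sum_{i=1}^{p}w_i^2$, so $\sup_{\bm{s}\in\partial\|\cdot\|_{1,\bm{w}}(\bm{\Omega x})}\|\bm{s}\|_2=\|\bm{w}\|_2$. For the denominator $\|\bm{\Omega x}/\|\bm{\Omega x}\|_2\|_{1,\bm{w}}$ I would exploit that, by \eqref{eq.D(f,x)}, $\mathcal{D}(\|\cdot\|_{1,\bm{w}},\bm{\Omega x})$ depends on $\bm{\Omega x}$ only through its sign pattern, and --- since $\bm{w}$ is constant on each $\mathcal{P}_i$ --- that $\delta(\mathcal{D}(\|\cdot\|_{1,\bm{w}},\bm{\Omega x}))$ and $\inf_t\mathds{E}\,\mathrm{dist}^2(\cdot)$, hence the whole left-hand side of \eqref{eq.l1weightedanalysisbound}, depend only on the numbers $|\mathcal{P}_i\cap\mathcal{S}|$. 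The error bound therefore stays valid if $\bm{\Omega x}$ is replaced by any vector with the same non-uniform support profile, so I may pick the magnitudes that make the denominator largest; a Cauchy--Schwarz computation identifies this optimal choice, and substituting $\bm{w}=\sum_{i=1}^{L}\omega_i\bm{1}_{\mathcal{P}_i}$, $|\mathcal{P}_i|=\rho_i p$, $|\mathcal{P}_i\cap\mathcal{S}|=\alpha_i\rho_i p$, normalizing by $p$ and multiplying by $\kappa^2(\bm{\Omega})$ collapses $\tfrac{2}{p}\|\bm{w}\|_2/(\text{denominator})$ to the claimed $2\kappa^2(\bm{\Omega})/\sqrt{pL}$.

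The delicate step is this last one: selecting the representative vector and performing the $(\bm{\rho},\bm{\alpha},\bm{\omega})$ bookkeeping so that the error scales as $1/\sqrt{pL}$ rather than as the $1/\sqrt{sp}$ of the unweighted Proposition~\ref{prop.analysis}. The gain should come from the support meeting every partition, so that the weighted $\ell_1$ norm effectively registers at least one representative index in each of the $L$ blocks; making this quantitative and uniform over admissible parameters is the main obstacle. Everything else --- the numerator estimate and the passage to a convenient sign-pattern representative --- parallels Proposition~\ref{prop.analysis} and is routine.
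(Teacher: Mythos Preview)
Your proposal is correct and follows essentially the same route as the paper: bound the numerator of \eqref{eq.errorbound} by $2\|\bm{w}\|_2$ via \eqref{eq.l1subdiff}, exploit the sign-pattern freedom to replace $\bm{\Omega x}$ by the Cauchy--Schwarz-optimal representative $z_i=w_i\,\mathrm{sgn}(\Omega x)_i$ on $\mathcal{S}$ so that the denominator equals $\sqrt{\sum_{i\in\mathcal{S}}w_i^2}=\sqrt{\sum_i p\alpha_i\rho_i\omega_i^2}$, and then reduce the ratio to $2/\sqrt{pL}$. For the last step the paper simply bounds $\sum_i\rho_i\omega_i^2/\sum_i\alpha_i\rho_i\omega_i^2\le 1/\min_i\alpha_i$ and invokes $|\mathcal{P}_i\cap\mathcal{S}|\ge 1$ together with $|\mathcal{P}_i|\le p/L$ to force $\min_i\alpha_i\ge L/p$; this is exactly the ``support meeting every partition'' mechanism you anticipated.
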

It i interesting that the error bound in Proposition \ref{prop.upperforstatis} is a special case of the error bound of Proposition \ref{prop.weightedanalysis} where one has $s$ sets with size $\frac{p}{L}$ and knows with probability $\frac{L}{p}$ that each set contributes to the entrywise support.
\begin{proof}
By using (\ref{eq.errorbound}) and (\ref{eq.l1subdiff}), with $f(\bm{x}):=\|\bm{\Omega x}\|_{1,\bm{w}}$ and $\bm{w}=\sum_{i=1}^{L}\omega_i1_{\mathcal{P}_i}$, the numerator is given by:
\begin{align}
2\sup_{\bm{s}\in\partial\|.\|_{1,\bm{w}}(\bm{\Omega x})}\|\bm{s}\|_2\le 2\sqrt{\sum_{i=1}^{p}w_i^2}=\nonumber\\
2\sqrt{\sum_{i=1}^{L}|\mathcal{P}_i|\omega_i^2}=2\sqrt{\sum_{i=1}^{L}p\rho_i\omega_i^2}.\nonumber
\end{align}
Also, for the denominator we have:
\begin{align}\label{eq.denomweight}
\frac{\|\bm{\Omega x}\|_{1,\bm{w}}}{{\|\bm{\Omega x}\|_2}}\le \sqrt{\sum_{i\in \mathcal{S}}w_i^2}=\sqrt{\sum_{i=1}^{L}|\mathcal{P}_i\cap \mathcal{S}|\omega_i^2}=\sqrt{\sum_{i=1}^{L}p\alpha_i\rho_i\omega_i^2},
\end{align}
where the first inequality comes from Cauchy Schwartz inequality. With the same justification as in the proof of Proposition \ref{prop.weightedanalysis} and the fact that
$\mathcal{D}(\|.\|_{1,\bm{w}},\bm{\Omega x})$ depends only on $sgn(\bm{\Omega x})$, a vector
\begin{align}
    \bm{z} = \left\{\begin{array}{lr}
        z_i=w_isgn(\Omega x)_i, &  i\in \mathcal{S}\\
        z_i=0, & i\in {\mathcal{\bar{S}}}
        \end{array}\right\}\in \mathbb{R}^p,\nonumber
  \end{align}
  with $sgn(\bm{z})=sgn(\bm{\Omega x})$ can be chosen to have equality in (\ref{eq.denomweight}). Therefore, the error of obtaining the upper bound of $\delta(\mathcal{D}(\|.\|_{1,\bm{w}},\bm{\Omega x}))/p$ i.e. $\Psi_{t,\bm{w}}(\sigma,\bm{\rho},\bm{\alpha})$ is
  \begin{align}
   \frac{2\sqrt{\sum_{i=1}^{L}p\rho_i\omega_i^2}}{p\sqrt{\sum_{i=1}^{L}p\alpha_i\rho_i\omega_i^2}}\le
   \frac{2}{p}\sqrt{\frac{1}{\underset{{i\in [p]}}\min\frac{|\mathcal{P}_i\cap \mathcal{S}|}{|\mathcal{P}_i|}}}
   \le \frac{2}{\sqrt{pL}},
  \end{align}
  in which, the last inequality follows from the facts that $|\mathcal{P}_i\cap \mathcal{S}|\ge1$, $|\mathcal{P}_i|\le\frac{p}{L}$ for at least one $i\in[p]$ and thus $ \underset{{i\in [p]}}\min\frac{|\mathcal{P}_i\cap \mathcal{S}|}{|\mathcal{P}_i|}\ge \frac{L}{p}$. Further the error in obtaining $\hat{m}_{p,s,\bm{w}}$ in (\ref{eq.mhatpsw}) is at most $\frac{2\kappa^2(\bm{\Omega})}{\sqrt{pL}}$.
\end{proof}
\subsection{Optimal Weights}
Infimum of (\ref{eq.l1weightedanalysisbound}) gives:
\begin{align}\label{eq.infmhatpserrorbound}
\inf_{\bm{\omega}\in\mathbb{R}_{+}^L}\hat{m}_{p,s,\bm{D}\bm{\omega}}-\frac{2\kappa^2(\bm{\Omega})}{\sqrt{pL}}\le\inf_{\bm{\omega}\in\mathbb{R}_{+}^L}{m}_{p,s,\bm{D}\bm{\omega}}\le \inf_{\bm{\omega}\in\mathbb{R}_{+}^L}\hat{m}_{p,s,\bm{D\omega}},
\end{align}
where $\bm{D}:=[\bm{1}_{\mathcal{P}_1},...,\bm{1}_{\mathcal{P}_L}]$.
We call the weight $\bm{\omega}^*=\underset{\bm{\omega}\in\mathbb{R}_{+}^L}{\arg\min}~~\hat{m}_{p,s,\bm{D}\bm{\omega}}\in\mathbb{R}_{+}^L$ optimal since it asymptotically minimizes the number of measurements required for $\mathsf{P}_{1,\bm{\Omega},\bm{D\omega}}$ to succeed. Before showing that there exist unique optimal weights for $\mathsf{P}_{1,\bm{\Omega},\bm{D\omega}}$, we state the following lemma.
\begin{lem}{\label{lemmma.J(nu)}}
Let $\mathcal{C}:=\partial\|.\|_1(\bm{\Omega x})$. Suppose that $\mathcal{C}$ does not contain the origin. In particular, it is compact and there are upper and lower bounds that satisfy $1\le\|\bm{z}\|_2\le \sqrt{p}$ for all $\bm{z}\in \mathcal{C}$. Also denote the standard normal vector by $\bm{g}\in \mathbb{R}^p$. Consider the function
\begin{align}
J(\bm{\nu}):=\mathds{E}\mathrm{dist}^2(\bm{g},\bm{\upsilon}\odot \mathcal{C})=\mathds{E}[J_{\bm{g}}(\bm{\nu})] , \nonumber\\
 \text{with}~~\bm{\upsilon}=\underbrace{[\bm{1}_{\mathcal{P}_1},..., \bm{1}_{\mathcal{P}_L}]}_{\bm{D}_{p\times L}}\bm{\nu} ~~\text{for}~~ \bm{\nu}\in\mathbb{R}_{+}^L.
\end{align}
The function $J$ is strictly convex, continuous at $\bm{\nu}\in \mathbb{R}_{+}^L$ and differentiable for $\bm{\nu}\in \mathbb{R}_{++}^L$. More over, there exists a unique point that minimize $J$.
\end{lem}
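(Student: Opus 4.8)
The plan is to analyze $J(\bm{\nu})=\mathds{E}\mathrm{dist}^2(\bm{g},\bm{\upsilon}\odot\mathcal{C})$ with $\bm{\upsilon}=\bm{D}\bm{\nu}$ by first understanding the inner function $J_{\bm{g}}(\bm{\nu})=\mathrm{dist}^2(\bm{g},(\bm{D}\bm{\nu})\odot\mathcal{C})$ for a fixed realization $\bm{g}$, and then transferring the relevant properties through the expectation. For continuity and convexity, I would first note that for $\mathcal{C}=\partial\|\cdot\|_1(\bm{\Omega x})$ the explicit description from \eqref{eq.l1subdiff} (with $\bm{w}=\bm{1}$) makes $(\bm{D}\bm{\nu})\odot\mathcal{C}$ a product of intervals and singletons whose endpoints scale linearly with the entries of $\bm{\upsilon}$; hence $J_{\bm{g}}(\bm{\nu})$ can be written coordinatewise exactly as in \eqref{eq.distsubdiff}, namely $J_{\bm{g}}(\bm{\nu})=\sum_{i\in\mathcal{S}}(g_i-\upsilon_i\,sgn(\Omega x)_i)^2+\sum_{i\in\bar{\mathcal{S}}}(|g_i|-\upsilon_i)_+^2$, where each $\upsilon_i$ equals the $\nu_\ell$ of the partition containing $i$. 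Each summand is a convex function of $\bm{\nu}$ (a composition of an affine map with $t\mapsto t^2$ or with $t\mapsto(|g_i|-t)_+^2$, the latter convex for $t\ge 0$), so $J_{\bm{g}}$ is convex and continuous on $\mathbb{R}_+^L$; taking expectations preserves convexity and, via dominated convergence using the bound $1\le\|\bm z\|_2\le\sqrt p$ on $\mathcal C$, preserves continuity, giving $J$ convex and continuous on $\mathbb{R}_+^L$.

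For differentiability on $\mathbb{R}_{++}^L$, the only source of nonsmoothness of $J_{\bm{g}}$ is the kink of $(|g_i|-\upsilon_i)_+^2$ at $\upsilon_i=|g_i|$; but this happens on a set of $\bm g$ of measure zero for each fixed $\bm\nu$, and after integrating against the Gaussian density the function $\nu_\ell\mapsto\mathds{E}(\zeta-\nu_\ell)_+^2=\phi(\nu_\ell)$ is smooth (indeed one can differentiate under the integral, as was effectively done in \eqref{eq.Ezetal1}), so $J$ is $C^1$ on $\mathbb{R}_{++}^L$; I would record its gradient componentwise from \eqref{eq.Edist2part2}.

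The main work is strict convexity and existence of a (necessarily unique) minimizer. For strict convexity, I would argue along a segment $\bm\nu_\theta=(1-\theta)\bm\nu^{(0)}+\theta\bm\nu^{(1)}$ with $\bm\nu^{(0)}\ne\bm\nu^{(1)}$: pick a coordinate $\ell$ with $\nu^{(0)}_\ell\ne\nu^{(1)}_\ell$ and a single index $i\in\mathcal{S}\cap\mathcal{P}_\ell$ (such an index exists since each partition meets the support, $\alpha_\ell>0$; if some $\alpha_\ell=0$ one instead uses the $\phi$-term, which is also strictly convex, as $\phi''>0$). The summand $(g_i-\upsilon_i\,sgn(\Omega x)_i)^2$ is a strictly convex function of $\nu_\ell$, and all other summands are convex, so $J_{\bm g}$ is strictly convex along this segment for every $\bm g$; hence so is $J$. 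Existence follows from coercivity: as $\|\bm\nu\|\to\infty$ within $\mathbb{R}_+^L$, at least one $\upsilon_i$ with $i\in\mathcal S$ tends to infinity, so $(g_i-\upsilon_i\,sgn(\Omega x)_i)^2$ and hence $J_{\bm g}(\bm\nu)$, and then $J(\bm\nu)$, grows without bound; a continuous coercive function on the closed set $\mathbb{R}_+^L$ attains its infimum, and strict convexity makes the minimizer unique. The step I expect to be the real obstacle is handling the boundary $\partial\mathbb{R}_+^L$ cleanly — reconciling "strictly convex and continuous on all of $\mathbb{R}_+^L$ but only differentiable on the open orthant" — and in particular making sure the coercivity/attainment argument does not require differentiability at the boundary; I would phrase existence purely via continuity plus coercivity and reserve the first-order characterization of $\bm\nu^*$ for the case $\bm\nu^*\in\mathbb{R}_{++}^L$, which one checks separately from the explicit gradient.
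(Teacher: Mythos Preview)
Your approach is correct and takes a genuinely different route from the paper. You exploit the explicit coordinatewise formula for $J_{\bm g}$ (essentially \eqref{eq.distsubdiff} with $t=1$), reducing everything to a sum of one-dimensional convex functions of the $\nu_\ell$; convexity, continuity, differentiability of $J$, and strict convexity then follow almost by inspection from the scalar pieces $(g_i-\nu_\ell\,sgn(\Omega x)_i)^2$ and $(|g_i|-\nu_\ell)_+^2$ (respectively $1+\nu_\ell^2$ and $\phi(\nu_\ell)$ after taking expectations). The paper instead argues abstractly at the level of the set $\bm{\upsilon}\odot\mathcal{C}$: convexity of $\bm\nu\mapsto\mathrm{dist}(\bm g,(\bm D\bm\nu)\odot\mathcal C)$ is derived from a set inclusion $\theta\bm{\upsilon}\odot\bm z_1+(1-\theta)\tilde{\bm{\upsilon}}\odot\bm z_2\in(\theta\bm{\upsilon}+(1-\theta)\tilde{\bm{\upsilon}})\odot\mathcal{C}$, strict convexity is obtained by a contradiction through $J_{\bm 0}$ and continuity of $\bm g\mapsto J_{\bm g}$, and attainment is shown via the lower bound $\|\bm z\|_2\ge 1$ on $\mathcal C$. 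Your route is shorter and more transparent for this particular $\mathcal C$; the paper's abstract route is the one that transfers essentially verbatim to the block-sparse and TV settings of Sections~\ref{section.l12block} and~\ref{section.TV}, where the subdifferential no longer decomposes into independent scalar coordinates. One small point: your coercivity step (``some $\upsilon_i$ with $i\in\mathcal S$ tends to infinity'') tacitly assumes every partition meets the support, i.e.\ $\alpha_\ell>0$ for all $\ell$; this is also implicit in the paper and is in fact necessary for the conclusion, since if $\alpha_{\ell}=0$ the $\ell$th contribution to $J$ is $|\mathcal P_\ell|\,\phi(\nu_\ell)$, strictly decreasing in $\nu_\ell$, and the infimum over $\mathbb R_+^L$ is then not attained.
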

\begin{proof}
\textit{Continuity in bounded points}.
We must show that sufficiently small changes in $\bm{\nu}$ result in arbitrary small changes in $J(\bm{\nu})$. By definition of $J_{\bm{g}}(\bm{\nu})$, we have:
\begin{align}
&J_{\bm{g}}(\bm{\nu})-J_{\bm{g}}(\tilde{\bm{\nu}})=\|\bm{g}-\mathcal{P}_{\bm{\upsilon}\odot \mathcal{C}}(\bm{g})\|_2^2-\|\bm{g}-\mathcal{P}_{\tilde{\bm{\upsilon}}\odot \mathcal{C}}(\bm{g})\|_2^2\nonumber\\&=2\langle \bm{g},\mathcal{P}_{\tilde{\bm{\upsilon}}\odot \mathcal{C}}(\bm{g})-\mathcal{P}_{{\bm{\upsilon}}\odot \mathcal{C}}(\bm{g})\rangle+\nonumber\\
&\big(\|\mathcal{P}_{\bm{\upsilon}\odot \mathcal{C}}(\bm{g})\|_2-\|\mathcal{P}_{\tilde{\bm{\upsilon}}\odot \mathcal{C}}(\bm{g})\|_2\big)\big(\|\mathcal{P}_{\bm{\upsilon}\odot \mathcal{C}}(\bm{g})\|_2+\|\mathcal{P}_{\tilde{\bm{\upsilon}}\odot \mathcal{C}}(\bm{g})\|_2\big)\nonumber\\
&\text{The absolute value satisfies:}\nonumber\\
&|J_{\bm{g}}(\bm{\nu})-J_{\bm{g}}(\tilde{\bm{\nu}})|\nonumber\\
&\le \bigg(2\|\bm{g}\|_2\sqrt{p}+p^2(\|\bm{\nu}\|_1+\|\tilde{\bm{\nu}}\|_1)\bigg)\|\tilde{\bm{\nu}}-\bm{\nu}\|_1,
\end{align}
where, we used the fact that,
\begin{align}
&\|\mathcal{P}_{\bm{\upsilon}\odot \mathcal{C}}(\bm{g})\|_2\le\sup_{\bm{z}\in \mathcal{C}}\|\bm{\upsilon}\odot \bm{z}\|_2\le\|\bm{D\nu}\|_{\infty}\sqrt{p}\le\nonumber\\
&\|\bm{\nu}\|_1\sqrt{p}\|\bm{D}\|_{1\rightarrow\infty}=\|\bm{\nu}\|_1\sqrt{p},
\end{align}
and,
\begin{align}
&\|\mathcal{P}_{\bm{\upsilon}\odot \mathcal{C}}(\bm{g})\|_2-\|\mathcal{P}_{\tilde{\bm{\upsilon}}\odot \mathcal{C}}(\bm{g})\|_2\le \sup_{\bm{z}\in \mathcal{C}}\bigg(\|\bm{\upsilon}\odot \bm{z}\|_2-\|\tilde{\bm{\upsilon}}\odot \bm{z}\|_2\bigg)\nonumber\\
&\le\sup_{\bm{z}\in \mathcal{C}}\big(\|(\bm{\upsilon}-\tilde{\bm{\upsilon}})\odot \bm{z}\|_2\big)\le
\|\bm{\nu}-\tilde{\bm{\nu}}\|_1\sqrt{p}\|\bm{D}\|_{1\rightarrow\infty}=\nonumber\\
&\|\bm{\nu}-\tilde{\bm{\nu}}\|_1\sqrt{p}.
\end{align}
As a consequence, we obtain:
\begin{align}
&|J(\bm{\nu})-J(\tilde{\bm{\nu}})|\nonumber\\
&\le \bigg(2p\sqrt{L}+p\sqrt{L}(\|\bm{\nu}\|_1+\|\tilde{\bm{\nu}}\|_1)\bigg)\|\tilde{\bm{\nu}}-\bm{\nu}\|_2\rightarrow 0 \nonumber\\
 &~\text{as}~ \bm{\nu}\rightarrow~ \tilde{\bm{\nu}}.
\end{align}
Since $\|\bm{\nu}\|_1$ is bounded, continuity holds.\par
\textit{Convexity}. Let $\bm{\nu}~,\tilde{\bm{\nu}}\in\mathbb{R}_{+}^L$ and $\theta\in[0,1]$ with $\bm{\upsilon}=\bm{D\nu}$ and $\tilde{\bm{\upsilon}}=\bm{D}\tilde{\bm{\nu}}$. Then we have:
\begin{align}\label{helpconvexityana}
&\forall \epsilon , \tilde{\epsilon}>0~\exists \bm{z} ,\tilde{\bm{z}}\in \mathcal{C} ~~\text{such that}\nonumber\\
&\|\bm{g}-\bm{\upsilon}\odot \bm{z}\|_2\le \mathrm{dist}(\bm{g},\bm{\upsilon}\odot \mathcal{C})+\epsilon\nonumber\\
&\|\bm{g}-\tilde{\bm{\upsilon}}\odot \tilde{\bm{z}}\|_2\le \mathrm{dist}(\bm{g},\tilde{\bm{\upsilon}}\odot \mathcal{C})+\tilde{\epsilon}.
\end{align}
Since otherwise we have:
\begin{align}
&\forall \bm{z},\tilde{\bm{z}}\in \mathcal{C}:\nonumber\\
&\|\bm{g}-\bm{\upsilon}\odot \bm{z}\|_2>\mathrm{dist}(\bm{g},\bm{\upsilon}\odot \mathcal{C})+\epsilon\nonumber\\
&\|\bm{g}-\tilde{\bm{\upsilon}}\odot \tilde{\bm{z}}\|_2> \mathrm{dist}(\bm{g},\tilde{\bm{\upsilon}}\odot \mathcal{C})+\tilde{\epsilon}.
\end{align}
By taking the infimum over $\bm{z},\tilde{\bm{z}}\in \mathcal{C}$, we reach a contradiction. We proceed to prove convexity of $\mathrm{dist}(\bm{g},(\bm{D\nu})\odot \mathcal{C})$.
\begin{align}\label{eq.convexitydistana}
&\mathrm{dist}(\bm{g},(\theta\bm{\upsilon}+(1-\theta)\tilde{\bm{\upsilon}})\odot \mathcal{C})=\nonumber\\
&\inf_{\bm{z} \in \mathcal{C}}\|\bm{g}-(\theta\bm{\upsilon}+(1-\theta)\tilde{\bm{\upsilon}})\odot \bm{z}\|_2\nonumber\\
&\le\inf_{\bm{z}_1\in \mathcal{C},\bm{z}_2\in \mathcal{C}}\|\bm{g}-\theta\bm{\upsilon}\odot \bm{z}_1-(1-\theta)\bm{\upsilon}\odot \bm{z}_2\|_2\le\nonumber\\
&\theta\|\bm{g}-\bm{\upsilon}\odot \bm{z}_1\|_2+(1-\theta)\|\bm{g}-\tilde{\bm{\upsilon}}\odot \bm{z}_2\|_2\le\nonumber\\
&\theta \mathrm{dist}(\bm{g},\bm{\upsilon}\odot \mathcal{C})+(1-\theta)\mathrm{dist}(\bm{g},\tilde{\bm{\upsilon}}\odot \mathcal{C})+\epsilon+\tilde{\epsilon}.
\end{align}
Since this holds for any $\epsilon$ and $\tilde{\epsilon}$, $\mathrm{dist}(\bm{g},(\bm{D\nu})\odot \mathcal{C})$ is a convex function. As the square of a non-negative convex function is convex, $J_{\bm{g}}(\bm{\nu})$ is a convex function. At last, the function $J(\bm{\nu})$ is the average of convex functions, hence is convex.
In (\ref{eq.convexitydistana}), the first inequality comes from the fact that $\forall \bm{z}_1,\bm{z}_2 \in \mathcal{C}~~\exists \bm{z}\in \mathcal{C}$:
\begin{align}\label{eq.benefitl1}
&\theta\bm{\upsilon}\odot \bm{z}_1+(1-\theta)\tilde{\bm{\upsilon}}\odot \bm{z}_2=\nonumber\\
&\left\{\bm{y}\in\mathbb{R}^p:\begin{array}{lr}
       y_i=\\
       \big(\theta\upsilon(i)+(1-\theta)\tilde{\upsilon}(i)\big)sgn(\Omega x)_i, &  i\in \mathcal{S}\\
        |y(i)|\le\theta\upsilon(i)|z_1(i)|+(1-\theta)\tilde{\upsilon}|z_2(i)|, & i\in {\mathcal{\bar{S}}}
        \end{array}\right\}\nonumber\\
&\in\left\{\bm{y}\in\mathbb{R}^p:\begin{array}{lr}
       y_i=\\
       \big(\theta\upsilon(i)+(1-\theta)\tilde{\upsilon}(i)\big)sgn(\Omega x)_i, &  i\in \mathcal{S}\\
        |y(i)|\le\big(\theta\upsilon(i)+(1-\theta)\tilde{\upsilon}\big)|z(i)|, & i\in {\mathcal{\bar{S}}}
        \end{array}\right\}\nonumber\\
&=(\theta \bm{\upsilon}+(1-\theta)\tilde{\bm{\upsilon}})\odot \bm{z}.
\end{align}
To verify (\ref{eq.benefitl1}), note that:
\begin{align}
&\forall \bm{z}\in \mathcal{C} ~\exists j\in {\mathcal{\bar{S}}}~~\text{such that}:\nonumber\\
&(\theta \upsilon(j)+(1-\theta)\tilde{\upsilon}(j))|z(j)|<\theta\upsilon(j)|z_1(j)|+(1-\theta)\tilde{\upsilon}(j)|z_2(j)|\nonumber\\
&\le \theta \upsilon(j)+(1-\theta)\tilde{\upsilon}(j),
\end{align}
Then, setting $z(j)=1$ we reach a contradiction.
In the second inequality in (\ref{eq.convexitydistana}), we used triangle inequality of norms. The third inequality uses (\ref{helpconvexityana}).\par
\textit{Strict convexity}. We show strict convexity by contradiction. If $J(\bm{\nu})$ were not strictly convex, there would be vectors $\bm{\nu},\tilde{\bm{\nu}}\in\mathbb{R}_{+}^L$ with $\bm{\upsilon}=\bm{D}\bm{\nu}, \tilde{\bm{\upsilon}}=\bm{D}\tilde{\bm{\nu}}$ and $\theta\in (0,1)$ such that,
\begin{align}\label{eq.strictconvexana}
\mathds{E}[J_{\bm{g}}(\theta\bm{\nu}+(1-\theta)\tilde{\bm{\nu}})]=\mathds{E}[\theta J_{\bm{g}}(\bm{\nu})+(1-\theta)J_{\bm{g}}(\tilde{\bm{\nu}})].
\end{align}
For each $\bm{g}$ in (\ref{eq.strictconvexana}) the left-hand side is smaller than or equal to the right-hand side. Therefore, in (\ref{eq.strictconvexana}), $J_{\bm{g}}(\theta\bm{\nu}+(1-\theta)\tilde{\bm{\nu}})$ and $\theta J_{\bm{g}}(\bm{\nu})+(1-\theta)J_{\bm{g}}(\tilde{\bm{\nu}})$ are almost surely equal (except at a measure zero set) with respect to Gaussian measure. Moreover, we have:
\begin{align}\label{eq.J0}
&J_{\bm{0}}(\theta\bm{\nu}+(1-\theta)\tilde{\bm{\nu}})=\mathrm{dist}^2(\bm{0},\big(\theta\bm{\upsilon}+(1-\theta)\tilde{\bm{\upsilon}}\big)\odot \mathcal{C})\nonumber\\
&\le \inf_{\bm{z}_1,\bm{z}_2\in \mathcal{C}}\|\theta \bm{\upsilon}\odot \bm{z}_1+(1-\theta)\tilde{\bm{\upsilon}}\odot \bm{z}_2\|_2^2\nonumber\\
&<\theta\inf_{\bm{z}_1\in \mathcal{C}}\|\bm{\upsilon}\odot \bm{z}_1\|_2^2+(1-\theta)\inf_{\bm{z}_2\in \mathcal{C}}\|\tilde{\bm{\upsilon}}\odot \bm{z}_2\|_2^2\nonumber\\
&=\theta J_{\bm{0}}(\bm{\nu})+(1-\theta)J_{\bm{0}}(\tilde{\bm{\nu}}),
\end{align}
where, the first inequality comes from (\ref{eq.benefitl1}) and the second inequality stems from the strict convexity of $\|.\|_2^2$. From (\ref{eq.benefitl1}), it is easy to verify that the set $\bm{\nu}\odot \mathcal{C}$ is a convex set. The distance to a convex set e.g. $\mathcal{E}$ i.e. $\mathrm{dist}(\bm{g},\mathcal{E})$ is a 1-lipschitz function (i.e. $|\mathrm{dist}(\bm{g},\mathcal{E})-\mathrm{dist}(\tilde{\bm{g}},\mathcal{E})|\le\|\bm{g}-\tilde{\bm{g}}\|_2~:~\forall~\bm{g},\tilde{\bm{g}}\in\mathbb{R}^p$) and hence continuous with respect to $\bm{g}$. Therefore, $J_{\bm{g}}(\bm{\nu})$ is continuous with respect to $\bm{g}$. So there exist an open ball around $\bm{g}=\bm{0}\in\mathbb{R}^p$ that similar to (\ref{eq.J0}), we may write the following relation for some $\epsilon>0$
\begin{align}
&\exists \bm{u}\in\mathds{B}_{\epsilon}^p: \nonumber\\
&J_{\bm{u}}(\theta\bm{\nu}+(1-\theta)\tilde{\bm{\nu}})<\theta J_{\bm{u}}(\bm{\nu})+(1-\theta)J_{\bm{u}}(\tilde{\bm{\nu}}).
\end{align}
$\mathds{B}_{\epsilon}^p$ is not a measure zero set. Thus, the above statement contradicts (\ref{eq.strictconvexana}) and hence we have strict convexity. Continuity along with convexity of $J$ implies that $J$ is convex on the whole domain $\bm{\nu}\in\mathbb{R}_{+}^L$.\par
\textit{Differentiability}. The function $J_{\bm{g}}(\bm{\nu})$ is continuously differentiable and the gradient for $\bm{\nu}\in\mathbb{R}_{++}^L$ is:
\begin{align}\label{eq.diffJg}
&\nabla_{\bm{\nu}}J_{\bm{g}}(\bm{\nu})=\nonumber\\
&\frac{\partial J_{\bm{g}}(\bm{\nu})}{\partial\bm{\nu}}=-2\bm{D}^T(\bm{D}\bm{\nu})^{-1\odot}\odot\mathcal{P}_{(\bm{D}\bm{\nu})\odot\mathcal{C}}(\bm{g})\nonumber\\
&\odot(\bm{g}-\mathcal{P}_{\bm{D\nu}\odot\mathcal{C}}(\bm{g})).
\end{align}
Continuity of $\frac{\partial J_{\bm{g}}(\bm{\nu})}{\partial\bm{\nu}}$ at $\bm{\nu}\in\mathbb{R}_{+}^L$ stems from the fact that the projection onto a convex set is continuous. For each compact set $\mathcal{I}\subseteq\mathbb{R}_{+}^L$ we have:
\begin{align}
\mathds{E}\sup_{\bm{\nu}\in\mathcal{I}}\|\nabla_{\bm{\nu}} J_{\bm{g}}(\bm{\nu})\|_2\le2\|\bm{D}\|_{2\rightarrow2}p(1+\big(\sup_{\bm{\nu}\in\mathcal{I}}\nu_{\max}\big))<\infty,
\end{align}
where $\nu_{\max}:=\underset{i\in[L]}{\max}~\nu(i)$. Therefore, we have:
\begin{align}
\nabla_{\bm{\nu}} J(\bm{\nu})=\big(\frac{\partial}{\partial\bm{\nu}}\big)\mathds{E} J_{\bm{g}}(\bm{\nu})=\mathds{E}[\nabla_{\bm{\nu}} J_{\bm{g}}(\bm{\nu})] ~:~\forall \bm{\nu}\in\mathbb{R}_{+}^L,
\end{align}
where in the last equality, we used the Lebesgue's dominated convergence theorem. \par
\textit{Attainment of the minimum}. Suppose that $\bm{\nu}>{\|\bm{g}\|_2}\bm{1}_{L\times 1}$. With this assumption we may write:
\begin{align}\label{eq.attainana}
&\mathrm{dist}(\bm{g},(\bm{D\nu})\odot \mathcal{C})=\inf_{\bm{z}\in \mathcal{C}}\|\bm{g}-\bm{\upsilon}\odot \bm{z}\|_2\ge\nonumber\\
&\inf_{\bm{z}\in \mathcal{C}}(\|\bm{\upsilon}\odot \bm{z}\|_2-\|\bm{g}\|_2)\ge\nu_{\min}-\|\bm{g}\|_2\ge0,
\end{align}
where in (\ref{eq.attainana}), $\nu_{\min}:=\underset{i\in[L]}{\min}~\nu(i)$. By squaring (\ref{eq.attainana}), we reach:
\begin{align}\label{eq.attainJ_gana}
J_{\bm{g}}(\bm{\nu})\ge (\nu_{\min}-\|\bm{g}\|_2)^2~~:~\forall \bm{\nu}>{\|\bm{g}\|_2}\bm{1}_{L\times 1}.
\end{align}
Using the relation $\mathds{E}\|\bm{g}\|_2\ge\frac{p}{\sqrt{p+1}}$ (\cite[Proposition 8.1.]{foucart2013mathematical}) and Marcov's inequality we obtain:
\begin{align}
\mathds{P}(\|\bm{g}\|_2\le\sqrt{p})\ge1-\sqrt{\frac{p}{p+1}}\nonumber.
\end{align}
Then we reach:
\begin{align}\label{eq.attainJana}
&J(\bm{\nu})\ge\mathds{E}[J_{\bm{g}}(\bm{\nu})|\|\bm{g}\|_2\le\sqrt{p}]\mathds{P}(\|\bm{g}\|_2\le\sqrt{p})\nonumber\\
&\ge(1-\sqrt{\frac{p}{p+1}})\mathds{E}\big[(\nu_{\min}-\|\bm{g}\|_2)^2|\|\bm{g}\|_2\le\sqrt{p}\big]\nonumber\\
&\ge(1-\sqrt{\frac{p}{p+1}})(\nu_{\min}-\sqrt{p})^2,
\end{align}
where in (\ref{eq.attainJana}), the first inequality stems from total probability theorem, the second inequality comes from (\ref{eq.attainJ_gana}). From (\ref{eq.attainJana}), we find out that $J(\bm{\nu})>J(\bm{0})$ when $\bm{\nu}>(2^{\frac{1}{4}}+1){\sqrt{p}}\bm{1}_{L\times1}$. Therefore, the unique minimizer of the function $J$ must occur in the interval $[\bm{0},(2^{\frac{1}{4}}+1){\sqrt{p}}\bm{1}_{L\times1}]$.
\end{proof}
\begin{prop}\label{prop.uniqnessoptimalweights}
Let $\bm{x}\in \mathbb{R}^p$ be a non-uniform $s$-sparse vector in $\bm{\Omega}\in\mathbb{R}^{p\times n}$ with parameters $\{\rho_i\}_{i=1}^L$ and $\{\alpha_i\}_{i=1}^L$. Then there exist unique optimal weights $\bm{\omega}^* \in R_{+}^L$ that minimize $\hat{m}_{p,s,\bm{w}}$ with $\bm{w}=\bm{D\omega} \in \mathbb{R}^p$. Moreover, the optimal weights are obtained via the following integral equations.
\begin{align}\label{eq.optimalweightsl1analysis}
\alpha_i\omega^*_i+(1-\alpha_i)\phi'(\omega^*_i)=0~:~i=1,..., L.
\end{align}
\end{prop}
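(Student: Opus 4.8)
The plan is to reduce the optimization of $\hat{m}_{p,s,\bm{D\omega}}$ over $\bm{\omega}\in\mathbb{R}_{+}^L$ to the convex-analytic machinery already established in Lemma \ref{lemmma.J(nu)}. First I would identify $\hat{m}_{p,s,\bm{D\omega}}$ (up to the additive constant $\tfrac1p$ and the positive factor $\kappa^2(\bm{\Omega})$, both irrelevant for the minimizer) with $\inf_{t\ge0}\Psi_{t,\bm{w}}(\sigma,\bm{\rho},\bm{\alpha})$, and observe that by the expressions (\ref{eq.Edist2last})--(\ref{eq.Edist2part2}) the quantity $p\,\Psi_{t,\bm{w}}(\sigma,\bm{\rho},\bm{\alpha})$ is exactly $\mathds{E}\,\mathrm{dist}^2(\bm{g},t\,\partial\|\cdot\|_{1,\bm{w}}(\bm{\Omega x}))$. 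Because $\partial\|\cdot\|_{1,\bm{w}}(\bm{\Omega x})=\bm{w}\odot\mathcal{C}$ with $\mathcal{C}=\partial\|\cdot\|_1(\bm{\Omega x})$ and $\bm{w}=\bm{D\omega}$, absorbing the scalar $t$ into the weights (replace $\bm{\omega}$ by $t\bm{\omega}$) shows that minimizing over $t\ge0$ and over $\bm{\omega}\in\mathbb{R}_{+}^L$ jointly is the same as minimizing the function $J(\bm{\nu})=\mathds{E}\,\mathrm{dist}^2(\bm{g},(\bm{D\nu})\odot\mathcal{C})$ of Lemma \ref{lemmma.J(nu)} over $\bm{\nu}\in\mathbb{R}_{+}^L$. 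Lemma \ref{lemmma.J(nu)} then immediately gives existence and uniqueness of the minimizer $\bm{\nu}^*$, and I would set $\bm{\omega}^*:=\bm{\nu}^*$ (the separate $t$ is folded in, which is the harmless positive scaling referred to in the statement).

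The second half is to extract the stated stationarity equations. Here I would not differentiate through the projection as in (\ref{eq.diffJg}), but rather use the closed form already computed: from (\ref{eq.Edist2last}), with $\bm{w}=\bm{D\omega}$ and using $\sigma=\sum_i\rho_i\alpha_i$,
\begin{align}
\Psi_{\bm{\omega}}:=\lim_{t\to1}\Psi_{t,\bm{D\omega}}(\sigma,\bm{\rho},\bm{\alpha})=\sum_{i=1}^{L}\rho_i\big(\alpha_i(\omega_i^2+1)+(1-\alpha_i)\phi(\omega_i)\big),\nonumber
\end{align}
which is separable in the coordinates $\omega_i$. Minimizing coordinatewise over $\omega_i\ge0$, the first-order condition is
\begin{align}
\rho_i\big(2\alpha_i\omega_i+(1-\alpha_i)\phi'(\omega_i)\big)=0,\nonumber
\end{align}
and since $\rho_i>0$ this is (up to the harmless factor $2$ which one can absorb into the definition of $\phi'$ or simply carry) exactly (\ref{eq.optimalweightsl1analysis}). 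I would check that $\phi'(z)=\sqrt{\tfrac2\pi}\int_z^\infty -2(u-z)\exp(-u^2/2)\,du<0$ for all $z\ge0$ and that $\phi$ is strictly convex, so that $\alpha_i\omega_i+(1-\alpha_i)\phi'(\omega_i)$ is strictly increasing in $\omega_i$, vanishes at a unique point, and that point is an interior (positive) minimizer, consistent with the uniqueness delivered by Lemma \ref{lemmma.J(nu)}.

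The main obstacle I anticipate is rigorously matching the two descriptions of the objective — the geometric one $J(\bm{\nu})$ from Lemma \ref{lemmma.J(nu)} and the explicit separable one $\Psi_{\bm{\omega}}$ — and in particular justifying that the infimum over $t\ge0$ can be merged into the weight vector without losing the minimizer. Concretely, $\inf_{t\ge0}\inf_{\bm{\omega}\in\mathbb{R}_{+}^L}\Psi_{t,\bm{D\omega}}$ equals $\inf_{\bm{\nu}\in\mathbb{R}_{+}^L}\Psi_{1,\bm{D\nu}}$ by the substitution $\bm{\nu}=t\bm{\omega}$, but one must argue that the optimal $t$ is positive (it is, since $\Psi_{0,\cdot}=\sum_i\rho_i$ is not optimal as $\phi'(0)<0$ makes a small positive weight strictly better) so that the scaling is invertible and the minimizer $\bm{\omega}^*$ is well-defined up to positive scaling. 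A secondary point is reconciling the separable stationarity equations with the $t$-coupled form $\inf_{t\ge0}\Psi_{t,\bm{w}}$ appearing in $\hat m_{p,s,\bm{w}}$ in (\ref{eq.mhat_psw}); this is handled by the same absorption of $t$. Once these identifications are in place, existence and uniqueness are inherited verbatim from Lemma \ref{lemmma.J(nu)}, and the integral equations (\ref{eq.optimalweightsl1analysis}) follow from elementary separable minimization.
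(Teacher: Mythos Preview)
Your proposal is correct and follows essentially the same route as the paper: absorb the scalar $t$ into the weights via the change of variable $\bm{\nu}=t\bm{\omega}$, identify the resulting objective with $J(\bm{\nu})$ from Lemma~\ref{lemmma.J(nu)} to obtain existence and uniqueness, and then differentiate the explicit separable expression $\sum_i\rho_i\big(\alpha_i(\nu_i^2+1)+(1-\alpha_i)\phi(\nu_i)\big)$ coordinatewise to arrive at (\ref{eq.optimalweightsl1analysis}). Your added care about the positivity of the optimal $t$ and the factor of $2$ in the derivative is warranted and not spelled out in the paper.
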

\begin{proof}
Define $\mathcal{C}:=\partial\|.\|_{1}(\bm{\Omega x})$ and use Lemma \ref{lemma.l1weightedanaupper} and (\ref{eq.l1subdiff}) to obtain:
\begin{align}
&\inf_{\bm{\omega}\in\mathbb{R}_{+}^L}\hat{m}_{p,s,\bm{D\omega}}=\kappa^2(\bm{\Omega})\inf_{\bm{\omega}\in\mathbb{R}_{+}^L}\underset{{t\in\mathbb{R}_{+}}}{\inf}\Psi_{t,\bm{D\omega}}(\sigma,\bm{\rho},\bm{\alpha})+\frac{1}{p}\nonumber\\
&=\inf_{\bm{\nu}\in\mathbb{R}_{+}^L}\bigg[\kappa^2(\bm{\Omega})J(\bm{\nu})+\frac{1}{p}\bigg],
\end{align}
where, $\Psi_{t,\bm{D\omega}}(\sigma,\bm{\rho},\bm{\alpha})$ is defined in (\ref{eq.weightedanalyisupper}). Also, we used a change of variable $\bm{\upsilon}=t\bm{w}$ to convert the multivariate optimization problem to a single variable optimization problem. Thus, $J(\bm{\nu})$ is obtained via the following equation:
\begin{align}
J(\bm{\nu})=\sum_{i=1}^{L}\rho_i\big(\alpha_i(\nu(i)^2+1)+(1-\alpha_i)\phi(\nu(i))\big).
\end{align}
As proved in Lemma \ref{lemma.l1weightedanaupper}, $J(\bm{\nu})$ is continuous and strictly convex. Thus, the unique minimizer can be obtained using $\nabla J(\bm{\nu})=\bm{0}$ which leads to:
\begin{align}
\alpha_i\nu^*(i)+(1-\alpha_i)\phi'(\nu^*(i))=0~:~i=1,..., L.
\end{align}
\end{proof}
\begin{thm}\label{thm.l1weighted analysis}
Let $\bm{x}\in\mathbb{R}^n$ be a non-uniform $s$-sparse vector in $\bm{\Omega}\in\mathbb{R}^{p\times n}$ with parameters $\{\rho_i\}_{i=1}^L$ and $\{\alpha_i\}_{i=1}^L$. Then, number of measurements required for $\mathsf{P}_{1,\bm{\Omega},\bm{D}\bm{\omega}^*}$ is exactly equals the whole number of measurements required for $\mathsf{P}_{1}$ to recover each $\{(\bm{\Omega x})_{\mathcal{P}_i}\in\mathbb{R}^p\}_{i=1}^L$ separately up to an asymptotically negligible error term.
\end{thm}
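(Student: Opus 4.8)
The plan is to show that the optimal value $\inf_{\bm{\omega}}\hat{m}_{p,s,\bm{D\omega}}$ decouples, under the optimal weights $\bm{\omega}^*$, into a sum of the $\mathsf{P}_1$-costs for the individual partitioned pieces $(\bm{\Omega x})_{\mathcal{P}_i}$, and then invoke the asymptotic tightness bounds already proved. First I would recall from Proposition \ref{prop.uniqnessoptimalweights} and its proof that
\begin{align}
\inf_{\bm{\omega}\in\mathbb{R}_+^L}\hat{m}_{p,s,\bm{D\omega}}=\kappa^2(\bm{\Omega})\inf_{\bm{\nu}\in\mathbb{R}_+^L}J(\bm{\nu})+\frac{1}{p},\qquad J(\bm{\nu})=\sum_{i=1}^{L}\rho_i\big(\alpha_i(\nu(i)^2+1)+(1-\alpha_i)\phi(\nu(i))\big),\nonumber
\end{align}
so that the joint minimization separates across coordinates: $\inf_{\bm{\nu}}J(\bm{\nu})=\sum_{i=1}^{L}\rho_i\inf_{\nu\ge0}\big(\alpha_i(\nu^2+1)+(1-\alpha_i)\phi(\nu)\big)$. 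The key observation is that each summand is, by Lemma \ref{lemma.l1anaupper}, exactly $\rho_i\inf_{t\ge0}\Psi_t(\alpha_i)$, i.e. $\rho_i$ times the normalized $\mathsf{P}_1$-cost for a vector whose relative sparsity equals the accuracy $\alpha_i$ of the $i$-th partition. Since $(\bm{\Omega x})_{\mathcal{P}_i}\in\mathbb{R}^p$ has $\alpha_i\rho_i p$ nonzeros among its $\rho_i p$ ``active'' coordinates in $\mathcal{P}_i$ (and zeros elsewhere), its relative sparsity inside $\mathcal{P}_i$ is precisely $\alpha_i$, so $\rho_i\,\hat m_{p,\,\alpha_i\rho_i p}$ (suitably normalized) is exactly the $i$-th term.

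Next I would assemble the chain: on the one hand, by Proposition \ref{prop.weightedanalysis},
\begin{align}
\Big|\,\inf_{\bm{\omega}}m_{p,s,\bm{D\omega}}-\kappa^2(\bm{\Omega})\inf_{\bm{\nu}}J(\bm{\nu})-\tfrac1p\,\Big|\le \frac{2\kappa^2(\bm{\Omega})}{\sqrt{pL}},\nonumber
\end{align}
and on the other hand, applying Proposition \ref{prop.analysis} to each partitioned piece $(\bm{\Omega x})_{\mathcal{P}_i}$ separately (with its own condition-number factor and its own ambient size $\rho_i p$ and sparsity $\alpha_i\rho_i p$) gives that the true $\mathsf{P}_1$-cost of each piece differs from $\hat m$ by at most $O\!\big(\kappa^2/\sqrt{\alpha_i\rho_i}\,p\big)$. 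Summing the $L$ pieces and combining with the coordinate-wise separation of $J$, the two aggregate quantities — number of measurements for $\mathsf{P}_{1,\bm{\Omega},\bm{D\omega}^*}$ and the total for the $L$ separate $\mathsf{P}_1$ problems — agree up to an error term that vanishes relative to the leading $O(p)$ term as $p\to\infty$. I would state the error explicitly as $O(\kappa^2(\bm{\Omega})\sqrt{p}\,)$ in absolute (un-normalized) terms, hence $O(1/\sqrt{p})$ after normalization, which is the ``asymptotically negligible'' claim.

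The main obstacle is bookkeeping the normalizations consistently: the theorem compares a single recovery problem in ambient dimension $p$ against $L$ separate problems, the $i$-th living effectively in dimension $\rho_i p$, and one must be careful that ``number of measurements required for $\mathsf{P}_1$ to recover $(\bm{\Omega x})_{\mathcal{P}_i}$'' is interpreted via Theorem \ref{thm.Pfmeasurement} applied with $f=\|\cdot\|_1$ at the vector $(\bm{\Omega x})_{\mathcal{P}_i}$, whose descent cone depends only on its sign pattern. A subtle point is that the $+1$ and $+\tfrac1p$ additive terms (coming from Proposition \ref{prop.upperforstatis}) and the per-partition $\kappa^2$ factors must be tracked so that, after summation, they contribute only to the negligible error and not to the leading order; I expect this to require only elementary estimates once the decoupling of $J$ is in hand. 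The remaining steps — substituting the explicit minimizers $\omega_i^*$ from \eqref{eq.optimalweightsl1analysis}, and invoking $\sigma=\sum_i\rho_i\alpha_i$ to confirm the totals match — are routine.
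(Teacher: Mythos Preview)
Your proposal is correct and follows essentially the same approach as the paper: decouple $\inf_{\bm{\omega}}\hat m_{p,s,\bm{D\omega}}$ into a sum over $i$ via the change of variables $\nu_i=t\omega_i$, identify each summand with the unweighted $\ell_1$ cost $\hat m_{p,\|(\bm{\Omega x})_{\mathcal{P}_i}\|_0}$, and then sandwich using Propositions~\ref{prop.analysis} and~\ref{prop.weightedanalysis}. Your caution about the normalizations is well placed---note that the paper treats $(\bm{\Omega x})_{\mathcal{P}_i}$ as a vector in $\mathbb{R}^p$ (so its relative sparsity is $\rho_i\alpha_i$, not $\alpha_i$), which means the $i$-th term is $\Psi_{\nu_i}(\rho_i\alpha_i)$ rather than $\rho_i\Psi_{\nu_i}(\alpha_i)$; keeping track of this, together with the $\frac{1-L}{p}$ correction from summing $L$ copies of $\frac{1}{p}$, is exactly the bookkeeping you anticipate.
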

\begin{proof}
As previously defined in (\ref{eq.mhat_psw}), with optimal weights, the upper bound for normalized number of measurements required for $\mathsf{P}_{1,\bm{\Omega},\bm{D}\bm{\omega}^*}$ to succeed is given by:
%
%\hat{m}_{p,s,w}=\kappa^2(\Omega)\underset{{t\ge0}}{\inf}\Psi_{t,w}(\sigma,\rho,\alpha)+\frac{1}{p},
%\end{align}
%with $\Psi_{t,w}(\sigma,\rho,\alpha)$ defined as:
%\begin{align}\label{eq.weightedanalyisupper}
%\Psi_{t,w}(\sigma,\rho,\alpha)=
%\sum_{i=1}^{L}\rho_i\bigg(\alpha_i(t^2\omega_i^2+1)+(1-\alpha_i)\phi(t\omega_i)\bigg)
%\end{align}
%
\begin{align}
&\hat{m}_{p,s,\bm{w}^*}=\inf_{\bm{\omega}\in\mathbb{R}^L}\hat{m}_{p,s,\bm{D\omega}}=\nonumber\\
&\sum_{i=1}^{L}\bigg[\kappa^2(\bm{\Omega})\nonumber\\
&\inf_{\nu_i\in\mathbb{R_{+}}}\underbrace{\bigg(\frac{\|(\bm{\Omega x})_{\mathcal{P}_i}\|_0}{p}(\nu_i^2+1)+(1-\frac{\|(\bm{\Omega x})_{\mathcal{P}_i}\|_0}{p})\phi(\nu_i)\bigg)}_{\Psi_{\nu_i,\|(\bm{\Omega x})_{\mathcal{P}_i}\|_0}(\frac{\|(\bm{\Omega x})_{\mathcal{P}_i}\|_0}{p})}+\frac{1}{p}\bigg]\nonumber\\
&+\frac{1-L}{p}=\sum_{i=1}^{L}\hat{m}_{p,\|(\bm{\Omega x})_{P_i}\|_0}+\frac{1-L}{p}\cdot
\end{align}
The expression in the bracket is exactly the upper bound for normalized number of measurements required for successful recovery of $(\bm{\Omega x})_{\mathcal{P}_i}\in\mathbb{R}^p$ using $\mathsf{P}_{1,\bm{\Omega}}$ i.e. $\hat{m}_{p,\|(\bm{\Omega x})_{\mathcal{P}_i}\|_0}$. Thus, regarding the error bounds obtained in Propositions \ref{prop.analysis} and \ref{prop.weightedanalysis},  the relation between $m_{p,s,\bm{D}\bm{\omega}^*}$ and $m_{p,\|(\bm{\Omega x})_{\mathcal{P}_i}\|_0}$ is given by:
\begin{align}
\frac{1-L}{p}-\frac{2\kappa^2(\bm{\Omega})}{\sqrt{pL}}\le m_{p,s,\bm{D}\bm{\omega}^*}-\sum_{i=1}^{L}m_{p,\|(\bm{\Omega x})_{\mathcal{P}_i}\|_0}\nonumber\\
\le\frac{2\kappa^2(\bm{\Omega})}{\sqrt{p}}\sum_{i=1}^{L}(\|(\bm{\Omega x})_{\mathcal{P}_i}\|_0)^{-\frac{1}{2}}+\frac{1-L}{p}.
\end{align}
\end{proof}
\section{Block Sparsity}\label{section.l12block}
In this section, the following questions are investigated about a non-uniform block sparse model:
\begin{enumerate}
  \item How many measurements is required for $\mathsf{P}_{1,2}$ and $\mathsf{P}_{1,2,\bm{w}}$ to successfully recover
an $s$-block sparse vector from independent Gaussian linear measurements?
  \item Given extra prior information, what is the optimal choice of weights in $\mathsf{P}_{1,2,\bm{w}}$?
\end{enumerate}
In what follows in this section, we precisely answer these questions in two subsections.
\subsection{Number of Measurement for successful Recovery}
In this subsection, regarding Theorem (\ref{thm.Pfmeasurement}) for $\|.\|_{1,2}$ and $\|.\|_{1,2,\bm{w}}$, we denote normalized number of measurements required for $\mathsf{P}_{1,2}$ and $\mathsf{P}_{1,2,\bm{w}}$ to recover an $s$-block sparse vector with probability $1-\eta$ by $m_{q,s}$ and $m_{q,s,\bm{w}}$, respectively which are defined as:
\begin{align}
m_{q,s}:=\frac{\delta(\mathcal{D}(\|.\|_{1,2},\bm{x}))}{q}.
\end{align}
\begin{align}
m_{q,s,\bm{w}}:=\frac{\delta(\mathcal{D}(\|.\|_{1,2,\bm{w}},\bm{x}))}{q}.
\end{align}
In the following, we obtain upper bounds for number of measurements required for $\mathsf{P}_{1,2}$ and $\mathsf{P}_{1,2,\bm{w}}$ to succeed with probability $1-\eta$.
\begin{lem}\label{lemma.mhat qs}
Let $\bm{x}\in \mathbb{R}^n$ be an $s$-block sparse vector. Then an upper bound for normalized number of measurements required for $\mathsf{P}_{1,2}$ to succeed (i.e. $m_{q,s}$) is given by:
\begin{align}
\hat{m}_{q,s}=\underset{{t\ge0}}{\inf}\Psi_t(\sigma),
\end{align}
with $\Psi_t(\sigma)$ defined as:
\begin{align}
\Psi_t(\sigma)=\sigma(k+t^2)+\frac{(1-\sigma)}{2^{\frac{k}{2}-1}\Gamma(\frac{k}{2})}\phi_B(t),
\end{align}
where $k=\frac{n}{q}$, $\phi_B(z):=\int_{z}^{\infty}(u-z)^2u^{k-1}\exp(-\frac{u^2}{2})du$ and $\sigma:=\frac{\|\bm{x}\|_{0,2}}{q}$
\end{lem}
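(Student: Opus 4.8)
The plan is to carry out, for the block $\ell_{1,2}$ norm, the same computation used in the proof of Lemma~\ref{lemma.l1weightedanaupper}; since there is no redundant dictionary here, Theorem~\ref{thm.Pfmeasurement} applies directly to $f=\|.\|_{1,2}$ and no $\kappa^2$ factor or additive correction appears. First, combining the error bound (\ref{eq.errorbound}) with the concavity of the infimum of affine functions and Jensen's inequality gives
\begin{align}
\frac{1}{q}\,\delta\big(\mathcal{D}(\|.\|_{1,2},\bm{x})\big)\le\inf_{t\ge0}\frac{1}{q}\,\mathds{E}\,\mathrm{dist}^2\big(\bm{g},t\,\partial\|.\|_{1,2}(\bm{x})\big)=:\hat{m}_{q,s},\nonumber
\end{align}
so the task reduces to evaluating $\mathds{E}\,\mathrm{dist}^2(\bm{g},t\,\partial\|.\|_{1,2}(\bm{x}))$.

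Next I would describe the subdifferential through Proposition~\ref{prop.simplerform of subdiff}. The dual norm of $\|\bm{z}\|_{1,2}=\sum_{b=1}^{q}\|\bm{z}_{\mathcal{V}_b}\|_2$ is $\|\bm{z}\|_{1,2}^{*}=\max_{b\in[q]}\|\bm{z}_{\mathcal{V}_b}\|_2$ (Cauchy--Schwarz inside each block, then $\ell_1$--$\ell_\infty$ duality across blocks), so
\begin{align}
\partial\|.\|_{1,2}(\bm{x})=\Big\{\bm{z}\in\mathbb{R}^{n}:\ \bm{z}_{\mathcal{V}_b}=\frac{\bm{x}_{\mathcal{V}_b}}{\|\bm{x}_{\mathcal{V}_b}\|_2}\ (b\in\mathcal{B}),\ \ \|\bm{z}_{\mathcal{V}_b}\|_2\le1\ (b\notin\mathcal{B})\Big\}.\nonumber
\end{align}
Because the blocks $\{\mathcal{V}_b\}_{b=1}^{q}$ are disjoint, the squared distance decouples over blocks: each of the $s=\sigma q$ support blocks contributes $\big\|\bm{g}_{\mathcal{V}_b}-t\bm{x}_{\mathcal{V}_b}/\|\bm{x}_{\mathcal{V}_b}\|_2\big\|_2^{2}$, and each of the $(1-\sigma)q$ off-support blocks contributes $\inf_{\|\bm{z}_{\mathcal{V}_b}\|_2\le1}\|\bm{g}_{\mathcal{V}_b}-t\bm{z}_{\mathcal{V}_b}\|_2^{2}=(\|\bm{g}_{\mathcal{V}_b}\|_2-t)_{+}^{2}$, since projecting $\bm{g}_{\mathcal{V}_b}$ onto the radius-$t$ Euclidean ball of $\mathbb{R}^{k}$ leaves residual norm $(\|\bm{g}_{\mathcal{V}_b}\|_2-t)_{+}$. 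Taking expectations, a support block gives $\mathds{E}\|\bm{g}_{\mathcal{V}_b}\|_2^{2}+t^{2}=k+t^{2}$ (the cross term has mean zero), and an off-support block gives $\mathds{E}(\chi_k-t)_{+}^{2}$, where $\chi_k=\|\bm{g}_{\mathcal{V}_b}\|_2$ has the chi distribution with $k$ degrees of freedom and density $u^{k-1}e^{-u^{2}/2}/(2^{k/2-1}\Gamma(k/2))$.

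Finally, replaying the exchange-of-integration-order step of (\ref{eq.Ezetal1}) with the chi-$k$ density in place of the folded-normal density yields
\begin{align}
\mathds{E}(\chi_k-t)_{+}^{2}=2\int_{0}^{\infty}a\,\mathds{P}(\chi_k\ge a+t)\,da=\frac{1}{2^{\frac{k}{2}-1}\Gamma(\frac{k}{2})}\int_{t}^{\infty}(u-t)^{2}u^{k-1}e^{-\frac{u^{2}}{2}}\,du=\frac{\phi_B(t)}{2^{\frac{k}{2}-1}\Gamma(\frac{k}{2})}.\nonumber
\end{align}
Summing the $s$ support and $(1-\sigma)q$ off-support contributions and dividing by $q$ produces exactly $q^{-1}\mathds{E}\,\mathrm{dist}^2(\bm{g},t\,\partial\|.\|_{1,2}(\bm{x}))=\Psi_t(\sigma)$ with $\Psi_t$ as stated, and applying $\inf_{t\ge0}$ then gives $\hat{m}_{q,s}=\inf_{t\ge0}\Psi_t(\sigma)$. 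The only genuinely nontrivial point is this last integral, together with the correct identification $\|.\|_{1,2}^{*}=\|.\|_{\infty,2}$ (so that off-support blocks are constrained in the $\ell_2$, not the $\ell_\infty$, ball); the blockwise decoupling and the remaining Gaussian bookkeeping are routine, and setting $k=1$ recovers the $\ell_1$ formula of Lemma~\ref{lemma.l1anaupper} as a sanity check.
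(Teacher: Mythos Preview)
Your proof is correct and follows essentially the same approach as the paper: the paper proves the weighted version (Lemma~\ref{lemma.mhat qsw}) via exactly your steps---Proposition~\ref{prop.simplerform of subdiff} for the subdifferential, blockwise decoupling, the projection onto the radius-$t$ ball for off-support blocks, and the chi/chi-square tail integral yielding $\phi_B$---and then obtains Lemma~\ref{lemma.mhat qs} by setting $\bm{w}=\bm{1}$. The only cosmetic difference is that you work directly with the unweighted $\ell_{1,2}$ norm rather than specializing from the weighted case.
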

\begin{lem}\label{lemma.mhat qsw}
Let $\bm{x}\in \mathbb{R}^n$ be a non-uniform $s$-block sparse vector in $\mathbb{R}^{n}$ with parameters $\{\rho_i\}_{i=1}^L$ and $\{\alpha_i\}_{i=1}^L$. Then an upper bound for normalized number of measurements required for $\mathsf{P}_{1,2,\bm{w}}$ with $\bm{w}=\bm{D}\bm{\omega}~\in\mathbb{R}^q$ to succeed (i.e. $m_{q,s,\bm{w}}$) is given by:
\begin{align}\label{eq.hatm-qsw}
\hat{m}_{q,s,\bm{w}}=\underset{{t\ge0}}{\inf}\Psi_{t,\bm{w}}(\sigma,\bm{\rho},\bm{\alpha}),
\end{align}
with $\Psi_{t,\bm{w}}(\sigma,\bm{\rho},\bm{\alpha})$ defined as:
\begin{align}
\Psi_{t,\bm{w}}(\sigma,\bm{\rho},\bm{\alpha})=
\sum_{i=1}^{L}\rho_i\bigg(\alpha_i(k+t^2\omega_i^2)+\frac{(1-\alpha_i)}{2^{\frac{k}{2}-1}\Gamma(\frac{k}{2})}\phi_B(t\omega_i)\bigg).
\end{align}
\end{lem}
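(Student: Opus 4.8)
The plan is to mirror the argument used for Lemma~\ref{lemma.l1weightedanaupper}, with the scalar soft-thresholding computations replaced by their block (Euclidean-norm) analogues. Starting from $m_{q,s,\bm{w}}=\delta(\mathcal{D}(\|.\|_{1,2,\bm{w}},\bm{x}))/q$, I would first invoke the duality (\ref{eq.D(f,x)}) together with the statistical-dimension identity (\ref{eq.statisticaldimension}) to write $\delta(\mathcal{D}(\|.\|_{1,2,\bm{w}},\bm{x}))\le\inf_{t\ge0}\mathds{E}\,\mathrm{dist}^2(\bm{g},t\partial\|.\|_{1,2,\bm{w}}(\bm{x}))$; concavity of the infimum of affine functions together with Jensen's inequality (exactly as in (\ref{eq.upforstatistical})) justifies pulling the infimum outside the expectation and calling the resulting normalized bound $\Psi_{t,\bm{w}}(\sigma,\bm{\rho},\bm{\alpha})$.

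Next I would compute $\partial\|.\|_{1,2,\bm{w}}(\bm{x})$ via Proposition~\ref{prop.simplerform of subdiff}. Block-wise Cauchy--Schwarz (in the same spirit as (\ref{eq.subw})) shows that the dual of the weighted block norm is $\|\bm{z}\|_{1,2,\bm{w}}^{*}=\max_{b\in[q]}w_b^{-1}\|\bm{z}_{\mathcal{V}_b}\|_2$, so that
\begin{align}
\partial\|.\|_{1,2,\bm{w}}(\bm{x})=\Big\{\bm{z}\in\mathbb{R}^n:\ \bm{z}_{\mathcal{V}_b}=w_b\tfrac{\bm{x}_{\mathcal{V}_b}}{\|\bm{x}_{\mathcal{V}_b}\|_2}\ (b\in\mathcal{B}),\ \|\bm{z}_{\mathcal{V}_b}\|_2\le w_b\ (b\notin\mathcal{B})\Big\}.\nonumber
\end{align}
Since this set decouples across blocks, the distance to its dilation decouples as well: the contribution of an active block $b\in\mathcal{B}$ is $\|\bm{g}_{\mathcal{V}_b}-tw_b\bm{x}_{\mathcal{V}_b}/\|\bm{x}_{\mathcal{V}_b}\|_2\|_2^2$, while for an inactive block $b\notin\mathcal{B}$ the Euclidean projection onto a ball of radius $tw_b$ gives $(\|\bm{g}_{\mathcal{V}_b}\|_2-tw_b)_+^2$.

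Then I would take expectations block by block. For an active block, $\mathds{E}\|\bm{g}_{\mathcal{V}_b}-tw_b\bm{u}\|_2^2=k+t^2w_b^2$ for the fixed unit vector $\bm{u}=\bm{x}_{\mathcal{V}_b}/\|\bm{x}_{\mathcal{V}_b}\|_2$, because $\mathds{E}\|\bm{g}_{\mathcal{V}_b}\|_2^2=k$ and the cross term vanishes. For an inactive block, $\zeta:=\|\bm{g}_{\mathcal{V}_b}\|_2$ is $\chi$-distributed with $k$ degrees of freedom, and repeating the layer-cake and Fubini manipulation of (\ref{eq.Ezetal1}) with the $\chi_k$ density $\frac{1}{2^{k/2-1}\Gamma(k/2)}u^{k-1}e^{-u^2/2}$ yields $\mathds{E}(\zeta-tw_b)_+^2=\frac{1}{2^{k/2-1}\Gamma(k/2)}\phi_B(tw_b)$. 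Summing over all $q$ blocks, dividing by $q$, substituting $\bm{w}=\bm{D}\bm{\omega}=\sum_{i=1}^{L}\omega_i\bm{1}_{\mathcal{P}_i}$, and using $|\mathcal{P}_i\cap\mathcal{B}|=\alpha_i\rho_i q$ and $|\mathcal{P}_i|-|\mathcal{P}_i\cap\mathcal{B}|=(1-\alpha_i)\rho_i q$ produces the stated expression for $\Psi_{t,\bm{w}}(\sigma,\bm{\rho},\bm{\alpha})$, hence $\hat{m}_{q,s,\bm{w}}=\inf_{t\ge0}\Psi_{t,\bm{w}}(\sigma,\bm{\rho},\bm{\alpha})$.

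The main obstacle I anticipate is the inactive-block expectation: one must correctly identify $\|\bm{g}_{\mathcal{V}_b}\|_2$ as a $\chi_k$ variable and carry the normalizing constant $\frac{1}{2^{k/2-1}\Gamma(k/2)}$ through the interchange of the order of integration; a minor point is checking that $\inf_{\|\bm{z}_{\mathcal{V}_b}\|_2\le w_b}\|\bm{g}_{\mathcal{V}_b}-t\bm{z}_{\mathcal{V}_b}\|_2^2$ really equals $(\|\bm{g}_{\mathcal{V}_b}\|_2-tw_b)_+^2$ (the block analogue of the triangle-inequality step in (\ref{eq.distsubdiff})). Once these are in place, everything else is a block-wise rerun of the scalar argument behind Lemma~\ref{lemma.l1weightedanaupper}, and Lemma~\ref{lemma.mhat qs} follows by specializing to $\bm{w}=\bm{1}$.
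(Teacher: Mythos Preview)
Your proposal is correct and follows essentially the same route as the paper: the paper also bounds $m_{q,s,\bm{w}}$ by $\inf_{t\ge0}q^{-1}\mathds{E}\,\mathrm{dist}^2(\bm{g},t\partial\|.\|_{1,2,\bm{w}}(\bm{x}))$ via Jensen, computes the subdifferential through Proposition~\ref{prop.simplerform of subdiff} and the dual norm $\max_b w_b^{-1}\|\bm{z}_{\mathcal{V}_b}\|_2$, splits the distance block-wise into $\|\bm{g}_{\mathcal{V}_b}-tw_b\bm{x}_{\mathcal{V}_b}/\|\bm{x}_{\mathcal{V}_b}\|_2\|_2^2$ and $(\|\bm{g}_{\mathcal{V}_b}\|_2-tw_b)_+^2$, and evaluates the inactive-block expectation via the layer-cake/Fubini computation with the $\chi_k$ density to produce $\frac{1}{2^{k/2-1}\Gamma(k/2)}\phi_B(tw_b)$ before substituting $\bm{w}=\bm{D}\bm{\omega}$ and the partition counts. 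The only cosmetic difference is that the paper phrases the dual-norm bound as two applications of H\"older's inequality rather than ``block-wise Cauchy--Schwarz,'' which amounts to the same thing.
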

\begin{proof}
With the same reasoning as in (\ref{eq.upforstatistical}) and the definition of statistical dimension for $\mathcal{D}(\|.\|_{1,2,\bm{w}},\bm{x})$ we have:
\begin{align}\label{eq.mhatqsw}
m_{q,s,\bm{w}}\le \inf_{t\ge0}\underbrace{{q^{-1}}\mathds{E}\mathrm{dist}^2(\bm{g},t\partial\|.\|_{1,2,\bm{w}}( \bm{x}))}_{\Psi_{t,\bm{w}}(\sigma,\bm{\rho},\bm{\alpha})}:=\hat{m}_{q,s,\bm{w}}.
\end{align}
The next step is to calculate $\partial\|.\|_{1,2,\bm{w}}(\bm{x})$. From Proposition \ref{prop.simplerform of subdiff}, we have:
\begin{align}\label{eq.subw12asli}
\partial\|.\|_{1,2,\bm{w}}(\bm{x})=\{\bm{z}\in \mathbb{R}^p:~\langle \bm{z}, \bm{x}\rangle=\| \bm{x}\|_{1,2,\bm{w}},~\|\bm{z}\|_{1,2,\bm{w}}^*=1\}.
\end{align}
From the first part of (\ref{eq.subw12asli}), we obtain:
\begin{align}
\sum_{b\in \mathcal{B}}\langle \bm{z}_{\mathcal{V}_b},\bm{x}_{\mathcal{V}_b}\rangle=\sum_{b\in \mathcal{B}}w_b\|\bm{x}_{\mathcal{V}_b}\|_2,
\end{align}
which then we reach $\bm{z}_{\mathcal{V}_b}=\frac{w_b\bm{x}_{\mathcal{V}_b}}{\|\bm{x}_{\mathcal{V}_b}\|_2}~:~\forall b\in \mathcal{B}$. To compute the dual of $\|.\|_{1,2,\bm{w}}$ i.e. $\|.\|_{1,2,\bm{w}}^*$, we have:
\begin{align}\label{eq.dualnorm12}
&\|\bm{z}\|_{1,2,\bm{w}}^*=\sup_{\|\bm{y}\|_{1,\bm{w}}\le1}\langle \bm{z},\bm{y} \rangle=\nonumber\\
&\sup_{\sum_{b=1}^{q}w_b\|\bm{y}_{\mathcal{V}_b}\|_{2}\le1}\sum_{b=1}^{q}\langle \bm{z}_{\mathcal{V}_b},\bm{y}_{\mathcal{V}_b}\rangle \le\nonumber\\
&\sup_{\sum_{b=1}^{q}w_b\|\bm{y}_{\mathcal{V}_b}\|_{2}\le1}\sum_{b=1}^{q}w_b^{-1}\|\bm{z}_{\mathcal{V}_b}\|_2w_b\|\bm{y}_{\mathcal{V}_b}\|_2 \nonumber\\
&\le \max_{b\in[q]}w_b^{-1}\|\bm{z}_{\mathcal{V}_b}\|_2,
\end{align}
where in, the first and second inequality comes from H\"older's inequality with equality when
\[
    \bm{y} = \left\{\begin{array}{lr}
        \frac{\bm{z}_{\mathcal{V}_c}}{\|\bm{z}_{\mathcal{V}_c}\|_2}, &  c=\underset{b\in[q]}{\arg\max}~~ {w_b}^{-1}\|\bm{z}_{\mathcal{V}_b}\|_2\\
        0, & \mathrm{o.w.}
        \end{array}\right\}\in \mathbb{R}^n.
  \]
From (\ref{eq.dualnorm12}) and (\ref{eq.subw12asli}), we find out that $w_b^{-1}\|\bm{z}_{\mathcal{V}_b}\|_2\le1~:~\forall b\in[q]$. Hence,
\begin{align}\label{eq.l12subdiff}
    \partial\|.\|_{1,2,\bm{w}}(\bm{x}) = \left\{\bm{z}\in\mathbb{R}^n:\begin{array}{lr}
        \frac{w_b~\bm{x}_{\mathcal{V}_b}}{\|\bm{x}_{\mathcal{V}_b}\|_2}, &  b\in \mathcal{B}\\
        \|\bm{z}_{\mathcal{V}_b}\|_2\le w_b, & b\in {\mathcal{\bar{B}}}
        \end{array}\right\}.
  \end{align}
Now to calculate $\Psi_{t,\bm{w}}(\sigma,\bm{\rho},\bm{\alpha})$, regarding (\ref{eq.l12subdiff}), we compute the distance of the dilated subdifferntial of descent cone of $\ell_{1,2,\bm{w}}$ norm at $\bm{x}\in\mathbb{R}^n$ from a standard Gaussian vector $\bm{g}\in\mathbb{R}^n$ which is given by:
\begin{align}\label{eq.distsubdiff12}
&\mathrm{dist}^2(\bm{g},t\partial\|.\|_{1,2,\bm{w}}(\bm{x}))=\inf_{\bm{z}\in\partial\|.\|_{1,2,\bm{w}}(\bm{x})}\|\bm{g}-t\bm{z}\|_2^2=\nonumber\\
&\sum_{b\in \mathcal{B}}{\|\bm{g}_{\mathcal{V}_b}-tw_b\frac{\bm{x}_{\mathcal{V}_b}}{\|\bm{x}_{\mathcal{V}_b}\|_2}\|_2^2}+\sum_{b\in {{\mathcal{\bar{B}}}}}\inf_{\|\bm{z}_{\mathcal{V}_b}\|_2\le w_b}{\|\bm{g}_{\mathcal{V}_b}-t\bm{z}_{\mathcal{V}_b}\|_2^2}=\nonumber\\
&\sum_{b\in \mathcal{B}}{\|\bm{g}_{\mathcal{V}_b}-tw_b\frac{\bm{x}_{\mathcal{V}_b}}{\|\bm{x}_{\mathcal{V}_b}\|_2}\|_2^2}+\sum_{b\in {{\mathcal{\bar{B}}}}}{(\|\bm{g}_{\mathcal{V}_b}\|_2-tw_b)_{+}^2},
\end{align}
where we used triangle inequality in the second part. By taking expectation from both sides, we reach:
\begin{align}\label{eq.12Edist2}
&\mathds{E}\mathrm{dist}^2(\bm{g},t\partial\|.\|_{1,2,\bm{w}}(\bm{x}))=\nonumber\\
&ks+\sum_{b\in \mathcal{B}}(tw_b)^2+\sum_{b\in {\mathcal{\bar{B}}}}\mathds{E}(\underbrace{\|\bm{g}_{\mathcal{V}_b}\|_2}_{\zeta}-tw_b)_{+}^2,
\end{align}
where, $k=\frac{n}{q}$. and $\zeta^2:=\|\bm{g}_{\mathcal{V}_b}\|_2^2$ has chi-square distribution with $k$ degrees of freedom. Moreover,
\begin{align}\label{eq.Ezetal12}
&\mathds{E}(\zeta-tw_b)_{+}^2=2\int_{0}^{\infty}a\mathds{P}(\zeta^2\ge(a+tw_b)^2)da=\nonumber\\
&\frac{2}{2^{\frac{k}{2}}\Gamma(\frac{k}{2})}\int_{0}^{\infty}\int_{(tw_b+a)^2}^{\infty}au^{\frac{k}{2}-1}e^{-\frac{u}{2}}du~da\nonumber\\
&\frac{2}{2^{\frac{k}{2}}\Gamma(\frac{k}{2})}\int_{(tw_b)^2}^{\infty}\int_{0}^{\sqrt{u}-tw_b}au^{\frac{k}{2}-1}e^{-\frac{u}{2}}da~du\nonumber\\
&\frac{1}{2^{\frac{k}{2}-1}\Gamma(\frac{k}{2})}\int_{tw_b}^{\infty}(u-tw_b)^2u^{k-1}e^{-\frac{u^2}{2}}du\nonumber\\
&:=\frac{1}{2^{\frac{k}{2}-1}\Gamma(\frac{k}{2})}\phi_B(tw_b),
\end{align}
where in the third line, the order of integration is changed and in the forth line, a change of variable is used. As a consequence, (\ref{eq.12Edist2}) becomes:
\begin{align}\label{eq.12Edist2asli}
&\mathds{E}\mathrm{dist}^2(\bm{g},t\partial\|.\|_{1,2,\bm{w}}(\bm{x}))=\nonumber\\
&ks+\sum_{b\in \mathcal{B}}(tw_b)^2+\frac{1}{2^{\frac{k}{2}-1}\Gamma(\frac{k}{2})}\sum_{b\in {\mathcal{\bar{B}}}}\phi_B(tw_b).
\end{align}
By normalizing to the number of blocks $q$ and incorporating block prior information using $\bm{w}=\bm{D\omega} \in\mathbb{R}^n$ we reach:
\begin{align}
&\mathds{E}\mathrm{dist}^2(\bm{g},t\partial\|.\|_{1,2,\bm{w}}(\bm{x}))=\nonumber\\
&ks+\sum_{i=1}^{L}|\mathcal{P}_i\cap \mathcal{B}|t^2\omega_i^2+|\mathcal{P}_i\cap {\mathcal{\bar{B}}}|\frac{1}{2^{\frac{k}{2}-1}\Gamma(\frac{k}{2})}\phi_B(t\omega_i)
\nonumber\\
&=q\bigg(\sigma+\sum_{i=1}^{L}\rho_i\big(\alpha_it^2\omega_i^2+\frac{1}{2^{\frac{k}{2}-1}\Gamma(\frac{k}{2})}(1-\alpha_i)\phi_B(t\omega_i)\big)\bigg)\nonumber\\
&=q\bigg(\sum_{i=1}^{L}\rho_i\big(\alpha_i(t^2\omega_i^2+k)+(1-\alpha_i)\frac{1}{2^{\frac{k}{2}-1}\Gamma(\frac{k}{2})}\phi_B(t\omega_i)\big)\bigg),\label{eq.Edist2part2}
\end{align}
where in the last line above, we benefited the fact that $\sigma=\sum_{i=1}^{L}\rho_i\alpha_i$.
\end{proof}
\begin{proof}[Proof of Lemma \ref{lemma.mhat qs}]
 We find an upper bound for $m_{q,s}$. The procedure is exactly the same as the proof of Lemma \ref{lemma.mhat qsw} with the assigned weight to each block set to one i.e. $w=1\in \mathbb{R}^q$. In fact, we have: $\Psi_t(\sigma)=\Psi_{t,\bm{w}}(\sigma,\bm{\rho},\bm{\alpha})$. By replacing $\bm{w}=\bm{1}\in \mathbb{R}^q$ in (\ref{eq.12Edist2}) and (\ref{eq.mhatqsw}), we reach $\hat{m}_{q,s}$ in Lemma \ref{lemma.mhat qs}.
\end{proof}
In the following Propositions, we prove that the obtained upper bounds in Lemmas \ref{lemma.mhat qs} and \ref{lemma.mhat qsw} for normalized number of measurements are asymptotically tight.
\begin{prop}\label{prop.errorfor mhat qs}. Normalized number of Gaussian linear measurements required for $\mathsf{P}_{1,2}$ to succeed (i.e. $m_{q,s}$) satisfies the following error bound.
\begin{align}\label{eq.l1analysisbound}
\hat{m}_{q,s}-\frac{2}{\sqrt{sq}}\le m_{q,s}\le \hat{m}_{q,s}.
\end{align}
\end{prop}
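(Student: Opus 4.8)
The plan is to mirror the argument used for Proposition \ref{prop.analysis}, simply replacing the coordinatewise $\ell_1$ subdifferential by the block $\ell_{1,2}$ subdifferential. First I would apply the error bound (\ref{eq.errorbound}) with $f=\|\cdot\|_{1,2}$ at the fixed $s$-block-sparse vector $\bm{x}$, which reads
\begin{align}
0\le \inf_{t\ge 0}\mathds{E}\,\mathrm{dist}^2(\bm{g},t\,\partial\|\cdot\|_{1,2}(\bm{x}))-\delta(\mathcal{D}(\|\cdot\|_{1,2},\bm{x}))\le \frac{2\sup_{\bm{s}\in\partial\|\cdot\|_{1,2}(\bm{x})}\|\bm{s}\|_2}{\|\bm{x}\|_{1,2}/\|\bm{x}\|_2}. \nonumber
\end{align}
By the computation in the proof of Lemma \ref{lemma.mhat qsw} specialized to $\bm{w}=\bm{1}$ (see (\ref{eq.mhatqsw})), the left-most infimum equals $q\,\hat{m}_{q,s}$, while $\delta(\mathcal{D}(\|\cdot\|_{1,2},\bm{x}))=q\,m_{q,s}$; dividing through by $q$ therefore reduces the claim to showing that the right-hand fraction is at most $2\sqrt{q}/\sqrt{s}$.

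Next I would bound the two pieces of that fraction. For the numerator, the description (\ref{eq.l12subdiff}) of the subdifferential with $\bm{w}=\bm{1}$ forces $\|\bm{s}_{\mathcal{V}_b}\|_2=1$ for $b\in\mathcal{B}$ and $\|\bm{s}_{\mathcal{V}_b}\|_2\le 1$ for $b\in\bar{\mathcal{B}}$, so $\|\bm{s}\|_2^2=\sum_{b=1}^q\|\bm{s}_{\mathcal{V}_b}\|_2^2\le q$ and hence $2\sup_{\bm{s}}\|\bm{s}\|_2\le 2\sqrt{q}$. For the denominator, Cauchy--Schwarz over the $s$ active blocks gives $\|\bm{x}\|_{1,2}/\|\bm{x}\|_2=\big(\sum_{b\in\mathcal{B}}\|\bm{x}_{\mathcal{V}_b}\|_2\big)\big/\big(\sum_{b\in\mathcal{B}}\|\bm{x}_{\mathcal{V}_b}\|_2^2\big)^{1/2}\le\sqrt{s}$. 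To turn this last inequality into an equality I would use that the quantity actually being bounded, $\hat m_{q,s}-m_{q,s}$, depends on $\bm{x}$ only through $\partial\|\cdot\|_{1,2}(\bm{x})$ --- both $\inf_{t\ge0}\mathds{E}\,\mathrm{dist}^2(\bm{g},t\,\partial\|\cdot\|_{1,2}(\bm{x}))$ and, via (\ref{eq.D(f,x)}), $\delta(\mathcal{D}(\|\cdot\|_{1,2},\bm{x}))$ are functions of that set alone --- and that by (\ref{eq.l12subdiff}) the set $\partial\|\cdot\|_{1,2}(\bm{x})$ depends only on the block directions $\{\bm{x}_{\mathcal{V}_b}/\|\bm{x}_{\mathcal{V}_b}\|_2\}_{b\in\mathcal{B}}$, not on the block magnitudes. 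Replacing $\bm{x}$ by the vector obtained by rescaling each active block to a common $\ell_2$ norm thus leaves $\hat m_{q,s}$ and $m_{q,s}$ unchanged while making $\|\bm{x}\|_{1,2}/\|\bm{x}\|_2=\sqrt{s}$ exactly. Combining the two estimates, the right-hand fraction is at most $2\sqrt{q}/\sqrt{s}$, and dividing by $q$ yields $0\le\hat m_{q,s}-m_{q,s}\le 2/\sqrt{sq}$, which is the assertion.

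I expect the only subtle point to be this rescaling/attainment step: one must justify cleanly that both $\delta(\mathcal{D}(\|\cdot\|_{1,2},\bm{x}))$ and the minimal value over $t$ of $\mathds{E}\,\mathrm{dist}^2(\bm{g},t\,\partial\|\cdot\|_{1,2}(\bm{x}))$ are invariant under scaling each nonzero block of $\bm{x}$ independently, which is what legitimately allows the Cauchy--Schwarz inequality in the denominator to be saturated without perturbing $m_{q,s}$ or $\hat m_{q,s}$. Everything else is the verbatim analogue of the proof of Proposition \ref{prop.analysis}, with the chi-square/folded-normal bookkeeping already carried out in the proof of Lemma \ref{lemma.mhat qsw}.
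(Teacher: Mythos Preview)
Your proposal is correct and follows essentially the same approach as the paper: apply the error bound (\ref{eq.errorbound}) with $f=\|\cdot\|_{1,2}$, bound the numerator by $2\sqrt{q}$ using (\ref{eq.l12subdiff}), bound the denominator by $\sqrt{s}$ via Cauchy--Schwarz, and then exploit the fact that the descent cone (equivalently, the subdifferential) depends only on the block directions to rescale the active blocks and saturate the denominator bound. Your justification of the rescaling step is in fact slightly more explicit than the paper's, but the content is identical.
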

\begin{proof}
By the error bound in (\ref{eq.errorbound}) and also (\ref{eq.l12subdiff}) with $\bm{w}=\bm{1}\in \mathbb{R}^q$, we obtain the numerator of the error bound (\ref{eq.errorbound}) as:
\begin{align}
2\sup_{\bm{s}\in\partial\|.\|_{1,2}(\bm{x})}\|\bm{s}\|_2= 2\sup_{\bm{s}\in\partial\|.\|_{1,2}(\bm{x})}\sqrt{\sum_{b=1}^{q}\|\bm{s}_{\mathcal{V}_b}\|_2^2}\le 2\sqrt{q}.
\end{align}
Also, for the denominator we have:
\begin{align}\label{eq.denominatorblock}
\bigg\|\frac{\bm{x}}{\|\bm{x}\|_2}\bigg\|_{1,2}\le \sqrt{s}=\sqrt{\|\bm{x}\|_{0,2}}.
\end{align}
The error bound in (\ref{eq.errorbound}) for $\|.\|_{1,2}$ depends only on $\mathcal{D}(\|.\|_{1,2},\bm{x})$. Moreover, $\mathcal{D}(\|.\|_{1,2},\bm{x})$ only requires that $\|\bm{x}_{\mathcal{V}_b}\|_2=1~:~\forall b\in \mathcal{B}$. So a vector
\begin{align}
    \bm{z} = \left\{\begin{array}{lr}
        \|\bm{z}_{\mathcal{V}_b}\|_2=1, &  b\in \mathcal{B}\\
        0, & b\in {\mathcal{\bar{B}}}
        \end{array}\right\}\in \mathbb{R}^n,\nonumber
  \end{align}
   can be chosen to have equality in (\ref{eq.denominatorblock}). Therefore, the error of obtaining $m_{q,s}$ is at most $\frac{1}{q}\frac{2\sqrt{q}}{\sqrt{s}}=\frac{2}{\sqrt{sq}}$.
\end{proof}
\begin{prop}\label{prop.error for mhat qsw}
Normalized number of Gaussian linear measurements required for $\mathsf{P}_{1,2,\bm{w}}$ to successfully recover a non-uniform $s$-block sparse vector in $\mathbb{R}^n$ with parameters $\{\rho_i\}_{i=1}^L$ and $\{\alpha_i\}_{i=1}^L$ (i.e. $m_{q,s,\bm{w}}$) satisfies the following error bound.
\begin{align}\label{eq.l12w error}
\hat{m}_{q,s,\bm{w}}-\frac{2}{\sqrt{qL}}\le m_{q,s,\bm{w}}\le \hat{m}_{q,s,\bm{w}}.
\end{align}
\end{prop}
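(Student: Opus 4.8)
The plan is to follow the proof of Proposition~\ref{prop.weightedanalysis} essentially verbatim, replacing the coordinate-sparse subdifferential there by the group-sparse one recorded in (\ref{eq.l12subdiff}). The right-hand inequality $m_{q,s,\bm{w}}\le\hat m_{q,s,\bm{w}}$ is nothing but (\ref{eq.mhatqsw}), already obtained in the proof of Lemma~\ref{lemma.mhat qsw} by Jensen's inequality, so the entire task is the left-hand inequality, i.e.\ an upper bound on $\hat m_{q,s,\bm{w}}-m_{q,s,\bm{w}}$.

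First I would apply the statistical-dimension error bound (\ref{eq.errorbound}) with $f:=\|\cdot\|_{1,2,\bm{w}}$ at the fixed $s$-block-sparse vector $\bm{x}$, which gives
\[
0\le\inf_{t\ge0}\mathds{E}\,\mathrm{dist}^2\!\big(\bm{g},t\,\partial\|\cdot\|_{1,2,\bm{w}}(\bm{x})\big)-\delta\big(\mathcal{D}(\|\cdot\|_{1,2,\bm{w}},\bm{x})\big)\le\frac{2\sup_{\bm{s}\in\partial\|\cdot\|_{1,2,\bm{w}}(\bm{x})}\|\bm{s}\|_2}{\|\bm{x}\|_{1,2,\bm{w}}/\|\bm{x}\|_2},
\]
and dividing by $q$ converts the two outer quantities into $\hat m_{q,s,\bm{w}}-m_{q,s,\bm{w}}$ (recall the underbrace in (\ref{eq.mhatqsw})). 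Next I would read off the two pieces from (\ref{eq.l12subdiff}). For the numerator, an element $\bm{s}$ of the subdifferential has $\|\bm{s}_{\mathcal{V}_b}\|_2=w_b$ for $b\in\mathcal{B}$ and $\|\bm{s}_{\mathcal{V}_b}\|_2\le w_b$ for $b\in\bar{\mathcal{B}}$, so with $\bm{w}=\bm{D}\bm{\omega}$ one gets $\sup_{\bm{s}}\|\bm{s}\|_2=\big(\sum_{b=1}^{q}w_b^2\big)^{1/2}=\big(q\sum_{i=1}^{L}\rho_i\omega_i^2\big)^{1/2}$. For the denominator, Cauchy--Schwarz over the active blocks gives $\|\bm{x}\|_{1,2,\bm{w}}=\sum_{b\in\mathcal{B}}w_b\|\bm{x}_{\mathcal{V}_b}\|_2\le\big(\sum_{b\in\mathcal{B}}w_b^2\big)^{1/2}\|\bm{x}\|_2=\big(q\sum_{i=1}^{L}\alpha_i\rho_i\omega_i^2\big)^{1/2}\|\bm{x}\|_2$. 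The key point, exactly as in Proposition~\ref{prop.weightedanalysis}, is that $\mathcal{D}(\|\cdot\|_{1,2,\bm{w}},\bm{x})$ --- hence also $\partial\|\cdot\|_{1,2,\bm{w}}(\bm{x})$, $m_{q,s,\bm{w}}$, $\hat m_{q,s,\bm{w}}$ and the whole left-hand side of the display --- depends on $\bm{x}$ only through the block directions $\bm{x}_{\mathcal{V}_b}/\|\bm{x}_{\mathcal{V}_b}\|_2$, $b\in\mathcal{B}$; so I would evaluate the right-hand side at the representative $\bm{z}$ having the same block directions but with $\|\bm{z}_{\mathcal{V}_b}\|_2=w_b$ on $\mathcal{B}$ and $\bm{z}_{\mathcal{V}_b}=\bm{0}$ off it, for which the Cauchy--Schwarz step is an equality. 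This yields $\hat m_{q,s,\bm{w}}-m_{q,s,\bm{w}}\le\frac{2}{q}\sqrt{\frac{\sum_i\rho_i\omega_i^2}{\sum_i\alpha_i\rho_i\omega_i^2}}$.

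It remains to discharge the ratio, which I would do precisely as in the entrywise case: from $\sum_i\alpha_i\rho_i\omega_i^2\ge(\min_j\alpha_j)\sum_i\rho_i\omega_i^2$ the ratio is at most $1/\min_j\alpha_j$; and since $\sum_i|\mathcal{P}_i|=q$, by pigeonhole some block-partition satisfies $|\mathcal{P}_{i_0}|\le q/L$, on which $|\mathcal{P}_{i_0}\cap\mathcal{B}|\ge1$ forces $\alpha_{i_0}=|\mathcal{P}_{i_0}\cap\mathcal{B}|/|\mathcal{P}_{i_0}|\ge L/q$, hence $\min_j\alpha_j\ge L/q$. Therefore $\hat m_{q,s,\bm{w}}-m_{q,s,\bm{w}}\le\frac{2}{q}\sqrt{q/L}=\frac{2}{\sqrt{qL}}$, which together with $m_{q,s,\bm{w}}\le\hat m_{q,s,\bm{w}}$ is (\ref{eq.l12w error}). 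I expect the one step needing genuine care, rather than mechanical computation, to be the representative argument for the denominator: one must check that passing from $\bm{x}$ to $\bm{z}$ leaves both $m_{q,s,\bm{w}}$ and $\hat m_{q,s,\bm{w}}$ unchanged --- because both are determined by the descent cone, and the descent cone by the block directions alone --- so that the Cauchy--Schwarz bound on the denominator may legitimately be assumed tight.
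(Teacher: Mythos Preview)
Your proposal is correct and follows the paper's own proof essentially step for step: the same application of (\ref{eq.errorbound}) with $f=\|\cdot\|_{1,2,\bm{w}}$, the same numerator bound $2\sqrt{q\sum_i\rho_i\omega_i^2}$ from (\ref{eq.l12subdiff}), the same Cauchy--Schwarz denominator bound made tight by passing to the representative $\bm{z}$ with $\|\bm{z}_{\mathcal{V}_b}\|_2=w_b$ on $\mathcal{B}$, and the same final estimate $\min_i\alpha_i\ge L/q$ via $|\mathcal{P}_i\cap\mathcal{B}|\ge1$ and $|\mathcal{P}_i|\le q/L$. There is no substantive difference between your argument and the paper's.
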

It is interesting that the error bound in Proposition \ref{prop.errorfor mhat qs} is a special case of the error bound of Proposition \ref{prop.error for mhat qsw} where one has $s$ sets of blocks with size $\frac{q}{L}$ and knows with probability $\frac{L}{q}$ that each set of blocks contributes to the block support.
\begin{proof}
By the error bound (\ref{eq.errorbound}) and (\ref{eq.l12subdiff}), with $f(\bm{x})=\|\bm{x}\|_{1,2,\bm{w}}$ and $\bm{w}=\sum_{i=1}^{L}\omega_i\bm{1}_{\mathcal{P}_i} \in \mathbb{R}^q$, the numerator of (\ref{eq.errorbound}) is given by:
\begin{align}
2\sup_{\bm{s}\in\partial\|.\|_{1,2,\bm{w}}(\bm{x})}\|\bm{s}\|_2\le 2\sqrt{\sum_{i=1}^{q}w_b^2}=\nonumber\\
2\sqrt{\sum_{i=1}^{L}|\mathcal{P}_i|\omega_i^2}=2\sqrt{\sum_{i=1}^{L}q\rho_i\omega_i^2}.\nonumber
\end{align}
Also, for the denominator we have:
\begin{align}\label{eq.denomweight12}
\frac{\|\bm{x}\|_{1,2,\bm{w}}}{{\|\bm{x}\|_2}}\le \sqrt{\sum_{i\in\mathcal{B}}w_i^2}=\sqrt{\sum_{i=1}^{L}|\mathcal{P}_i\cap \mathcal{B}|\omega_i^2}=\sqrt{\sum_{i=1}^{L}q\alpha_i\rho_i\omega_i^2},
\end{align}
where the first inequality in (\ref{eq.denomweight12}) comes from Cauchy Schwartz inequality.
With the same justification as in the proof of Proposition \ref{prop.errorfor mhat qs} and the fact that
$\mathcal{D}(\|.\|_{1,2,\bm{w}},\bm{x})$ only requires that $\|\bm{x}_{\mathcal{V}_b}\|_2=w_b~:~\forall b\in \mathcal{B}$ a vector
\begin{align}
    \bm{z} = \left\{\begin{array}{lr}
        \|\bm{z}_{\mathcal{V}_b}\|_2=w_b, &  b\in \mathcal{B}\\
        0, & b\in {\mathcal{\bar{B}}}
        \end{array}\right\}\in \mathbb{R}^n,\nonumber
  \end{align}
can be chosen to have equality in (\ref{eq.denomweight12}). Therefore, the error of obtaining the upper bound of $m_{q,s,\bm{w}}$ i.e. $\hat{m}_{q,s,\bm{w}}$ is:
\begin{align}
   \frac{2\sqrt{\sum_{i=1}^{L}q\rho_i\omega_i^2}}{q\sqrt{\sum_{i=1}^{L}q\alpha_i\rho_i\omega_i^2}}\le
   \frac{2}{q}\sqrt{\frac{1}{\underset{{i\in [q]}}\min\frac{|\mathcal{P}_i\cap \mathcal{B}|}{|\mathcal{P}_i|}}}
   \le \frac{2}{\sqrt{qL}},
\end{align}
in which, the last inequality follows from the facts that $|\mathcal{P}_i\cap \mathcal{B}|\ge1$, $|\mathcal{P}_i|\le\frac{q}{L}$ for at least one $i\in[q]$ and thus $ \underset{{i\in [q]}}\min\frac{|\mathcal{P}_i\cap \mathcal{B}|}{|\mathcal{P}_i|}\ge \frac{L}{q}$. Further, the error of $\hat{m}_{q,s,\bm{w}}$ in (\ref{eq.mhatqsw}) is at most $\frac{2}{\sqrt{qL}}$
\end{proof}
\subsection{Optimal Weights}
Infimum of (\ref{eq.l12w error}) gives:
\begin{align}\label{eq.infmhatqserrorbound}
\inf_{\bm{\omega}\in\mathbb{R}_{+}^L}\hat{m}_{q,s,\bm{D\omega}}-\frac{2}{\sqrt{qL}}\le\inf_{\bm{\omega}\in\mathbb{R}_{+}^L}{m}_{q,s,\bm{D\omega}}\le \inf_{\bm{\omega}\in\mathbb{R}_{+}^L}\hat{m}_{q,s,\bm{D\omega}},
\end{align}
where $\bm{D}:=[\bm{1}_{\mathcal{V}_1},..., \bm{1}_{\mathcal{V}_q}]_{n\times q}[\bm{1}_{\mathcal{P}_1},..., \bm{1}_{\mathcal{P}_L}]_{q\times L}$.
In weighted block sparsity, we call the weight $\bm{\omega}^*=\underset{\bm{\omega}\in\mathbb{R}_{+}^L}{\arg\min}~~\hat{m}_{q,s,\bm{D\omega}}\in\mathbb{R}_{+}^L$ optimal since it asymptotically minimizes number of measurements required for $\mathsf{P}_{1,2,\bm{w}}$ to succeed.
Similar to previous section, before proving the uniqueness of optimal weights for $\mathsf{P}_{1,2,\bm{D\omega}}$, we state the following lemma.
\begin{lem}{\label{lemma.Jb(nu)}}
Let $\mathcal{C}:=\partial\|.\|_{1,2}(\bm{x})$. Suppose that $\mathcal{C}$ does not contain the origin. In particular, it is compact and there are upper and lower bounds that satisfy $1\le\|\bm{z}\|_2\le \sqrt{q}$ for all $\bm{z}\in \mathcal{C}$. Also denote the standard normal vector by $\bm{g}\in \mathbb{R}^n$. Consider the function
\begin{align}
J(\bm{\nu}):=\mathds{E}\mathrm{dist}^2(\bm{g},\bm{\upsilon}\odot \mathcal{C})=\mathds{E}[J_{\bm{g}}(\bm{\nu})] \nonumber\\
 \text{with}~~\bm{\upsilon}=\bm{D\nu} \in \mathbb{R}^n ~~\text{for}~~ \bm{\nu}\in\mathbb{R}_{+}^L.
\end{align}
The function $J$ is strictly convex, continuous at $\bm{\nu}\in \mathbb{R}_{+}^L$ and differentiable for $\bm{\nu}\in \mathbb{R}_{++}^L$. More over, there exists a unique point that minimize $J$.
\end{lem}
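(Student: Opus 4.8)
The plan is to mirror, almost line by line, the proof of Lemma~\ref{lemmma.J(nu)}, substituting the block subdifferential~(\ref{eq.l12subdiff}) for the $\ell_1$ subdifferential~(\ref{eq.l1subdiff}) and replacing the ambient dimension $p$ there by $n$ (with $q$ blocks). Writing $\bm{\upsilon}=\bm{D}\bm{\nu}$, note that since $\bm{D}=[\bm{1}_{\mathcal{V}_1},\dots,\bm{1}_{\mathcal{V}_q}][\bm{1}_{\mathcal{P}_1},\dots,\bm{1}_{\mathcal{P}_L}]$, the vector $\bm{\upsilon}\in\mathbb{R}^n$ is constant on each block $\mathcal{V}_b$ with common value $\upsilon_b\ge0$, so that $(\bm{\upsilon}\odot\bm{z})_{\mathcal{V}_b}=\upsilon_b\bm{z}_{\mathcal{V}_b}$, and $\mathcal{C}:=\partial\|\cdot\|_{1,2}(\bm{x})$ consists of the vectors $\bm{z}$ with $\bm{z}_{\mathcal{V}_b}=\bm{x}_{\mathcal{V}_b}/\|\bm{x}_{\mathcal{V}_b}\|_2$ for $b\in\mathcal{B}$ and $\|\bm{z}_{\mathcal{V}_b}\|_2\le1$ for $b\in\bar{\mathcal{B}}$.

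For \emph{continuity} on $\mathbb{R}_+^L$ I would expand $J_{\bm{g}}(\bm{\nu})-J_{\bm{g}}(\tilde{\bm{\nu}})$ through the projections $\mathcal{P}_{\bm{\upsilon}\odot\mathcal{C}}(\bm{g})$ and $\mathcal{P}_{\tilde{\bm{\upsilon}}\odot\mathcal{C}}(\bm{g})$, use $\|\bm{z}\|_2\le\sqrt{q}$ on $\mathcal{C}$ together with $\|\bm{D}\|_{1\to\infty}=1$ to get $\|\mathcal{P}_{\bm{\upsilon}\odot\mathcal{C}}(\bm{g})\|_2\le\|\bm{\nu}\|_1\sqrt{q}$, invoke nonexpansiveness of the Euclidean projection with $\|(\bm{\upsilon}-\tilde{\bm{\upsilon}})\odot\bm{z}\|_2\le\|\bm{\nu}-\tilde{\bm{\nu}}\|_1\sqrt{q}$, and conclude with a local Lipschitz bound that survives taking expectations (using $\mathds{E}\|\bm{g}\|_2\le\sqrt{n}$). \emph{Differentiability} on $\mathbb{R}_{++}^L$ follows from the same chain rule as~(\ref{eq.diffJg}), namely $\nabla_{\bm{\nu}}J_{\bm{g}}(\bm{\nu})=-2\bm{D}^T\!\big((\bm{D}\bm{\nu})^{-1\odot}\odot\mathcal{P}_{(\bm{D}\bm{\nu})\odot\mathcal{C}}(\bm{g})\odot(\bm{g}-\mathcal{P}_{(\bm{D}\bm{\nu})\odot\mathcal{C}}(\bm{g}))\big)$, with continuity of the projection giving continuity of the gradient and a uniform integrable bound over compact subsets of $\mathbb{R}_+^L$ justifying $\nabla J=\mathds{E}[\nabla J_{\bm{g}}]$ by dominated convergence.

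The substantive step is \emph{convexity}, for which I need the block analogue of~(\ref{eq.benefitl1}): for any $\bm{z}_1,\bm{z}_2\in\mathcal{C}$, $\theta\in[0,1]$, $\bm{\upsilon}=\bm{D}\bm{\nu}$, $\tilde{\bm{\upsilon}}=\bm{D}\tilde{\bm{\nu}}$, the vector $\theta(\bm{\upsilon}\odot\bm{z}_1)+(1-\theta)(\tilde{\bm{\upsilon}}\odot\bm{z}_2)$ belongs to $(\theta\bm{\upsilon}+(1-\theta)\tilde{\bm{\upsilon}})\odot\mathcal{C}$: on blocks $b\in\mathcal{B}$ both $\bm{z}_1,\bm{z}_2$ have the \emph{fixed} direction $\bm{x}_{\mathcal{V}_b}/\|\bm{x}_{\mathcal{V}_b}\|_2$, so the combination equals $(\theta\upsilon_b+(1-\theta)\tilde{\upsilon}_b)\bm{x}_{\mathcal{V}_b}/\|\bm{x}_{\mathcal{V}_b}\|_2$; on blocks $b\in\bar{\mathcal{B}}$ the triangle inequality gives $\|\theta\upsilon_b\bm{z}_{1,\mathcal{V}_b}+(1-\theta)\tilde{\upsilon}_b\bm{z}_{2,\mathcal{V}_b}\|_2\le\theta\upsilon_b+(1-\theta)\tilde{\upsilon}_b$. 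This makes $\bm{\nu}\mapsto\mathrm{dist}(\bm{g},(\bm{D}\bm{\nu})\odot\mathcal{C})$ convex exactly as in~(\ref{eq.convexitydistana}), hence $J_{\bm{g}}$ (a square of a nonnegative convex function) and $J$ (its expectation) are convex; the same containment also shows each set $\bm{\upsilon}\odot\mathcal{C}$ is convex.

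For \emph{attainment of the minimum} I would use $\|\bm{\upsilon}\odot\bm{z}\|_2\ge\nu_{\min}\|\bm{z}\|_2\ge\nu_{\min}$ on $\mathcal{C}$, so $\mathrm{dist}(\bm{g},(\bm{D}\bm{\nu})\odot\mathcal{C})\ge\nu_{\min}-\|\bm{g}\|_2$ and $J_{\bm{g}}(\bm{\nu})\ge(\nu_{\min}-\|\bm{g}\|_2)^2$ for $\bm{\nu}>\|\bm{g}\|_2\bm{1}$; combining $\mathds{E}\|\bm{g}\|_2\ge n/\sqrt{n+1}$ with Markov's inequality gives $\mathds{P}(\|\bm{g}\|_2\le\sqrt{n})\ge1-\sqrt{n/(n+1)}$, hence $J(\bm{\nu})>J(\bm{0})$ once $\bm{\nu}>(2^{1/4}+1)\sqrt{n}\bm{1}$, confining any minimizer to a compact box. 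I expect the main obstacle to be the \emph{strict convexity} step, just as in Lemma~\ref{lemmma.J(nu)}: if $J$ failed to be strictly convex there would be $\bm{\nu}\ne\tilde{\bm{\nu}}$ and $\theta\in(0,1)$ forcing equality of $J_{\bm{g}}(\theta\bm{\nu}+(1-\theta)\tilde{\bm{\nu}})$ and $\theta J_{\bm{g}}(\bm{\nu})+(1-\theta)J_{\bm{g}}(\tilde{\bm{\nu}})$ for a.e.\ $\bm{g}$, whereas at $\bm{g}=\bm{0}$ one has $\inf_{\bm{z}\in\mathcal{C}}\|\bm{\upsilon}\odot\bm{z}\|_2^2=\sum_{b\in\mathcal{B}}\upsilon_b^2$ and strict convexity of $t\mapsto t^2$ yields a strict inequality as soon as $\bm{\upsilon}\ne\tilde{\bm{\upsilon}}$ on $\mathcal{B}$; the delicate part is to certify this genuine strictness and then, using continuity of $\bm{g}\mapsto J_{\bm{g}}(\bm{\nu})$ (a composition of $1$-Lipschitz distance functions), to propagate it to a Euclidean ball of positive Gaussian measure, contradicting the a.s.\ equality. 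Strict convexity together with the established continuity then forces a unique minimizer inside the compact box above, which completes the proof.
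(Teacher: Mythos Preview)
Your proposal is correct and follows essentially the same route as the paper's proof: continuity via projection bounds with $\|\bm z\|_2\le\sqrt{q}$ and $\|\bm D\|_{1\to\infty}=1$, the containment $\theta(\bm\upsilon\odot\bm z_1)+(1-\theta)(\tilde{\bm\upsilon}\odot\bm z_2)\in(\theta\bm\upsilon+(1-\theta)\tilde{\bm\upsilon})\odot\mathcal{C}$ for convexity, strict convexity via the $\bm g=\bm 0$ evaluation propagated by continuity to a ball of positive Gaussian measure, differentiability by the same gradient formula with dominated convergence, and coercivity through $J(\bm\nu)\ge(1-\sqrt{n/(n+1)})(\nu_{\min}-\sqrt n)^2$. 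Your block-containment argument is in fact slightly cleaner than the paper's version~(\ref{eq.l12benefit}), which passes through $\|\bm\upsilon\|_\infty$ rather than using directly that $\bm\upsilon$ is constant on each block.
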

\begin{proof}
\textit{Continuity in bounded points}.
We must show that sufficiently small changes in $\bm{\nu}$ result in arbitrary small changes in $J(\bm{\nu})$. By definition of $J_{\bm{g}}(\bm{\nu})$, we have:
\begin{align}
&J_{\bm{g}}(\bm{\nu})-J_{\bm{g}}(\tilde{\bm{\nu}})=\|\bm{g}-\mathcal{P}_{\bm{\upsilon}\odot \mathcal{C}}(\bm{g})\|_2^2-\|\bm{g}-\mathcal{P}_{\tilde{\bm{\upsilon}}\odot \mathcal{C}}(\bm{g})\|_2^2\nonumber\\&=2\langle \bm{g},\mathcal{P}_{\tilde{\bm{\upsilon}}\odot \mathcal{C}}(\bm{g})-\mathcal{P}_{{\bm{\upsilon}}\odot \mathcal{C}}(\bm{g})\rangle+\nonumber\\
&\big(\|\mathcal{P}_{\bm{\upsilon}\odot \mathcal{C}}(\bm{g})\|_2-\|\mathcal{P}_{\tilde{\bm{\upsilon}}\odot \mathcal{C}}(\bm{g})\|_2\big)\big(\|\mathcal{P}_{\bm{\upsilon}\odot \mathcal{C}}(\bm{g})\|_2+\|\mathcal{P}_{\tilde{\bm{\upsilon}}\odot \mathcal{C}}(\bm{g})\|_2\big).\nonumber\\
&\text{The absolute value satisfies:}\nonumber\\
&|J_{\bm{g}}(\bm{\nu})-J_{\bm{g}}(\tilde{\bm{\nu}})|\nonumber\\
&\le \bigg(2\|\bm{g}\|_2\sqrt{q}+q^2(\|\bm{\nu}\|_1+\|\tilde{\bm{\nu}}\|_1)\bigg)\|\tilde{\bm{\nu}}-\bm{\nu}\|_1,
\end{align}
where, we used the fact that,
\begin{align}
&\|\mathcal{P}_{\bm{\upsilon}\odot \mathcal{C}}(\bm{g})\|_2\le\sup_{\bm{z}\in \mathcal{C}}\|\bm{\upsilon}\odot \bm{z}\|_2\le\|\bm{D\nu}\|_{\infty}\sqrt{q}\le\nonumber\\
&\|\bm{\nu}\|_1\sqrt{q}\|\bm{D}\|_{1\rightarrow\infty}=\|\bm{\nu}\|_1\sqrt{q},
\end{align}
and,
\begin{align}
&\|\mathcal{P}_{\bm{\upsilon}\odot \mathcal{C}}(\bm{g})\|_2-\|\mathcal{P}_{\tilde{\bm{\upsilon}}\odot \mathcal{C}}(\bm{g})\|_2\le \sup_{\bm{z}\in \mathcal{C}}\bigg(\|\bm{\upsilon}\odot \bm{z}\|_2-\|\tilde{\bm{\upsilon}}\odot \bm{z}\|_2\bigg)\nonumber\\
&\le\sup_{\bm{z}\in \mathcal{C}}\big(\|(\bm{\upsilon}-\tilde{\bm{\upsilon}})\odot \bm{z}\|_2\big)\le
\|\bm{\nu}-\tilde{\bm{\nu}}\|_1\sqrt{q}\|\bm{D}\|_{1\rightarrow\infty}\nonumber\\
&=\|\bm{\nu}-\tilde{\bm{\nu}}\|_1\sqrt{q}.
\end{align}
As a consequence, we obtain:
\begin{align}
&|J(\bm{\nu})-J(\tilde{\bm{\nu}})|\nonumber\\
&\le \bigg(2\sqrt{nqL}+q\sqrt{L}(\|\bm{\nu}\|_1+\|\tilde{\bm{\nu}}\|_1)\bigg)\|\tilde{\bm{\nu}}-\bm{\nu}\|_2\rightarrow 0 \nonumber\\
 &~\text{as}~ \bm{\nu}\rightarrow~ \tilde{\bm{\nu}}.
\end{align}
Since $\|\bm{\nu}\|_1$ is bounded, continuity holds.\par
\textit{Convexity}. Let $\bm{\nu}~,\tilde{\bm{\nu}}\in\mathbb{R}_{+}^L$ and $\theta\in[0,1]$ with $\bm{\upsilon}=\bm{D\nu}$ and $\tilde{\bm{\upsilon}}=\bm{D}\tilde{\bm{\nu}}$. Then we have:
\begin{align}\label{helpconvexity}
&\forall \epsilon , \tilde{\epsilon}>0~\exists \bm{z} ,\tilde{\bm{z}}\in \mathcal{C} ~~\text{such that}\nonumber\\
&\|\bm{g}-\bm{\upsilon}\odot \bm{z}\|_2\le \mathrm{dist}(\bm{g},\bm{\upsilon}\odot \mathcal{C})+\epsilon\nonumber\\
&\|\bm{g}-\tilde{\bm{\upsilon}}\odot \tilde{\bm{z}}\|_2\le \mathrm{dist}(\bm{g},\tilde{\bm{\upsilon}}\odot \mathcal{C})+\tilde{\epsilon}.
\end{align}
Since otherwise we have:
\begin{align}
&\forall \bm{z},\tilde{\bm{z}}\in \mathcal{C}:\nonumber\\
&\|\bm{g}-\bm{\upsilon}\odot \bm{z}\|_2>\mathrm{dist}(\bm{g},\bm{\upsilon}\odot \mathcal{C})+\epsilon\nonumber\\
&\|\bm{g}-\tilde{\bm{\upsilon}}\odot \tilde{\bm{z}}\|_2> \mathrm{dist}(\bm{g},\tilde{\bm{\upsilon}}\odot \mathcal{C})+\tilde{\epsilon}.
\end{align}
By taking the infimum over $\bm{z},\tilde{\bm{z}}\in \mathcal{C}$, we reach a contradiction. We proceed to prove convexity of $\mathrm{dist}(\bm{g},(\bm{D\nu})\odot \mathcal{C})$.
\begin{align}\label{eq.convexitydist}
&\mathrm{dist}(\bm{g},(\theta\bm{\upsilon}+(1-\theta)\tilde{\bm{\upsilon}})\odot \mathcal{C})=\inf_{\bm{z} \in \mathcal{C}}\|\bm{g}-(\theta\bm{\upsilon}+(1-\theta)\tilde{\bm{\upsilon}})\odot \bm{z}\|_2\nonumber\\
&\le\inf_{\bm{z}_1\in \mathcal{C},\bm{z}_2\in \mathcal{C}}\|\bm{g}-\theta\bm{\upsilon}\odot \bm{z}_1-(1-\theta)\bm{\upsilon}\odot \bm{z}_2\|_2\le\nonumber\\
&\theta\|\bm{g}-\bm{\upsilon}\odot \bm{z}_1\|_2+(1-\theta)\|\bm{g}-\tilde{\bm{\upsilon}}\odot \bm{z}_2\|_2\le\nonumber\\
&\theta \mathrm{dist}(\bm{g},\bm{\upsilon}\odot \mathcal{C})+(1-\theta)\mathrm{dist}(\bm{g},\tilde{\bm{\upsilon}}\odot \mathcal{C})+\epsilon+\tilde{\epsilon}.
\end{align}
Since this holds for any $\epsilon$ and $\tilde{\epsilon}$, $\mathrm{dist}(\bm{g},(\bm{D\nu})\odot \mathcal{C})$ is a convex function. As the square of a non-negative convex function is convex, $J_{\bm{g}}(\bm{\nu})$ is a convex function. At last, the function $J(\bm{\nu})$ is the average of convex functions, hence is convex.
In (\ref{eq.convexitydist}), the first inequality comes from the fact that $\forall \bm{z}_1,\bm{z}_2 \in \mathcal{C}~~\exists \bm{z}\in \mathcal{C}$:
\begin{align}\label{eq.l12benefit}
&\theta\bm{\upsilon}\odot \bm{z}_1+(1-\theta)\tilde{\bm{\upsilon}}\odot \bm{z}_2=\nonumber\\
&\small{\left\{\bm{y}_{n\times1}:\begin{array}{lr}
       \bm{y}_{\mathcal{V}_b}=\big(\theta\bm{\upsilon}_{\mathcal{V}_b}+(1-\theta)\tilde{\bm{\upsilon}}_{\mathcal{V}_b}\big)\odot\frac{\bm{x}_{\mathcal{V}_b}}{\|\bm{x}_{\mathcal{V}_b}\|_2}, &  b\in \mathcal{B}\\
        \|\bm{y}_{\mathcal{V}_b}\|_2\le\theta\|\bm{\upsilon}\|_{\infty}\|{\bm{z}_1}_{\mathcal{V}_b}\|_2\nonumber\\
        +(1-\theta)\|\tilde{\bm{\upsilon}}\|_{\infty}\|{\bm{z}_2}_{\mathcal{V}_b}\|_2, & b\in {\mathcal{\bar{B}}}
        \end{array}\right\}}\nonumber\\
&\in\small{\left\{\bm{y}_{n\times1}:\begin{array}{lr}
       \bm{y}_{\mathcal{V}_b}=\big(\theta\bm{\upsilon}_{\mathcal{V}_b}+(1-\theta)\tilde{\bm{\upsilon}}_{\mathcal{V}_b}\big)\odot\frac{\bm{x}_{\mathcal{V}_b}}{\|\bm{x}_{\mathcal{V}_b}\|_2}, &  b\in \mathcal{B}\\
        \|\bm{y}_{\mathcal{V}_b}\|_2\le\nonumber\\
        \big(\theta\|\bm{\upsilon}\|_{\infty}+(1-\theta)\|\tilde{\bm{\upsilon}}\|_{\infty}\big)\|{\bm{z}}_{\mathcal{V}_b}\|_2, & b\in {\mathcal{\bar{B}}}
        \end{array}\right\}}\nonumber\\
&=(\theta \bm{\upsilon}+(1-\theta)\tilde{\bm{\upsilon}})\odot \bm{z}.
\end{align}
To verify (\ref{eq.l12benefit}), we argue by contradiction.
\begin{align}
&\forall \bm{z}\in \mathcal{C} ~\exists d\in {\mathcal{\bar{B}}}~~\text{such that}:\nonumber\\
&(\theta\|\bm{\upsilon}\|_{\infty}+(1-\theta)\|\tilde{\bm{\upsilon}}\|_{\infty}\big)\|{\bm{z}}_{\mathcal{V}_d}\|_2
<\nonumber\\
&\theta\|\bm{\upsilon}\|_{\infty}\|{\bm{z}_1}_{\mathcal{V}_d}\|_2+(1-\theta)\|\tilde{\bm{\upsilon}}\|_{\infty}\|{\bm{z}_2}_{\mathcal{V}_d}\|_2\nonumber\\
&\le \theta\|\bm{\upsilon}\|_{\infty}+(1-\theta)\|\tilde{\bm{\upsilon}}\|_{\infty}.
\end{align}
Then, by taking $\bm{z}_{\mathcal{V}_d}=\bm{e}_i\in\mathbb{R}^{k}$ for some $i\in[k]$, we reach a contradiction.
In the second inequality in (\ref{eq.convexitydist}), we used triangle inequality of norms. The third inequality uses the relation (\ref{helpconvexity}).\par
\textit{Strict convexity}. We show strict convexity by contradiction. If $J(\bm{\nu})$ were not strictly convex, there would be vectors $\bm{\nu},\tilde{\bm{\nu}}\in\mathbb{R}_{+}^L$ with $\bm{\upsilon}=\bm{D\nu}, \tilde{\bm{\upsilon}}=\bm{D}\tilde{\bm{\nu}}$ and $\theta\in (0,1)$ such that,
\begin{align}\label{eq.strictconvex}
\mathds{E}[J_{\bm{g}}(\theta\bm{\nu}+(1-\theta)\tilde{\bm{\nu}})]=\mathds{E}[\theta J_{\bm{g}}(\bm{\nu})+(1-\theta)J_{\bm{g}}(\tilde{\bm{\nu}})].
\end{align}
For each $\bm{g}$ in (\ref{eq.strictconvex}) the left-hand side is smaller than or equal to the right-hand side. Therefore, in (\ref{eq.strictconvex}), $J_{\bm{g}}(\theta\bm{\nu}+(1-\theta)\tilde{\bm{\nu}})$ and $\theta J_{\bm{g}}(\bm{\nu})+(1-\theta)J_{\bm{g}}(\tilde{\bm{\nu}})$ are almost surely equal (except at a measure zero set) with respect to Gaussian measure. Moreover, we have:
\begin{align}\label{eq.J012}
&J_{\bm{0}}(\theta\bm{\nu}+(1-\theta)\tilde{\bm{\nu}})=\mathrm{dist}^2(\bm{0},\big(\theta\bm{\upsilon}+(1-\theta)\tilde{\bm{\upsilon}}\big)\odot \mathcal{C})\nonumber\\
&\le \inf_{\bm{z}_1,\bm{z}_2\in \mathcal{C}}\|\theta \bm{\upsilon}\odot \bm{z}_1+(1-\theta)\tilde{\bm{\upsilon}}\odot \bm{z}_2\|_2^2\nonumber\\
&<\theta\inf_{\bm{z}_1\in \mathcal{C}}\|\bm{\upsilon}\odot \bm{z}_1\|_2^2+(1-\theta)\inf_{\bm{z}_2\in \mathcal{C}}\|\tilde{\bm{\upsilon}}\odot \bm{z}_2\|_2^2\nonumber\\
&=\theta J_{\bm{0}}(\bm{\nu})+(1-\theta)J_{\bm{0}}(\tilde{\bm{\nu}})
\end{align}
where, the first inequality comes from (\ref{eq.l12benefit}) and the second inequality stems from the strict convexity of $\|.\|_2^2$. From (\ref{eq.l12benefit}), it is easy to verify that the set $\bm{\nu}\odot \mathcal{C}$ is a convex set. The distance to a convex set e.g. $\mathcal{E}$ i.e. $\mathrm{dist}(\bm{g},\mathcal{E})$ is a $1$-lipschitz function (i.e. $|\mathrm{dist}(\bm{g},\mathcal{E})-\mathrm{dist}(\tilde{\bm{g}},\mathcal{E})|\le\|\bm{g}-\tilde{\bm{g}}\|_2~:~\forall~\bm{g},\tilde{\bm{g}}\in\mathbb{R}^n$) and hence continuous with respect to $\bm{g}$. Therefore, $J_{\bm{g}}(\bm{\nu})$ is continuous with respect to $\bm{g}$. So there exist an open ball around $\bm{g}=\bm{0}\in\mathbb{R}^n$ that similar to (\ref{eq.J012}), we may write the following relation for some $\epsilon>0$
\begin{align}
&\exists \bm{u}\in\mathds{B}_{\epsilon}^n: \nonumber\\
&J_{\bm{u}}(\theta\bm{\nu}+(1-\theta)\tilde{\bm{\nu}})<\theta J_{\bm{u}}(\bm{\nu})+(1-\theta)J_{\bm{u}}(\tilde{\bm{\nu}})
\end{align}
The above statement contradicts with (\ref{eq.strictconvex}) and hence we have strict convexity. Continuity along with convexity of $J$ implies that $J$ is convex on the whole domain $\bm{\nu}\in\mathbb{R}_{+}^L$.\par
\textit{Differentiability}. The function $J_{\bm{g}}(\bm{\nu})$ is continuously differentiable and the gradient for $\bm{\nu}\in\mathbb{R}_{++}^L$ is:
\begin{align}\label{eq.diffJg}
&\nabla_{\bm{\nu}}J_{\bm{g}}(\bm{\nu})=\nonumber\\
&\frac{\partial J_{\bm{g}}(\bm{\nu})}{\partial\bm{\nu}}=-2\bm{D}^T(\bm{D}\bm{\nu})^{-1\odot}\odot\mathcal{P}_{(\bm{D}\bm{\nu})\odot\mathcal{C}}(\bm{g})\nonumber\\
&\odot(\bm{g}-\mathcal{P}_{\bm{D\nu}\odot\mathcal{C}}(\bm{g}))
\end{align}
Continuity of $\frac{\partial J_{\bm{g}}(\bm{\nu})}{\partial\bm{\nu}}$ at $\bm{\nu}\in\mathbb{R}_{+}^L$ stems from the fact that the projection onto a convex set is continuous. For each compact set $\mathcal{I}\subseteq\mathbb{R}_{+}^L$ we have:
\begin{align}
&\mathds{E}\sup_{\bm{\nu}\in\mathcal{I}}\|\nabla_{\bm{\nu}} J_{\bm{g}}(\bm{\nu})\|_2\le\nonumber\\
&2\|\bm{D}\|_{2\rightarrow2}\sqrt{q}(\sqrt{n}+2\sqrt{q}\big(\sup_{\bm{\nu}\in\mathcal{I}}\nu_{\max}\big))<\infty
\end{align}
where $\nu_{\max}:=\underset{i\in[L]}{\max}~\nu(i)$. Therefore, we have:
\begin{align}
\nabla_{\bm{\nu}} J(\bm{\nu})=\big(\frac{\partial}{\partial\bm{\nu}}\big)\mathds{E} J_{\bm{g}}(\bm{\nu})=\mathds{E}[\nabla_{\bm{\nu}} J_{\bm{g}}(\bm{\nu})] ~:~\forall \bm{\nu}\in\mathbb{R}_{+}^L
\end{align}
where in the last equality, we used the Lebesgue's dominated convergence theorem. \par
\textit{Attainment of the minimum}. Suppose that $\bm{\nu}\ge{\|\bm{g}\|_2}\bm{1}_{L\times 1}$. With this assumption we may write:
\begin{align}\label{eq.attain}
&\mathrm{dist}(\bm{g},(\bm{D\nu})\odot \mathcal{C})=\inf_{\bm{z}\in \mathcal{C}}\|\bm{g}-\bm{\upsilon}\odot \bm{z}\|_2\ge\nonumber\\
&\inf_{\bm{z}\in \mathcal{C}}(\|\bm{\upsilon}\odot \bm{z}\|_2-\|\bm{g}\|_2)\ge\nu_{\min}-\|\bm{g}\|_2\ge0,
\end{align}
where in (\ref{eq.attain}), $\nu_{\min}:=\underset{i\in[L]}{\min}~\nu(i)$. By squaring (\ref{eq.attain}), we reach:
\begin{align}\label{eq.attainJ_g}
J_{\bm{g}}(\bm{\nu})\ge (\nu_{\min}-\|\bm{g}\|_2)^2~~:~\forall \nu>{\|\bm{g}\|_2}\bm{1}_{L\times 1}
\end{align}
Using the relation $\mathds{E}\|\bm{g}\|_2\ge\frac{n}{\sqrt{n+1}}$ and Marcov's inequality we obtain:
\begin{align}
\mathds{P}(\|\bm{g}\|_2\le\sqrt{n})\ge1-\sqrt{\frac{n}{n+1}}\nonumber.
\end{align}
Then we reach:
\begin{align}\label{eq.attainJ}
&J(\bm{\nu})\ge\mathds{E}[J_{\bm{g}}(\bm{\nu})|\|\bm{g}\|_2\le\sqrt{n}]\mathds{P}(\|\bm{g}\|_2\le\sqrt{n})\nonumber\\
&\ge(1-\sqrt{\frac{n}{n+1}})\mathds{E}\big[(\nu_{\min}-\|\bm{g}\|_2)^2|\|\bm{g}\|_2\le\sqrt{n}\big]\nonumber\\
&\ge(1-\sqrt{\frac{n}{n+1}})(\nu_{\min}-\sqrt{n})^2,
\end{align}
where in (\ref{eq.attainJ}), the first inequality stems from total probability theorem, the second inequality comes from (\ref{eq.attainJ_g}). From (\ref{eq.attainJ}), we find out that $J(\bm{\nu})>J(\bm{0})$ when $\nu>(2^{\frac{1}{4}}+1){\sqrt{n}}\bm{1}_{L\times1}$. Therefore, the unique minimizer of the function $J$ must occur in the interval $[\bm{0}, (2^{\frac{1}{4}}+1){\sqrt{n}}\bm{1}_{L\times1}]$
\end{proof}
\begin{prop}\label{prop.uniqnessoptimal12weights}
Let $\bm{x}$ be a non-uniform $s$-block-sparse vector in $\mathbb{R}^n$ with parameters $\{\rho_i\}_{i=1}^L$ and $\{\alpha_i\}_{i=1}^L$. Then there exist unique optimal weights $\bm{\omega}^* \in \mathbb{R}_{+}^L$ (up to a positive scaling) that minimize $\hat{m}_{q,s,\bm{D\omega}}$. Moreover, the optimal weights $\bm{\omega}^*\in\mathbb{R}_{+}^L$ are obtained via the following integral equations.
\begin{align}\label{eq.12optimalweights}
\alpha_i\omega^*_i+\frac{1}{2^{\frac{k}{2}-1}\Gamma(\frac{k}{2})}(1-\alpha_i)\phi_B'(\omega^*_i)=0~:~i=1,..., L.
\end{align}
\end{prop}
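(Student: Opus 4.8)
The plan is to reproduce the argument of Proposition~\ref{prop.uniqnessoptimalweights}, replacing the $\ell_1$ quantities by their $\ell_{1,2}$ analogues. Put $\mathcal{C}:=\partial\|.\|_{1,2}(\bm{x})$; by Lemma~\ref{lemma.Jb(nu)} this set is compact, avoids the origin, and satisfies $1\le\|\bm{z}\|_2\le\sqrt{q}$ for all $\bm{z}\in\mathcal{C}$, and from the explicit subdifferential~(\ref{eq.l12subdiff}) it is a block-product set with $t\,\partial\|.\|_{1,2,\bm{D\omega}}(\bm{x})=\big(\bm{D}(t\bm{\omega})\big)\odot\mathcal{C}$. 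Hence, invoking Lemma~\ref{lemma.mhat qsw} and the change of variable $\bm{\nu}=t\bm{\omega}$ (legitimate since $(t,\bm{\omega})\mapsto t\bm{\omega}$ maps $\mathbb{R}_{++}\times\mathbb{R}_+^L$ onto $\mathbb{R}_+^L$),
\begin{align*}
\inf_{\bm{\omega}\in\mathbb{R}_+^L}\hat{m}_{q,s,\bm{D\omega}}
&=\inf_{\bm{\omega}\in\mathbb{R}_+^L}\inf_{t\ge0}\Psi_{t,\bm{D\omega}}(\sigma,\bm{\rho},\bm{\alpha})\\
&=\frac{1}{q}\inf_{\bm{\nu}\in\mathbb{R}_+^L}\mathds{E}\,\mathrm{dist}^2\big(\bm{g},(\bm{D\nu})\odot\mathcal{C}\big)=\frac{1}{q}\inf_{\bm{\nu}\in\mathbb{R}_+^L}J(\bm{\nu}),
\end{align*}
where $J$ is the function of Lemma~\ref{lemma.Jb(nu)}.

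Next I would quote Lemma~\ref{lemma.Jb(nu)} directly: $J$ is continuous on $\mathbb{R}_+^L$, strictly convex, differentiable on $\mathbb{R}_{++}^L$, and attains its minimum at a \emph{unique} point $\bm{\nu}^\ast$ lying in the compact box $[\bm{0},(2^{1/4}+1)\sqrt{n}\,\bm{1}_{L\times1}]$. This settles existence and uniqueness of the minimizer. Since $\mathsf{P}_{1,2,\bm{w}}$ is unchanged when $\bm{w}$ is multiplied by a positive scalar, and since $\bm{\nu}=t\bm{\omega}$, the pair $(t,\bm{\omega}^\ast=\bm{\nu}^\ast/t)$ is optimal for every $t>0$; normalizing $t=1$ gives the optimal weights $\bm{\omega}^\ast=\bm{\nu}^\ast$, unique up to the irrelevant positive scaling asserted in the statement.

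Finally I would identify $\bm{\nu}^\ast$ through a first-order condition. Taking $t=1$, $\bm{w}=\bm{D\omega}$ in~(\ref{eq.12Edist2asli}), normalizing by $q$ and using $\sigma=\sum_i\rho_i\alpha_i$ exactly as in the proof of Lemma~\ref{lemma.mhat qsw}, one gets the coordinatewise-decoupled form $J(\bm{\nu})=\sum_{i=1}^{L}\rho_i\big(\alpha_i(\nu_i^2+k)+\tfrac{1-\alpha_i}{2^{k/2-1}\Gamma(k/2)}\phi_B(\nu_i)\big)$, so minimizing $J$ reduces to $L$ independent one-dimensional strictly convex problems. Because $\phi_B'(\nu)<0$ for all $\nu\ge0$ (the boundary term of the integral defining $\phi_B$ vanishes) while $\partial_{\nu_i}[\alpha_i\nu_i^2]$ vanishes at $\nu_i=0$, for each $i$ with $\alpha_i\in(0,1)$ the $i$-th minimizer is strictly positive and is therefore characterized by $\partial_{\nu_i}J(\bm{\nu}^\ast)=0$; dividing by the positive constant $2\rho_i$ yields the stated equations~(\ref{eq.12optimalweights}) (the degenerate case $\alpha_i=1$, a partition lying entirely in the block support, forces $\nu_i^\ast=0$, which also satisfies~(\ref{eq.12optimalweights}) since its second term vanishes). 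The only genuinely delicate point here is the reduction in the first paragraph together with checking that the minimizer stays off the boundary $\{\nu_i=0\}$ — i.e.\ the sign analysis of $\phi_B'$ at the origin — because all the convex-analytic content (strict convexity, differentiability on the open orthant, coercivity of $J$) has already been established in Lemma~\ref{lemma.Jb(nu)}.
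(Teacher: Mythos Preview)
Your proposal is correct and follows essentially the same route as the paper: reduce $\inf_{\bm{\omega}}\hat{m}_{q,s,\bm{D\omega}}$ to a single minimization over $\bm{\nu}=t\bm{\omega}$, invoke Lemma~\ref{lemma.Jb(nu)} for strict convexity, differentiability and attainment, and read off the stationarity equations from the explicit decoupled expression for $J$. Your treatment is in fact slightly more careful than the paper's, since you argue (via the sign of $\phi_B'$ at the origin) that the minimizer lies in the open orthant so that $\nabla J(\bm{\nu}^\ast)=\bm{0}$ genuinely characterizes it, and you address the degenerate case $\alpha_i=1$; the paper simply asserts that the unique minimizer is obtained from $\nabla J_b(\bm{\nu})=\bm{0}$ without this boundary discussion.
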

\begin{proof}
Define $\mathcal{C}:=\partial\|.\|_{1,2}(\bm{x})$ and use Lemma \ref{lemma.mhat qsw} and \ref{lemmma.J(nu)} to obtain:
\begin{align}
&\inf_{\bm{\omega}\in\mathbb{R}_{+}^L}\hat{m}_{q,s,\bm{D\omega}}=\inf_{\bm{\omega}\in\mathbb{R}_{+}^L}\underset{{t\in\mathbb{R}_{+}}}{\inf}\Psi_{t,\bm{D\omega}}(\sigma,\bm{\rho},\bm{\alpha})\nonumber\\
&=\inf_{\bm{\nu}\in\mathbb{R}_{+}^L}J_b(\bm{\nu}),
\end{align}
where, $\Psi_{t,\bm{D\omega}}(\sigma,\bm{\rho},\bm{\alpha})$ is defined in (\ref{eq.hatm-qsw}). Also, we used a change of variable $\bm{\nu}=t\bm{\omega}$ to convert multivariate optimization problem to a single variable optimization problem. Thus, the function $J_b(\bm{\nu})$ is obtained via the following equation:
\begin{align}
&J_b(\bm{\nu})=\nonumber\\
&\sum_{i=1}^{L}\rho_i\bigg(\alpha_i(\nu(i)^2+1)+\frac{1}{2^{\frac{k}{2}-1}\Gamma(\frac{k}{2})}(1-\alpha_i)\phi_B(\nu(i))\bigg)
\end{align}
By considering Lemma \ref{lemmma.J(nu)} and $\bm{D}:=[\bm{1}_{\mathcal{V}_1},..., \bm{1}_{\mathcal{V}_q}]_{n\times q}[\bm{1}_{\mathcal{P}_1},..., \bm{1}_{\mathcal{P}_L}]_{q\times L}$, the function $J_b(\bm{\nu})$ is continuous and strictly convex and thus the unique minimizer can be obtained using $\nabla J_b(\bm{\nu})=\bm{0}\in\mathbb{R}^{L}$ which leads to:
\begin{align}
\alpha_i\nu^*(i)+\frac{1}{2^{\frac{k}{2}-1}\Gamma(\frac{k}{2})}(1-\alpha_i)\phi_B'(\nu^*(i))=0~:~i=1,..., L.
\end{align}
\end{proof}
\begin{thm}\label{thm.l12weighted}
Let $\bm{x}$ be a non-uniform $s$-block-sparse vector in $\mathbb{R}^{n}$ with parameters $\{\rho_i\}_{i=1}^L$ and $\{\alpha_i\}_{i=1}^L$. Then, number of measurements required for $\mathsf{P}_{1,2,\bm{D\omega}^*}$ is exactly equals the whole number of measurements required for $\mathsf{P}_{1,2}$ to recover each $\{\bm{x}_{\mathcal{P}_i}\in\mathbb{R}^n\}_{i=1}^L$ separately up to an asymptotically negligible error term.
\end{thm}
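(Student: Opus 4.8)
The plan is to mirror the proof of Theorem~\ref{thm.l1weighted analysis}. Once the weights are set to the optimal $\bm{\omega}^*$ produced by Proposition~\ref{prop.uniqnessoptimal12weights}, I will show that the upper bound $\hat{m}_{q,s,\bm{D\omega}^*}$ splits \emph{exactly} into a sum of per-partition upper bounds, and then invoke the tightness estimates of Propositions~\ref{prop.errorfor mhat qs} and~\ref{prop.error for mhat qsw} to transfer this identity to the actual measurement counts $m$, at the cost of terms that vanish as $q\to\infty$.

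First I would use that, by its very definition, $\bm{\omega}^*$ minimizes $\hat{m}_{q,s,\bm{D\omega}}$, so $\hat{m}_{q,s,\bm{D\omega}^*}=\inf_{\bm{\omega}\in\mathbb{R}_{+}^L}\hat{m}_{q,s,\bm{D\omega}}$. Substituting the formula of Lemma~\ref{lemma.mhat qsw} and minimizing jointly over $\bm{\omega}$ and $t$, the change of variables $\nu_i=t\omega_i$ already used in Proposition~\ref{prop.uniqnessoptimal12weights} converts the joint infimum over $(\bm{\omega},t)$ into an infimum over $\bm{\nu}\in\mathbb{R}_{+}^L$ (legitimate since that substitution covers $\mathbb{R}_{+}^L$ up to a boundary set that is harmless by continuity of $J_b$, Lemma~\ref{lemma.Jb(nu)}), which then separates coordinatewise:
\begin{align}
\hat{m}_{q,s,\bm{D\omega}^*}=\sum_{i=1}^{L}\rho_i\inf_{\nu_i\ge0}\Big(\alpha_i(k+\nu_i^2)+\frac{1-\alpha_i}{2^{k/2-1}\Gamma(k/2)}\phi_B(\nu_i)\Big).
\end{align}
Writing $\bm{x}_{\mathcal{P}_i}$ for the restriction of $\bm{x}$ to the $|\mathcal{P}_i|$ blocks indexed by $\mathcal{P}_i$, its relative block sparsity is $|\mathcal{P}_i\cap\mathcal{B}|/|\mathcal{P}_i|=\alpha_i$, so by Lemma~\ref{lemma.mhat qs} the $i$-th summand equals $\rho_i\inf_{\nu_i}\Psi_{\nu_i}(\alpha_i)=\hat{m}_{q,\|\bm{x}_{\mathcal{P}_i}\|_{0,2}}$, i.e.\ $|\mathcal{P}_i|/q$ times the normalized upper bound for recovering $\bm{x}_{\mathcal{P}_i}$ by $\mathsf{P}_{1,2}$ on its own. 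Hence $\hat{m}_{q,s,\bm{D\omega}^*}=\sum_{i=1}^{L}\hat{m}_{q,\|\bm{x}_{\mathcal{P}_i}\|_{0,2}}$, with \emph{no} additive correction: unlike in Theorem~\ref{thm.l1weighted analysis} there is no $\frac{1-L}{q}$ term, because $\hat{m}_{q,s}$ carries no ``$+\frac1q$'' summand (the $\frac1p$ in the dictionary case traces back to the $+1$ of Proposition~\ref{prop.upperforstatis}).

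The final step is bookkeeping: applying Proposition~\ref{prop.error for mhat qsw} on the left-hand side and Proposition~\ref{prop.errorfor mhat qs} to each $\bm{x}_{\mathcal{P}_i}$ (ambient dimension $k|\mathcal{P}_i|$, block sparsity $\alpha_i|\mathcal{P}_i|$), then rescaling each per-partition bound by $|\mathcal{P}_i|/q$ and summing, sandwiches $m_{q,s,\bm{D\omega}^*}-\sum_i m_{q,\|\bm{x}_{\mathcal{P}_i}\|_{0,2}}$ between $-2/\sqrt{qL}$ and $(2/\sqrt{q})\sum_i(\alpha_i|\mathcal{P}_i|)^{-1/2}$, both of which tend to $0$, which is the claimed statement. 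I expect the genuine work to lie in the identification step of the second paragraph: one must pin down the correct ambient dimension and normalization for the ``separate'' quantities $\hat{m}_{q,\|\bm{x}_{\mathcal{P}_i}\|_{0,2}}$ and verify that $\nu_i=t\omega_i$ truly turns the coupled infimum into a sum of scalar infima rather than merely bounding it---this rests on the covering/continuity remark above together with the strict convexity of $J_b$ from Lemma~\ref{lemma.Jb(nu)}, which also guarantees that the minimizing weights, and hence the decomposition, are well defined. The remainder is routine manipulation of error bounds already in hand.
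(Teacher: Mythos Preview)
Your proposal is correct and follows essentially the same route as the paper: decompose $\hat m_{q,s,\bm{D\omega}^*}$ via the change of variable $\nu_i=t\omega_i$ into a sum of per-partition scalar infima identified with the $\hat m$ quantities of Lemma~\ref{lemma.mhat qs}, then sandwich using Propositions~\ref{prop.errorfor mhat qs} and~\ref{prop.error for mhat qsw} to obtain exactly the bounds $-2/\sqrt{qL}$ and $(2/\sqrt{q})\sum_i(\|\bm{x}_{\mathcal{P}_i}\|_{0,2})^{-1/2}$. Your remark that no $\tfrac{1-L}{q}$ correction appears here (in contrast to Theorem~\ref{thm.l1weighted analysis}) and your care about the ambient dimension of the per-partition problems are both on point.
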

\begin{proof}
As previously defined in (\ref{eq.hatm-qsw}), with optimal weights, the upper bound for normalized number of measurements required for $\mathsf{P}_{1,2,\bm{D\omega}^*}$ to succeed is:
\begin{align}
&\hat{m}_{q,s,\bm{w}^*}=\inf_{\omega\in\mathbb{R}_{+}^L}\hat{m}_{q,s,\bm{D\omega}}=\nonumber\\
&\sum_{i=1}^{L}[\inf_{\nu_i\in\mathbb{R}_{+}}\underbrace{\bigg(\frac{\|\bm{x}_{\mathcal{P}_i}\|_{0,2}}{q}(\nu_i^2+k)+(1-\frac{\|\bm{x}_{\mathcal{P}_i}\|_{0,2}}{q})\phi_B(\nu_i)\bigg)}_{\Psi_{\nu_i,\|\bm{x}_{\mathcal{P}_i}\|_{0,2}}(\frac{\|\bm{x}_{\mathcal{P}_i}\|_{0,2}}{q})}]\nonumber\\
&=\sum_{i=1}^{L}\hat{m}_{q,\|\bm{x}_{\mathcal{P}_i}\|_{0,2}}
\end{align}
The expression in the bracket is exactly the upper bound for normalized number of measurements required for successful recovery of $\bm{x}_{\mathcal{P}_i}\in\mathbb{R}^n$ using $\mathsf{P}_{1,2}$ i.e. $\hat{m}_{q,\|\bm{x}_{\mathcal{P}_i}\|_{0,2}}$. Thus, regarding the error bounds obtained in Propositions \ref{prop.errorfor mhat qs} and \ref{prop.error for mhat qsw}, the relation between $m_{q,s,\bm{D\omega}^*}$ and $m_{q,\|\bm{x}_{\mathcal{P}_i}\|_{0,2}}$ is given by:
\begin{align}
&-\frac{2}{\sqrt{qL}}\le m_{q,s,\bm{D\omega}^*}-\sum_{i=1}^{L}m_{q,\|\bm{x}_{\mathcal{P}_i}\|_{0,2}}
\nonumber\\
&\le\frac{2}{\sqrt{q}}\sum_{i=1}^{L}(\|\bm{x}_{\mathcal{P}_i}\|_{0,2})^{-\frac{1}{2}}.
\end{align}
\end{proof}
\section{Gradient Sparsity}\label{section.TV}
In this section, the following questions are investigated about a non-uniform smooth model.
\begin{enumerate}
\item How many measurements one needs to recover an $s$-gradient sparse vector in $\mathbb{R}^{n}$ by solving $\mathsf{P}_{\mathrm{TV}}$ and $\mathsf{P}_{\mathrm{TV},\bm{w}}$?
\item What is the optimal choice of weights in $\mathsf{P}_{\mathrm{TV},\bm{w}}$ given extra prior information?
\end{enumerate}
In what follows in this section, we precisely answer these questions in two subsections.
\subsection{Number of Measurements for successful Recovery}
In this subsection, regarding Theorem (\ref{thm.Pfmeasurement}) for the functions $\|.\|_{\mathrm{TV}}$ and $\|.\|_{\mathrm{TV},\bm{w}}$, we denote normalized number of measurements required for $\mathsf{P}_{\mathrm{TV}}$ and $\mathsf{P}_{\mathrm{TV},\bm{w}}$ to recover an $s$-gradient sparse vector with probability $1-\eta$ by $m_{\mathrm{TV},s}$ and $m_{\mathrm{TV},\bm{w}}$, respectively which are defined as:
\begin{align}\label{eq.mTVs}
m_{\mathrm{TV},s}:=\frac{\delta(\mathcal{D}(\|.\|_{\mathrm{TV}},\bm{x}))}{n-1}.
\end{align}
\begin{align}\label{eq.mTVsweighted}
m_{\mathrm{TV},\bm{w}}:=\frac{\delta(\mathcal{D}(\|.\|_{\mathrm{TV},\bm{w}},\bm{x}))}{n-1}.
\end{align}
In the following, we obtain upper bounds for number of measurements required for $\mathsf{P}_{\mathrm{TV}}$ and $\mathsf{P}_{\mathrm{TV},\bm{w}}$ to succeed with probability $1-\eta$ where $\eta$ is the tolerance. In what follows in this section, $\bm{\Omega}_d\in\mathbb{R}^{{n-1}\times n}$ given below denotes the difference operator and $\bm{d}:=\bm{\Omega}_d \bm{x}$ is the difference $s$-sparse vector in $\mathbb{R}^{n-1}$.
\begin{align}
\bm{\Omega}_d=\begin{bmatrix}
                   1& -1 & 0 & \cdots & 0 \\
                   0& 1 & -1 & \cdots & 0  \\
                     &  \ddots    & \ddots       & \ddots & \\
                   0 & \cdots & \cdots & 1 & -1 &
               \end{bmatrix}\nonumber
\end{align}
\begin{lem}\label{lemma.mhat tv}
Let $\bm{x}\in \mathbb{R}^n$ be an $s$-gradient sparse vector. Then an upper bound for normalized number of measurements required for $\mathsf{P}_{\mathrm{TV}}$ to succeed (i.e. $m_{\mathrm{TV},s}$) is given by:
\begin{align}\label{eq.mhat TVs}
&\hat{m}_{\mathrm{TV},s}:=\inf_{t\ge0}\Psi_t(\sigma_1,\sigma_2,\sigma_3,\sigma_4)
\end{align}
with $\Psi_t(\sigma_1,\sigma_2,\sigma_3,\sigma_4)$ defined as:
\begin{align}
&\Psi_t(\sigma_1,\sigma_2,\sigma_3,\sigma_4)=\sigma_1+\sigma_2(1+4t^2)\nonumber\\
&+\sigma_3\phi_1(t,t)+(1-\sigma_1-\sigma_2-\sigma_3-\frac{1}{n-1})\phi_2(2t)+\nonumber\\
&\sigma_4(1+t^2)+\bar{\sigma}_4\phi_2(t)
\end{align}
where,
\begin{align}\label{eq.gradientsupport}
&\mathcal{S}_1=\nonumber\\
&\{i\in[n-1]:i\in\mathcal{S}_{{g}},~i-1\in\mathcal{S}_{{g}},sgn(d_i)sgn(d_{i-1})>0\},\nonumber\\
&\mathcal{S}_2=\nonumber\\
&\{i\in[n-1]:i\in\mathcal{S}_{{g}},~i-1\in\mathcal{S}_{{g}},sgn(d_i)sgn(d_{i-1})<0\},\nonumber\\
&\mathcal{S}_3=\{i\in[n-1]:i\in\mathcal{S}_{{g}},~i-1\in{\mathcal{\bar{S}}}_{{g}}\}\nonumber\\
&\mathcal{S}_4=\{i\in[n-1]:i\in{\mathcal{\bar{S}}}_{{g}},~i-1\in{\mathcal{S}}_{{g}}\}\nonumber\\
&\mathcal{S}_5=\{i\in[n-1]:i\in{\mathcal{\bar{S}}}_{{g}},~i-1\in{\mathcal{\bar{S}}}_{{g}}\}\nonumber\\
&\mathcal{S}_6=\{i\in[n-1]:i\in\mathcal{S}_{{g}},~i-1\notin[n-1]\}\nonumber\\
&\mathcal{\tilde{S}}_6=\{i\in[n-1]:i\in\mathcal{\bar{S}}_{{g}},~i-1\notin[n-1]\}\nonumber\\
&\mathcal{S}_7=\{i\in[n-1]:i\in\mathcal{{S}}_{{g}},~i+1\notin[n-1]\}\nonumber\\
&\mathcal{\tilde{S}}_7=\{i\in[n-1]:i\in\mathcal{\bar{S}}_{{g}},~i+1\notin[n-1]\}\nonumber\\
&\phi_1(a,b):=\frac{1}{\sqrt{2\pi}}\int_{b}^{\infty}(u-b)^2[e^{-\frac{(u-a)^2}{2}}+e^{-\frac{(u+a)^2}{2}}]du\nonumber\\
&\phi_2(z):=\sqrt{\frac{2}{\pi}}\int_{z}^{\infty}(u-z)^2e^{-\frac{u^2}{2}}du\nonumber\\
&\sigma_1=\frac{|\mathcal{S}_1|}{n-1},~\sigma_2=\frac{|\mathcal{S}_2|}{n-1}, ~\sigma_3=\frac{|\mathcal{S}_3\cup\mathcal{S}_4|}{n-1},\sigma_4=\frac{|\mathcal{S}_6\cup\mathcal{S}_7|}{n-1},\nonumber\\
&\bar{\sigma}_4=\frac{|\mathcal{\bar{S}}_6\cup\mathcal{\bar{S}}_7|}{n-1}.\nonumber\\
&s=|\mathcal{S}_1\cup\mathcal{S}_2\cup\mathcal{S}_3\cup\mathcal{S}_6|.
\end{align}
In (\ref{eq.gradientsupport}), $\mathcal{S}_1\cup\mathcal{S}_2$ and $\mathcal{S}_3\cup\mathcal{S}_4$ refer to consecutive and individual support, respectively.
\end{lem}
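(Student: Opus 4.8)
The plan is to follow the same three-step template already used for $\mathsf{P}_{1,\bm{\Omega}}$ and $\mathsf{P}_{1,2}$: first pass from the statistical dimension to an infimum over $t$ of an expected squared distance, then compute the subdifferential of the total variation seminorm at $\bm{x}$, and finally evaluate the resulting Gaussian expectation coordinate by coordinate.

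\textbf{Step 1 (reduction to a distance bound).} As in (\ref{eq.upforstatistical}), combining the definition (\ref{eq.statisticaldimension}) of $\delta$, the fact that $t\mapsto\mathds{E}\,\mathrm{dist}^2(\bm{g},t\,\partial f(\bm{x}))$ is an expectation of an infimum of affine functions, and the error bound (\ref{eq.errorbound}) (equivalently Jensen's inequality) gives
\begin{align}
m_{\mathrm{TV},s}=\frac{\delta(\mathcal{D}(\|\cdot\|_{\mathrm{TV}},\bm{x}))}{n-1}\le\inf_{t\ge0}\frac{1}{n-1}\,\mathds{E}\,\mathrm{dist}^2\big(\bm{g},t\,\partial\|\cdot\|_{\mathrm{TV}}(\bm{x})\big).\nonumber
\end{align}
So it suffices to bound $\tfrac{1}{n-1}\mathds{E}\,\mathrm{dist}^2(\bm{g},t\,\partial\|\cdot\|_{\mathrm{TV}}(\bm{x}))$ by $\Psi_t(\sigma_1,\sigma_2,\sigma_3,\sigma_4)$ for every $t\ge0$.

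\textbf{Step 2 (subdifferential).} Since $\|\bm{x}\|_{\mathrm{TV}}=\|\bm{\Omega}_d\bm{x}\|_1=\|\bm{d}\|_1$ and $\bm{\Omega}_d$ has full row rank, the subdifferential chain rule gives $\partial\|\cdot\|_{\mathrm{TV}}(\bm{x})=\bm{\Omega}_d^{T}\,\partial\|\cdot\|_1(\bm{d})$; alternatively one applies Proposition \ref{prop.simplerform of subdiff} to the $1$-homogeneous sub-additive seminorm $\|\cdot\|_{\mathrm{TV}}$ and identifies its dual norm as $\|\bm{z}\|^*=\|\bm{u}\|_\infty$ when $\bm{z}=\bm{\Omega}_d^{T}\bm{u}$ (and $+\infty$ otherwise). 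Hence, using (\ref{eq.l1subdiff}) with $\bm{w}=\bm{1}$, a generic element of $\partial\|\cdot\|_{\mathrm{TV}}(\bm{x})$ is $\bm{\Omega}_d^{T}\bm{u}$ with $u_i=\mathrm{sgn}(d_i)$ for $i\in\mathcal{S}_g$ and $|u_i|\le1$ for $i\notin\mathcal{S}_g$, whose $j$-th ambient entry is $u_j-u_{j-1}$ under the conventions $u_0=u_n=0$ (so the end coordinates $j=1$ and $j=n$ involve only $u_1$, resp. $u_{n-1}$).

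\textbf{Step 3 (coordinatewise evaluation).} Writing
\begin{align}
\mathrm{dist}^2\big(\bm{g},t\,\partial\|\cdot\|_{\mathrm{TV}}(\bm{x})\big)=\inf_{\bm{u}}\sum_{j=1}^{n}\big(g_j-t(u_j-u_{j-1})\big)^2,\nonumber
\end{align}
split $[n]$ according to (\ref{eq.gradientsupport}): on $\mathcal{S}_1$ (both contributing gradient indices in $\mathcal{S}_g$, equal $d$-signs) the subgradient entry is forced to $0$; on $\mathcal{S}_2$ (both in $\mathcal{S}_g$, opposite signs) it is forced to $\pm2$; on $\mathcal{S}_3\cup\mathcal{S}_4$ exactly one contributing index is free, so $u_j-u_{j-1}$ sweeps an interval of length $2$ anchored at $\pm1$; on $\mathcal{S}_5$ both are free, so $u_j-u_{j-1}$ sweeps $[-2,2]$; and the two end coordinates ($\mathcal{S}_6\cup\tilde{\mathcal{S}}_6$ and $\mathcal{S}_7\cup\tilde{\mathcal{S}}_7$) each carry a single entry in $[-1,1]$ or a fixed sign. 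Performing the scalar minimizations and taking expectations — using $\mathds{E}g^2=1$, $\mathds{E}(g\mp2t)^2=1+4t^2$, the folded-normal identity of (\ref{eq.Ezetal1}) to get $\mathds{E}(|g|-2t)_+^2=\phi_2(2t)$ and $\mathds{E}(|g|-t)_+^2=\phi_2(t)$, and a change of variables matching the definition of $\phi_1$ to get the $\mathcal{S}_3\cup\mathcal{S}_4$ value $\phi_1(t,t)$ — produces per-coordinate contributions $1$, $1+4t^2$, $\phi_1(t,t)$, $\phi_2(2t)$ on $\mathcal{S}_1,\mathcal{S}_2,\mathcal{S}_3\cup\mathcal{S}_4,\mathcal{S}_5$ respectively, and $1+t^2$ or $\phi_2(t)$ on the end coordinates. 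Dividing by $n-1$, using that the five interior classes partition $\{2,\dots,n-1\}$ (so $|\mathcal{S}_5|/(n-1)=1-\sigma_1-\sigma_2-\sigma_3-\tfrac{1}{n-1}$) and collecting the two end terms into $\sigma_4(1+t^2)+\bar{\sigma}_4\phi_2(t)$, gives $\Psi_t(\sigma_1,\sigma_2,\sigma_3,\sigma_4)$; taking $\inf_{t\ge0}$ gives (\ref{eq.mhat TVs}).

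\textbf{Main obstacle.} The substantive point is the joint minimization over $\bm{u}$ in Step 3: because each free entry $u_i$ occurs in the two consecutive ambient terms $j=i$ and $j=i+1$, the box constraint $|u_i|\le1$ does not separate across $j$, so the coordinatewise minima listed above are a priori only lower bounds for $\mathrm{dist}^2$. To obtain the claimed \emph{upper} bound one must treat each maximal run of consecutive free gradient indices as a block and exhibit a feasible assignment of the $u_i$'s along the run whose cost is dominated by the sum of the per-coordinate values, paying attention to the two endpoints which are handled separately. Once this structural reduction is secured, the rest is the same Gaussian-integral bookkeeping already carried out in the proofs of Lemmas \ref{lemma.l1weightedanaupper} and \ref{lemma.mhat qsw}.
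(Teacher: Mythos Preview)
Your three steps mirror the paper's argument exactly: the paper proves the weighted version (Lemma~\ref{lemma.mhat tvweighted}) via the same reduction to $\inf_t\mathds{E}\,\mathrm{dist}^2(\bm{g},t\,\partial\|\cdot\|_{\mathrm{TV},\bm{w}}(\bm{x}))$, the chain rule $\partial\|\cdot\|_{\mathrm{TV},\bm{w}}(\bm{x})=\bm{\Omega}_d^T\partial\|\cdot\|_{1,\bm{w}}(\bm{d})$, and the same case split $\mathcal{S}_1,\dots,\mathcal{S}_5,\mathcal{S}_6,\tilde{\mathcal{S}}_6,\mathcal{S}_7,\tilde{\mathcal{S}}_7$, and then simply sets $\bm{w}=\bm{1}$ to obtain Lemma~\ref{lemma.mhat tv}. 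So the route is the same.

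Where you differ is in flagging the coupling issue. The paper does \emph{not} address it: in (\ref{eq.distsubdiffwTV}) it passes from the joint infimum over $\bm{z}$ to the sum of per-coordinate soft-thresholded terms with an unqualified ``$=$'' (citing only ``triangle inequality''), even though each free $z(i)$ enters both the $i$th and the $(i{+}1)$st ambient term. You are right that the coordinate-separated sum is only a lower bound on $\mathrm{dist}^2$, which is the wrong direction for the claimed inequality $m_{\mathrm{TV},s}\le\hat m_{\mathrm{TV},s}$.

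However, your proposed resolution also does not close the gap. Exhibiting a feasible $\bm{u}$ whose cost is \emph{dominated} by the sum of the per-coordinate minima would force that single $\bm{u}$ to achieve every per-coordinate infimum simultaneously, and this is generally impossible along a run of free indices: e.g.\ with $n=3$, $\mathcal{S}_g=\varnothing$ and $\bm{g}=(t,2t,t)$, the three per-coordinate minima are all $0$, yet the joint infimum is $16t^2/3>0$. So neither the paper's asserted equality nor your sketch establishes that the displayed $\Psi_t$ (which is the expectation of the coordinate-separated sum) upper-bounds $(n-1)^{-1}\mathds{E}\,\mathrm{dist}^2$; the argument as written in both places leaves this step unjustified.
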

\begin{lem}\label{lemma.mhat tvweighted}
Let $\bm{x}\in \mathbb{R}^n$ be a non-uniform gradient sparse vector. Then an upper bound for normalized number of measurements required for $\mathsf{P}_{\mathrm{TV},\bm{w}}$ to succeed with high probability (i.e. $m_{\mathrm{TV},\bm{w}}$) is given by:
\begin{align}\label{eq.mhat wTVs}
&\hat{m}_{\mathrm{TV},\bm{w}}:=\inf_{t\ge0}\Psi_{t,\bm{w}}(\bm{\alpha},\bm{\alpha}',\bm{\beta},\bm{\beta}',\bm{\gamma},\bm{\gamma}',\bm{\varsigma},\bm{\varsigma}',\xi,\breve{\xi})
\end{align}
with $\Psi_{t,\bm{w}}(\bm{\alpha},\bm{\alpha}',\bm{\beta},\bm{\beta}',\bm{\gamma},\bm{\gamma}',\bm{\varsigma},\bm{\varsigma}',\xi,\breve{\xi})$ defined as:
%\bigg(\alpha'[1+t^2(\omega_i-\omega_{i-1})^2]+\beta'[1+t^2(\omega_i+\omega_{i-1})^2]+\gamma'\phi_1(t\omega_{i-1},t\omega_i)+\varsigma'\phi_1(t\omega_{i-1},t\omega_i)+(1-\alpha'-\beta'-\gamma'-\varsigma')\phi_2(t(\omega_i+\omega_{i-1}))\bigg)
\begin{align}\label{eq.psitv}
&\Psi_{t,\bm{w}}(\bm{\alpha},\bm{\alpha}',\bm{\beta},\bm{\beta}',\bm{\gamma},\bm{\gamma}',\bm{\varsigma},\bm{\varsigma}',\xi,\breve{\xi})=\nonumber\\
&\sum_{i=1}^{L}\rho_i\bigg(\alpha_i+\beta_i[1+4t^2\omega_i^2]+(\gamma_i+\varsigma_i)\phi_1(t\omega_i,t\omega_i)+\nonumber\\
&(1-\alpha_i-\beta_i-\gamma_i-\varsigma_i-\frac{1}{|\mathcal{P}_i|})\phi_2(2t\omega_i)+\nonumber\\
&+\alpha'_i[1+t^2(\omega_i-\omega_{i-1})^2]+\beta'_i[1+t^2(\omega_i+\omega_{i-1})^2]+\nonumber\\
&\gamma'_i\phi_1(t\omega_{i-1},t\omega_i)+\varsigma'_i\phi_1(t\omega_{i-1},t\omega_i)+\nonumber\\
&(1-\alpha'_i-\beta'_i-\gamma'_i-\varsigma'_i-\xi_i)\phi_2(t(\omega_i+\omega_{i-1}))+\nonumber\\
&(\xi_i+\breve{\xi}_i)(1+t^2\omega_i^2)+(\bar{\xi}_i+\bar{\breve{\xi}}_i)\phi_2(t\omega_i)\bigg),
\end{align}
where,
\begin{align}\label{eq.S1prim}
&\alpha_i=\frac{|\mathcal{P}_i\cap\mathcal{S}_1|}{|\mathcal{P}_i|},\beta_i=\frac{|\mathcal{P}_i\cap\mathcal{S}_1|}{|\mathcal{P}_i|},\gamma_i=\frac{|\mathcal{P}_i\cap\mathcal{S}_3|}{|\mathcal{P}_i|},\nonumber\\
&\varsigma_i=\frac{|\mathcal{P}_i\cap\mathcal{S}_4|}{|\mathcal{P}_i|},\alpha_i'=\frac{|\mathcal{P}_i\cap\mathcal{S}_1'|}{|\mathcal{P}_i|},\beta_i'=\frac{|\mathcal{P}_i\cap\mathcal{S}_1'|}{|\mathcal{P}_i|},\nonumber\\
&\gamma_i'=\frac{|\mathcal{P}_i\cap\mathcal{S}_3'|}{|\mathcal{P}_i|},\varsigma_i'=\frac{|\mathcal{P}_i\cap\mathcal{S}_4'|}{|\mathcal{P}_i|},\xi_i=\frac{|\mathcal{P}_i\cap\mathcal{S}_6|}{|\mathcal{P}_i|}\nonumber\\
&\breve{\xi}_i=\frac{|\mathcal{P}_i\cap\mathcal{S}_7|}{|\mathcal{P}_i|},\bar{\xi}_i=\frac{|\mathcal{P}_i\cap\mathcal{\tilde{S}}_6|}{|\mathcal{P}_i|},\bar{\breve{\xi}}_i=\frac{|\mathcal{P}_i\cap\mathcal{\tilde{S}}_7|}{|\mathcal{P}_i|}
\end{align}
In (\ref{eq.S1prim}), $\mathcal{P}_i\cap\mathcal{S}_{i}'$ denotes consecutive indices with each index located in a different partition.
\end{lem}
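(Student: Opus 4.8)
The plan is to mirror the proofs of Lemma~\ref{lemma.l1weightedanaupper} and Lemma~\ref{lemma.mhat qsw}, the only new ingredient being the difference operator $\bm{\Omega}_d$ interposed between the $\ell_{1,\bm{w}}$ structure and the ambient signal. \emph{Step 1 (reduction).} Since $\|\cdot\|_{\mathrm{TV},\bm{w}}=\|\bm{w}\odot\bm{\Omega}_d(\cdot)\|_1$ is proper convex, Theorem~\ref{thm.Pfmeasurement} and~(\ref{eq.mTVsweighted}) give $m_{\mathrm{TV},\bm{w}}=\delta(\mathcal{D}(\|\cdot\|_{\mathrm{TV},\bm{w}},\bm{x}))/(n-1)$, and — exactly as in~(\ref{eq.upforstatistical}), using the duality~(\ref{eq.D(f,x)}), the definition~(\ref{eq.statisticaldimension}), concavity of the infimum of affine functions and Jensen's inequality —
\[
m_{\mathrm{TV},\bm{w}}\le\inf_{t\ge0}\tfrac{1}{n-1}\,\mathds{E}\,\mathrm{dist}^2\!\big(\bm{g},\,t\,\partial\|\cdot\|_{\mathrm{TV},\bm{w}}(\bm{x})\big)=:\hat{m}_{\mathrm{TV},\bm{w}},
\]
with $\bm{g}\in\mathbb{R}^n$ standard Gaussian; so it remains to identify the inner expectation with $(n-1)\,\Psi_{t,\bm{w}}$.

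\emph{Step 2 (subdifferential).} Since $\bm{w}\odot\bm{\Omega}_d(\cdot)$ is finite everywhere, the subdifferential chain rule together with the explicit form~(\ref{eq.l1subdiff}) gives $\partial\|\cdot\|_{\mathrm{TV},\bm{w}}(\bm{x})=\bm{\Omega}_d^{T}\partial\|\cdot\|_{1,\bm{w}}(\bm{d})=\{\bm{\Omega}_d^{T}\bm{u}:u_i=w_i\,\mathrm{sgn}(d_i)\ (i\in\mathcal{S}_g),\ |u_i|\le w_i\ (i\in\bar{\mathcal{S}}_g)\}$, where $\bm{d}=\bm{\Omega}_d\bm{x}$. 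With the convention $u_0=u_n=0$ one has $(\bm{\Omega}_d^{T}\bm{u})_j=u_j-u_{j-1}$, so $\mathrm{dist}^2(\bm{g},t\,\partial\|\cdot\|_{\mathrm{TV},\bm{w}}(\bm{x}))=\inf_{\bm{u}}\|\bm{g}-t\bm{\Omega}_d^{T}\bm{u}\|_2^2$ is a sum of $n$ ambient terms, the $j$-th of which involves only $u_{j-1}$ and $u_j$.

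\emph{Step 3 (case analysis and expectations).} I would classify each index $j$ by (i) the membership of $\{j-1,j\}$ in $\mathcal{S}_g$, (ii) the relative sign of $d_j,d_{j-1}$ when both lie in $\mathcal{S}_g$, (iii) whether $j-1$ and $j$ share a partition ($w_{j-1}=w_j=\omega_\ell$) or not (the primed sets), and (iv) the boundary status $j\in\{1,n\}$ (the sets $\mathcal{S}_6,\mathcal{S}_7,\tilde{\mathcal{S}}_6,\tilde{\mathcal{S}}_7$); this reproduces the partition of indices behind~(\ref{eq.gradientsupport})--(\ref{eq.S1prim}). When both endpoints are in $\mathcal{S}_g$ the difference $u_j-u_{j-1}$ is frozen, so the $j$-th term contributes $\mathds{E}g_j^2=1$ plus a deterministic $t^2$ times that squared difference, which is $0$ for equal signs within a partition, $4t^2\omega_\ell^2$ for opposite signs within a partition, and $t^2(\omega_i\mp\omega_{i-1})^2$ across two partitions — yielding the $\alpha_i$, $\beta_i[1+4t^2\omega_i^2]$, $\alpha_i'[1+t^2(\omega_i-\omega_{i-1})^2]$ and $\beta_i'[1+t^2(\omega_i+\omega_{i-1})^2]$ contributions. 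Otherwise at most one endpoint lies in $\mathcal{S}_g$; minimizing the free coordinate(s) over their box turns the $j$-th term into $(|g_j-ta|-tb)_{+}^{2}$, where $ta$ is the mean injected by a frozen support neighbour (zero when neither endpoint is in $\mathcal{S}_g$) and $tb$ is the available range, whose expectation over the corresponding possibly mean-shifted folded Gaussian is $\phi_1(ta,tb)$, collapsing to $\phi_2(tb)$ when $a=0$ — the one-dimensional integral already evaluated in~(\ref{eq.Ezetal1}). Summing over $j$, dividing by $n-1$ and rewriting cardinalities through $\rho_i,\alpha_i,\beta_i,\gamma_i,\varsigma_i,\alpha_i',\dots$ produces $\Psi_{t,\bm{w}}$, and taking $\inf_{t\ge0}$ gives~(\ref{eq.mhat wTVs}); Lemma~\ref{lemma.mhat tv} is recovered by setting $\bm{w}=\bm{1}$, in which case $(\omega_i-\omega_{i-1})^2=0$, $(\omega_i+\omega_{i-1})^2=4$, and the primed sets merge into $\mathcal{S}_1,\mathcal{S}_2$.

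\emph{Main obstacle.} The minimization $\inf_{\bm{u}}\|\bm{g}-t\bm{\Omega}_d^{T}\bm{u}\|_2^2$ does not separate over the free coordinates: each $u_j$ with $j\in\bar{\mathcal{S}}_g$ couples the $j$-th and $(j+1)$-st ambient terms through the banded operator, so along a run of consecutive off-support gradient indices all the corresponding terms are coupled. Because only an \emph{upper} bound on the distance is required it suffices to display a single feasible $\bm{u}$, and the plan is to fix such a $\bm{u}$ by anchoring each maximal off-support run at its in-support (or boundary) endpoint and assigning the successive differences within their $[-(w_{j-1}+w_j),\,w_{j-1}+w_j]$ ranges so that the per-coordinate contributions decouple into the $\phi_1/\phi_2$ form above. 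Verifying that this $\bm{u}$ stays inside the box, that the induced terms integrate to the stated expressions with the correct multiplicities, and carrying out the lengthy bookkeeping of partition boundaries and of the two signal endpoints, is where essentially all of the work lies; the remaining integrals are routine given~(\ref{eq.Ezetal1}).
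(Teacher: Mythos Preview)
Your Steps 1--3 track the paper's argument: it uses the same Jensen reduction, the same chain rule $\partial\|\cdot\|_{\mathrm{TV},\bm{w}}(\bm{x})=\bm{\Omega}_d^{T}\partial\|\cdot\|_{1,\bm{w}}(\bm{d})$ (stated separately as Lemma~\ref{lemma.chainrule}), and the same case split of the ambient index $i$ according to the membership and relative signs of $i,i-1$ in $\mathcal{S}_g$, their partition labels, and boundary status.

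The divergence is at your ``Main obstacle.'' The paper does \emph{not} address the coupling you flag: after writing $\mathrm{dist}^2(\bm{g},t\,\partial\|\cdot\|_{\mathrm{TV},\bm{w}}(\bm{x}))=\inf_{\bm{z}}\sum_i(\cdots)^2$ it simply pushes the infimum through the sum, minimizing each ambient term separately over whichever free coordinates it contains and writing ``$=$'' for the result (e.g.\ $(|g_i|-t(w_i{+}w_{i-1}))_+^2$ on $\mathcal{S}_5$, $(\zeta_1-tw_{i-1})_+^2$ on $\mathcal{S}_3$), and then takes expectations to obtain~(\ref{eq.psitv}). In other words, what the paper records as $\Psi_{t,\bm{w}}$ is the expectation of the \emph{coordinatewise-separated} minimum, which in general is only a lower bound on the true $\mathrm{dist}^2$; the paper asserts equality without justification.

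Your proposed remedy, however, runs the wrong way and cannot land on the stated formula. Exhibiting a single feasible $\bm{u}$ gives $\|\bm{g}-t\bm{\Omega}_d^{T}\bm{u}\|_2^2\ge\mathrm{dist}^2$, an \emph{upper} bound on the distance, whereas the target $\Psi_{t,\bm{w}}$ is the separated \emph{lower} bound; these coincide only if every per-term minimum is simultaneously attainable, which fails along any run of two or more consecutive indices in $\bar{\mathcal{S}}_g$ (once $u_{j-1}$ is anchored, the achievable range of $u_j-u_{j-1}$ is $[-w_j-u_{j-1},\,w_j-u_{j-1}]$, not the full $[-(w_{j-1}{+}w_j),\,w_{j-1}{+}w_j]$ you need for the $\phi_2(t(w_j{+}w_{j-1}))$ contribution). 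Your construction would therefore produce \emph{a} valid upper bound on $m_{\mathrm{TV},\bm{w}}$, but not the $\hat m_{\mathrm{TV},\bm{w}}$ of the lemma; to match the paper you would have to assert the separation as it does, leaving the coupling you correctly identified unresolved.
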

Before proceeding, we state a chain-rule lemma about subdifferential of a convex function which is later required.
\begin{lem}[\cite{bertsekas2014convex}]\label{lemma.chainrule}
let $f:\mathbb{R}^m\rightarrow (-\infty,\infty]$ be a proper convex function and $\bm{\Omega}$ be a matrix. Consider $F(\bm{x})=f(\bm{\Omega x})$. If $\mathrm{range}(\bm{\Omega})\cap \mathrm{relint}(dom(f))\neq\varnothing$, then:
\begin{align}
\partial F(\bm{x})=\bm{\Omega}^T\partial f(\bm{\Omega x})
\end{align}
\end{lem}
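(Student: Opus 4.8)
The plan is to establish the two inclusions $\bm{\Omega}^T\partial f(\bm{\Omega x})\subseteq\partial F(\bm{x})$ and $\partial F(\bm{x})\subseteq\bm{\Omega}^T\partial f(\bm{\Omega x})$ separately; only the second will require the range/relative-interior hypothesis. For the first inclusion, which is elementary and needs no constraint qualification, I would take any $\bm{v}\in\partial f(\bm{\Omega x})$ and set $\bm{z}=\bm{\Omega}^T\bm{v}$. Applying the subgradient inequality (\ref{eq.subdiff}) to $f$ at $\bm{\Omega x}$, for every $\bm{y}\in\mathbb{R}^n$ we get $f(\bm{\Omega y})\ge f(\bm{\Omega x})+\langle\bm{v},\bm{\Omega y}-\bm{\Omega x}\rangle=f(\bm{\Omega x})+\langle\bm{\Omega}^T\bm{v},\bm{y}-\bm{x}\rangle$, i.e. $F(\bm{y})\ge F(\bm{x})+\langle\bm{z},\bm{y}-\bm{x}\rangle$, so $\bm{z}\in\partial F(\bm{x})$.

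For the reverse inclusion I would argue through conjugate functions. Recall that $\bm{z}\in\partial F(\bm{x})$ is equivalent to the Young--Fenchel equality $F(\bm{x})+F^*(\bm{z})=\langle\bm{z},\bm{x}\rangle$. The composition rule for conjugates gives $F^*(\bm{z})=(f\circ\bm{\Omega})^*(\bm{z})=\inf\{f^*(\bm{v}):\bm{\Omega}^T\bm{v}=\bm{z}\}$, and under the hypothesis $\mathrm{range}(\bm{\Omega})\cap\mathrm{relint}(\mathrm{dom}\,f)\neq\varnothing$ this infimum is finite and attained at some $\bm{v}^*$. Then $\bm{\Omega}^T\bm{v}^*=\bm{z}$ and $F^*(\bm{z})=f^*(\bm{v}^*)$, so $f(\bm{\Omega x})+f^*(\bm{v}^*)=F(\bm{x})+F^*(\bm{z})=\langle\bm{z},\bm{x}\rangle=\langle\bm{v}^*,\bm{\Omega x}\rangle$, which is exactly the Young--Fenchel equality certifying $\bm{v}^*\in\partial f(\bm{\Omega x})$. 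Hence $\bm{z}=\bm{\Omega}^T\bm{v}^*\in\bm{\Omega}^T\partial f(\bm{\Omega x})$. A self-contained alternative would replace the conjugate-composition step by directional derivatives: under the constraint qualification one shows $F'(\bm{x};\bm{d})=f'(\bm{\Omega x};\bm{\Omega d})$ for all $\bm{d}$, and since $\sup_{\bm{w}\in\partial f(\bm{\Omega x})}\langle\bm{\Omega}^T\bm{w},\bm{d}\rangle=f'(\bm{\Omega x};\bm{\Omega d})=F'(\bm{x};\bm{d})$, the closed convex set $\bm{\Omega}^T\partial f(\bm{\Omega x})$ has the same support function as $\partial F(\bm{x})$ and therefore coincides with it; separating any putative $\bm{z}\notin\bm{\Omega}^T\partial f(\bm{\Omega x})$ from that set produces a direction $\bm{d}$ violating $\langle\bm{z},\bm{d}\rangle\le F'(\bm{x};\bm{d})$, contradicting $\bm{z}\in\partial F(\bm{x})$.

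The main obstacle is the reverse inclusion, and specifically the attainment of the infimum in the conjugate composition (equivalently, zero duality gap together with primal attainment), which is precisely where the relative-interior hypothesis is needed and where a separation argument on the image of the epigraph of $f$ under $\bm{\Omega}^T$ — which is convex but not closed in general — must be invoked. Since the statement is quoted from \cite{bertsekas2014convex}, I would carry out only the easy inclusion in full and refer to that source for the technical separation step underlying the nontrivial direction.
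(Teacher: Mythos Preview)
The paper does not actually prove this lemma; it is stated with a citation to \cite{bertsekas2014convex} and used as a black box in the computation of $\partial\|\cdot\|_{\mathrm{TV},\bm{w}}(\bm{x})$. Your proposal therefore goes beyond what the paper does, supplying a correct standard argument: the inclusion $\bm{\Omega}^T\partial f(\bm{\Omega x})\subseteq\partial F(\bm{x})$ is immediate from the subgradient inequality, and the reverse inclusion via the conjugate composition formula and Young--Fenchel equality (with the relative-interior hypothesis guaranteeing attainment) is exactly the classical route found in standard convex analysis references. Your closing remark---carrying out the easy direction and deferring the separation/attainment step to the cited source---matches the spirit of how the paper treats the result.
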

\begin{proof}[Proof of Lemma \ref{lemma.mhat tvweighted}]
By definition of the statistical dimension in (\ref{eq.statisticaldimension}) for the semi-norm function $\|.\|_{\mathrm{TV},\bm{w}}$ and Jensen's inequality, we have:
\begin{align}\label{eq.mTVweighted}
&m_{\mathrm{TV},\bm{w}}\le \inf_{t\ge0}\underbrace{{(n-1)^{-1}}\mathds{E}\mathrm{dist}^2(\bm{g},t\partial\|.\|_{\mathrm{TV},\bm{w}}( \bm{x}))}_{\Psi_{t,\bm{w}}(\alpha,\alpha',\beta,\beta',\gamma,\gamma',\varsigma,\varsigma',\xi,\tilde{\xi})}\nonumber\\
&:=\hat{m}_{\mathrm{TV},\bm{w}}.
\end{align}
The next step is to calculate $\partial\|.\|_{\mathrm{TV},\bm{w}}(\bm{x})$.
From Lemma \ref{lemma.chainrule}, we have:
\begin{align}\label{eq.subdiffTVW}
&\partial\|.\|_{\mathrm{TV},\bm{w}}(\bm{x})=\bm{\Omega}_d^T\partial\|.\|_{1,\bm{w}}(\bm{d})=\nonumber\\
&\bm{\Omega}_d^T\left\{\bm{z}\in\mathbb{R}^{n-1}:\begin{array}{lr}
        z_i=w_isgn(d_i), &  i\in \mathcal{S}_{{g}}\\
        |z_i|\le w_i, & \mathrm{o.w.}
        \end{array}\right\},
\end{align}
where we used (\ref{eq.l1subdiff}) with $\bm{\Omega}=\bm{I}_d\in\mathbb{R}^{n\times n}$. Now to calculate $\Psi_{t,\bm{w}}(\bm{\alpha},\bm{\alpha}',\bm{\beta},\bm{\beta}',\bm{\gamma},\bm{\gamma}',\bm{\varsigma},\bm{\varsigma}',\bm{\xi},\tilde{\bm{\xi}})$, regarding (\ref{eq.subdiffTVW}), we compute the distance of the dilated subdifferntial of descent cone of weighted TV norm at $\bm{x}\in\mathbb{R}^n$ from a standard Gaussian vector $\bm{g}\in\mathbb{R}^n$ which is given by:
\begin{align}\label{eq.distsubdiffwTV}
&\mathrm{dist}^2(\bm{g},t\partial\|.\|_{\mathrm{TV},\bm{w}}(\bm{x}))=\inf_{\bm{z}\in\partial\|.\|_{\mathrm{TV},\bm{w}}(\bm{x})}\|\bm{g}-t\bm{z}\|_2^2=\nonumber\\
&\inf_{\bm{z}\in\partial\|.\|_{\mathrm{TV},\bm{w}}(\bm{x})}\sum_{i=1}^{n}(g_i-t(\Omega_d^Tz)_i)^2=\nonumber\\
&\inf_{\bm{z}\in\partial\|.\|_{\mathrm{TV},\bm{w}}(\bm{x})}\sum_{i=1}^{n}\big(g_i-t\sum_{j\in\mathcal{S}_{{g}}}{\Omega}_d(j,i)w_jsgn(d_j)
\nonumber\\
&-t\sum_{j\in{\mathcal{\bar{S}}}_{{g}}}\bm{\Omega}_d(j,i)z(j)\big)^2=\inf_{\bm{z}\in\partial\|.\|_{\mathrm{TV},\bm{w}}(\bm{x})}\sum_{i=1}^{n}\bigg(g_i-\nonumber\\
&tw_isgn(d_i)\bm{1}_{i\in\mathcal{S}_{{g}}}+tw_{i-1}sgn(d_{i-1})\bm{1}_{i-1\in\mathcal{S}_{{g}}}-tz(i)\bm{1}_{i\in{\mathcal{\bar{S}}}_g}\nonumber\\
&+tz(i-1)\bm{1}_{i-1\in{\mathcal{\bar{S}}}_{{g}}}\bigg)^2=\nonumber\\
&\inf_{\bm{z}\in\partial\|.\|_{\mathrm{TV},\bm{w}}(\bm{x})}\Bigg(\sum_{i\in\mathcal{S}_1\cup\mathcal{S}_2}(g_i-tw_isgn(d_i)+tw_{i-1}sgn(d_{i-1}))^2\nonumber\\
&+\sum_{i\in\mathcal{S}_3}(g_i-tw_isgn(d_i)+tz(i-1))^2+\nonumber\\
&\sum_{i\in\mathcal{S}_4}(g_i+tw_{i-1}sgn(d_{i-1})-tz(i))^2+\sum_{i\in\mathcal{S}_5}(g_i-tz(i)\nonumber\\
&+tz(i-1))^2+(g_1-tw_1sgn(d_1))^2 1_{1\in\mathcal{S}_g}+\nonumber\\
&(g_1-tz(1))^2 1_{1\in{\mathcal{\bar{S}}}_g}+(g_n-tw_{n-1}sgn(d_{n-1}))^2 1_{n-1\in\mathcal{S}_g}+\nonumber\\
&(g_n-tz(n-1))^21_{n-1\in{\mathcal{\bar{S}}}_g}\Bigg)\nonumber\\
&=\sum_{i\in\mathcal{S}_1\cup\mathcal{S}_2}(g_i-tw_isgn(d_i)+tw_{i-1}sgn(d_{i-1}))^2\nonumber\\
&+\sum_{i\in\mathcal{S}_3}(\zeta_1-tw_{i-1})_{+}^2+\sum_{i\in\mathcal{S}_4}(\zeta_2-tw_{i})_{+}^2\nonumber\\
&+\sum_{i\in\mathcal{S}_5}(|g_i|-tw_{i}-tw_{i-1})_{+}^2+(g_1-tw_1sgn(d_1))^21_{1\in\mathcal{S}_g}+\nonumber\\
&(|g_1|-tw_1)_{+}^2 1_{1\in{\mathcal{\bar{S}}}_g}+(g_n-tw_{n-1}sgn(d_{n-1}))^2 1_{n-1\in\mathcal{S}_g}+\nonumber\\
&(|g_n|-tw_{n-1})^21_{n-1\in{\mathcal{\bar{S}}}_g},\nonumber\\
\end{align}
where $\zeta_1=|g_i-tw_isgn(d_i)|$, $\zeta_2=|g_i+tw_{i-1}sgn(d_{i-1})|$ and we used triangle inequality in the last part. By taking expectation from both sides, we reach:
\begin{align}\label{eq.TVEdist2}
&\mathds{E}\mathrm{dist}^2(\bm{g},t\partial\|.\|_{\mathrm{TV},\bm{w}}(\bm{x}))=\nonumber\\
&|\mathcal{S}_1\cup\mathcal{S}_2|+\sum_{i\in\mathcal{S}_1\cup\mathcal{S}_2}t^2(w_isgn(d_i)-w_{i-1}sgn(d_{i-1}))^2\nonumber\\
&+\sum_{i\in\mathcal{S}_3}\mathds{E}(\zeta_1-tw_{i-1})_{+}^2+\sum_{i\in\mathcal{S}_4}\mathds{E}(\zeta_2-tw_{i})_{+}^2+\nonumber\\
&\sum_{i\in\mathcal{S}_5}\mathds{E}(|g_i|-tw_{i}-tw_{i-1})_{+}^2+(1+t^2w_1^2)1_{1\in\mathcal{S}_g}\nonumber\\
&\mathds{E}(|g_1|-tw_{1})_{+}^2+(1+t^2w_{n-1}^2)1_{n-1\in\mathcal{S}_g}+\mathds{E}(|g_n|-tw_{n-1})_{+}^2
\end{align}
where,
\begin{align}\label{eq.EzetalTV}
&\mathds{E}(\zeta_1-tw_{i-1})_{+}^2=2\int_{0}^{\infty}a\mathds{P}(\zeta_1\ge a+tw_{i-1})da\nonumber\\
&=2{\frac{1}{\sqrt{2\pi}}}\int_{0}^{\infty}\int_{tw_{i-1}+a}^{\infty}a~(e^{-\frac{(u-tw_i)^2}{2}}+e^{-\frac{(u+tw_i)^2}{2}})du ~da\nonumber\\
&=2{\frac{1}{\sqrt{2\pi}}}\int_{tw_i}^{\infty}\int_{0}^{u-tw_{i-1}}a~(e^{-\frac{(u-tw_i)^2}{2}}+e^{-\frac{(u+tw_i)^2}{2}})da~du\nonumber\\
&=\phi_1(tw_i,tw_{i-1}),
\end{align}
and in (\ref{eq.EzetalTV}), the order of integration is changed together with a change of variable. Similarly, we have: $\mathds{E}(\zeta_2-tw_{i})_{+}^2=\phi_1(tw_{i-1},tw_i)$. With the same reasoning as in (\ref{eq.Ezetal1}), we have: $\mathds{E}(|g_i|-tw_{i}-tw_{i-1})_{+}^2=\phi_2(t(w_i+w_{i-1}))$.
As a consequence, (\ref{eq.TVEdist2}) becomes:
\begin{align}\label{eq.TVEdist2asli}
&\mathds{E}\mathrm{dist}^2(\bm{g},t\partial\|.\|_{\mathrm{TV},\bm{w}}(\bm{x}))=\nonumber\\
&|\mathcal{S}_1\cup\mathcal{S}_2|+\sum_{i\in\mathcal{S}_1\cup\mathcal{S}_2}t^2(w_isgn(d_i)-w_{i-1}sgn(d_{i-1}))^2\nonumber\\
&+\sum_{i\in\mathcal{S}_3}\phi_1(tw_{i},tw_{i-1})+\sum_{i\in\mathcal{S}_4}\phi_1(tw_{i-1},tw_i)+\nonumber\\
&\sum_{i\in\mathcal{S}_5}\phi_2(t(w_i+w_{i-1}))+(1+t^2w_1^2)1_{1\in\mathcal{S}_g}+\phi_2(tw_1)1_{1\in{\mathcal{\bar{S}}}_g}\nonumber\\
&+(1+t^2w_{n-1}^2)1_{n-1\in\mathcal{S}_g}+\phi_2(tw_{n-1})1_{n-1\in{\mathcal{\bar{S}}}_g}.
\end{align}
By incorporating prior information and assigning weights to the sorted partitions $\{\mathcal{P}_i\}_{i=1}^L$ using $\bm{w}=\bm{D\omega} \in\mathbb{R}^{n-1}$ with $\bm{D}:=[\bm{1}_{\mathcal{P}_1},...,\bm{1}_{\mathcal{P}_L}]\in\mathbb{R}^{n-1\times L}$ we reach:
%|\mathcal{P}_i\cap\mathcal{S}_1'|(1+t^2(\omega_i-\omega_{i-1})^2)+|\mathcal{P}_i\cap\mathcal{S}_2'|(1+t^2(\omega_i+\omega_{i-1})^2)+|\mathcal{P}_i\cap\mathcal{S}_3'|\phi_1(t\omega_i,t\omega_{i-1})+|\mathcal{P}_i\cap\mathcal{S}_4'|\phi_1(t\omega_{i-1},t\omega_i)+|\mathcal{P}_i\cap\mathcal{S}_5'|\phi_3(t(\omega_i+\omega_{i-1}))
\begin{align}\label{eq.Edist2wTV}
&\mathds{E}\mathrm{dist}^2(\bm{g},t\partial\|.\|_{\mathrm{TV},\bm{D\omega}}(\bm{x}))=\nonumber\\
&\sum_{i=1}^{L}\Bigg(|\mathcal{P}_i\cap\mathcal{S}_1|+|\mathcal{P}_i\cap\mathcal{S}_2|(1+4t^2\omega_i^2)+\nonumber\\
&|\mathcal{P}_i\cap\mathcal{S}_3|\phi_1(t\omega_i,t\omega_i)+|\mathcal{P}_i\cap\mathcal{S}_4|\phi_1(t\omega_i,t\omega_{i})+\nonumber\\
&|\mathcal{P}_i\cap\mathcal{S}_5|\phi_2(2t\omega_i)+|\mathcal{P}_i\cap\mathcal{S}_1'|(1+t^2(\omega_i-\omega_{i-1})^2)+\nonumber\\
&|\mathcal{P}_i\cap\mathcal{S}_2'|(1+t^2(\omega_i+\omega_{i-1})^2)+|\mathcal{P}_i\cap\mathcal{S}_3'|\phi_1(t\omega_i,t\omega_{i-1})+\nonumber\\
&|\mathcal{P}_i\cap\mathcal{S}_4'|\phi_1(t\omega_{i-1},t\omega_i)+|\mathcal{P}_i\cap\mathcal{S}_5'|\phi_3(t(\omega_i+\omega_{i-1}))+\nonumber\\
&|\mathcal{P}_i\cap\{\mathcal{S}_6\cup\mathcal{S}_7\}|(1+t^2\omega_i^2)+|\mathcal{P}_i\cap\{\mathcal{\tilde{S}}_6\cup\mathcal{\tilde{S}}_7\}|\phi_2(t\omega_i)\Bigg).
\end{align}
By normalizing to $n-1$, we reach (\ref{eq.mhat wTVs}).
\end{proof}
\begin{proof}[Proof of Lemma \ref{lemma.mhat tv}]
By considering (\ref{eq.mTVs}), we find an upper bound for $\delta(\mathcal{D}(\|.\|_{\mathrm{TV}},\bm{x}))/n$ which we denote by $\Psi_{t}(\sigma_1,\sigma_2,\sigma_3,\sigma_4)$. The procedure is exactly the same as the proof of Lemma \ref{lemma.mhat tvweighted} with the weights set to one i.e. $\bm{w}=\bm{1}\in \mathbb{R}^{n-1}$. In fact, we have: $\Psi_t(\sigma_1,\sigma_2,\sigma_3,\sigma_4)=\Psi_{t,\bm{1}}(\bm{\alpha},\bm{\alpha}',\bm{\beta},\bm{\beta}',\bm{\gamma},\bm{\gamma}',\bm{\varsigma},\bm{\varsigma}',\xi,\breve{\xi})$. By replacing $\bm{w}=\bm{1}\in \mathbb{R}^{n-1}$ in (\ref{eq.TVEdist2asli}) and (\ref{eq.mTVweighted}), we reach $\Psi_t(\sigma_1,\sigma_2,\sigma_3,\sigma_4)$ and $\hat{m}_{\mathrm{TV},s}$ in Lemma \ref{lemma.mhat tv}.
\end{proof}
In the following propositions, we prove that the obtained upper bounds in Lemma's \ref{lemma.mhat tv} and \ref{lemma.mhat tvweighted}, are asymptotically tight.
\begin{prop}\label{prop.error for mhat TV}. Normalized number of Gaussian linear measurements required for $\mathsf{P}_{\mathrm{TV}}$ to succeed (i.e. $m_{\mathrm{TV},s}$) satisfies the following error bound.
\begin{align}\label{eq.TVerrorbound}
\hat{m}_{\mathrm{TV},s}-\frac{2\kappa(\bm{\Omega}_d)}{\sqrt{s(n-1)}}\le m_{\mathrm{TV},s}\le \hat{m}_{\mathrm{TV},s}.
\end{align}
\end{prop}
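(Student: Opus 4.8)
The plan is to follow the proof of Proposition~\ref{prop.analysis} (compare also Proposition~\ref{prop.errorfor mhat qs}): apply the error bound \eqref{eq.errorbound} of Amelunxen et al.\ to $f=\|\cdot\|_{\mathrm{TV}}$ at $\bm{x}$ and estimate the two factors on its right-hand side. Recall from \eqref{eq.mTVs}, \eqref{eq.D(f,x)}, the definition \eqref{eq.mhat TVs}, and the computation in the proof of Lemma~\ref{lemma.mhat tv} that $m_{\mathrm{TV},s}=(n-1)^{-1}\delta(\mathcal{D}(\|\cdot\|_{\mathrm{TV}},\bm{x}))$ and $\hat m_{\mathrm{TV},s}=\inf_{t\ge0}(n-1)^{-1}\mathds{E}\,\mathrm{dist}^2(\bm{g},t\partial\|\cdot\|_{\mathrm{TV}}(\bm{x}))$. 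The right inequality of \eqref{eq.TVerrorbound}, namely $m_{\mathrm{TV},s}\le\hat m_{\mathrm{TV},s}$, is immediate from the left inequality of \eqref{eq.errorbound} (equivalently, from Jensen's inequality when $\inf_t$ and $\mathds{E}$ are exchanged). Using the $1$-homogeneity of $\|\cdot\|_{\mathrm{TV}}$ to rewrite the denominator of \eqref{eq.errorbound} as $\|\bm{\Omega}_d\bm{x}\|_1/\|\bm{x}\|_2$, the remaining task is to bound $(n-1)\bigl(\hat m_{\mathrm{TV},s}-m_{\mathrm{TV},s}\bigr)\le 2\,\|\bm{x}\|_2\,\sup_{\bm{s}\in\partial\|\cdot\|_{\mathrm{TV}}(\bm{x})}\|\bm{s}\|_2\,\big/\,\|\bm{\Omega}_d\bm{x}\|_1$.

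For the numerator I would use the chain rule (Lemma~\ref{lemma.chainrule}), which gives $\partial\|\cdot\|_{\mathrm{TV}}(\bm{x})=\bm{\Omega}_d^{T}\partial\|\cdot\|_1(\bm{d})$ with $\bm{d}=\bm{\Omega}_d\bm{x}$; since every $\bm{z}\in\partial\|\cdot\|_1(\bm{d})\subset\mathbb{R}^{n-1}$ obeys $\|\bm{z}\|_2\le\sqrt{n-1}$ (by the description \eqref{eq.l1subdiff} of the $\ell_1$ subdifferential with unit weights), one obtains $\sup_{\bm{s}\in\partial\|\cdot\|_{\mathrm{TV}}(\bm{x})}\|\bm{s}\|_2\le\|\bm{\Omega}_d^{T}\|_{2\to2}\sqrt{n-1}=\sigma_{\max}(\bm{\Omega}_d)\sqrt{n-1}$. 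The denominator is the delicate point, because $\bm{\Omega}_d$ is not injective (its null space is the line of constant vectors), so $\|\bm{x}\|_2$ is not controlled from below by $\|\bm{d}\|_2$ for the given $\bm{x}$. As in Proposition~\ref{prop.analysis}, I would observe that $\mathcal{D}(\|\cdot\|_{\mathrm{TV}},\bm{x})=\mathrm{cone}(\partial\|\cdot\|_{\mathrm{TV}}(\bm{x}))^{\circ}$, so that $\delta(\mathcal{D}(\|\cdot\|_{\mathrm{TV}},\bm{x}))$, $\mathds{E}\,\mathrm{dist}^2(\bm{g},t\partial\|\cdot\|_{\mathrm{TV}}(\bm{x}))$, and hence the combinatorial parameters $\sigma_1,\dots,\sigma_4$ of Lemma~\ref{lemma.mhat tv}, depend on $\bm{x}$ only through $\mathrm{sgn}(\bm{\Omega}_d\bm{x})$. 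Consequently I may replace $\bm{x}$ by $\bm{x}'=\bm{\Omega}_d^{\dagger}\bm{d}'$, where $\bm{d}'\in\mathbb{R}^{n-1}$ has $d'_i=\mathrm{sgn}(d_i)$ for $i$ in the gradient support $\mathcal{S}_g$ and $d'_i=0$ otherwise; since $\bm{\Omega}_d$ has full row rank, $\bm{\Omega}_d\bm{x}'=\bm{d}'$, hence $\mathrm{sgn}(\bm{\Omega}_d\bm{x}')=\mathrm{sgn}(\bm{d})$ exactly (keeping every $\sigma_i$ unchanged), $\|\bm{\Omega}_d\bm{x}'\|_1=s$, and $\|\bm{x}'\|_2\le\|\bm{\Omega}_d^{\dagger}\|_{2\to2}\|\bm{d}'\|_2=\sqrt{s}/\sigma_{\min}(\bm{\Omega}_d)$, so that $\|\bm{\Omega}_d\bm{x}'\|_1/\|\bm{x}'\|_2\ge\sqrt{s}\,\sigma_{\min}(\bm{\Omega}_d)$.

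Combining the two estimates, applying \eqref{eq.errorbound} to $f=\|\cdot\|_{\mathrm{TV}}$ at $\bm{x}'$ (whose left-hand side coincides with the one at $\bm{x}$), and dividing by $n-1$ yields $\hat m_{\mathrm{TV},s}-m_{\mathrm{TV},s}\le\frac{1}{n-1}\cdot\frac{2\sigma_{\max}(\bm{\Omega}_d)\sqrt{n-1}}{\sqrt{s}\,\sigma_{\min}(\bm{\Omega}_d)}=\frac{2\kappa(\bm{\Omega}_d)}{\sqrt{s(n-1)}}$, which is the left inequality of \eqref{eq.TVerrorbound}. I expect the one genuinely delicate step to be this invariance-plus-pseudoinverse substitution: one must check that the null space of $\bm{\Omega}_d$ neither obstructs the full-row-rank surjectivity used to realize $\bm{d}'$ nor affects the sign pattern that governs $\delta$, and that the substitution preserves \emph{every} parameter of Lemma~\ref{lemma.mhat tv} — in particular the consecutive-versus-opposite-sign split between $\mathcal{S}_1$ and $\mathcal{S}_2$ and the boundary terms at indices $1$ and $n-1$. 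Everything else is a routine repetition of the arguments already used for Propositions~\ref{prop.analysis} and \ref{prop.errorfor mhat qs}.
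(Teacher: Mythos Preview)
Your proposal is correct and follows essentially the same route as the paper: apply \eqref{eq.errorbound} to $f=\|\cdot\|_{\mathrm{TV}}$, bound the numerator by $2\|\bm{\Omega}_d\|_{2\to2}\sqrt{n-1}$ via the chain rule, and handle the denominator by exploiting that the descent cone depends only on $\mathrm{sgn}(\bm{\Omega}_d\bm{x})$ so that $\bm{x}$ may be replaced by a convenient representative. Your explicit substitution $\bm{x}'=\bm{\Omega}_d^{\dagger}\bm{d}'$ (using full row rank of $\bm{\Omega}_d$) is in fact a cleaner treatment of the null-space issue than the paper's somewhat telegraphic chain in \eqref{eq.TVdenominator}, whose first inequality $\|\bm{x}\|_2\le\|\bm{\Omega}_d\bm{x}\|_2\|\bm{\Omega}_d^{\dagger}\|_{2\to2}$ is only valid after one has already passed to such a representative orthogonal to $\mathrm{null}(\bm{\Omega}_d)$.
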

\begin{proof}
By using the error bound (\ref{eq.errorbound}) and (\ref{eq.subdiffTVW}) with $\bm{w}=\bm{1}\in \mathbb{R}^{n-1}$, we obtain the numerator of the error bound (\ref{eq.errorbound}) as given by:
\begin{align}
&2\sup_{\bm{s}\in\partial\|.\|_{\mathrm{TV}}(\bm{x})}\|\bm{s}\|_2=\sup_{\|\bm{z}\|_{\infty}\le1} 2\|\bm{\Omega_d}^T\bm{z}\|_2,\nonumber\\
&\le 2\|\bm{\Omega_d}\|_{2\rightarrow2}\sqrt{n-1}.
\end{align}
Also, the denominator can be lower bounded as below:
\begin{align}\label{eq.TVdenominator}
&\frac{\|\bm{\Omega_dx}\|_1}{\|\bm{x}\|_2}\ge \frac{\|\bm{\Omega_dx}\|_1}{\|\bm{\Omega_dx}\|_2\|\bm{\Omega}_d^\dagger\|_{2\rightarrow2}}\le
\sqrt{s}\|\bm{\Omega}_d^\dagger\|_{2\rightarrow2}^{-1}.
\end{align}
The error bound in (\ref{eq.errorbound}) depends only on $\mathcal{D}(\|.\|_{\mathrm{TV}},\bm{x})$ and $\partial\|.\|_{\mathrm{TV}}(\bm{x})$. Further, $\partial\|.\|_{\mathrm{TV}}(\bm{x})$ depends only on $sgn(\bm{d})$ not the magnitudes of $\bm{d}$. So a vector
\begin{align}
    \bm{z} = \left\{\begin{array}{lr}
        sgn(v)_i, &  i\in \mathcal{S}_{{g}}\\
        0, & i\in {\mathcal{\bar{S}}}_{{g}}
        \end{array}\right\}\in \mathbb{R}^{n-1},\nonumber
  \end{align}
  with $sgn(\bm{z})=sgn(\bm{d})$ can be chosen to have equality in the last inequality in (\ref{eq.TVdenominator}). Therefore, the error of obtaining $\hat{m}_{\mathrm{TV},s}$ in (\ref{eq.mhat TVs}) is at most $\frac{2\kappa(\bm{\Omega}_d)}{\sqrt{(n-1)s}}$.
\end{proof}
\begin{prop}\label{prop.error for mhat TVw}
Normalized number of Gaussian linear measurements required for $\mathsf{P}_{\mathrm{TV},\bm{w}}$ to successfully recover a non-uniform gradient sparse vector in $\mathbb{R}^{n}$ with parameters $\{\rho_i\}_{i=1}^L$ and $\{\alpha_i,\alpha_i',\beta_i,\beta_i',\gamma_i,\gamma_i',\varsigma_i,\varsigma_i',\xi_i,\breve{\xi}_i\}_{i=1}^L$ (i.e. $m_{\mathrm{TV},\bm{w}}$) satisfies the following error bound.
\begin{align}\label{eq.TVwerrorbound}
\hat{m}_{\mathrm{TV},\bm{w}}-\frac{2\kappa(\bm{\Omega}_d)}{\sqrt{(n-1)L}}\le m_{\mathrm{TV},\bm{w}}\le \hat{m}_{\mathrm{TV},\bm{w}}
\end{align}
\end{prop}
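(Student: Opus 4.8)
The plan is to follow the template of Propositions \ref{prop.error for mhat TV}, \ref{prop.weightedanalysis} and \ref{prop.error for mhat qsw}: apply the deterministic error estimate (\ref{eq.errorbound}) of \cite[Theorem 4.3]{amelunxen2013living} with $f=\|\cdot\|_{\mathrm{TV},\bm{w}}$ at the fixed vector $\bm{x}$, and then control the two pieces of its right-hand side. The upper inequality $m_{\mathrm{TV},\bm{w}}\le\hat{m}_{\mathrm{TV},\bm{w}}$ is already established in (\ref{eq.mTVweighted}) by pushing $\inf_{t}$ inside the expectation via Jensen, so the entire task is the lower inequality, i.e.\ bounding $\hat{m}_{\mathrm{TV},\bm{w}}-m_{\mathrm{TV},\bm{w}}=(n-1)^{-1}\big(\inf_{t\ge0}\mathds{E}\,\mathrm{dist}^2(\bm{g},t\,\partial f(\bm{x}))-\delta(\mathcal{D}(f,\bm{x}))\big)$ by $\tfrac{2\kappa(\bm{\Omega}_d)}{\sqrt{(n-1)L}}$.

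First I would bound the numerator $2\sup_{\bm{s}\in\partial f(\bm{x})}\|\bm{s}\|_2$. By the chain rule (\ref{eq.subdiffTVW})---legitimate because $\bm{\Omega}_d$ has full row rank $n-1$, so $\mathrm{range}(\bm{\Omega}_d)=\mathbb{R}^{n-1}$ intersects $\mathrm{relint}(\mathrm{dom}\,\|\cdot\|_{1,\bm{w}})$---each $\bm{s}\in\partial\|\cdot\|_{\mathrm{TV},\bm{w}}(\bm{x})$ has the form $\bm{\Omega}_d^T\bm{z}$ with $\bm{z}\in\partial\|\cdot\|_{1,\bm{w}}(\bm{d})$, and the explicit description (\ref{eq.l1subdiff}) of that subdifferential gives $\|\bm{z}\|_2\le\sqrt{\sum_{i=1}^{n-1}w_i^2}$. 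Hence $\|\bm{s}\|_2\le\|\bm{\Omega}_d\|_{2\rightarrow2}\sqrt{\sum_i w_i^2}$, and with $\bm{w}=\bm{D}\bm{\omega}$ this reads $2\sup_{\bm{s}}\|\bm{s}\|_2\le 2\|\bm{\Omega}_d\|_{2\rightarrow2}\sqrt{\sum_{i=1}^{L}|\mathcal{P}_i|\,\omega_i^2}$.

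Next I would lower-bound the denominator $f(\bm{x}/\|\bm{x}\|_2)=\|\bm{d}\|_{1,\bm{w}}/\|\bm{x}\|_2$. Since the left-hand side of (\ref{eq.errorbound}) depends on $\bm{x}$ only through $\mathcal{D}(f,\bm{x})$ and $\partial f(\bm{x})$, which by (\ref{eq.subdiffTVW}) depend only on $sgn(\bm{d})$, I may evaluate the right-hand side at any vector with the same gradient sign pattern; I would take $\bm{x}=\bm{\Omega}_d^\dagger\bm{d}'$ where $\bm{d}'$ shares $sgn(\bm{d})$, has $|d'_i|=w_i$ on $\mathcal{S}_g$ and is zero off $\mathcal{S}_g$. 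Then $\bm{x}\perp\bm{1}$, so $\|\bm{x}\|_2=\|\bm{\Omega}_d^\dagger\bm{d}'\|_2\le\|\bm{\Omega}_d^\dagger\|_{2\rightarrow2}\|\bm{d}'\|_2$, while the Cauchy--Schwarz equality case gives $\|\bm{d}'\|_{1,\bm{w}}/\|\bm{d}'\|_2=\sqrt{\sum_{i\in\mathcal{S}_g}w_i^2}=\sqrt{\sum_{i=1}^{L}|\mathcal{P}_i\cap\mathcal{S}_g|\,\omega_i^2}$; hence the denominator is at least $\sqrt{\sum_i|\mathcal{P}_i\cap\mathcal{S}_g|\,\omega_i^2}\,/\,\|\bm{\Omega}_d^\dagger\|_{2\rightarrow2}$. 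Combining the two bounds, the error is at most $\tfrac{2\kappa(\bm{\Omega}_d)}{n-1}\sqrt{\big(\sum_i|\mathcal{P}_i|\omega_i^2\big)\big/\big(\sum_i|\mathcal{P}_i\cap\mathcal{S}_g|\omega_i^2\big)}$, and using $(\sum_i a_i)/(\sum_i b_i)\le\max_i(a_i/b_i)$ with $|\mathcal{P}_i\cap\mathcal{S}_g|\ge1$ and $\min_i|\mathcal{P}_i|\le(n-1)/L$, so that $\min_i|\mathcal{P}_i\cap\mathcal{S}_g|/|\mathcal{P}_i|\ge L/(n-1)$ as in Proposition \ref{prop.error for mhat qsw}, this is at most $\tfrac{2\kappa(\bm{\Omega}_d)}{n-1}\sqrt{(n-1)/L}=\tfrac{2\kappa(\bm{\Omega}_d)}{\sqrt{(n-1)L}}$, which is the claim.

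I expect the denominator step to be the delicate one, because $\|\cdot\|_{\mathrm{TV},\bm{w}}$ is a seminorm: one must use that the descent cone is unchanged both when the magnitudes of $\bm{d}$ are altered and when a constant offset is added to $\bm{x}$, and then route the Cauchy--Schwarz equality case through the pseudoinverse of $\bm{\Omega}_d$ back to a bound on $\|\bm{x}\|_2$---the product $\|\bm{\Omega}_d\|_{2\rightarrow2}\|\bm{\Omega}_d^\dagger\|_{2\rightarrow2}=\kappa(\bm{\Omega}_d)$ being exactly the cost of that detour. The final combinatorial reduction needs each partition to meet $\mathcal{S}_g$ and to be no larger than the average size $(n-1)/L$, the same mild hypothesis already used in Propositions \ref{prop.weightedanalysis} and \ref{prop.error for mhat qsw}.
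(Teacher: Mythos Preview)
Your proposal is correct and follows essentially the same route as the paper: apply the error bound (\ref{eq.errorbound}) with $f=\|\cdot\|_{\mathrm{TV},\bm w}$, bound the numerator via $\partial\|\cdot\|_{\mathrm{TV},\bm w}(\bm x)=\bm\Omega_d^T\partial\|\cdot\|_{1,\bm w}(\bm d)$ and the operator norm of $\bm\Omega_d$, lower-bound the denominator by passing through $\bm\Omega_d^\dagger$ and picking a representative with the same gradient sign pattern that attains the Cauchy--Schwarz equality, and finish with the same combinatorial reduction $\min_i|\mathcal P_i\cap\mathcal S_g|/|\mathcal P_i|\ge L/(n-1)$. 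Your treatment of the denominator is in fact slightly more explicit than the paper's: you spell out that one should take $\bm x=\bm\Omega_d^\dagger\bm d'$ (hence $\bm x\perp\bm 1$), which is precisely what legitimizes the inequality $\|\bm x\|_2\le\|\bm\Omega_d^\dagger\|_{2\to2}\|\bm d'\|_2$ that the paper uses without comment.
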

\begin{proof}
By using the error bound (\ref{eq.errorbound}) with $f(\bm{x}):=\|\bm{x}\|_{TV,\bm{w}}$, the subdifferntial  (\ref{eq.subdiffTVW}), and $\bm{w}=\sum_{i=1}^{L}\omega_i \bm{1}_{\mathcal{P}_i}$, the numerator in (\ref{eq.errorbound}) is obtained by:
\begin{align}
&2\sup_{\bm{s}\in\partial\|.\|_{TV,\bm{w}}(\bm{x})}\|\bm{s}\|_2=\sup_{|z_i|\le w_i,~i=1,..., n-1} 2\|\bm{\Omega}_d^T\bm{z}\|_2\le\nonumber\\
&2\|\bm{\Omega}_d\|_{2\rightarrow2}\sqrt{\sum_{i=1}^{p}w_i^2}=2\|\bm{\Omega}_d\|_{2\rightarrow2}\sqrt{\sum_{i=1}^{L}|\mathcal{P}_i|\omega_i^2}.
\end{align}
Also, the denominator in (\ref{eq.errorbound}) can be lower bounded as:
\begin{align}\label{eq.TVwdenominator}
&\frac{\|\bm{\Omega}_dx\|_{1,\bm{w}}}{\|\bm{x}\|_2}\ge \frac{\|\bm{\Omega}_dx\|_{1,\bm{w}}}{\|\bm{\Omega}_dx\|_2\|\bm{\Omega}_d^\dagger\|_{2\rightarrow2}}\le
\sqrt{\sum_{i\in\mathcal{S}_{{g}}}w_i^2}\|\bm{\Omega}_d^\dagger\|_{2\rightarrow2}^{-1}\nonumber\\
&=\sqrt{\sum_{i=1}^{L}|\mathcal{P}_i\cap\mathcal{S}_{{g}}|\omega_i^2} \|\bm{\Omega}_d^\dagger\|_{2\rightarrow2}^{-1}
\end{align}
The error bound (\ref{eq.errorbound}) for the function $\|.\|_{\mathrm{TV},\bm{w}}$ only depends on $\mathcal{D}(\|.\|_{\mathrm{TV},\bm{w}},\bm{x})$ and $\partial\|.\|_{\mathrm{TV},\bm{w}}(\bm{x})$. Further, $\partial\|.\|_{\mathrm{TV},\bm{w}}(\bm{x})$ only depends on $sgn(\bm{d})$ not the magnitudes of $\bm{d}$. So we may choose a vector
\begin{align}
    \bm{z} = \left\{\begin{array}{lr}
        w_isgn(v)_i, &  i\in \mathcal{S}_{{g}}\\
        0, & i\in \mathcal{\bar{S}}_g
        \end{array}\right\}\in \mathbb{R}^{n-1},\nonumber
\end{align}
  with $sgn(\bm{z})=sgn(\bm{d})$ in (\ref{eq.TVwdenominator}) to have equality in the last inequality in (\ref{eq.TVwdenominator}). Therefore, the error in obtaining $\hat{m}_{TV,s,\bm{w}}$ in (\ref{eq.mhat wTVs}) is at most:
  \begin{align}
  &\frac{2\|\bm{\Omega}_d\|_{2\rightarrow2}\sqrt{\sum_{i=1}^{L}|\mathcal{P}_i|\omega_i^2}}{n-1\|\bm{\Omega}_d^\dagger\|_{2\rightarrow2}^{-1}\sqrt{\sum_{i=1}^{L}|\mathcal{P}_i\cap\mathcal{S}_{{g}}|\omega_i^2}}
  \le\nonumber\\
  &\frac{2\kappa(\bm{\Omega}_d)}{n-1}\sqrt{\frac{1}{\min_{i\in[L]}\frac{|\mathcal{P}_i\cap\mathcal{S}_{{g}}|}{|\mathcal{P}_i|}}}\le\frac{2\kappa(\bm{\Omega}_d)}{\sqrt{L(n-1)}}
  \end{align}
where in the last step, we used the fact that $|\mathcal{P}_i\cap\mathcal{S}_{{g}}|\ge1,~|\mathcal{P}_i|\le\frac{n-1}{L}$ for at least one $i\in[L]$. As a consequence, the error in obtaining $\hat{m}_{\mathrm{TV},\bm{w}}$ is at most $\frac{2\kappa(\bm{\Omega}_d)}{\sqrt{L(n-1)}}$.
\end{proof}
\subsection{Optimal Weights}
Infimum of (\ref{eq.TVwerrorbound}) gives:
\begin{align}\label{eq.infmhatqswTVerrorbound}
&\inf_{\bm{\omega}\in\mathbb{R}_{+}^L}\hat{m}_{\mathrm{TV},s,\bm{D\omega}}-\frac{2\kappa(\bm{\Omega}_d)}{\sqrt{(n-1)L}}\le\inf_{\bm{\omega}\in\mathbb{R}_{+}^L}{m}_{TV,s,\bm{D\omega}}\le\nonumber\\
& \inf_{\bm{\omega}\in\mathbb{R}_{+}^L}\hat{m}_{\mathrm{TV},s,\bm{D\omega}},
\end{align}
in which, $\bm{D}=[\bm{1}_{\mathcal{P}_1}, ..., \bm{1}_{\mathcal{P}_L}]\in\mathbb{R}^{n-1\times L}$.\par
In $\mathsf{P}_{\mathrm{TV},\bm{D\omega}}$, we call the weight $\bm{\omega}^*=\underset{\bm{\omega}\in\mathbb{R}_{+}^L}{\arg\min}~~\hat{m}_{\mathrm{TV},s,\bm{D\omega}}\in\mathbb{R}_{+}^L$ optimal since it asymptotically minimizes the number of measurements required for $\mathsf{P}_{\mathrm{TV},\bm{D\omega}}$ to succeed.
Before proving that the optimal weights are in fact unique, we state a beneficial lemma that establish our purpose.
\begin{lem}\label{lemma.Jt(v)}
Let $\mathcal{C}:=\partial\|.\|_{\mathrm{TV}}\subseteq\mathbb{R}^{n-1}$. Suppose that $\mathcal{C}$ does not contain the origin. In particular, it is compact and there are upper and lower bounds that satisfy $1\le\|\bm{z}\|_2\le \sqrt{n-1}$ for all $\bm{z}\in \mathcal{C}$. Also denote the standard normal vector by $\bm{g}\in \mathbb{R}^n$. Consider the function
\begin{align}
J(\bm{\nu}):=\mathds{E}\mathrm{dist}^2(\bm{g},\bm{\Omega}_d^T(\bm{\upsilon}\odot \mathcal{C}))=\mathds{E}[J_{\bm{g}}(\bm{\nu})] \nonumber\\
 \text{with}~~\bm{\upsilon}=\bm{D\nu} \in \mathbb{R}^{n-1} ~~\text{for}~~ \bm{\nu}\in\mathbb{R}_{+}^L.
\end{align}
The function $J$ is strictly convex, continuous at $\bm{\nu}\in \mathbb{R}_{+}^L$ and differentiable for $\bm{\nu}\in \mathbb{R}_{++}^L$. Moreover, there exist a unique point that minimize $J$.
\end{lem}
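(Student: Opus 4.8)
The plan is to mirror exactly the five-step skeleton used in the proofs of Lemma~\ref{lemmma.J(nu)} and Lemma~\ref{lemma.Jb(nu)}: (i) continuity at bounded $\bm\nu$, (ii) convexity, (iii) strict convexity, (iv) differentiability on $\mathbb{R}_{++}^L$, and (v) attainment of the minimum; uniqueness then follows from strict convexity plus attainment. The only structural novelty here is the extra linear operator $\bm{\Omega}_d^T:\mathbb{R}^{n-1}\to\mathbb{R}^n$ sitting in front of the dilated set, so most of the work is checking that each step survives composition with a fixed injective linear map. Throughout I will use that $\bm{\Omega}_d$ has full row rank, hence $\bm{\Omega}_d^T$ has full column rank, so $\sigma_{\min}(\bm{\Omega}_d)>0$ and $\|\bm{\Omega}_d^T\bm v\|_2\ge \sigma_{\min}(\bm{\Omega}_d)\|\bm v\|_2$ for all $\bm v\in\mathbb{R}^{n-1}$, while $\|\bm{\Omega}_d^T\bm v\|_2\le\|\bm{\Omega}_d\|_{2\to2}\|\bm v\|_2$; these two-sided bounds replace the trivial identity bounds available in the pure $\ell_1$ and $\ell_{1,2}$ cases.

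For continuity I would repeat the estimate of Lemma~\ref{lemmma.J(nu)} verbatim, inserting the factor $\|\bm{\Omega}_d\|_{2\to2}$ wherever a projection norm is bounded: since the metric projection onto the closed convex set $\bm{\Omega}_d^T(\bm\upsilon\odot\mathcal C)$ is non-expansive, one gets $\|\mathcal P_{\bm{\Omega}_d^T(\bm\upsilon\odot\mathcal C)}(\bm g)\|_2\le \|\bm{\Omega}_d\|_{2\to2}\,\|\bm\upsilon\odot\bm z\|_2\le\|\bm{\Omega}_d\|_{2\to2}\sqrt{n-1}\,\|\bm\nu\|_1$, and the difference of projections is controlled by $\|\bm{\Omega}_d\|_{2\to2}\|(\bm\upsilon-\tilde{\bm\upsilon})\odot\bm z\|_2$, leading to a Lipschitz-type bound $|J(\bm\nu)-J(\tilde{\bm\nu})|\le c(\bm\nu,\tilde{\bm\nu})\|\bm\nu-\tilde{\bm\nu}\|_2\to 0$. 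For convexity, the crucial observation is that the ``set-exchange'' identity proved in~(\ref{eq.benefitl1}) — namely that for $\bm z_1,\bm z_2\in\mathcal C$ there is $\bm z\in\mathcal C$ with $\theta\bm\upsilon\odot\bm z_1+(1-\theta)\tilde{\bm\upsilon}\odot\bm z_2=(\theta\bm\upsilon+(1-\theta)\tilde{\bm\upsilon})\odot\bm z$ — is preserved after applying the linear map $\bm{\Omega}_d^T$ to both sides, because $\bm{\Omega}_d^T$ commutes with the convex combination. Hence $\mathrm{dist}\bigl(\bm g,\bm{\Omega}_d^T((\theta\bm\upsilon+(1-\theta)\tilde{\bm\upsilon})\odot\mathcal C)\bigr)$ is $\le\theta\,\mathrm{dist}(\bm g,\bm{\Omega}_d^T(\bm\upsilon\odot\mathcal C))+(1-\theta)\,\mathrm{dist}(\bm g,\bm{\Omega}_d^T(\tilde{\bm\upsilon}\odot\mathcal C))$ up to an $\epsilon+\tilde\epsilon$ slack, exactly as in~(\ref{eq.convexitydistana}); squaring a nonnegative convex function and averaging over $\bm g$ yields convexity of $J$.

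For strict convexity I would argue by contradiction as in~(\ref{eq.strictconvexana})--(\ref{eq.J0}): equality in Jensen forces $J_{\bm g}(\theta\bm\nu+(1-\theta)\tilde{\bm\nu})=\theta J_{\bm g}(\bm\nu)+(1-\theta)J_{\bm g}(\tilde{\bm\nu})$ a.s., and then I evaluate at $\bm g=\bm 0$, where $J_{\bm 0}(\bm\nu)=\mathrm{dist}^2(\bm 0,\bm{\Omega}_d^T(\bm\upsilon\odot\mathcal C))=\inf_{\bm z\in\mathcal C}\|\bm{\Omega}_d^T(\bm\upsilon\odot\bm z)\|_2^2$. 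Here is the one genuinely new point: the map $\bm v\mapsto\|\bm{\Omega}_d^T\bm v\|_2^2$ is \emph{strictly} convex on $\mathbb{R}^{n-1}$ precisely because $\bm{\Omega}_d^T$ is injective, so the strict inequality $J_{\bm 0}(\theta\bm\nu+(1-\theta)\tilde{\bm\nu})<\theta J_{\bm 0}(\bm\nu)+(1-\theta)J_{\bm 0}(\tilde{\bm\nu})$ still holds, combining the set-exchange inclusion with strict convexity of $\|\bm{\Omega}_d^T(\cdot)\|_2^2$; continuity of $\bm g\mapsto J_{\bm g}(\bm\nu)$ (the distance to a convex set is $1$-Lipschitz) then propagates this strict inequality to a ball $\mathds B_\epsilon^n$ of positive measure, contradicting the a.s.\ equality. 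Differentiability on $\mathbb{R}_{++}^L$ follows by writing the chain-rule gradient $\nabla_{\bm\nu}J_{\bm g}(\bm\nu)=-2\,\bm D^T\bigl((\bm D\bm\nu)^{-1\odot}\odot(\bm\upsilon\odot\bm z^\star)\odot\bm{\Omega}_d(\bm g-\bm{\Omega}_d^T(\bm\upsilon\odot\bm z^\star))\bigr)$ with $\bm z^\star$ realizing the projection, bounding $\mathds E\sup_{\bm\nu\in\mathcal I}\|\nabla_{\bm\nu}J_{\bm g}(\bm\nu)\|_2<\infty$ over compact $\mathcal I$ using the operator-norm bounds above, and invoking dominated convergence to exchange $\nabla$ and $\mathds E$.

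Finally, for attainment I would use $\mathrm{dist}(\bm g,\bm{\Omega}_d^T(\bm\upsilon\odot\mathcal C))\ge\inf_{\bm z\in\mathcal C}\|\bm{\Omega}_d^T(\bm\upsilon\odot\bm z)\|_2-\|\bm g\|_2\ge\sigma_{\min}(\bm{\Omega}_d)\,\nu_{\min}-\|\bm g\|_2$ (using $\|\bm\upsilon\odot\bm z\|_2\ge\nu_{\min}\|\bm z\|_2\ge\nu_{\min}$), so that $J_{\bm g}(\bm\nu)\ge(\sigma_{\min}(\bm{\Omega}_d)\nu_{\min}-\|\bm g\|_2)^2$ whenever $\nu_{\min}>\|\bm g\|_2/\sigma_{\min}(\bm{\Omega}_d)$; combining this with $\mathds P(\|\bm g\|_2\le\sqrt n)\ge 1-\sqrt{n/(n+1)}$ as in~(\ref{eq.attainJana}) shows $J(\bm\nu)>J(\bm 0)$ once $\nu_{\min}$ exceeds an explicit multiple of $\sqrt n$, confining the minimizer to a compact box where the continuous, strictly convex $J$ attains a unique minimum. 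The main obstacle I anticipate is the strict-convexity step: unlike the previous two lemmas where $J_{\bm 0}$ involved $\|\cdot\|_2^2$ directly, here one must justify that composing with $\bm{\Omega}_d^T$ does not destroy strict convexity, which hinges on injectivity of $\bm{\Omega}_d^T$ and on carefully re-deriving the set-exchange inclusion~(\ref{eq.benefitl1}) for the TV subdifferential; everything else is a routine transcription of the earlier proofs with the extra operator-norm constants carried along.
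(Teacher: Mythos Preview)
Your proposal is correct and follows essentially the same five-step skeleton as the paper's own proof. The one noteworthy difference is that the paper isolates the ``set-exchange'' inclusion as a separate Lemma~\ref{lemma.TVbenefit} and proves it by a case-by-case analysis over the support sets $\mathcal{S}_1,\ldots,\mathcal{S}_5,\mathcal{S}_6$, whereas you correctly observe that since $\mathcal{C}$ is (despite the lemma's notation) the $\ell_1$ subdifferential $\partial\|\cdot\|_1(\bm{\Omega}_d\bm{x})\subseteq\mathbb{R}^{n-1}$, the already-proven inclusion~(\ref{eq.benefitl1}) applies directly and is preserved under the linear map $\bm{\Omega}_d^T$; this is a cleaner route to the same conclusion and removes the ``main obstacle'' you anticipated.
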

\begin{proof}
\textit{Continuity in bounded points}.
We must show that sufficiently small changes in $\bm{\nu}$ result in arbitrary small changes in $J(\bm{\nu})$. By the definition of $J_{\bm{g}}(\bm{\nu})$, we have:
\begin{align}
&J_{\bm{g}}(\bm{\nu})-J_{\bm{g}}(\tilde{\bm{\nu}})=\|\bm{g}-\mathcal{P}_{\bm{\Omega}_d^T(\bm{\upsilon}\odot \mathcal{C})}(\bm{g})\|_2^2-\nonumber\\
&\|\bm{g}-\mathcal{P}_{\bm{\Omega}_d^T(\tilde{\bm{\upsilon}}\odot \mathcal{C})}(\bm{g})\|_2^2=2\langle \bm{g},\mathcal{P}_{\bm{\Omega}_d^T(\tilde{\bm{\upsilon}}\odot \mathcal{C})}(\bm{g})-\mathcal{P}_{\bm{\Omega}_d^T({\bm{\upsilon}}\odot \mathcal{C})}(\bm{g})\rangle+\nonumber\\
&\big(\|\mathcal{P}_{\bm{\Omega}_d^T(\bm{\upsilon}\odot \mathcal{C})}(\bm{g})\|_2-\|\mathcal{P}_{\bm{\Omega}_d^T(\tilde{\bm{\upsilon}}\odot \mathcal{C})}(\bm{g})\|_2\big)\big(\|\mathcal{P}_{\bm{\Omega}_d^T(\bm{\upsilon}\odot \mathcal{C})}(\bm{g})\|_2+\nonumber\\
&\|\mathcal{P}_{\bm{\Omega}_d^T(\tilde{\bm{\upsilon}}\odot \mathcal{C})}(\bm{g})\|_2\big)\nonumber\\
&\text{The absolute value satisfies:}\nonumber\\
&|J_{\bm{g}}(\bm{\nu})-J_{\bm{g}}(\tilde{\bm{\nu}})|\nonumber\\
&\le \bigg(2\|\bm{g}\|_2\|\bm{\Omega}_d\|_{2\rightarrow2}\sqrt{n-1}+\nonumber\\
&(n-1)\|\bm{\Omega}_d\|_{2\rightarrow2}^2(\|\bm{\nu}\|_1+\|\tilde{\bm{\nu}}\|_1)\bigg)\|\tilde{\bm{\nu}}-\bm{\nu}\|_1
\end{align}
where in the line above, we used the fact that,
\begin{align}
&\|\mathcal{P}_{\bm{\Omega}_d^T(\bm{\upsilon}\odot \mathcal{C})}(\bm{g})\|_2\le\sup_{\bm{z}\in \mathcal{C}}\|\bm{\Omega}_d^T(\bm{\upsilon}\odot \bm{z})\|_2\le\nonumber\\
&\|\bm{\Omega}_d\|_{2\rightarrow2}\|\bm{D\nu}\|_{\infty}\sqrt{n-1}\le\|\bm{\Omega}_d\|_{2\rightarrow2}\|\bm{\nu}\|_1\sqrt{n-1}\|\bm{D}\|_{1\rightarrow\infty}\nonumber\\
&=\|\bm{\Omega}_d\|_{2\rightarrow2}\|\bm{\nu}\|_1\sqrt{n-1},
\end{align}
and,
\begin{align}
&\|\mathcal{P}_{\bm{\Omega}_d^T(\bm{\upsilon}\odot \mathcal{C})}(\bm{g})\|_2-\|\mathcal{P}_{\bm{\Omega}_d^T(\tilde{\bm{\upsilon}}\odot \mathcal{C})}(\bm{g})\|_2\le \nonumber\\
&\sup_{\bm{z}\in \mathcal{C}}\bigg(\|\bm{\Omega}_d^T(\bm{\upsilon}\odot \bm{z})\|_2-\|\bm{\Omega}_d^T(\tilde{\bm{\upsilon}}\odot\bm{z})\|_2\bigg)\nonumber\\
&\le
\|\bm{\Omega}_d\|_{2\rightarrow2}\sup_{\bm{z}\in \mathcal{C}}\big(\|(\bm{\upsilon}-\tilde{\bm{\upsilon}})\odot \bm{z}\|_2\big)\nonumber\\
&\le
\|\bm{\nu}-\tilde{\bm{\nu}}\|_1\sqrt{n-1}\|\bm{D}\|_{1\rightarrow\infty}\|\bm{\Omega}_d\|_{2\rightarrow2}=\nonumber\\
&\|\bm{\Omega}_d\|_{2\rightarrow2}\|\bm{\nu}-\tilde{\bm{\nu}}\|_1\sqrt{n-1}.
\end{align}
As a consequence, we obtain:
\begin{align}
&|J(\bm{\nu})-J(\tilde{\bm{\nu}})|\nonumber\\
&\le \bigg(2\|\bm{\Omega}_d\|_{2\rightarrow2}\sqrt{(n^2-n)L}+\nonumber\\
&(n-1)\|\bm{\Omega}_d\|_{2\rightarrow2}^2\sqrt{L}(\|\bm{\nu}\|_1+\|\tilde{\bm{\nu}}\|_1)\bigg)\|\tilde{\bm{\nu}}-\bm{\nu}\|_2\rightarrow 0 \nonumber\\
 &~\text{as}~ \bm{\nu}\rightarrow~ \tilde{\bm{\nu}}
\end{align}
Since $\|\bm{\nu}\|_1$ is bounded, continuity holds.\par
\textit{Convexity}. Let $\bm{\nu}\in \mathbb{R}_{+}^L$ , $\tilde{\bm{\nu}}\in\mathbb{R}_{+}^L$ and $\theta\in[0,1]$ with $\bm{\upsilon}=\bm{D\nu}$ and $\tilde{\bm{\upsilon}}=\bm{D}\tilde{\bm{\nu}}$. Then we have:
\begin{align}\label{helpconvexityTV}
&\forall \epsilon , \tilde{\epsilon}>0~\exists \bm{z} ,\tilde{\bm{z}}\in \mathcal{C} ~~\text{such that}\nonumber\\
&\|\bm{g}-\bm{\Omega}_d^T(\bm{\upsilon}\odot \bm{z})\|_2\le \mathrm{dist}(\bm{g},\bm{\Omega}_d^T(\bm{\upsilon}\odot \mathcal{C}))+\epsilon\nonumber\\
&\|\bm{g}-\bm{\Omega}_d^T(\tilde{\bm{\upsilon}}\odot \tilde{\bm{z}})\|_2\le \mathrm{dist}(\bm{g},\bm{\Omega}_d^T(\tilde{\bm{\upsilon}}\odot \mathcal{C}))+\tilde{\epsilon}
\end{align}
Since otherwise we have:
\begin{align}
&\forall \bm{z},\tilde{\bm{z}}\in \mathcal{C}:\nonumber\\
&\|\bm{g}-\bm{\Omega}_d^T(\bm{\upsilon}\odot \bm{z})\|_2>\mathrm{dist}(\bm{g},\bm{\Omega}_d^T(\bm{\upsilon}\odot \mathcal{C}))+\epsilon\nonumber\\
&\|\bm{g}-\bm{\Omega}_d^T(\tilde{\bm{\upsilon}}\odot \tilde{\bm{z}})\|_2> \mathrm{dist}(\bm{g},\bm{\Omega}_d^T(\tilde{\bm{\upsilon}}\odot \mathcal{C}))+\tilde{\epsilon}
\end{align}
By taking the infimum over $\bm{z},\tilde{\bm{z}}\in \mathcal{C}$, we reach a contradiction. We proceed to prove convexity of $\mathrm{dist}(\bm{g},\bm{\Omega}_d^T((\bm{D\nu})\odot \mathcal{C}))$.
\begin{align}\label{eq.convexitydistTV}
&\mathrm{dist}(\bm{g},\bm{\Omega}_d^T((\theta\bm{\upsilon}+(1-\theta)\tilde{\bm{\upsilon}})\odot \mathcal{C}))=\nonumber\\
&\inf_{\bm{z} \in \mathcal{C}}\|\bm{g}-\bm{\Omega}_d^T((\theta\bm{\upsilon}+(1-\theta)\tilde{\bm{\upsilon}})\odot \bm{z})\|_2\nonumber\\
&\le\inf_{\bm{z}_1\in \mathcal{C},\bm{z}_2\in \mathcal{C}}\|\bm{g}-\theta\bm{\Omega}_d^T(\bm{\upsilon}\odot \bm{z}_1)-(1-\theta)\bm{\Omega}_d^T(\bm{\upsilon}\odot \bm{z}_2)\|_2\le\nonumber\\
&\theta\|\bm{g}-\bm{\Omega}_d^T(\bm{\upsilon}\odot \bm{z}_1)\|_2+(1-\theta)\|\bm{g}-\bm{\Omega}_d^T(\tilde{\bm{\upsilon}}\odot \bm{z}_2)\|_2\le\nonumber\\
&\theta \mathrm{dist}(\bm{g},\bm{\Omega}_d^T(\bm{\upsilon}\odot \mathcal{C}))+(1-\theta)\mathrm{dist}(\bm{g},\bm{\Omega}_d^T(\tilde{\bm{\upsilon}}\odot \mathcal{C}))+\epsilon+\tilde{\epsilon}.
\end{align}
Since this holds for any $\epsilon$ and $\tilde{\epsilon}$, $\mathrm{dist}(\bm{g},\bm{\Omega}_d^T((\bm{D\nu})\odot \mathcal{C}))$ is a convex function. As the square of a non-negative convex function is convex, $J_{\bm{g}}(\bm{\nu})$ is a convex function. At last, the function $J(\bm{\nu})$ is the average of convex functions, hence is convex.
In (\ref{eq.convexitydistTV}), the first inequality comes from the following lemma.
In the second inequality, we used the triangle inequality of norms. The third inequality uses the relation (\ref{helpconvexityTV}).
\begin{lem}{\label{lemma.TVbenefit}}
For all $\bm{z}_1,\bm{z}_2 \in \mathcal{C}$, the following relation holds.
\begin{align}\label{eq.TVbenefit}
&\forall \bm{z}_1,\bm{z}_2 \in \mathcal{C} ~\text{and}~\theta\in(0,1):\nonumber\\
&\theta\bm{\Omega}_d^T(\bm{\upsilon}\odot \bm{z}_1)+(1-\theta)\bm{\Omega}_d^T(\tilde{\bm{\upsilon}}\odot \bm{z}_2)\nonumber\\
&\in~~\bm{\Omega}_d^T((\theta \bm{\upsilon}+(1-\theta)\tilde{\bm{\upsilon}})\odot \mathcal{C})\nonumber\\
\end{align}
\end{lem}
\begin{proof}[Proof of Lemma \ref{lemma.TVbenefit}]
For $i\in \mathcal{S}_1\cup\mathcal{S}_2\subseteq\mathbb{R}^{n}$, we have:
\begin{align}
&\theta(\Omega_d^T(\bm{\upsilon}\odot \bm{z}_1))(i)+(1-\theta)(\Omega_d^T(\tilde{{\upsilon}}\odot {z}_2))(i)=\nonumber\\
&\theta\upsilon(i)sgn(d_i)-\theta\upsilon(i-1)sgn(d_{i-1})+\nonumber\\
&(1-\theta)\tilde{\upsilon}(i)sgn(d_i)-(1-\theta)\tilde{\upsilon}(i-1)sgn(d_{i-1})=\nonumber\\
&(\Omega_d^T((\theta \upsilon+(1-\theta)\tilde{\upsilon})\odot z))(i).
\end{align}
For $i\in\mathcal{S}_3$, we can always find a $\bm{z}\in \mathcal{C}$ supported on $\mathcal{S}_3$ such that:
\begin{align}
&\theta\Omega_d^T(\upsilon\odot z_1)(i)+(1-\theta)\Omega_d^T(\tilde{\upsilon}\odot z_2)(i)=\nonumber\\
&\theta\upsilon(i)sgn(d_i)-\theta\upsilon(i-1)z_1(i-1)+\nonumber\\
&(1-\theta)\tilde{\upsilon}(i)sgn(d_i)-(1-\theta)\upsilon(i-1)z_2(i-1)\nonumber\\
&=\theta\upsilon(i)sgn(d_i)+(1-\theta)\tilde{\upsilon}(i)sgn(d_i)\nonumber\\
&-(\theta\upsilon(i-1)+(1-\theta)\upsilon(i-1))z(i-1)
\end{align}.
Since otherwise with $z_1(i-1)=z_2(i-1)=1$ and $z(i-1)=1$ we reach a contradiction.\\
For $i\in\mathcal{S}_4$, we can always find a $\bm{z}\in \mathcal{C}$ supported on $\mathcal{S}_4$ such that:
\begin{align}
&\theta\Omega_d^T(\upsilon\odot z_1)(i)+(1-\theta)\Omega_d^T(\tilde{\upsilon}\odot z_2)(i)=\nonumber\\
&-\theta\upsilon(i-1)sgn(d_{i-1})+\theta\upsilon(i)z_1(i)\nonumber\\
&-(1-\theta)\tilde{\upsilon}(i-1)sgn(d_{i-1})+(1-\theta)\tilde{\upsilon}(i)z_2(i)=\nonumber\\
&-(\theta\upsilon(i-1)+(1-\theta)\tilde{\upsilon}(i-1))sgn(d_{i-1})+\nonumber\\
&(\theta\upsilon(i)+(1-\theta)\tilde{\upsilon}(i))z(i).
\end{align}
Since otherwise with $z_1(i)=z_2(i)=1$ and $z(i)=1$ we reach a contradiction.\\
For $i\in\mathcal{S}_5$, we can always find a $\bm{z}\in \mathcal{C}$ supported on $\mathcal{S}_5$ such that:
\begin{align}
&\theta\Omega_d^T(\upsilon\odot z_1)(i)+(1-\theta)\Omega_d^T(\tilde{\upsilon}\odot z_2)(i)=\nonumber\\
&\theta\upsilon(i)z_1(i)-\theta\upsilon(i-1)z_1(i-1)+\nonumber\\
&(1-\theta)\tilde{\upsilon}(i)z_2(i)-(1-\theta)\tilde{\upsilon}(i-1)z_2(i-1)=\nonumber\\
&(\theta\upsilon(i)+(1-\theta)\tilde{\upsilon}(i))z(i)-\nonumber\\
&(\theta\upsilon(i-1)+(1-\theta)\tilde{\upsilon}(i-1))z(i-1).
\end{align}
Since otherwise with $z_1(i)=z_2(i)=1$ and $z(i)=z(i-1)=1$ we reach a contradiction.\\
For $i\in\mathcal{S}_g,i-1\notin\mathcal{S}_g,{\mathcal{\bar{S}}}_g$, we have:
\begin{align}
&\theta\Omega_d^T(\upsilon\odot z_1)(i)+(1-\theta)\Omega_d^T(\tilde{\upsilon}\odot z_2)(i)=\nonumber\\
&\theta\upsilon(i)sgn(d_i)+(1-\theta)\tilde{\upsilon}(i)sgn(d_i)=\nonumber\\
&(\Omega_d^T((\theta \upsilon+(1-\theta)\tilde{\upsilon})\odot z))(i).
\end{align}
For $i=n$ with $n-1\in{\mathcal{\bar{S}}}_g$, we can always find a $z\in\mathcal{C}$ supported on the index $n-1$ such that:
\begin{align}
&\theta\Omega_d^T(\upsilon\odot z_1)(n)+(1-\theta)\Omega_d^T(\tilde{\upsilon}\odot z_2)(n)=\nonumber\\
&-\theta \upsilon(n-1)z_1(n-1)-(1-\theta)\tilde{\upsilon}(n-1)z_2(n-1)=\nonumber\\
&(-\theta\upsilon(n-1)-(1-\theta)\tilde{\upsilon}(n-1))z(n-1).
\end{align}
Since otherwise with $z_1(n-1)=z_2(n-1)=1$ and $z(n-1)=1$ we reach a contradiction.
\end{proof}
\textit{Strict convexity}. We show strict convexity by contradiction. If $J(\bm{\nu})$ were not strictly convex, there would be vectors $\bm{\nu},\tilde{\bm{\nu}}\in\mathbb{R}_{+}^L$ with $\bm{\upsilon}=\bm{D\nu}, \tilde{\bm{\upsilon}}=\bm{D}\tilde{\bm{\nu}}$ and $\theta\in (0,1)$ such that,
\begin{align}\label{eq.TVstrictconvex}
\mathds{E}[J_{\bm{g}}(\theta\bm{\nu}+(1-\theta)\tilde{\bm{\nu}})]=\mathds{E}[\theta J_{\bm{g}}(\bm{\nu})+(1-\theta)J_{\bm{g}}(\tilde{\bm{\nu}})].
\end{align}
For each $\bm{g}$ in (\ref{eq.TVstrictconvex}) the left-hand side is smaller than or equal to the right-hand side. Therefore, in (\ref{eq.TVstrictconvex}), $J_{\bm{g}}(\theta\bm{\nu}+(1-\theta)\tilde{\bm{\nu}})$ and $\theta J_{\bm{g}}(\bm{\nu})+(1-\theta)J_{\bm{g}}(\tilde{\bm{\nu}})$ are almost surely equal(except at a measure zero set) with respect to Gaussian measure. Moreover, we have:
\begin{align}\label{eq.J0TV}
&J_{\bm{0}}(\theta\bm{\nu}+(1-\theta)\tilde{\bm{\nu}})=\mathrm{dist}^2(\bm{0},\bm{\Omega}_d^T\big(\theta\bm{\upsilon}+(1-\theta)\tilde{\bm{\upsilon}}\big)\odot \mathcal{C})\nonumber\\
&\le \inf_{\bm{z}_1,\bm{z}_2\in \mathcal{C}}\|\theta\bm{\Omega}_d^T(\bm{\upsilon}\odot \bm{z}_1)+(1-\theta)\bm{\Omega}_d^T(\tilde{\bm{\upsilon}}\odot \bm{z}_2)\|_2^2\nonumber\\
&<\theta\inf_{\bm{z}_1\in \mathcal{C}}\|\bm{\Omega}_d^T(\bm{\upsilon}\odot \bm{z}_1)\|_2^2+(1-\theta)\inf_{\bm{z}_2\in \mathcal{C}}\|\bm{\Omega}_d^T(\tilde{\bm{\upsilon}}\odot \bm{z}_2)\|_2^2\nonumber\\
&=\theta J_{\bm{0}}(\bm{\nu})+(1-\theta)J_{\bm{0}}(\tilde{\bm{\nu}}),
\end{align}
where, the first inequality comes from (\ref{eq.TVbenefit}) and the second inequality stems from the strict convexity of $\|.\|_2^2$. From (\ref{eq.TVbenefit}), it is easy to verify that the set $\bm{\Omega}_d^T(\bm{\nu}\odot \mathcal{C})$ is a convex set. The distance to a convex set e.g. $\mathcal{E}$ i.e. $\mathrm{dist}(\bm{g},\mathcal{E})$ is a $1$-lipschitz function (i.e. $|\mathrm{dist}(\bm{g},\mathcal{E})-\mathrm{dist}(\tilde{\bm{g}},\mathcal{E})|\le\|\bm{g}-\tilde{\bm{g}}\|_2~:~\forall~\bm{g},\tilde{\bm{g}}\in\mathbb{R}^n$) and hence continuous with respect to $\bm{g}$. Therefore, $J_{\bm{g}}(\bm{\nu})$ is continuous with respect to $\bm{g}$. So there exists an open ball around $\bm{g}=\bm{0}\in\mathbb{R}^n$ that similar to (\ref{eq.J0TV}), we may write the following relation for some $\epsilon>0$
\begin{align}
&\exists \bm{u}\in\mathds{B}_{\epsilon}^n: \nonumber\\
&J_{\bm{u}}(\theta\bm{\nu}+(1-\theta)\tilde{\bm{\nu}})<\theta J_{\bm{u}}(\bm{\nu})+(1-\theta)J_{\bm{u}}(\tilde{\bm{\nu}}).
\end{align}
The above statement contradicts with (\ref{eq.TVstrictconvex}) and hence we have strict convexity. Continuity along with convexity of $J$ implies that $J$ is convex on the whole domain $\bm{\nu}\in\mathbb{R}_{+}^L$.\par
\textit{Differentiability} The function $J_{\bm{g}}(\bm{\nu})$ is continuously differentiable and the gradient for $\bm{\nu}\in\mathbb{R}_{++}^L$ is:
\begin{align}\label{eq.diffJg}
&\nabla_{\bm{\nu}}J_{\bm{g}}(\bm{\nu})=\nonumber\\
&\frac{\partial J_{\bm{g}}(\bm{\nu})}{\partial\bm{\nu}}=-2\bm{D}^T\bm{z}^*\odot\bm{\Omega}_d(\bm{g}-\mathcal{P}_{\bm{\Omega}_d^T(\bm{D\nu}\odot\mathcal{C})}(\bm{g})),
\end{align}
where $\Omega_d^T(\bm{z}^*\odot\bm{D\nu})=\mathcal{P}_{\bm{\Omega}_d^T(\bm{D\nu}\odot\mathcal{C})}$. Continuity of $\frac{\partial J_{\bm{g}}(\bm{\nu})}{\partial\bm{\nu}}$ at $\bm{\nu}\in\mathbb{R}_{+}^L$ stems from the fact that the projection onto a convex set is continuous. For each compact set $\mathcal{I}\subseteq\mathbb{R}_{+}^L$ we have:
\begin{align}
&\mathds{E}\sup_{\bm{\nu}\in\mathcal{I}}\|\nabla_{\bm{\nu}} J_{\bm{g}}(\bm{\nu})\|_2\le2\sqrt{n^2-n}\|\bm{\Omega}_d\|_{2\rightarrow2}\|\bm{D}\|_{2\rightarrow2}+\nonumber\\
&2(n-1)\|\Omega_d\|_{2\rightarrow2}^2\|\bm{D}\|_{2\rightarrow2}\sup_{\bm{\nu}\in\mathcal{I}}\nu_{\max}<\infty,
\end{align}
where $\nu_{\max}:=\underset{i\in[L]}{\max}~\nu(i)$. Therefore, we have:
\begin{align}
\nabla_{\bm{\nu}} J(\bm{\nu})=(\frac{\partial}{\partial\bm{\nu}})\mathds{E} J_{\bm{g}}(\bm{\nu})=\mathds{E}[\nabla_{\bm{\nu}} J_{\bm{g}}(\bm{\nu})] ~:~\forall \bm{\nu}\in\mathbb{R}_{+}^L,
\end{align}
where in the last equality, the Lebesgue's dominated convergence theorem is used. \par
\textit{Attainment of the minimum}. Suppose that $\bm{\nu}\ge{\|\bm{g}\|_2}\bm{1}_{L\times 1}$ where $\bm{g}\in\mathbb{R}^n$ is a fixed vector. With this assumption we may write:
\begin{align}\label{eq.TVattain}
&\mathrm{dist}(\bm{g},\bm{\Omega}_d^T((\bm{D\nu})\odot \mathcal{C}))=\inf_{\bm{z}\in \mathcal{C}}\|\bm{g}-\bm{\Omega}_d^T(\bm{\upsilon}\odot \bm{z})\|_2\ge\nonumber\\
&\inf_{\bm{z}\in \mathcal{C}}(\|\bm{\Omega}_d^T(\bm{\upsilon}\odot \bm{z})\|_2-\|\bm{g}\|_2)\ge\nu_{\min}\|\bm{\Omega}_d^\dagger\|_{2\rightarrow2}^{-1}-\|\bm{g}\|_2\ge0,
\end{align}
where, $\nu_{\min}:=\underset{i\in[L]}{\min}~\nu(i)$. By squaring (\ref{eq.TVattain}), we reach:
\begin{align}\label{eq.TVattainJ_g}
J_{\bm{g}}(\bm{\nu})\ge (\nu_{\min}\|\bm{\Omega}_d^\dagger\|_{2\rightarrow2}^{-1}-\|\bm{g}\|_2)^2~~:~\forall \nu\ge\|\bm{\Omega}_d^\dagger\|_{2\rightarrow2}{\|\bm{g}\|_2}\bm{1}_{L\times 1}.
\end{align}
Using the relation $\mathds{E}\|\bm{g}\|_2\ge\frac{n}{\sqrt{n+1}}$ and Marcov's inequality we obtain:
\begin{align}
\mathds{P}(\|\bm{g}\|_2\le\sqrt{n})\ge1-\sqrt{\frac{n}{n+1}}\nonumber.
\end{align}
Then we reach:
\begin{align}\label{eq.TVattainJ}
&J(\bm{\nu})\ge\mathds{E}[J_{\bm{g}}(\bm{\nu})|\|\bm{g}\|_2\le\sqrt{n}]\mathds{P}(\|\bm{g}\|_2\le\sqrt{n})\nonumber\\
&\ge(1-\sqrt{\frac{n}{n+1}})\mathds{E}\big[(\nu_{\min}\|\bm{\Omega}_d^\dagger\|_{2\rightarrow2}^{-1}-\|\bm{g}\|_2)^2|\|\bm{g}\|_2\le\sqrt{n}\big]\nonumber\\
&\ge(1-\sqrt{\frac{n}{n+1}})(\nu_{\min}\|\bm{\Omega}_d^\dagger\|_{2\rightarrow2}^{-1}-\sqrt{n})^2,
\end{align}
where in (\ref{eq.TVattainJ}), the first inequality stems from total probability theorem, the second inequality comes from (\ref{eq.attainJ_g}). From (\ref{eq.TVattainJ}), we find out that $J(\bm{\nu})>J(\bm{0})$ when $\nu>(2^{\frac{1}{4}}+1)\|\bm{\Omega}_d^\dagger\|_{2\rightarrow2}{\sqrt{p}}\bm{1}_{L\times1}$. Therefore, the unique minimizer of $J$ must occur in the interval $[\bm{0},(2^{\frac{1}{4}}+1)\|\bm{\Omega}_d^\dagger\|_{2\rightarrow2}{\sqrt{p}}\bm{1}_{L\times1}]$
\end{proof}
\begin{prop}\label{prop.uniqnessoptimalTVweights}
Let $\bm{x}$ be a non-uniform $s$-gradient sparse vector in $\mathbb{R}^n$ with parameters $\{\rho_i\}_{i=1}^L$ and $\{\alpha_i,\alpha_i',\beta_i,\beta_i',\gamma_i,\gamma_i',\varsigma_i,\varsigma_i',\xi_i,\breve{\xi}_i\}_{i=1}^L$. Then there exist unique optimal weights $\bm{\omega}^* \in \mathbb{R}_{+}^L$ (up to a positive scaling) that minimize $\hat{m}_{\mathrm{TV},s,\bm{D\omega}}$. Moreover, the optimal weights $\bm{\omega}^*\in\mathbb{R}$ are obtained by simultaneously solving the following integral equations.
\begin{align}\label{eq.TVoptimalweights}
&2\alpha'_i(\omega_i-\omega_{i-1})+8\beta_i\omega_i+2\beta'_i(\omega_i+\omega_{i-1})\nonumber\\
&+(\gamma_i+\varsigma_i)\frac{\partial\phi_1(\omega_i,\omega_i)}{\partial \omega_i}+\gamma'_i\frac{\partial\phi_1(\omega_{i-1},\omega_i)}{\partial\omega_i}+\varsigma'_i \frac{\partial\phi_1(\omega_{i-1},\omega_i)}{\partial\omega_i}\nonumber\\
&+(1-\alpha_i-\beta_i-\gamma_i-\varsigma_i-\frac{1}{|\mathcal{P}_i|})\frac{\partial\phi_2(2\omega_i)}{\partial\omega_i}\nonumber\\
&+(1-\alpha'_i-\beta'_i-\gamma'_i-\varsigma'_i-\xi_i)\frac{\partial\phi_2(\omega_i+\omega_{i-1})}{\partial\omega_i}+2(\xi_i+\breve{\xi}_i)\omega_i\nonumber\\
&+(\bar{\xi}_i+\bar{\breve{\xi}}_i)\frac{\partial\phi_2(\omega_i)}{\partial\omega_i}=0~:~\forall i=1,..., L,
\end{align}
where in (\ref{eq.TVoptimalweights}),
\begin{align}
&\frac{\partial\phi_1(a,a)}{\partial a}=\frac{1}{\sqrt{2\pi}}\int_{a}^{\infty}\bigg[-2(u-a)(e^{-\frac{(u-a)^2}{2}}+e^{-\frac{(u+a)^2}{2}})+\nonumber\\
&(u-a)^2((u-a)e^{-\frac{(u-a)^2}{2}}-(u+a)e^{-\frac{(u+a)^2}{2}}))\bigg]du\nonumber\\
&\frac{\partial\phi_1(a,b)}{\partial b}=\frac{1}{\sqrt{2\pi}}\int_{a}^{\infty}\bigg[-2(u-b)(e^{-\frac{(u-a)^2}{2}}+e^{-\frac{(u+a)^2}{2}})\bigg]du\nonumber\\
&\frac{\partial\phi_1(a,b)}{\partial a}=\frac{1}{\sqrt{2\pi}}\int_{b}^{\infty}\bigg[(u-a)^2((u-a)e^{-\frac{(u-a)^2}{2}}-\nonumber\\
&(u+a)e^{-\frac{(u+a)^2}{2}})\bigg]du\nonumber\\
&\frac{\partial\phi_2(z)}{\partial z}=-2\sqrt{\frac{2}{\pi}}\int_{z}^{\infty}(u-z)e^{-\frac{u^2}{2}}du.
\end{align}
\end{prop}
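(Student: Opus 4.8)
The plan is to follow the same route as in Propositions \ref{prop.uniqnessoptimalweights} and \ref{prop.uniqnessoptimal12weights}: reduce the optimization over the weights to the minimization of the scalar-field function $J$ already analysed in Lemma \ref{lemma.Jt(v)}. First I would set $\mathcal{C} := \partial\|.\|_1(\bm{d}) \subseteq \mathbb{R}^{n-1}$, so that $\partial\|.\|_{\mathrm{TV}}(\bm{x}) = \bm{\Omega}_d^T\mathcal{C}$ by the chain rule of Lemma \ref{lemma.chainrule}, and then observe that, by Lemma \ref{lemma.mhat tvweighted} together with (\ref{eq.subdiffTVW}),
\[
\inf_{\bm{\omega}\in\mathbb{R}_{+}^L}\hat{m}_{\mathrm{TV},s,\bm{D\omega}} = \inf_{\bm{\omega}\in\mathbb{R}_{+}^L}\inf_{t\ge0}\Psi_{t,\bm{D\omega}} = \inf_{\bm{\nu}\in\mathbb{R}_{+}^L}J(\bm{\nu}),
\]
where the substitution $\bm{\nu} = t\bm{\omega}$ collapses the joint minimization over the scalar dilation $t$ and the $L$-vector $\bm{\omega}$ into a single minimization over $\bm{\nu}$, exactly as in the earlier sections. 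Since the subdifferential scales coordinate-wise along the partition blocks (see (\ref{eq.l1subdiff}) with $\bm{\Omega} = \bm{I}$), the resulting $J(\bm{\nu})$ is precisely the explicit expression for $\Psi_{t,\bm{D\omega}}$ in Lemma \ref{lemma.mhat tvweighted} with the product $t\omega_i$ replaced by $\nu(i)$.

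Next I would invoke Lemma \ref{lemma.Jt(v)}, which already establishes that $J$ is continuous on $\mathbb{R}_{+}^L$, differentiable on $\mathbb{R}_{++}^L$, strictly convex, and attains its minimum at a unique point (the attainment argument there confines the minimizer to a bounded box via $\mathds{E}\|\bm{g}\|_2\ge n/\sqrt{n+1}$ and Markov's inequality). Strict convexity and differentiability then force the unique minimizer $\bm{\nu}^*$ to be the unique solution of $\nabla_{\bm{\nu}}J(\bm{\nu}) = \bm{0}$. Differentiating the explicit form of $J$ term by term, using the derivative formulas for $\phi_1$ and $\phi_2$ recorded in the statement and keeping track of the fact that a variation of $\nu(i)$ affects the partition $\mathcal{P}_i$ both through its own index and, at the boundary indices counted in the primed sets $\mathcal{S}_j'$, through $\nu(i-1)$, yields the coupled system (\ref{eq.TVoptimalweights}). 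Finally, unwinding $\bm{\nu}^* = t\bm{\omega}^*$ shows that the optimal weights are determined only up to the positive scalar $t$, i.e.\ unique up to positive scaling.

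The main technical point --- and the place where this argument differs from the $\ell_1$-analysis and block cases --- is the bookkeeping in the gradient computation: in the TV setting $J(\bm{\nu})$ is \emph{not} separable across the coordinates of $\bm{\nu}$, because consecutive signal variations straddling a partition boundary (the sets $\mathcal{S}_1',\dots,\mathcal{S}_5'$) contribute terms depending on both $\nu(i)$ and $\nu(i-1)$. One must therefore check that the strict convexity of Lemma \ref{lemma.Jt(v)} still applies here --- it does, since that lemma is proved for the full set $\bm{\Omega}_d^T(\bm{D\nu}\odot\mathcal{C})$ rather than coordinate-wise --- and then carefully collect, for each $i$, all the contributions of $\nu(i)$ coming from the neighbouring partitions in order to arrive at (\ref{eq.TVoptimalweights}). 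The remaining manipulations are a routine differentiation under the integral sign, justified by the domination bound already established inside the proof of Lemma \ref{lemma.Jt(v)}.
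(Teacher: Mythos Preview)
Your proposal is correct and follows essentially the same route as the paper: reduce $\inf_{\bm{\omega}}\hat{m}_{\mathrm{TV},s,\bm{D\omega}}$ to $\inf_{\bm{\nu}}J(\bm{\nu})$ via the substitution $\bm{\nu}=t\bm{\omega}$, invoke Lemma~\ref{lemma.Jt(v)} for strict convexity, differentiability and attainment of the minimum, and read off (\ref{eq.TVoptimalweights}) from $\nabla J(\bm{\nu})=\bm{0}$. Your additional remark about the non-separability of $J$ across coordinates in the TV case (due to the primed boundary sets $\mathcal{S}_j'$) is accurate and a useful clarification that the paper's proof leaves implicit.
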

\begin{proof}
Define $\mathcal{C}:=\partial\|.\|_{\mathrm{TV}}(\bm{x})$ and use Lemma \ref{lemma.mhat tvweighted} to obtain:
\begin{align}
&\inf_{\bm{\omega}\in\mathbb{R}_{+}^L}\hat{m}_{\mathrm{TV},s,\bm{D\omega}}=\nonumber\\
&\inf_{\bm{\omega}\in\mathbb{R}_{+}^L}\underset{{t\in\mathbb{R}_{+}}}{\inf}\Psi_{t,\bm{D\omega}}(\bm{\alpha},\bm{\alpha}',\bm{\beta},\bm{\beta}',\bm{\gamma},\bm{\gamma}',\bm{\varsigma},\bm{\varsigma}',\bm{\xi},\breve{\bm{\xi}})\nonumber\\
&=\inf_{\bm{\nu}\in\mathbb{R}_{+}^L}J_{\mathrm{TV}}(\bm{\nu}),
\end{align}
where, $\Psi_{t,\bm{D\omega}}(\bm{\alpha},\bm{\alpha}',\bm{\beta},\bm{\beta}',\bm{\gamma},\bm{\gamma}',\bm{\varsigma},\bm{\varsigma}',\bm{\xi},\breve{\bm{\xi}})$ is defined in (\ref{eq.mhat wTVs}). Also, we used a change of variable $\bm{\nu}=t\bm{\omega}$ to convert multivariate optimization problem to a single variable optimization problem. Thus, the function $J_{\mathrm{TV}}(\bm{\nu})$ is obtained via the following equation:
\begin{align}
&J_{\mathrm{TV}}(\bm{\nu})=\sum_{i=1}^{L}\rho_i\bigg(\alpha_i+\beta_i[1+4\nu_i^2]+\nonumber\\
&(\gamma_i+\varsigma_i)\phi_1(\nu_i,\nu_i)+(1-\alpha_i-\beta_i-\gamma_i-\varsigma_i-\frac{1}{|\mathcal{P}_i|})\phi_2(2\nu_i)+\nonumber\\
&+\alpha'_i[1+(\nu_i-\nu_{i-1})^2]+\beta'_i[1+(\nu_i+\nu_{i-1})^2]+\nonumber\\
&\gamma'_i\phi_1(\nu_{i-1},\nu_i)+\varsigma'_i\phi_1(\nu_{i-1},\nu_i)+\nonumber\\
&(1-\alpha'_i-\beta'_i-\gamma'_i-\varsigma'_i-\xi_i)\phi_2(\nu_i+\nu_{i-1})+\nonumber\\
&(\xi_i+\breve{\xi}_i)(1+\nu_i^2)+ (\bar{\xi}_i+\bar{\breve{\xi}}_i)\phi_2(\nu_i)\bigg).
\end{align}
By considering Lemma \ref{lemma.Jt(v)} and $\bm{D}:=[\bm{1}_{\mathcal{P}_1},..., \bm{1}_{\mathcal{P}_L}]_{n-1\times L}$, the function $J_{\mathrm{TV}}(\bm{\nu})$ is continuous and strictly convex and thus the unique minimizer can be obtained using $\nabla J_{\mathrm{TV}}(\bm{\nu})=\bm{0}\in\mathbb{R}^{L}$ which leads to (\ref{eq.TVoptimalweights}).
\end{proof}
\begin{thm}\label{thm.TVweighted}
Let $\bm{x}\in\mathbb{R}^n$ be a non-uniform $s$-gradient sparse vector with parameters $\{\rho_i\}_{i=1}^L$ and $\{\alpha_i,\alpha_i',\beta_i,\beta_i',\gamma_i,\gamma_i',\varsigma_i,\varsigma_i'\}_{i=1}^L$. Then, the number of measurements required for $\mathsf{P}_{\mathrm{TV},\bm{D}\bm{\omega}^*}$ is upper bounded by the sum of number of measurements required for $\mathsf{P}_{\mathrm{TV}}$ to recover each $\{\bm{x}_{\mathcal{P}_i}\in\mathbb{R}^{n}\}_{i=1}^L$ by considering the variations between partitions and lower bounded by the whole number of measurements one needs to recover each $\{\bm{x}_{\mathcal{P}_i}\in\mathbb{R}^{n}\}_{i=1}^L$ separately ignoring the variations between partitions. Written in a mathematical form, we have:
\begin{align}\label{eq.TVtheoremerrorbound}
\sum_{i=1}^{L}m_{\mathrm{TV},\breve{s}_i}-\frac{2\kappa(\bm{\Omega}_d)}{\sqrt{nL}}\le m_{\mathrm{TV},s,\bm{w}^*}\le\sum_{i=1}^{L}\big(m_{\mathrm{TV},\hat{s}_i}+\frac{2\kappa(\bm{\Omega}_d)}{\sqrt{n\hat{s}_i}}\big),
\end{align}
where,
\begin{align}
&\hat{s}_i=\rho_i(n-1)[\alpha_i+\alpha'_i+\beta_i+\beta'_i+\gamma_i+\gamma'_i+\varsigma_i+\varsigma'_i+\xi_i]\nonumber\\
&\breve{s}_i=\rho_i(n-1)[\alpha_i+\beta_i+\gamma_i+\xi_i]\nonumber\\
&s=|\mathcal{S}_1\cup\mathcal{S}_2\cup\mathcal{S}_3\cup\mathcal{S}_6|.\nonumber
\end{align}
\end{thm}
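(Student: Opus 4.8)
The plan is to bracket $m_{\mathrm{TV},s,\bm{w}^*}$ from above and below, exactly in the spirit of the proofs of Theorems \ref{thm.l1weighted analysis} and \ref{thm.l12weighted}, the extra difficulty being that the TV objective does \emph{not} decouple over the partitions: the ``primed'' quantities $\alpha'_i,\beta'_i,\gamma'_i,\varsigma'_i$ and the terms $\phi_2(t(\omega_i+\omega_{i-1}))$ couple the consecutive weights $\omega_{i-1}$ and $\omega_i$. The starting point is that, by Proposition \ref{prop.uniqnessoptimalTVweights} together with the change of variables $\bm{\nu}=t\bm{\omega}$,
\[
\hat{m}_{\mathrm{TV},s,\bm{w}^*}=\inf_{\bm{\omega}\in\mathbb{R}_{+}^L}\hat{m}_{\mathrm{TV},s,\bm{D\omega}}=\inf_{\bm{\nu}\in\mathbb{R}_{+}^L}J_{\mathrm{TV}}(\bm{\nu}),
\]
while by Lemma \ref{lemma.mhat tv}, $\hat{m}_{\mathrm{TV},\breve{s}_i}$ and $\hat{m}_{\mathrm{TV},\hat{s}_i}$ are the single-partition upper bounds $\inf_{t\ge0}\Psi_t(\cdot)$ for a length-$|\mathcal{P}_i|$ block, carrying respectively only the variations internal to $\mathcal{P}_i$ (sparsity $\breve{s}_i$) or those together with the two interface variations treated as endpoint variations (sparsity $\hat{s}_i$).

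For the upper bound I would write $J_{\mathrm{TV}}(\bm{\nu})=\sum_{i=1}^{L}\rho_i\,h_i(\nu_{i-1},\nu_i)$ and estimate the infimum from above by evaluating at the test vector $\bm{\nu}^{\sharp}$ whose $i$-th coordinate minimises the decoupled single-partition-$i$ objective with sparsity $\hat{s}_i$. One then compares each coupled boundary term of $h_i$ with the corresponding endpoint term of Lemma \ref{lemma.mhat tv}: the quadratics $1+(\nu_i^{\sharp}-\nu_{i-1}^{\sharp})^2$ and $1+(\nu_i^{\sharp}+\nu_{i-1}^{\sharp})^2$ and the terms $\phi_1(\nu_{i-1}^{\sharp},\nu_i^{\sharp})$, $\phi_2(\nu_i^{\sharp}+\nu_{i-1}^{\sharp})$ are dominated by the endpoint contributions built from $\nu_i^{\sharp}$ alone, using that $\phi_2$ is decreasing and $\phi_1$ is monotone in its first argument. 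This charges each interface variation at the $\mathcal{P}_{i-1}/\mathcal{P}_i$ junction to $\mathcal{P}_i$ and yields $\hat{m}_{\mathrm{TV},s,\bm{w}^*}\le\sum_{i=1}^{L}\hat{m}_{\mathrm{TV},\hat{s}_i}$; combining with $m_{\mathrm{TV},\bm{w}}\le\hat{m}_{\mathrm{TV},\bm{w}}$ from Proposition \ref{prop.error for mhat TVw} and $\hat{m}_{\mathrm{TV},\hat{s}_i}\le m_{\mathrm{TV},\hat{s}_i}+\tfrac{2\kappa(\bm{\Omega}_d)}{\sqrt{n\hat{s}_i}}$ from Proposition \ref{prop.error for mhat TV} produces the right-hand inequality of (\ref{eq.TVtheoremerrorbound}).

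For the lower bound the argument is cleaner: discarding the nonnegative boundary terms of $J_{\mathrm{TV}}(\bm{\nu})$ leaves $J_{\mathrm{TV}}(\bm{\nu})\ge\sum_{i=1}^{L}\rho_i\,g_i(\nu_i)$ with $g_i$ the single-partition objective retaining only the internal variations (sparsity $\breve{s}_i$); this reduced bound decouples, so its infimum over $\bm{\nu}$ is the term-by-term sum $\sum_{i=1}^{L}\hat{m}_{\mathrm{TV},\breve{s}_i}$, whence $\hat{m}_{\mathrm{TV},s,\bm{w}^*}\ge\sum_{i=1}^{L}\hat{m}_{\mathrm{TV},\breve{s}_i}\ge\sum_{i=1}^{L}m_{\mathrm{TV},\breve{s}_i}$, the last step again by Proposition \ref{prop.error for mhat TV}. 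Feeding in $m_{\mathrm{TV},s,\bm{w}^*}\ge\hat{m}_{\mathrm{TV},s,\bm{w}^*}-\tfrac{2\kappa(\bm{\Omega}_d)}{\sqrt{(n-1)L}}$ from Proposition \ref{prop.error for mhat TVw} gives the left-hand inequality.

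The hard part will be the upper bound, specifically making rigorous the domination of the coupled interface terms ($\phi_1$ of two different arguments, and $\phi_2$ evaluated at $\omega_i+\omega_{i-1}$) by the decoupled endpoint terms of Lemma \ref{lemma.mhat tv}; this needs precise monotonicity statements for $\phi_1$ and $\phi_2$ and careful bookkeeping of which junction variation is assigned to which partition. Once those comparison inequalities are in hand, the rest is the routine bracketing already carried out for Theorems \ref{thm.l1weighted analysis} and \ref{thm.l12weighted}.
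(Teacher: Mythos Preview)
Your lower-bound argument and the paper's coincide: drop the nonnegative interface contributions in $J_{\mathrm{TV}}(\bm{\nu})$, whereupon the objective decouples into $\sum_i \rho_i g_i(\nu_i)$ with $g_i$ carrying only the internal sparsity $\breve{s}_i$, and then apply Propositions \ref{prop.error for mhat TV} and \ref{prop.error for mhat TVw}. The paper phrases this via the elementary $\inf_{\mathcal A}(f+g)\ge\inf_{\mathcal A}f+\inf_{\mathcal A}g$, but it is the same argument.

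For the upper bound the paper takes a simpler route than your monotonicity comparison. Instead of evaluating at an a~priori test vector $\bm{\nu}^{\sharp}$ and then dominating each coupled term, the paper restricts each summand $h_i(\nu_{i-1},\nu_i)$ to the diagonal $\nu_{i-1}=\nu_i$. On the diagonal one has identically $1+(\nu_i-\nu_{i-1})^2=1$, $1+(\nu_i+\nu_{i-1})^2=1+4\nu_i^2$, $\phi_1(\nu_{i-1},\nu_i)=\phi_1(\nu_i,\nu_i)$ and $\phi_2(\nu_i+\nu_{i-1})=\phi_2(2\nu_i)$, so every primed contribution collapses \emph{exactly} into its unprimed counterpart and $h_i(\nu_i,\nu_i)$ becomes precisely the single-partition functional $\Psi_{\nu_i}$ of Lemma \ref{lemma.mhat tv} with sparsity $\hat{s}_i$. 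No monotonicity of $\phi_1$ or $\phi_2$ is ever invoked; the identification is algebraic.

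Your sketch has a real gap at exactly the place you flag as hard. With $\bm{\nu}^{\sharp}$ the vector of componentwise minimizers, the $\alpha'_i$ term contributes $\alpha'_i\bigl[1+(\nu_i^{\sharp}-\nu_{i-1}^{\sharp})^2\bigr]$, whereas in $\hat m_{\mathrm{TV},\hat s_i}$ this index is absorbed into the $\mathcal S_1$-type count with coefficient $\alpha'_i\cdot 1$; no monotonicity argument yields $1+(\nu_i^{\sharp}-\nu_{i-1}^{\sharp})^2\le 1$ when $\nu_{i-1}^{\sharp}\ne\nu_i^{\sharp}$. Likewise the $\beta'_i$ comparison $(\nu_i^{\sharp}+\nu_{i-1}^{\sharp})^2\le 4(\nu_i^{\sharp})^2$ fails whenever $\nu_{i-1}^{\sharp}>\nu_i^{\sharp}$. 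The diagonal restriction used in the paper is what makes these comparisons trivial, and replacing your test-vector-plus-monotonicity plan by that device closes the gap.
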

\begin{proof}
As previously defined in (\ref{eq.mhat wTVs}), with optimal weights, the upper bound for normalized number of measurements required for $\mathsf{P}_{\mathrm{TV},\bm{D}\bm{\omega}^*}$ to succeed is given by:
\begin{align}\label{eq.TVupperpart1}
&\hat{m}_{\mathrm{TV},s,\bm{w}^*}=\inf_{\bm{\omega}\in\mathbb{R}_{+}^L}\hat{m}_{\mathrm{TV},s,\bm{D\omega}}=\nonumber\\
&\sum_{i=1}^{L}\inf_{\nu_i,\nu_{i-1}\in\mathbb{R_{+}}}\Bigg[ \rho_i\Bigg(\alpha_i+\beta_i[1+4\nu_i^2]+(\gamma_i+\varsigma_i)\phi_1(\nu_i,\nu_i)+\nonumber\\
&(1-\alpha_i-\beta_i-\gamma_i-\varsigma_i-\frac{1}{|\mathcal{P}_i|})\phi_2(2\nu_i)+\nonumber\\
&\alpha'_i[1+(\nu_i-\nu_{i-1})^2]+\beta'_i[1+(\nu_i+\nu_{i-1})^2]+\nonumber\\
&\gamma'_i\phi_1(\nu_{i-1},\nu_i)+\varsigma'_i\phi_1(\nu_{i-1},\nu_i)+\nonumber\\
&(1-\alpha'_i-\beta'_i-\gamma'_i-\varsigma'_i-\xi_i)\phi_2(\nu_i+\nu_{i-1})+(\xi_i+\breve{\xi}_i)(1+\nu_i^2)\nonumber\\
&+(\bar{\xi}_i+\bar{\breve{\xi}}_i)\phi_2(\nu_i)\Bigg)\Bigg]\le \sum_{i=1}^{L}\inf_{\nu_i\in\mathbb{R_{+}}}\Bigg(\frac{|\mathcal{P}_i\cap \{\mathcal{S}_1\cup\mathcal{S}_1'\}|}{n-1}+\nonumber\\
&\frac{|\mathcal{P}_i\cap \{\mathcal{S}_2\cup \mathcal{S}_2'\}|}{n-1}[1+4\nu_i^2]+\nonumber\\
&\frac{|\mathcal{P}_i\cap\{ \mathcal{S}_3\cup\mathcal{S}_4\cup\mathcal{S}_3'\cup\mathcal{S}_4'\}|}{n-1}\phi_1(\nu_i,\nu_i)\nonumber\\
&+\frac{|\mathcal{P}_i\cap\{\mathcal{S}_5\cup\mathcal{S}_5'\}|}{n-1}\phi_2(2\nu_i)+\frac{|\mathcal{P}_i\cap\{\mathcal{S}_6\cup\mathcal{S}_7\}|}{n-1}(1+\nu_i^2)\nonumber\\
&+\frac{|\mathcal{P}_i\cap\{\tilde{\mathcal{S}}_6\cup\tilde{\mathcal{S}}_7\}|}{n-1}\phi_2(\nu_i)\Bigg)=\nonumber\\
&\sum_{i=1}^{L}\inf_{\nu_i\in\mathbb{R_{+}}}\Psi_{t}\big(\frac{|P_i\cap\{\mathcal{S}_1\cup\mathcal{S}_1'\}|}{n-1},\frac{|\mathcal{P}_i\cap \{\mathcal{S}_2\cup \mathcal{S}_2'\}|}{n-1}\nonumber\\
&,\frac{|\mathcal{P}_i\cap\{ \mathcal{S}_3\cup\mathcal{S}_4\cup\mathcal{S}_3'\cup\mathcal{S}_4'\}|}{n-1},\frac{|\mathcal{P}_i\cap\{\mathcal{S}_6\cup\mathcal{S}_7\}|}{n-1}\big)\nonumber\\
&=\sum_{i=1}^{L}\hat{m}_{\mathrm{TV},\hat{s}_i}.
\end{align}
On the other hand, by using the relation $\underset{\mathcal{A}}{\inf}(f+g)\ge\underset{\mathcal{A}}{\inf}(f)+\underset{\mathcal{A}}{\inf}(g)~:~\forall f,g:\mathcal{A}\rightarrow\mathbb{R}$, we have:
\begin{align}
&\inf_{\nu_i,\nu_{i-1}\in\mathbb{R_{+}}}\Bigg[ \rho_i\Bigg(\alpha_i+\beta_i[1+4\nu_i^2]+(\gamma_i+\varsigma_i)\phi_1(\nu_i,\nu_i)+\nonumber\\
&(1-\alpha_i-\beta_i-\gamma_i-\varsigma_i-\frac{1}{|\mathcal{P}_i|})\phi_2(2\nu_i)+\alpha'_i[1+(\nu_i-\nu_{i-1})^2]\nonumber\\
&+\beta'_i[1+(\nu_i+\nu_{i-1})^2]+\gamma'_i\phi_1(\nu_{i-1},\nu_i)+\varsigma'_i\phi_1(\nu_{i-1},\nu_i)+\nonumber\\
&(1-\alpha'_i-\beta'_i-\gamma'_i-\varsigma'_i-\xi_i)\phi_2(\nu_i+\nu_{i-1})+(\xi_i+\breve{\xi}_i)(1+\nu_i^2)\nonumber\\
&+(\bar{\xi}_i+\bar{\breve{\xi}}_i)\phi_2(\nu_i)\Bigg)\Bigg]\nonumber\\
&\ge \inf_{\nu_i\in\mathbb{R_{+}}}\rho_i\bigg(\alpha_i+\beta_i[1+4\nu_i^2]+(\gamma_i+\varsigma_i)\phi_1(\nu_i,\nu_i)+\nonumber\\
&(1-\alpha_i-\beta_i-\gamma_i-\varsigma_i-\frac{1}{|\mathcal{P}_i|})\phi_2(2\nu_i)+(\xi_i+\breve{\xi}_i)(1+\nu_i^2)\nonumber\\
&+(\bar{\xi}_i+\bar{\breve{\xi}}_i)\phi_2(\nu_i)\bigg):=m_{\mathrm{TV},\breve{s}_i}.
\end{align}
As a consequence, regarding the error bounds obtained in Propositions \ref{prop.error for mhat TV} and \ref{prop.error for mhat TVw}, we reach (\ref{eq.TVtheoremerrorbound}).
\end{proof}
\section{Simulations}\label{section.simulation}
In this section, we numerically verify our theoretical bounds on the optimal weights and number of measurements required for successful recovery in the three inherent structures. Remark that, MATLAB package cvx \cite{cvx} is used to implement the optimization problems.
\subsection{Entrywise sparsity in a redundant dictionary}
In this subsection, we intend to find how the required  number of Gaussian linear measurements for successful recovery of $\mathsf{P}_{1,\Omega}$, scales with the signal sparsity $s$ in the redundant discrete cosine transform (DCT) dictionary $\Omega\in\mathbb{R}^{p\times n}$. For this purpose, we consider a grid of $(m,s)$ values and count number of successful recovery of $\mathsf{P}_{1,\Omega}$ in $100$ trials to find the empirical probability. The heatmap in Figure \ref{fig.heatmapl1analysis} shows the empirical probability that is consistent with the theory obtained by ($\ref{eq.mhatps}$). In a separate experiment, we generate a random vector $\bm{x}\in\mathbb{R}^n$ that is $10$-sparse in the DFT domain $\bm{\Omega}\in \mathbb{R}^{100\times 90}$ and consider two sets $\mathcal{P}_1$ and $\mathcal{P}_2$ in the DFT domain $\bm{\Omega}\in \mathbb{R}^{100\times 90}$ that partition $[100]$ with $\alpha_1=\frac{7}{10}$ and $\alpha_2=\frac{3}{90}$. In each trial, we recover the signal from Gaussian i.i.d linear measurements by solving $\mathsf{P}_{1,\bm{\Omega}}$ and $\mathsf{P}_{1,\bm{\Omega},\bm{D}\bm{\omega}^*}$. The optimal weights $\bm{\omega}^*\in\mathbb{R}^2$ are obtained by solving the equation (\ref{eq.optimalweightsl1analysis}) using MATLAB function \textsf{fzero}. Figure \ref{fig.l1analysis} shows that $\mathsf{P}_{1,\bm{\Omega},\bm{D}\bm{\omega}^*}$ with optimal weights needs less measurements than $\mathsf{P}_{1,\bm{\Omega}}$.
\begin{figure}[t]
\centering
\includegraphics[scale=.54]{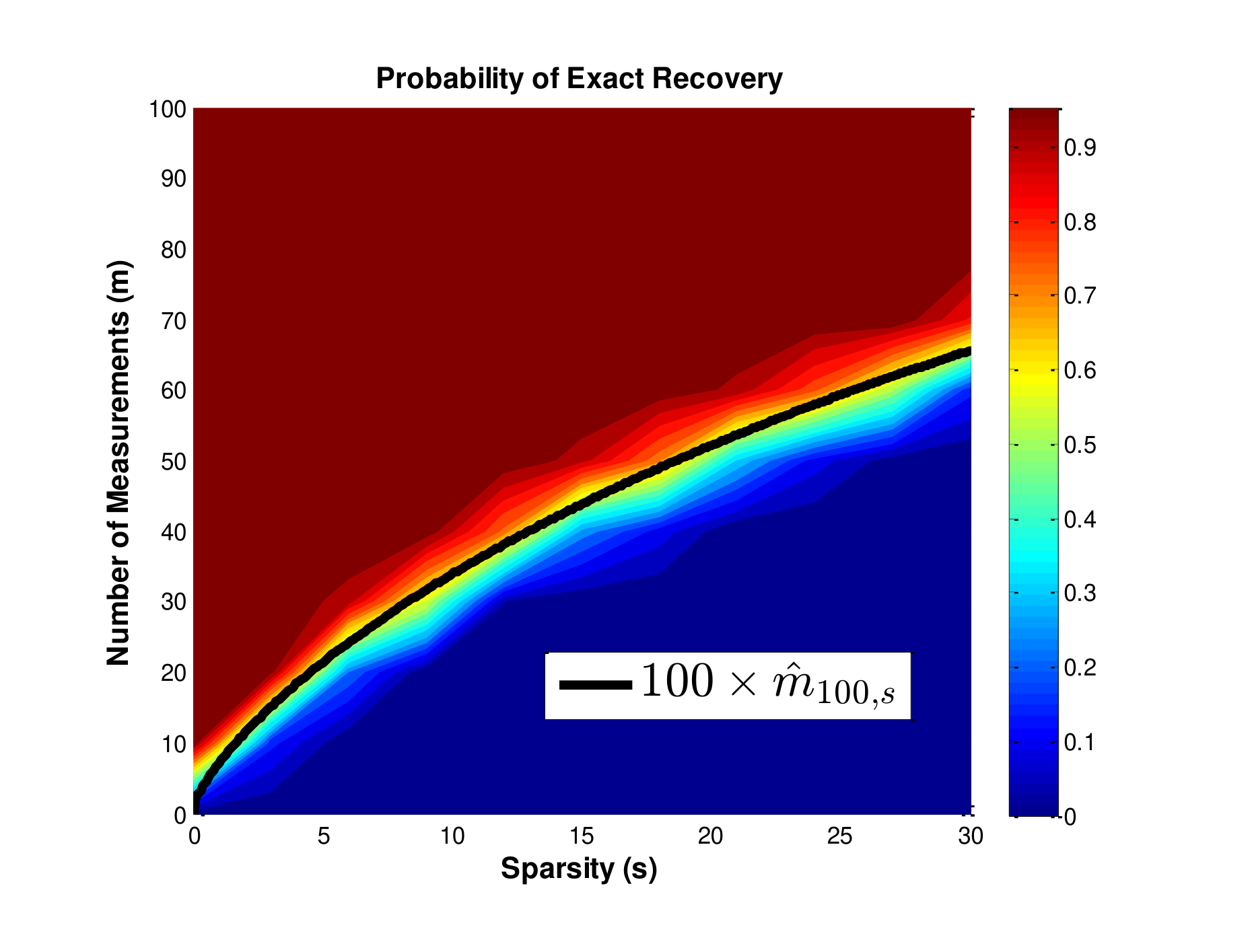}
\caption{This plot shows the empirical probability that $\mathsf{P}_{1,\Omega}$ recovers $\bm{x}\in\mathbb{R}^{90}$ with $s$ nonzero entries in redundant DCT basis from $m$ Gaussian linear measurements. The black line shows the number of measurements obtained by Lemma \ref{lemma.l1anaupper}.}
\label{fig.heatmapl1analysis}
\end{figure}
\begin{figure}[t]
\centering
\includegraphics[scale=.5]{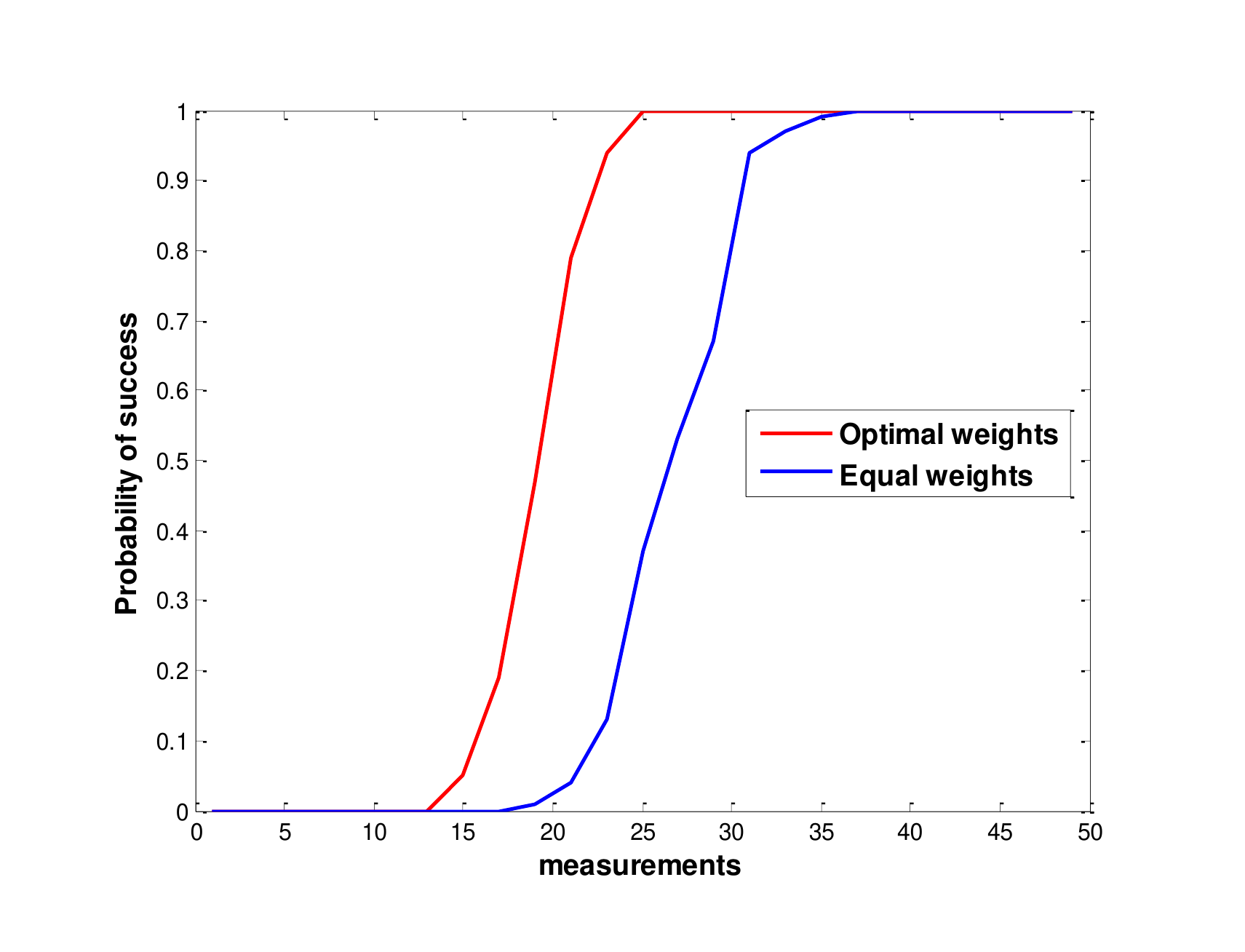}
\caption{This plot shows the probability that $\mathsf{P}_{1,\bm{\Omega}}$ and $\mathsf{P}_{1,\bm{\Omega},\bm{D}\bm{\omega}^*}$ succeed to recover $\bm{x}\in\mathbb{R}^{90}$ from Gaussian linear measurements. The parameters we used in this figure, are: $n=100$, $\sigma=\rho_1=.1,~\alpha_1=\frac{7}{10}$, $\rho_2=.9,~\alpha_2=\frac{3}{90}$. The optimal weights obtained via (\ref{eq.optimalweightsl1analysis}) with the aforementioned parameters are $\bm{\omega}=[.1599;1]$ }
\label{fig.l1analysis}
\end{figure}
\subsection{Block sparsity}
In this subsection, we simulate how the required number of measurements for successful recovery of $\mathsf{P}_{1,2}$ scales with block sparsity. The heatmap in Figure \ref{fig.heatmapl12} shows the empirical probability of success which is consistent with the theory obtained by (\ref{lemma.mhat qs}). In the second experiment, we generate an $s=50$-block sparse random vector $\bm{x}\in\mathbb{R}^{1280}$ with $128$ blocks of equal size $10$. Then, we consider three sets $\mathcal{P}_1$, $\mathcal{P}_2$ and $\mathcal{P}_3$ with $\alpha_1=\frac{27}{50},~\alpha_2=\frac{18}{20},~\alpha_3=\frac{5}{58}$ that partition the set of blocks $[128]$. We implement $100$ trials. In each trials, we solve $\mathsf{P}_{1,2,\bm{\omega}^*}$ with optimal weights $\bm{\omega}^*$ and recover $\bm{x}\in\mathbb{R}^{1280}$ from $m$ Gaussian linear measurements. The optimal weights are obtained via the equation (\ref{eq.12optimalweights}) by MATLAB function \textsf{fzero}. Figure \ref{fig.l12} shows that $\mathsf{P}_{1,2,\bm{D}\bm{\omega}^*}$ with optimal weights needs less measurements than $\mathsf{P}_{1,2}$ for recovering $\bm{x}\in\mathbb{R}^{1280}$.
 \begin{figure}[t]
 \centering
\includegraphics[scale=.58]{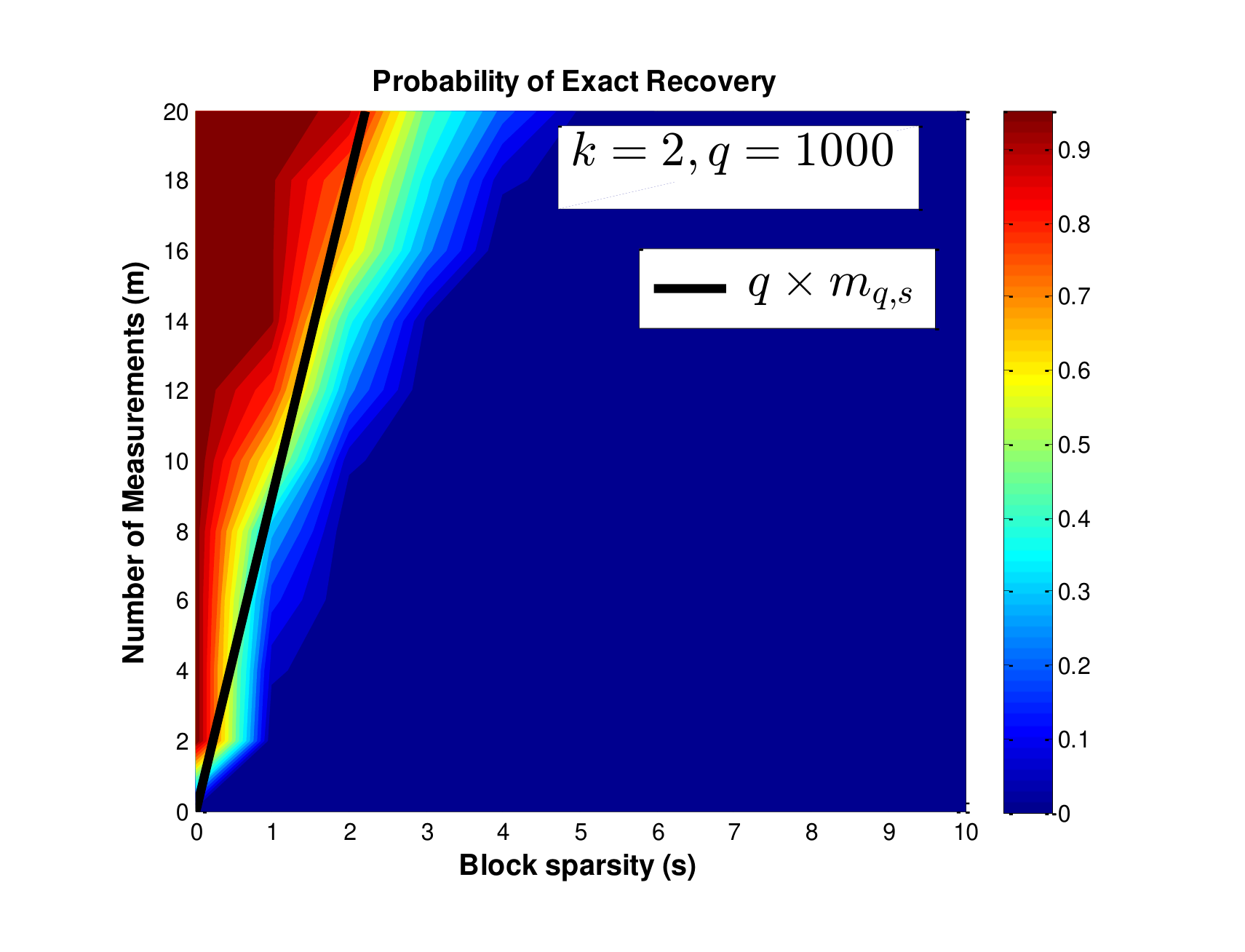}
\caption{This plot shows the empirical probability that $\mathsf{P}_{1,2}$ recovers $\bm{x}\in\mathbb{R}^{2000}$ with $s$ blocks with nonzero $\ell_2$ norm from $m$ Gaussian linear measurements. The black line shows the number of measurements obtained by Lemma \ref{lemma.mhat qs}.}
\label{fig.heatmapl12}
\end{figure}
\begin{figure}[t]
\centering
\includegraphics[scale=.5]{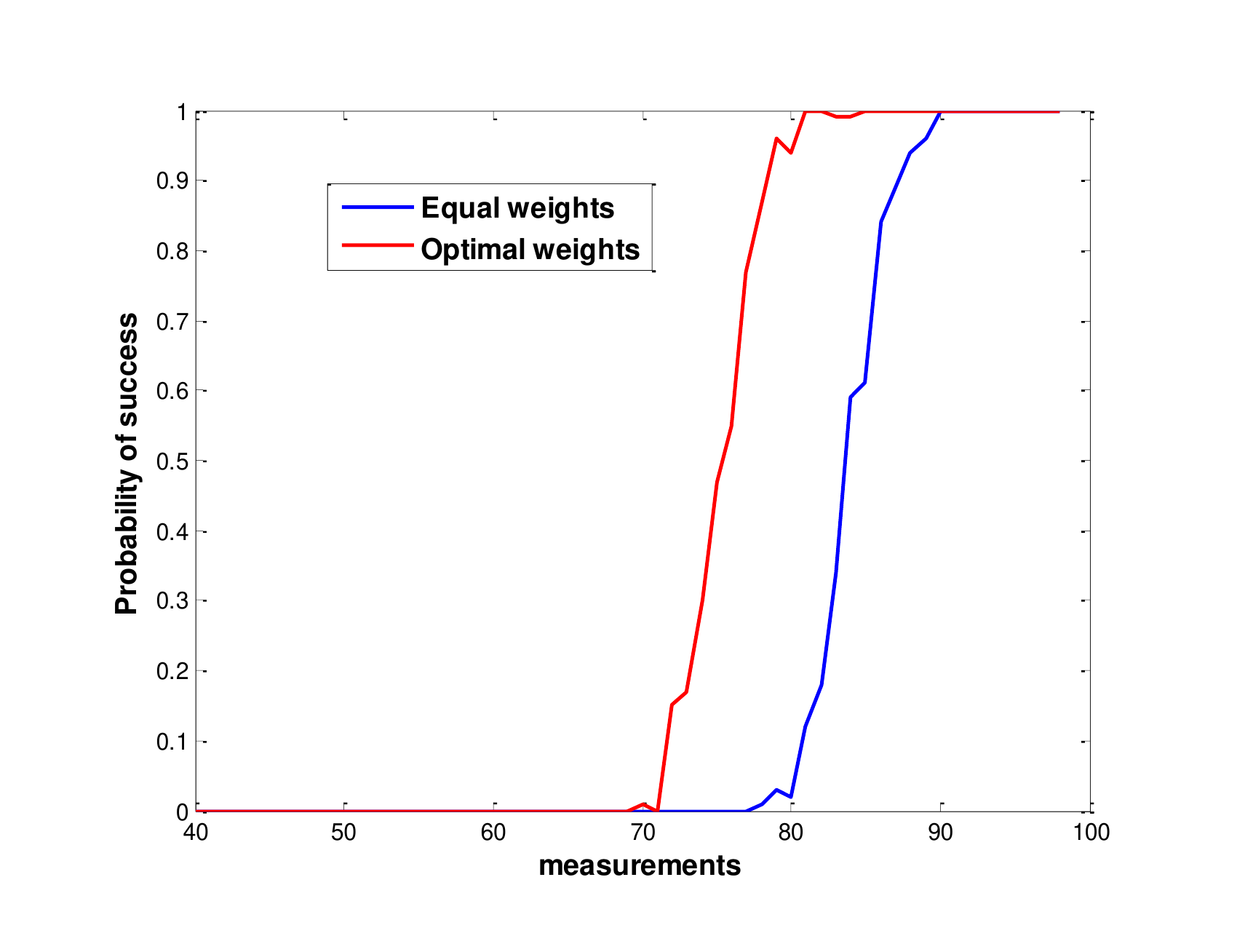}
\caption{This plot shows the probability that $\mathsf{P}_{1,2}$ and $\mathsf{P}_{1,2,\bm{D}\bm{\omega}^*}$ succeed to recover $\bm{x}\in\mathbb{R}^{1280}$ from Gaussian linear measurements. The parameters we used in this figure, are: $q=128$, $\sigma=\rho_1=\frac{50}{128}~\alpha_1=\frac{27}{50}$, $\rho_2=\frac{20}{128},~\alpha_2=\frac{9}{10},~\rho_3=\frac{29}{64},~\alpha_3=\frac{5}{58}$. The optimal weights obtained via (\ref{eq.12optimalweights}) with the aforementioned parameters are $\bm{\omega}=[.46317;.100671;1]$ }
\label{fig.l12}
\end{figure}
\subsection{Gradient sparsity}
In this subsection, we investigate the probability that $\mathsf{P}_{\mathrm{TV}}$ and $\mathsf{P}_{\mathrm{TV},\bm{D}\bm{\omega}^*}$ successfully recover an $s$-gradient sparse vector from i.i.d Gaussian linear measurements. We consider two partitions $\mathcal{P}_1$ and $\mathcal{P}_2$ with known probability of intersection with consecutive and individual supports. In the first experiment, we consider a strict nonuniform case with random partition. In this case, we fixed the accuracy of $\mathcal{P}_1$ and $\mathcal{P}_2$ at  $\frac{|\mathcal{P}_1\cap\mathcal{S}_g|}{|\mathcal{P}_1|}=\frac{5}{6}$ and $\frac{|\mathcal{P}_2\cap\mathcal{S}_g|}{|\mathcal{P}_2|}=\frac{1}{13}$, respectively. The set $\mathcal{P}_1$ is most probable to include the elements of the gradient support $\mathcal{S}_g$ than the set $\mathcal{P}_2$. We calculate the optimal weights $\bm{\omega}^*$ by solving the relation (\ref{eq.TVoptimalweights}) using MATLAB function \textsf{fsolve} for each measurement and trial. We repeat the experiment $20$ times. Figure \ref{fig.TV} indicates the empirical probability computed over $20$ trials. As it is clear from Figure \ref{fig.TV}, $\mathsf{P}_{\mathrm{TV},\bm{D}\bm{\omega}^*}$ needs less measurement than $\mathsf{P}_{\mathrm{TV}}$ to reach the same success rate. In the second experiment, we consider a two partitioned $4$-gradient sparse vector $\bm{x}\in\mathbb{R}^{21}$ with random accuracy and partition size \footnote{The smooth signal may be uniform or non-uniform.}. We repeat the experiment $50$ times and calculate the optimal weights $\bm{\omega}^*$ via (\ref{eq.TVoptimalweights}). Figure \ref{fig.TVrandom} indicates the empirical probability computed over $50$ trials.
\begin{figure}[t]
\centering
\includegraphics[scale=.5]{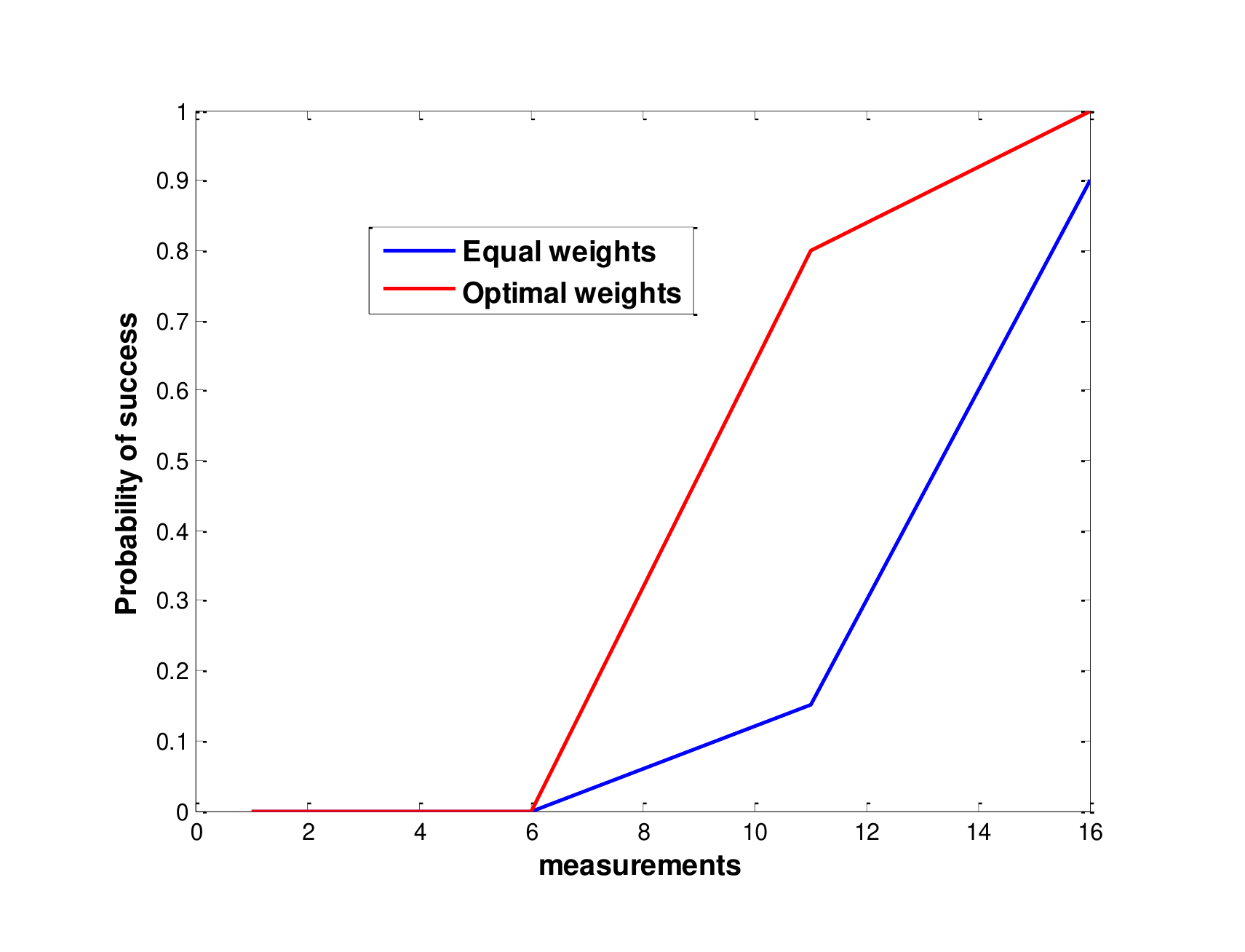}
\caption{This plot shows the probability that $\mathsf{P}_{\mathrm{TV}}$ and $\mathsf{P}_{\mathrm{TV},\bm{D}\bm{\omega}^*}$ succeed to recover $\bm{x}\in\mathbb{R}^{20}$ from i.i.d Gaussian linear measurements. The parameters we used in this figure, are: $n=20$, $\sigma=\rho_1=\frac{6}{19}~\frac{\mathcal{P}_1\cap\mathcal{S}_g}{|\mathcal{P}_1|}=\frac{5}{6}$ and $\rho_2=\frac{13}{19},\frac{\mathcal{P}_2\cap\mathcal{S}_g}{|\mathcal{P}_2|}~=\frac{1}{13}$. The optimal weights are obtained via (\ref{eq.TVoptimalweights}) with the aforementioned parameters in each trial using \textsf{fsolve} MATLAB function.}
\label{fig.TV}
\end{figure}
\begin{figure}[t]
\centering
\includegraphics[scale=.5]{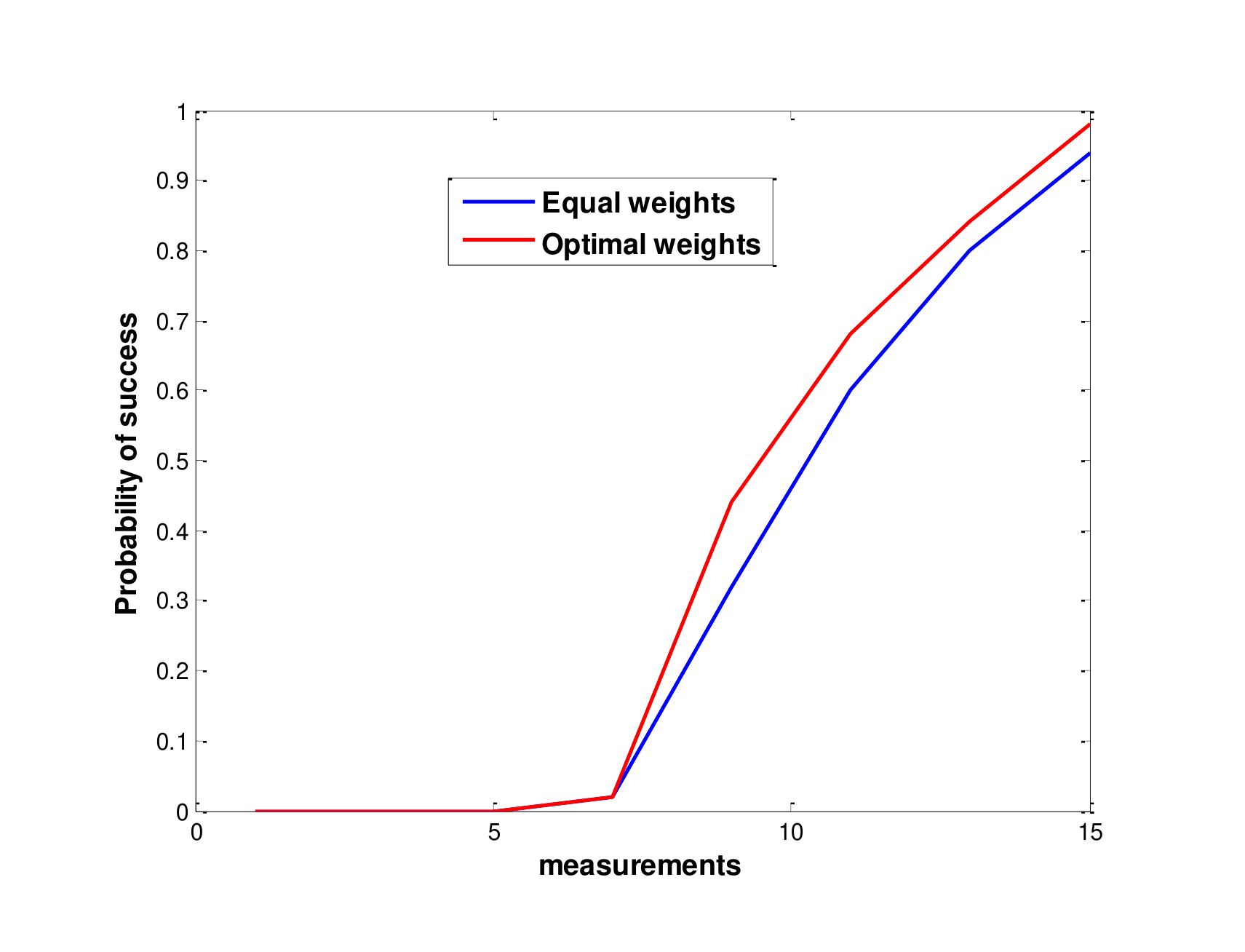}
\caption{This plot shows the probability that $\mathsf{P}_{\mathrm{TV}}$ and $\mathsf{P}_{\mathrm{TV},\bm{D}\bm{\omega}^*}$ succeed to recover an $4$-gradient sparse $\bm{x}\in\mathbb{R}^{21}$ from i.i.d Gaussian linear measurements. The probability is computed over $20$ trials for each measurement. In each trial, we change the accuracy and size of the sets $\mathcal{P}_1$ and $\mathcal{P}_2$. The optimal weights are obtained via (\ref{eq.TVoptimalweights}) in each trial using \textsf{fsolve} MATLAB function.}
\label{fig.TVrandom}
\end{figure}
\section{Conclusion}\label{section.conclusion}
In this paper, we investigated structured signal recovery with extra prior information for three popular class of low-complexity models: entrywise, block and gradient sparse vectors. We found explicit formulas for number of Gaussian linear measurements needed to recover an structured vector. An important question central to the weighted convex minimization problems is the optimal choices of weights. In this paper, we provide a general framework based on techniques developed in conic integral geometry to find optimal weights. Also, we proved that the optimal weights are unique up to a positive scaling in the aforementioned low-complexity models. Additionally, we have included numerical experiments that verify out theoretical results.

\appendices
\section{Proof of Proposition \ref{prop.upperforstatis}}\label{appendix.1}
\begin{proof}In this appendix, we prove Proposition \ref{prop.upperforstatis}. In the following, we relate $\mathcal{D}(\|\bm{\Omega}.\|_1,\bm{x})$ to $\mathcal{D}(\|.\|_1,\bm{\Omega x})$.
\begin{align}\label{eq.descent cone relation}
&\mathcal{D}(\|\bm{\Omega}.\|_1,\bm{x})^{\circ\circ}=closure(\mathcal{D}(\|\bm{\Omega}.\|_1,\bm{x}))\nonumber\\
&\mathcal{D}(\|\Omega.\|_1,\bm{x})=\mathrm{cone}^{\circ}(\bm{\Omega}^T\partial\|.\|_1(\bm{\Omega x}))\nonumber\\
&\{\bm{w}\in\mathbb{R}^n:\langle \bm{w},\bm{\Omega}^T\bm{v}\rangle\le0~:~\forall \bm{v}\in \mathrm{cone}(\partial\|.\|_1(\bm{\Omega x}))\}=\nonumber\\
&\{\bm{w}\in\mathbb{R}^n:~\bm{\Omega w}\in \mathrm{cone}^{\circ}(\partial\|.\|_1(\bm{\Omega x}))\}=\nonumber\\
&\{\bm{w}\in\mathbb{R}^n:~\bm{\Omega w}\in\mathcal{D}(\|.\|_1,\bm{\Omega x})\}\nonumber\\
&\subset \bm{\Omega}^\dagger\mathcal{D}(\|.\|_1,\bm{\Omega x}),
\end{align}
where in the above equation, we used the facts that $\mathcal{D}(\|\bm{\Omega}.\|_1,\bm{x})$ is a closed convex set and $\bm{\Omega}^\dagger\bm{\Omega}=\bm{I}$ for redundant dictionary $\bm{\Omega}\in\mathbb{R}^{p\times n}$ with $p\ge n$. In the following, we state Sudakov-Fernique inequality which helps to control the superimum of a random variable by that of a simpler random variable and is used to find an upper bound for $\omega(\mathcal{D}(\|\bm{\Omega} .\|_1,\bm{x})\cap \mathds{B}_1^n)$.
\begin{thm}\label{thm.sodakov}(Sudakov-Fernique inequality).
 Let $T$ be a set and $\mathbf{X}=(X_t)_{t\in T}$ and $\mathbf{Y}=(Y_t)_{t\in T}$ be Gaussian processes satisfying $\mathds{E}[X_t]=\mathds{E}[Y_t]~:~\forall t\in T$ and $\mathds{E}|X_t-X_s|^2\le\mathds{E}|Y_t-Y_s|^2~:~\forall s,t\in T$, then
 \begin{align}
 \mathds{E}\sup_{t\in T}X_t\le\mathds{E}\sup_{t\in T}Y_t
 \end{align}
\end{thm}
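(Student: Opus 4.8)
The plan is to prove Theorem~\ref{thm.sodakov} by the classical Gaussian interpolation (the ``smart path'') technique. \textbf{Step 1 (reduction to a finite index set).} Since $\mathds{E}\sup_{t\in T}X_t$ equals the supremum of $\mathds{E}\max_{t\in T_0}X_t$ over finite $T_0\subseteq T$ (and likewise for $Y$), and the two hypotheses pass to every finite subset, it suffices to treat $T=\{1,\dots,n\}$ and then let $T_0$ exhaust $T$. \textbf{Step 2 (centering and coupling).} Write $X_i=m_i+\tilde X_i$ and $Y_i=m_i+\tilde Y_i$ with $\tilde X,\tilde Y$ centered; the equal-mean assumption makes this decomposition consistent, and a one-line computation turns the increment hypothesis into $\mathds{E}(\tilde X_i-\tilde X_j)^2\le\mathds{E}(\tilde Y_i-\tilde Y_j)^2$ for all $i,j$. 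Realize $\tilde X$ and $\tilde Y$ as \emph{independent} Gaussian vectors on a common probability space.

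\textbf{Step 3 (smoothing the maximum).} Replace $x\mapsto\max_i x_i$ by the log-sum-exp surrogate $f_\beta(x):=\beta^{-1}\log\sum_i e^{\beta x_i}$, which is smooth, obeys $0\le f_\beta(x)-\max_i x_i\le\beta^{-1}\log n$, and has Hessian $\nabla^2 f_\beta(x)=\beta(\mathrm{diag}(p)-pp^{T})$ with $p_i=e^{\beta x_i}/\sum_k e^{\beta x_k}$; hence $\partial_i\partial_j f_\beta\le 0$ for $i\ne j$ and every row of $\nabla^2 f_\beta$ sums to zero. \textbf{Step 4 (interpolate and differentiate).} Put $Z(u)=\sqrt u\,\tilde X+\sqrt{1-u}\,\tilde Y+m$ for $u\in[0,1]$ and $\phi(u)=\mathds{E}f_\beta(Z(u))$, so $\phi(1)=\mathds{E}f_\beta(X)$ and $\phi(0)=\mathds{E}f_\beta(Y)$. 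Differentiating inside the expectation and applying Gaussian integration by parts (Stein's lemma) to the centered term $\tfrac{1}{2\sqrt u}\tilde X_i-\tfrac{1}{2\sqrt{1-u}}\tilde Y_i$ — whose covariance with $Z_j(u)$ supplies a matching factor $\sqrt u$, respectively $\sqrt{1-u}$, cancelling the blow-up — gives
\begin{align}
\phi'(u)=\tfrac12\sum_{i,j}\big(\mathds{E}[\tilde X_i\tilde X_j]-\mathds{E}[\tilde Y_i\tilde Y_j]\big)\,\mathds{E}\!\left[\partial_i\partial_j f_\beta(Z(u))\right].
\end{align}

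\textbf{Step 5 (sign of $\phi'$ and limit).} Using that every row of the Hessian sums to zero, rewrite the sum as $-\tfrac14\sum_{i,j}\big(\mathds{E}(\tilde X_i-\tilde X_j)^2-\mathds{E}(\tilde Y_i-\tilde Y_j)^2\big)\mathds{E}[\partial_i\partial_j f_\beta(Z(u))]$; for $i\ne j$ both factors are $\le 0$ by Steps 2 and 3, while for $i=j$ the first factor vanishes, so $\phi'(u)\le 0$ and thus $\mathds{E}f_\beta(X)=\phi(1)\le\phi(0)=\mathds{E}f_\beta(Y)$. Letting $\beta\to\infty$ and invoking the uniform bound of Step 3 together with integrability of $\max_i|X_i|$ and $\max_i|Y_i|$ yields $\mathds{E}\max_i X_i\le\mathds{E}\max_i Y_i$, and Step 1 completes the argument.

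I expect the main obstacle to be making Step 4 rigorous: one must justify differentiation under the expectation and, above all, the legitimacy of Gaussian integration by parts near the endpoints $u\in\{0,1\}$, where the coefficients $1/\sqrt u$ and $1/\sqrt{1-u}$ are unbounded. This is handled by observing that $\phi$ is continuous on $[0,1]$ and $C^1$ on $(0,1)$ — the derivatives of $f_\beta$ being bounded, all dominating functions are immediate — then applying the fundamental theorem of calculus on $[\varepsilon,1-\varepsilon]$ and letting $\varepsilon\to 0$; everything else reduces to the routine algebra of the row-sum-zero identity.
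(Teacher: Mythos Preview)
Your proof is correct and follows the standard Gaussian interpolation (Kahane--Slepian smart-path) argument; the five steps are laid out cleanly and the handling of the endpoint issue in Step~4 is adequate.

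However, there is nothing to compare against: the paper does \emph{not} prove Theorem~\ref{thm.sodakov}. It merely states the Sudakov--Fernique inequality as a known tool inside Appendix~\ref{appendix.1} and immediately applies it to bound $\omega(\mathcal{D}(\|\bm{\Omega}\cdot\|_1,\bm{x})\cap\mathds{B}_1^n)$ in~(\ref{eq.Gausianwidupper}). So you have supplied a complete proof where the paper relies on the literature; your argument is the classical one and would be an appropriate reference to cite.
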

\begin{align}\label{eq.Gausianwidupper}
&\omega(\mathcal{D}(\|\bm{\Omega} .\|_1,\bm{x})\cap \mathds{B}_1^n)=\mathds{E}\sup_{\bm{v}\in \mathcal{D}(\|\bm{\Omega} .\|_1,\bm{x})\cap \mathds{B}_1^n }\langle \bm{g},\bm{v} \rangle\le\nonumber\\
&\mathds{E}\sup_{\substack{\bm{z}\in \mathcal{D}(\|.\|_1,\bm{\Omega x})\\\|\bm{\Omega}^\dagger \bm{z}\|_2\le1}}\langle \bm{g},\bm{\Omega}^\dagger \bm{z} \rangle\le\|\bm{\Omega}^\dagger\|_{2\rightarrow2}\mathds{E}\sup_{\substack{\bm{z}\in \mathcal{D}(\|.\|_1,\bm{\Omega x})\\\|\bm{\Omega}^\dagger \bm{z}\|_2\le1}}\langle \bm{h},\bm{z} \rangle\nonumber\\
&\le \|\bm{\Omega}^\dagger\|_{2\rightarrow2}\mathds{E}\sup_{\substack{\bm{z}\in \mathcal{D}(\|.\|_1,\bm{\Omega x})\\\|\bm{z}\|_2\le\|\Omega\|_{2\rightarrow2}}}\langle \bm{h},\bm{z} \rangle\nonumber\\
&\le \|\bm{\Omega}^\dagger\|_{2\rightarrow2}\|\bm{\Omega}\|_{2\rightarrow2}\mathds{E}\sup_{\bm{w}\in\mathcal{D}(\|.\|_1,\bm{\Omega x})\cap \mathds{B}_1^p}\langle \bm{h},\bm{w} \rangle=\nonumber\\
&\kappa(\bm{\Omega})\omega(\mathcal{D}(\|.\|_1,\bm{x})\cap \mathds{B}_1^p),
\end{align}
where in (\ref{eq.Gausianwidupper}), $\bm{h}\in\mathbb{R}^p$ is a standard normal vector with i.i.d components. The first inequality in (\ref{eq.Gausianwidupper}) comes from the relation (\ref{eq.descent cone relation}), the second inequality comes from Theorem \ref{thm.sodakov} with $X_{\bm{t}}=\langle \bm{g},\bm{\Omega}^\dagger \bm{t} \rangle$ and $Y_{\bm{t}}=\|\bm{\Omega}^\dagger\|_{2\rightarrow2}\langle \bm{h},\bm{t} \rangle$ and the fact that:
\begin{align}
&\mathds{E}|X_{\bm{t}}-X_{\bm{s}}|^2\le \|\bm{\Omega}^\dagger(\bm{t}-\bm{s})\|_2^2\le \|\Omega^\dagger\|_{2\rightarrow2}^2\|\bm{t}-\bm{s}\|_2^2\nonumber\\
&=\mathds{E}|Y_{\bm{t}}-Y_{\bm{s}}|^2~:~\forall \bm{t},\bm{s}\in\mathbb{R}^p.
\end{align}
In the last inequality of (\ref{eq.Gausianwidupper}), we used the change of variable $\bm{w}=\|\bm{\Omega}\|_{2\rightarrow2}^{-1}\bm{z}$. The following Proposition states the connection between statistical dimension and Gaussian width. In the same setting, the work of Amelunxen et al. \cite[Proposition 10.2]{amelunxen2013living} relates the notion of statistical dimension to Gaussian width with a different proof.
\begin{prop}\label{prop.staticgaussian} Let $\mathcal{C}\subseteq\mathbb{R}^n$ be a convex cone. Then the following relation between statistical dimension and Gaussian width holds.
\begin{align}
\omega^2(\mathcal{C}\cap\mathds{B}_1^n)\le\delta(\mathcal{C})\le\omega^2(\mathcal{C}\cap\mathds{B}_1^n)+1.
\end{align}
\end{prop}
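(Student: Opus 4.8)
The plan is to express both quantities in terms of a single scalar random variable --- the Euclidean norm of the projection of a standard Gaussian vector onto $\mathcal{C}$ --- and then to derive the lower bound from Jensen's inequality and the upper bound from Gaussian concentration. First I would record the elementary identity that for a closed convex cone $\mathcal{C}\subseteq\mathbb{R}^n$ and any $\bm{g}\in\mathbb{R}^n$,
\[
\|\mathcal{P}_{\mathcal{C}}(\bm{g})\|_2=\sup_{\bm{v}\in\mathcal{C}\cap\mathds{B}_1^n}\langle\bm{g},\bm{v}\rangle=:Z(\bm{g}).
\]
This follows from the Moreau decomposition $\bm{g}=\mathcal{P}_{\mathcal{C}}(\bm{g})+\mathcal{P}_{\mathcal{C}^{\circ}}(\bm{g})$ with $\langle\mathcal{P}_{\mathcal{C}}(\bm{g}),\mathcal{P}_{\mathcal{C}^{\circ}}(\bm{g})\rangle=0$: writing $\bm{p}=\mathcal{P}_{\mathcal{C}}(\bm{g})$, for every $\bm{v}\in\mathcal{C}\cap\mathds{B}_1^n$ one has $\langle\bm{g},\bm{v}\rangle=\langle\bm{p},\bm{v}\rangle+\langle\bm{g}-\bm{p},\bm{v}\rangle\le\langle\bm{p},\bm{v}\rangle\le\|\bm{p}\|_2$, while the choice $\bm{v}=\bm{p}/\|\bm{p}\|_2$ (or $\bm{v}=\bm{0}$ if $\bm{p}=\bm{0}$, using $\bm{0}\in\mathcal{C}$) attains $\|\bm{p}\|_2$; the supremum is finite since $\mathcal{C}\cap\mathds{B}_1^n$ is compact. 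With $\bm{g}$ standard Gaussian, the definitions of statistical dimension and Gaussian width then read $\delta(\mathcal{C})=\mathds{E}[Z(\bm{g})^2]$ and $\omega(\mathcal{C}\cap\mathds{B}_1^n)=\mathds{E}[Z(\bm{g})]$.

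The left inequality is then immediate: Jensen's inequality gives $\bigl(\mathds{E}[Z(\bm{g})]\bigr)^2\le\mathds{E}[Z(\bm{g})^2]$, i.e. $\omega^2(\mathcal{C}\cap\mathds{B}_1^n)\le\delta(\mathcal{C})$.

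For the right inequality it suffices to show $\mathrm{Var}\bigl(Z(\bm{g})\bigr)\le 1$, since then $\delta(\mathcal{C})=\mathrm{Var}(Z)+\bigl(\mathds{E}[Z]\bigr)^2\le 1+\omega^2(\mathcal{C}\cap\mathds{B}_1^n)$. Here I would observe that $\bm{g}\mapsto Z(\bm{g})$ is $1$-Lipschitz: for any $\bm{g},\tilde{\bm{g}}\in\mathbb{R}^n$, using $\mathcal{C}\cap\mathds{B}_1^n\subseteq\mathds{B}_1^n$,
\[
|Z(\bm{g})-Z(\tilde{\bm{g}})|\le\sup_{\bm{v}\in\mathcal{C}\cap\mathds{B}_1^n}|\langle\bm{g}-\tilde{\bm{g}},\bm{v}\rangle|\le\|\bm{g}-\tilde{\bm{g}}\|_2 .
\]
Applying the Gaussian Poincar\'e inequality --- a $1$-Lipschitz function of a standard Gaussian vector has variance at most $1$ --- yields $\mathrm{Var}(Z(\bm{g}))\le1$ and completes the proof. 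The only mildly delicate step is the projection identity of the first paragraph (and the observation that $0\in\mathcal{C}$ makes the supremum nonnegative, so no truncation is needed); once that is in place, the rest is a direct application of Jensen's inequality and a standard Gaussian concentration estimate, which I would cite rather than reprove.
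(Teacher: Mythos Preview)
Your proof is correct and follows essentially the same route as the paper's: both reduce to the scalar variable $Z(\bm{g})=\|\mathcal{P}_{\mathcal{C}}(\bm{g})\|_2=\sup_{\bm{v}\in\mathcal{C}\cap\mathds{B}_1^n}\langle\bm{g},\bm{v}\rangle$, obtain the lower bound by Jensen's inequality, and obtain the upper bound from the Gaussian Poincar\'e inequality applied to the $1$-Lipschitz map $\bm{g}\mapsto Z(\bm{g})$. The only cosmetic difference is that the paper phrases the projection identity via strong duality of $\inf_{\bm{z}\in\mathcal{C}^{\circ}}\|\bm{g}-\bm{z}\|_2$, whereas you derive it directly from the Moreau decomposition.
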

\begin{proof}
\begin{align}
&\delta(\mathcal{C})=\mathds{E}\inf_{\bm{z}\in \mathcal{C}^{\circ}}\|\bm{g}-\bm{z}\|_2^2=\mathds{E}\|\bm{g}-\mathcal{P}_{\mathcal{C}^{\circ}}(\bm{g})\|_2^2\le\nonumber\\
&(\mathds{E}\|\bm{g}-\mathcal{P}_{\mathcal{C}^{\circ}}(\bm{g})\|_2)^2+1=(\mathds{E}\inf_{\bm{z}\in\mathcal{C}^{\circ}}\|\bm{g}-\bm{z}\|_2)^2+1\nonumber\\
&=(\mathds{E}\sup_{\bm{z}\in\mathcal{C}\cap\mathds{B}_1^n}\langle \bm{g},\bm{z} \rangle)^2+1=\omega^2(\mathcal{C}\cap\mathds{B}_1^n)+1,
\end{align}
where the first inequality comes from Gaussian poincar\'{e} inequality \cite[Theorem 1.6.4]{gaussianmeasures} for $1$-lipschitz function $\|\mathcal{P}_{\mathcal{C}}(\bm{g})\|_2$, the forth equality stems from convexity of the problem $\underset{{\bm{z}\in\mathcal{C}^{\circ}}}{\inf}\|\bm{g}-\bm{z}\|_2$ and strong duality. On the other hand, we have:
\begin{align}
&\delta(\mathcal{C})=\mathds{E}\inf_{\bm{z}\in \mathcal{C}^{\circ}}\|\bm{g}-\bm{z}\|_2^2\ge (\mathds{E}\inf_{\bm{z}\in \mathcal{C}^{\circ}}\|\bm{g}-\bm{z}\|_2)^2=\nonumber\\
&(\mathds{E}\sup_{\bm{z}\in\mathcal{C}\cap\mathds{B}_1^n}\langle \bm{g},\bm{z} \rangle)^2=\omega^2(\mathcal{C}\cap\mathds{B}_1^n),
\end{align}
where the first inequality uses the Jensen's inequality for concave functions.
\end{proof}
By using Proposition \ref{prop.staticgaussian} and (\ref{eq.Gausianwidupper}), we have:
\begin{align}
&\delta(\mathcal{D}(\|\bm{\Omega}.\|_1,\bm{x}))\le \omega^2(\mathcal{D}(\|\bm{\Omega}.\|_1,\bm{x})\cap\mathds{B}_1^n)+1\le\nonumber\\
&\kappa^2(\bm{\Omega})\omega^2(\mathcal{D}(\|.\|_1,\bm{\Omega x})\cap\mathds{B}_1^p)+1\le \nonumber\\
&\kappa^2(\bm{\Omega})\delta(\mathcal{D}(\|.\|_1,\bm{\Omega x}))+1.
\end{align}
\end{proof}
\section{Proof of Theorem \ref{thm.Omega}}\label{appendix.2}
\begin{proof}
The proof is based on \cite[Theorem 6.1]{amelunxen2013living} with a bit different terminology. By taking $\lambda=\sqrt{8\log\frac{4}{\eta}}\sqrt{n}$ when $m\ge \delta(\mathcal{D}(\|\bm{\Omega} .\|_1,\bm{x}))+\mu+\lambda$ we have:
\begin{align}
&\mathds{P}(\mathcal{D}(\|\bm{\Omega}.\|_1,\bm{x})\cap \mathrm{null}(\bm{A})\neq0)=2h_{m+1}(\mathcal{D}(\|\bm{\Omega}.\|_1,\bm{x}))\le\nonumber\\
&t_{m+1}(\mathcal{D}(\|\bm{\Omega}.\|_1,\bm{x}))\le 4\exp\bigg(\frac{-(\lambda+\mu)^2/8}{\delta(\mathcal{D}(\|\bm{\Omega} .\|_1,\bm{x}))+\lambda+\mu}\bigg)\le\nonumber\\
&4\exp\bigg(\frac{-(\lambda+\mu)^2}{8n}\bigg):=\eta',
\end{align}
where in the above equation, $h_i(\mathcal{D}(\|\bm{\Omega}.\|_1,\bm{x}))$ and $t_i(\mathcal{D}(\|\bm{\Omega}.\|_1,\bm{x}))$ are the $i$th half-tail functional and $i$th tail-functional, respectively which are defined in \cite[Definition 5.8]{amelunxen2013living}.\par
The relation of the tolerance $\eta$ in Theorem \ref{thm.Pfmeasurement} and $\eta'$ is given below:
\begin{align}
&\eta'=4\exp\bigg(-\frac{\mu^2}{8n}-\frac{\mu}{4n}\sqrt{8\log\frac{4}{\eta}}\sqrt{n}+\log\frac{\eta}{4}\bigg)\nonumber\\
&=\eta \exp\bigg(-\frac{\mu^2}{8n}-\frac{\mu}{4\sqrt{n}}\sqrt{8\log\frac{4}{\eta}}\sqrt{n}\bigg)\nonumber\\
&<\eta e^{-\frac{\mu^2}{8n}}.
\end{align}
\end{proof}

% use section* for acknowledgement

\ifCLASSOPTIONcaptionsoff
  \newpage
\fi

\bibliographystyle{ieeetr}
\bibliography{mypaperbibe}

\begin{thebibliography}{10}

\bibitem{donoho2006most}
D.~L. Donoho, ``For most large underdetermined systems of linear equations the
  minimal 𝓁1-norm solution is also the sparsest solution,'' {\em
  Communications on pure and applied mathematics}, vol.~59, no.~6,
  pp.~797--829, 2006.

\bibitem{tao2005decoding}
T.~Tao, ``Decoding by linear programming,'' {\em IEEE Trans. Inf. Theory},
  vol.~51, no.~12, p.~42034215, 2005.

\bibitem{candes2008enhancing}
E.~J. Candes, M.~B. Wakin, and S.~P. Boyd, ``Enhancing sparsity by reweighted
  ℓ 1 minimization,'' {\em Journal of Fourier analysis and applications},
  vol.~14, no.~5-6, pp.~877--905, 2008.

\bibitem{friedlander2012recovering}
M.~P. Friedlander, H.~Mansour, R.~Saab, and O.~Yilmaz, ``Recovering
  compressively sampled signals using partial support information,'' {\em IEEE
  Transactions on Information Theory}, vol.~58, no.~2, pp.~1122--1134, 2012.

\bibitem{khajehnejad2011analyzing}
M.~A. Khajehnejad, W.~Xu, A.~S. Avestimehr, and B.~Hassibi, ``Analyzing
  weighted minimization for sparse recovery with nonuniform sparse models,''
  {\em IEEE Transactions on Signal Processing}, vol.~59, no.~5, pp.~1985--2001,
  2011.

\bibitem{stojnic2009various}
M.~Stojnic, ``Various thresholds for $l_1 $-optimization in compressed
  sensing,'' {\em arXiv preprint arXiv:0907.3666}, 2009.

\bibitem{mansour2015recovery}
H.~Mansour and R.~Saab, ``Recovery analysis for weighted ℓ1-minimization
  using the null space property,'' {\em Applied and Computational Harmonic
  Analysis}, 2015.

\bibitem{vaswani2010modified}
N.~Vaswani and W.~Lu, ``Modified-cs: Modifying compressive sensing for problems
  with partially known support,'' {\em IEEE Transactions on Signal Processing},
  vol.~58, no.~9, pp.~4595--4607, 2010.

\bibitem{von2007compressed}
R.~Von~Borries, C.~J. Miosso, and C.~Potes, ``Compressed sensing using prior
  information,'' in {\em Computational Advances in Multi-Sensor Adaptive
  Processing, 2007. CAMPSAP 2007. 2nd IEEE International Workshop on},
  pp.~121--124, IEEE, 2007.

\bibitem{jacques2010short}
L.~Jacques, ``A short note on compressed sensing with partially known signal
  support,'' {\em Signal Processing}, vol.~90, no.~12, pp.~3308--3312, 2010.

\bibitem{chandrasekaran2012convex}
V.~Chandrasekaran, B.~Recht, P.~A. Parrilo, and A.~S. Willsky, ``The convex
  geometry of linear inverse problems,'' {\em Foundations of Computational
  mathematics}, vol.~12, no.~6, pp.~805--849, 2012.

\bibitem{krishnaswamy2012simpler}
A.~K. Krishnaswamy, S.~Oymak, and B.~Hassibi, ``A simpler approach to weighted
  $l1$ minimization,'' in {\em 2012 IEEE International Conference on Acoustics,
  Speech and Signal Processing (ICASSP)}, pp.~3621--3624, IEEE, 2012.

\bibitem{oymak2012recovery}
S.~Oymak, M.~A. Khajehnejad, and B.~Hassibi, ``Recovery threshold for optimal
  weight ℓ 1 minimization,'' in {\em Information Theory Proceedings (ISIT),
  2012 IEEE International Symposium on}, pp.~2032--2036, IEEE, 2012.

\bibitem{gordon1988milman}
Y.~Gordon, ``On milman's inequality and random subspaces which escape through a
  mesh in ℝ n,'' in {\em Geometric Aspects of Functional Analysis},
  pp.~84--106, Springer, 1988.

\bibitem{bayati2012lasso}
M.~Bayati and A.~Montanari, ``The lasso risk for gaussian matrices,'' {\em IEEE
  Transactions on Information Theory}, vol.~58, no.~4, pp.~1997--2017, 2012.

\bibitem{flinth2015optimal}
A.~Flinth, ``Optimal choice of weights for sparse recovery with prior
  information,'' 2015.

\bibitem{amelunxen2013living}
D.~Amelunxen, M.~Lotz, M.~B. McCoy, and J.~A. Tropp, ``Living on the edge: A
  geometric theory of phase transitions in convex optimization,'' tech. rep.,
  DTIC Document, 2013.

\bibitem{baraniuk2010model}
R.~G. Baraniuk, V.~Cevher, M.~F. Duarte, and C.~Hegde, ``Model-based
  compressive sensing,'' {\em IEEE Transactions on Information Theory},
  vol.~56, no.~4, pp.~1982--2001, 2010.

\bibitem{stojnic2008reconstruction}
M.~Stojnic, F.~Parvaresh, and B.~Hassibi, ``On the reconstruction of
  block-sparse signals with an optimal number of measurements,'' {\em arXiv
  preprint arXiv:0804.0041}, 2008.

\bibitem{computationalnero}
T.~Euler and T.~Baden, ``Computational neuroscience: Species-specific motion
  detectors,'' {\em Nature}, 2016.

\bibitem{eldar2009robust}
Y.~C. Eldar and M.~Mishali, ``Robust recovery of signals from a structured
  union of subspaces,'' {\em IEEE Transactions on Information Theory}, vol.~55,
  no.~11, pp.~5302--5316, 2009.

\bibitem{nonlocaltv}
H.~Kim, J.~Chen, A.~Wang, C.~Chuang, M.~Held, and J.~Pouliot, ``Non-local
  total-variation (nltv) minimization combined with reweighted l1-norm for
  compressed sensing ct reconstruction,'' {\em Physics in Medicine and
  Biology}, vol.~61, no.~18, p.~6878, 2016.

\bibitem{duran2016collaborative}
J.~Duran, M.~Moeller, C.~Sbert, and D.~Cremers, ``Collaborative total
  variation: a general framework for vectorial tv models,'' {\em SIAM Journal
  on Imaging Sciences}, vol.~9, no.~1, pp.~116--151, 2016.

\bibitem{oymak2015universality}
S.~Oymak and J.~A. Tropp, ``Universality laws for randomized dimension
  reduction, with applications,'' {\em arXiv preprint arXiv:1511.09433}, 2015.

\bibitem{rockafellar2015convex}
R.~T. Rockafellar, {\em Convex analysis}.
\newblock Princeton university press, 2015.

\bibitem{foucart2013mathematical}
S.~Foucart and H.~Rauhut, {\em A mathematical introduction to compressive
  sensing}, vol.~1.
\newblock Birkh{\"a}user Basel, 2013.

\bibitem{bertsekas2014convex}
D.~P. Bertsekas, ``Convex optimization theory athena scientific, 2009,'' 2014.

\bibitem{cvx}
M.~Grant and S.~Boyd, ``{CVX}: Matlab software for disciplined convex
  programming, version 2.1,'' Mar. 2014.

\bibitem{gaussianmeasures}
V.~I. Bogachev and V.~I. Bogachev, {\em Gaussian measures}, vol.~62.
\newblock American Mathematical Society Providence, 1998.

\end{thebibliography}
\end{document}